\numberwithin{equation}{subsection}
\patchcmd{\part}{\thispagestyle{plain}}{\thispagestyle{part}}{}{}
\definecolor{grey}{rgb}{0.98,0.98,0.98}
\definecolor{codeRed}{rgb}{0.5, 0.027, 0.02}
\DeclareMathOperator{\Der}{Der}
\DeclareMathOperator{\Sym}{Sym}
\DeclareMathOperator{\Id}{id}
\DeclareMathOperator{\Dens}{Dens}
\DeclareMathOperator{\Maps}{Maps}
\DeclareMathOperator{\loops}{loops}
\DeclareMathOperator{\Aut}{Aut}
\DeclareMathOperator{\Grad}{Gr}
\DeclareMathOperator{\Spec}{Spec}
\DeclareMathOperator{\pt}{pt}
\DeclareMathOperator{\Jet}{Jet}
\DeclareMathOperator{\Sp}{Sp}
\DeclareMathOperator{\pbw}{pbw}
\theoremstyle{plain}
\newtheorem*{theoremn}{Theorem}
\theoremstyle{plain}
\theoremstyle{definition}
\newtheorem{defn}{Definition}[subsection]
\newtheorem{assump}[defn]{Assumption}
\newtheorem{notat}[defn]{Notation}
\newtheorem{exm}[defn]{Example}
\theoremstyle{plain}
\newtheorem{thm}[defn]{Theorem} % reset theorem numbering for each chapter
\newtheorem{lmm}[defn]{Lemma}
\newtheorem{prp}[defn]{Proposition} 
\theoremstyle{remark}
\newtheorem{rmk}[defn]{Remark}
\newcommand{\beq}{\begin{equation}}
\newcommand{\eeq}{\end{equation}}
\newcommand{\Tr}{\mathrm{Tr}}
\newcommand{\F}{\mathcal{F}}
\newcommand\reallywidehat[1]{%
\savestack{\tmpbox}{\stretchto{%
  \scaleto{%
    \scalerel*[\widthof{\ensuremath{#1}}]{\kern-.6pt\bigwedge\kern-.6pt}%
    {\rule[-\textheight/2]{1ex}{\textheight}}%WIDTH-LIMITED BIG WEDGE
  }{\textheight}% 
}{0.5ex}}%
\stackon[1pt]{#1}{\tmpbox}%
}
\newcommand{\Sc}[0]{\mathcal{S}}
\newcommand{\g}[0]{\mathfrak{g}}
\newcommand{\W}[0]{\mathcal{W}}
\newcommand{\CS}[0]{\reallywidehat{\Sym}^\bullet(T^{\vee 1,0}M)}
\newcommand{\Chr}[3]{\Gamma^{#1}_{#2#3}}
\newcommand{\Riem}[4]{R^{#1}_{#2#3#4}}
\newcommand{\Linf}[0]{L_{\infty}}
\newcommand{\MdR}[0]{{\Sigma_3}_{,\, \mathrm{dR}}}
\newcommand{\Xd}[0]{M_{\Bar{\partial}}}
\newcommand{\bon}[0]{^{\partial}_{\partial \Sigma_3}}
\newcommand{\sur}[0]{_{\Sigma_3}}
\newcommand{\surM}[0]{_{\partial M}}
\newcommand{\surg}[0]{_{\Sigma_3, x}}
\newcommand{\surgR}[0]{_{\Sigma_3, x, R}}
\newcommand{\surgS}[0]{^{\, \text{s}}_{\Sigma_3, x}}
\newcommand{\surgSR}[0]{^{\, \text{s}}_{\Sigma_3, x, R}}
\newcommand{\surgSRbar}[0]{^{\, \text{s}}_{\Sigma_3, x, \bar{R}}}
\newcommand{\surgSP}[0]{^{\, \text{s},\, \mathcal{P}}_{\Sigma_3, x}}
\newcommand{\surgSB}[0]{^{\, \text{s},\, \partial}_{\partial\Sigma_3, x}}
\newcommand{\surgSBP}[0]{^{\, \text{s},\, \partial,\, \mathcal{P}}_{\partial\Sigma_3, x}}
\newcommand{\bonP}[0]{^{\mathcal{P}}_{\partial \Sigma_3}}
\newcommand{\Gr}[0]{\mathcal{D}_{\mathrm{G}}}
\newcommand{\Fe}[0]{\mathcal{D}_{\mathrm{F}}}
\newcommand{\FDg}[0]{\mathcal{F}_{\mathrm{CLL}}}
\newcommand{\A}[0]{\mathcal{A}}
\newcommand{\derham}[0]{\Omega^\bullet(M)}
\newcommand{\cinfty}[0]{\mathcal{C}^\infty(M)}
\newcommand{\bea}{\begin{align*}}
\newcommand{\eea}{\end{align*}}
\tikzset{->-/.style={decoration={
  markings,
  mark=at position #1 with {\arrow{>}}},postaction={decorate}}}
  \tikzset{-<-/.style={decoration={
  markings,
  mark=at position #1 with {\arrow{<}}},postaction={decorate}}}
  \tikzset{
    ncbar angle/.initial=90,
    ncbar/.style={
        to path=(\tikztostart)
        -- ($(\tikztostart)!#1!\pgfkeysvalueof{/tikz/ncbar angle}:(\tikztotarget)$)
        -- ($(\tikztotarget)!($(\tikztostart)!#1!\pgfkeysvalueof{/tikz/ncbar angle}:(\tikztotarget)$)!\pgfkeysvalueof{/tikz/ncbar angle}:(\tikztostart)$)
        -- (\tikztotarget)
    },
    ncbar/.default=0.5cm,
    }
\tikzset{square left brace/.style={ncbar=0.5cm}}
\tikzset{square right brace/.style={ncbar=-0.5cm}}
\tikzset{round left paren/.style={ncbar=0.5cm,out=120,in=-120}}
\tikzset{round right paren/.style={ncbar=0.5cm,out=60,in=-60}}
\title[Formal Global Perturbative Quantization of the RW Model]{Formal Global Perturbative Quantization of the Rozansky--Witten Model in the BV-BFV Formalism}
\author[N. Moshayedi]{Nima Moshayedi}
\author[D. Saccardo]{Davide Saccardo}
\address{Department of Mathematics, University of California, Berkeley California 94305, USA}
\email[N.~Moshayedi]{nmosha@math.berkeley.edu}
\address{Institut f\"ur Theoretische Physik, ETH Z\"urich, Wolfgang-Pauli-Strasse 27, CH-8093 Z\"urich}
\email[D.~Saccardo]{davide.saccardo14@gmail.com}
\begin{document}

\maketitle

\begin{abstract}
We describe a globalization construction for the Rozansky--Witten model in the BV-BFV formalism for a source manifold with and without boundary in the classical and quantum case. After having introduced the necessary background, we define an AKSZ sigma model, which, upon globalization through notions of formal geometry extended appropriately to our case, is shown to reduce to the Rozansky--Witten model. The relations with other relevant constructions in the literature are discussed. 
Moreover, we split the model as a $BF$-like theory and we construct a perturbative quantization of the model in the quantum BV-BFV framework. In this context, we are able to prove the modified differential Quantum Master Equation and the flatness of the quantum Grothendieck BFV operator. Additionally, we provide a construction of the BFV boundary operator in some cases. 

% We describe a globalization construction for the Rozansky--Witten model in the BV-BFV formalism for a source manifold with and without boundary in the classical and quantum case. We define an AKSZ sigma model, which, upon globalization through notions of formal geometry extended appropriately to our case, is shown to reduce to the Rozansky--Witten model. The relations with other relevant constructions in the literature are discussed. 
% Moreover, we split the model as a $BF$-like theory and we construct a perturbative quantization of the model in the quantum BV-BFV framework. In this context, we are able to prove the modified differential Quantum Master Equation and the flatness of the quantum Grothendieck BFV operator. Additionally, we provide a construction of the BFV boundary operator in some cases. 
\end{abstract}

\tableofcontents
\section{Introduction}

\subsection{Overview and motivation}
An important class of field theories in physics is represented by \textit{gauge theories}. These are theories containing a redundant number of degrees of freedom which causes physical quantities to be invariant under certain local transformations, called \textit{gauge symmetries}. Indeed the presence of gauge symmetries lead to challenging problems from the definition of path integral to the general problem of understanding the perturbative quantization of a gauge theory. Since the physical information about a classical field theory is encoded in the set of solutions of the Euler--Lagrange equations (the \textit{critical locus}), a possible solution to deal with such problems is to consider the critical locus modulo the gauge symmetries. The fields are then constructed as functions on this quotient. However, this is not feasible since it turns out that these quotients are, in general, singular. Batalin and Vilkovisky introduced a method, which is known today as the \textit{BV formalism} \cite{BV77,BV81,BV83}, that employs symplectic (co)homological tools \cite{KT79} to treat these field theories, in particular it overcomes difficulties connected to the singularity of the quotient by taking homological resolution of the critical locus. A crucial observation in the BV formalism is also that gauge-fixing then corresponds to the choice of a Lagrangian submanifold.
Another method developed around the same time is the \textit{BFV formalism} by  Batalin, Fradkin and Vilkovisky \cite{FV77,BF83,BF86}, which deals with gauge theories in the Hamiltonian setting, while the BV construction is formulated in the Lagrangian approach. 

Recently, the study of gauge theories on spacetime manifolds with boundary lead Cattaneo, Mnev and Reshetikhin \cite{CMR11,CMR14} to relate these two formulations in order to develop the \textit{BV-BFV formalism}. Their idea was that, under certain conditions, BV theories in the bulk can induce a BFV theory on the boundary. This approach was successfully applied to a large number of physical theories such as e.g. electrodynamics, Yang-Mills theory, scalar field theory and $BF$-theories \cite{CMR14}. In particular, the \textit{AKSZ construction}, developed in \cite{AKSZ97}, produces naturally a large variety of theories which satisfy automatically the BV-BFV axioms as it was shown in \cite{CMR14}. This is quite remarkable since many theories of interest are actually of AKSZ-type, such as e.g. Chern--Simons (CS) theory, $BF$-theory and the Poisson sigma model (PSM) \cite{CMR14}. 

In \cite{CMR17}, a perturbative quantization scheme for gauge theories in the BV-BFV framework was introduced, which was called \textit{quantum BV-BFV formalism}. The importance of this method relies on its compatibility with cutting and gluing in the sense of \textit{topological quantum field theories} (TQFTs). The quantum BV-BFV formalism has been applied successfully in various physically relevant theories such as e.g. $BF$-theory and the PSM \cite{CMR17}, split CS theory \cite{CMnW17} and CS theory \cite{CMnW21}, the relational symplectic groupoid\footnote{The relational symplectic groupoid was first defined in \cite{CC15}.} \cite{CMoW17} and 2D Yang--Mills theory on manifolds with corners \cite{Ir18,IM19}. 

An important effort has been spent to study TQFT within the quantum BV-BFV framework. Indeed, the method has been introduced to accomplish the goal of constructing perturbative topological invariants of manifolds with boundary compatible with cutting and gluing for topological field theories. During the years, two prominent TQFTs have been studied in detail: CS theory \cite{AS91,AS94} in \cite{CMnW17,We18} and the PSM \cite{SS94,Ik94} in \cite{CMoW20}. 

% The former has been formulated as a split $BF$-like theory in \cite{CMnW17} in order to be perturbatively quantized with the quantum BV-BFV formalism. In \cite{We18}, Wernli was able to compute the weight of the CS theta graph on a lens space by gluing it from two solid tori in the axial gauge.

% On the other hand, the story of the PSM in the BV(-BFV) formalism is longer. It started when Catteneo and Felder in \cite{CF00} used the BV formalism to prove that one can recover Kontsevich's formula for deformation quantization of Poisson manifolds \cite{Ko03} by studying the expectation values of certain observables for the PSM on the disc in a particular gauge-fixing. Then, more recently, a global deformation quantization for constant Poisson structures (which leads to the \textit{Moyal product}) has been obtained for certain worldsheet manifolds in \cite{CMoW17}. There, a globalized version of the PSM for manifolds with boundary was used. Similarly, i

In \cite{CMoW19} a globalized version of the (quantum) BV-BFV formalism in the context of nonlinear split AKSZ sigma models on manifolds with and without boundary by using methods of formal geometry à la Bott \cite{Bo11}, Gelfand and Kazhdan \cite{GK71} (see also \cite{BCM12} for an application of the globalization procedure for the PSM in the context of a closed source manifold) was developed. Their construction is able to detect changes of the quantum state when one modifies the constant map around which the perturbation is developed. This required them to formulate a ``differential" version of the \textit{(modified) Classical Master Equation} and the \textit{(modified) Quantum Master Equation}, which are the two key equations in the BV(-BFV) formalism. As an example, this procedure was applied to the PSM on manifolds with boundary and extended to the case of corners in \cite{CMoW20}.

In this paper, we continue the effort in analyzing TQFTs within the quantum BV-BFV formalism by studying the \textit{Rozansky--Witten (RW) theory}. 
% With this work, we would like to understand the RW theory in a way compatible with cutting and gluing of the source manifold such that higher strata can be studied. We also hope that this would help in obtaining a better view of the CS theory (given the similarities between the two theories) and its relation with the Reshetikhin–Turaev invariants.
The RW model is a topological sigma model with a source 3-dimensional manifold $\Sigma_3$, which was introduced by Rozansky and Witten in \cite{RW96} through a \textit{topological twist} of a 6-dimensional supersymmetric sigma model with target a hyperK{\"a}hler manifold $M$. Of particular interest is the perturbative expansion of the RW partition function. Rozansky and Witten obtained this expansion as a combinatorial sum in terms of Feynman diagrams $\Gamma$, which are shown to be trivalent graphs:
\begin{equation}
    Z_M(\Sigma_3)=\sum_\Gamma b_\Gamma(M)I_\Gamma(\Sigma_3),
\end{equation}
the $b_\Gamma (M)$ are complex valued functions on trivalent graphs constructed from the target manifold, while $I_\Gamma(\Sigma_3)$ contains the integral over the propagators of the theory and depends on the source manifold. There are evidences which suggest that $I_\Gamma(\Sigma_3)$ are the \textit{LMO invariants} of Le, Murakami and Ohtsuki \cite{LMO98}. On the other hand, Rozansky and Witten showed that $b_\Gamma(M)$ satisfy the famous AS (which is reflected in the absence of tadpoles diagrams) and IHX relations. As a result, $b_\Gamma(M)$ constitute the \textit{Rozansky--Witten weight system} for the \textit{graph homology}, the space of linear combinations of equivalence classes of trivalent graphs (modulo the AS and IHX relations). This means that the RW weights can be used to construct new finite type topological invariants for 3-dimensional manifolds \cite{B95}. 

% As an example, consider the \textit{Theta invariant} $\Theta$, which is assigned to the graph that we portray in Fig. \ref{fig:theta}. 

% \begin{figure}[h]
%     \centering
%     \begin{tikzpicture}
%     \draw (-1,0) -- (1,0);
%     \draw (-1,0) arc (180:0:1);
%     \draw (-1,0) arc (-180:0:1);
%     \end{tikzpicture}
%     \caption{The \textit{Theta graph}. The name derives by its striking similarity with a $\Theta$.}
%     \label{fig:theta}
% \end{figure}

The RW theory opened up a new branch of research which was undertaken by many mathematicians and physicists (e.g. \cite{HT99,Th00}). Shortly after the original paper, Kontsevich understood that the RW invariants could be obtained by the characteristic classes of foliations and Gelfand-Fuks cohomology \cite{Ko99}. Inspired by the work of Kontsevich, Kapranov reformulated the weight system in cohomological terms (instead of using differential forms) in \cite{KA99}. This idea relies on the fact that one can replace the Riemann curvature tensor by the Atiyah class \cite{At57}, which is the obstruction to the existence of a global holomorphic connection. As a consequence of Kontsevich's and Kapranov's approaches, the RW weights were understood to be invariant under the hyperK{\"a}hler metric on $M$: in fact, the model could be constructed more generally with target a holomorphic symplectic manifold. In this way, the RW weights were also called \textit{RW invariants}\footnote{The terminology is unfortunate as in reality the proper invariants should be the products of the weights with $I_\Gamma(\Sigma_3)$.} of $M$ (see \cite{Sa04} for a detailed exposition). On the other hand, the possibility to consider as target manifold a holomorphic symplectic manifold was later interpreted in the context of topological sigma models by Rozansky and Witten in the appendix of \cite{RW96}. 

In the last 20 years, the RW model has been the focus of intense research in order to formulate it as an extended TQFT (see \cite{Sa00,RS02}), in order to investigate its boundary conditions and defects \cite{KRS09, KR09}, and in order to construct its globalization formulation \cite{QZ10,KQZ13,CLL17}. 

\subsection{Our contribution}
The main contribution of this paper is to add the RW theory to the list of TQFTs which have been studied successfully within the globalized version of the quantum BV-BFV framework \cite{CMoW19}. This will be a step towards the higher codimension quantization of RW theory, which will possibly lead to new insights towards the 3-dimensional correspondence between CS theory \cite{Wit89} and the Reshetikhin--Turaev construction \cite{Reshetikhin1991} from the point of view of (perturbative) extended field theories described by Baez--Dolan \cite{BD95} and Lurie \cite{Lu09}. Moreover this could also help in understanding (generalizations of a globalized version of the) Berezin--Toeplitz quantization (star product) \cite{Schlichenmaier2010} through field-theoretic methods using cutting and gluing similarly as it was done for Kontsevich's star product \cite{Ko03} in the case of the PSM in \cite{CMoW20}.

We construct the BV-BFV extension of an AKSZ model having a 3-dimensional manifold $\Sigma_3$ (possibly with boundary) as source and a holomorphic symplectic manifold $M$ as target with holomorphic symplectic form $\Omega$. Following \cite{KA99}, we define a formal holomorphic exponential map $\varphi$. This is used to linearize the space of fields of our model obtaining
\begin{equation}
   \Tilde{\mathcal{F}}_{\Sigma_3, x}=\Omega^{\bullet}(\Sigma_3)\otimes T^{1,0}_{x}M,
\end{equation}
where $\Omega^{\bullet}(\Sigma_3)$ denotes the complex of de Rham forms on the source manifold and $T^{1,0}_{x}M$ is the holomorphic tangent space on the target. In order to vary the constant solution around which we perturb, we define a classical Grothendieck connection which can be seen as a complex extension of the Grothendieck connection used in \cite{CMoW19,CMoW20}. In this way, we construct a \textit{formal global action} for our model, i.e.
\begin{equation}
    \Tilde{\Sc}\surg\coloneqq\varint_{\Sigma_3}\bigg(\frac{1}{2}\Omega_{ij}\hat{\mathbf{X}}^id\hat{\mathbf{X}}^j+\Big(\hat{R}^i\sur\Big)_j(x; \hat{\mathbf{X}})\Omega_{il}\hat{\mathbf{X}}^l dx^j+\Big(\hat{R}^i\sur\Big)_{\Bar{j}}(x; \hat{\mathbf{X}})\Omega_{il}\hat{\mathbf{X}}^l dx^{\Bar{j}}\bigg)
\end{equation}
with $\hat{\mathbf{X}}^i$ the coordinates of the spaces of fields $\Tilde{\mathcal{F}}_{\Sigma_3, x}$ organized as superfields, $x$ is the constant map over which we expand, $\Big(\hat{R}^i\sur\Big)_j$ and $\Big(\hat{R}^i\sur\Big)_{\Bar{j}}$ the components of the Grothendieck connection given by
\begin{equation}
    \begin{split}
        &R^i_j(x;y)dx^j:=-\bigg[\bigg(\frac{\partial \varphi}{\partial y}\bigg)^{-1}\bigg]^i_p\frac{\partial \varphi^p}{\partial x^j}dx^j,\\
        &R^i_{\Bar{j}}(x;y)dx^{\Bar{j}}:=-\bigg[\bigg(\frac{\partial \varphi}{\partial y}\bigg)^{-1}\bigg]^i_p\frac{\partial \varphi^p}{\partial x^{\Bar{j}}}dx^{\Bar{j}},
    \end{split}
\end{equation}
where $\{y^i\}$ are the generators of the fiber of $\reallywidehat{\Sym}^\bullet(T^{\vee1,0}M)$. The formal action is such that the \textit{differential Classical Master Equation} (dCME) is satisfied, namely 
\begin{equation}
    d_M\Tilde{\mathcal{S}}\surg+\frac{1}{2}(\Tilde{\mathcal{S}}\surg,\Tilde{\mathcal{S}}\surg)=0,
\end{equation}
with $d_M=d_x+d_{\Bar{x}}$ the sum of holomorphic and antiholomorphic Dolbeault differentials on $M$. The dCME presented here is different from the one presented in e.g. \cite{BCM12,CMoW19,CMoW20} since there $d_M$ was the de Rham differential on the body of the target manifold.  

The globalized model is then shown to be a globalization of the RW model \cite{RW96}, which reduces to the RW model itself in the appropriate limits. Our globalization of the RW model is compared with other globalization constructions as the one developed in \cite{CLL17} for a closed source manifold by using Costello's approach \cite{Co11a,Co11b} to \textit{derived geometry} \cite{To06, To14, PTTV13}, the procedure in \cite{Ste17} which extends the work of \cite{CLL17} to manifolds with boundary and the procedure in \cite{QZ10,KQZ13}. In general, our model is compatible with all these apparently different views. In particular, we give a detailed account of the similarities between our method and the one in \cite{CLL17}, thus confirming the claim in Remark 3.6 in \cite{CMoW19} about the equivalence between Costello's approach and ours. 

In order to quantize the theory according to the quantum BV-BFV formalism, we formulate a split version of our globalized RW model. Since the globalization is controlled by an $\Linf$-algebra, following \cite{Ste17} and inspired by the work of Cattaneo, Mnev and Wernli for CS theory \cite{CMnW17}, we assume that we can split the $\Linf$-algebra in two isotropic subspaces. The action of the \textit{globalized split RW model} is then
 \begin{equation}
    \Tilde{\Sc}\surgS=%\varint_{T[1]\Sigma_3}\mu_{\Sigma_3}
    \braket{\hat{\mathbf{B}}, D\hat{\mathbf{A}}}+\Big\langle\Big(\hat{R}\sur\Big)_j(x; \hat{\mathbf{A}}+\hat{\mathbf{B}})dx^j,\hat{\mathbf{A}}+\hat{\mathbf{B}}\Big\rangle+\Big\langle\Big(\hat{R}\sur\Big)_{\bar{j}}(x; \hat{\mathbf{A}}+\hat{\mathbf{B}})dx^{\bar{j}},\hat{\mathbf{A}}+\hat{\mathbf{B}}\Big\rangle,
\end{equation} 
where $\braket{-,-}$ denotes the BV symplectic form on the space of fields $\Tilde{\mathcal{F}}\surgS$ with values in the Dolbeault complex of $M$, $\hat{\mathbf{A}}^i$ and $\hat{\mathbf{B}}_i$ are the fields found from the splitting of the field $\hat{\mathbf{X}}^i$, and $D$ denotes the superdifferential. Note that $d$ is the de Rham differential on the target, not on the source.

Finally, we quantize the globalized split RW model within the quantum BV-BFV formalism framework. Here, we obtained the following two theorems.
\begin{theoremn}[Flatness of the qGBFV operator (Theorem \ref{thm:flatness})]
The \emph{quantum Grothendieck BFV (qGBFV) operator} $\nabla_{\textup{G}}$ for the anomaly-free globalized split RW model squares to zero, i.e. 
\begin{equation}
\label{flatness_GBFV}
    (\nabla_{\textup{G}})^2\equiv0,
\end{equation}
where 
\begin{equation}
    \nabla_{\textup{G}}=d_{M}-i\hbar \Delta_{\mathcal{V}_{\Sigma_3, x}}+\frac{i}{\hbar}\boldsymbol{\Omega}_{\partial \Sigma_3}=d_x+d_{\Bar{x}}-i\hbar \Delta_{\mathcal{V}_{\Sigma_3, x}}+\frac{i}{\hbar}\boldsymbol{\Omega}_{\partial \Sigma_3},
\end{equation}
with $d_M$ the sum of the holomorphic and antiholomorphic Dolbeault differentials on the target $M$, $\Delta_{\mathcal{V}_{\Sigma_3, x}}$ the BV Laplacian and $\boldsymbol{\Omega}_{\partial \Sigma_3}$ the full BFV boundary operator.
\end{theoremn}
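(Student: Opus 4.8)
The plan is to reduce the theorem to a single algebraic expansion together with a short list of structural identities that have either been established earlier or reduce to standard facts. Since $d_{M}$, $\Delta_{\mathcal{V}_{\Sigma_3, x}}$ and $\boldsymbol{\Omega}_{\partial \Sigma_3}$ are all operators of odd parity, squaring $\nabla_{\textup{G}}$ produces three \emph{diagonal} terms and three graded commutators:
\begin{equation}
(\nabla_{\textup{G}})^2 = d_{M}^2 - \hbar^2\,\Delta_{\mathcal{V}_{\Sigma_3, x}}^2 - \frac{1}{\hbar^2}\,\boldsymbol{\Omega}_{\partial \Sigma_3}^2 - i\hbar\,[d_{M},\Delta_{\mathcal{V}_{\Sigma_3, x}}] + \frac{i}{\hbar}\,[d_{M},\boldsymbol{\Omega}_{\partial \Sigma_3}] + [\Delta_{\mathcal{V}_{\Sigma_3, x}},\boldsymbol{\Omega}_{\partial \Sigma_3}],
\end{equation}
with all brackets graded commutators. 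So it suffices to prove that each of the six summands vanishes. The first, $d_{M}^2 = d_x^2 + d_{\bar x}^2 + (d_x d_{\bar x} + d_{\bar x} d_x)$, is zero because the (complexified) Grothendieck connection defined above is flat, which in turn is the integrability already built into the defining relations for the components $R^i_j$, $R^i_{\bar j}$ in terms of the formal holomorphic exponential map $\varphi$. The second, $\Delta_{\mathcal{V}_{\Sigma_3, x}}^2 = 0$, is the usual nilpotency of the BV Laplacian on the finite-dimensional space of residual fields, which carries an odd-symplectic structure induced from the bulk one. The third cross term $[\Delta_{\mathcal{V}_{\Sigma_3, x}},\boldsymbol{\Omega}_{\partial \Sigma_3}] = 0$ holds because the BV Laplacian differentiates only in the residual bulk fields while the boundary operator acts only on the boundary fields, i.e.\ on an independent set of variables.

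This leaves the two remaining cross terms, which are the ones genuinely tied to globalization and to the boundary. For $[d_{M},\Delta_{\mathcal{V}_{\Sigma_3, x}}] = 0$ I would check that the Grothendieck connection is compatible with the family of BV Laplacians over $M$: the fiberwise odd-symplectic pairing is built from the holomorphic symplectic form $\Omega_{ij}$, which can be taken constant in the formal fiber by choosing $\varphi$ compatibly with holomorphic Darboux coordinates, so that $d_{M}$-closedness of $\Omega$ together with the defining relations for the connection makes the pairing — hence $\Delta_{\mathcal{V}_{\Sigma_3, x}}$ — flat along $M$. For $\boldsymbol{\Omega}_{\partial \Sigma_3}^2 = 0$ and $[d_{M},\boldsymbol{\Omega}_{\partial \Sigma_3}] = 0$ I would invoke the construction of the BFV boundary operator under the anomaly-free hypothesis: the first is the quantization, with no $\hbar$-anomaly, of the classical BFV boundary charge whose boundary bracket with itself vanishes, so the associated cohomological operator squares to zero; the second is the differential refinement stating that the quantum BFV boundary operators form a covariantly constant family with respect to $d_{M}$, which descends from the differential classical master equation satisfied by the globalized boundary action once its perturbative quantization is shown to be anomaly free. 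Summing the six vanishing contributions gives $(\nabla_{\textup{G}})^2\equiv 0$.

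The hard part is concentrated in the two boundary identities $\boldsymbol{\Omega}_{\partial \Sigma_3}^2 = 0$ and $[d_{M},\boldsymbol{\Omega}_{\partial \Sigma_3}] = 0$: both are statements about the absence of quantum anomalies of the boundary operator, and this is precisely where the explicit structure of the split RW boundary action and the anomaly-free hypothesis must be used — the same difficulty that restricts the construction of the full boundary operator to particular cases. By contrast the expansion above and the three remaining identities are routine once the Grothendieck connection and the residual BV structure have been set up, and the whole computation can also be read as the consistency check that makes the modified differential Quantum Master Equation $\nabla_{\textup{G}}\hat\psi = 0$ well posed.
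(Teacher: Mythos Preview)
Your decomposition into six separately vanishing terms contains a genuine error in the two ``hard'' pieces. The operator $\boldsymbol{\Omega}_{\partial\Sigma_3}$ is an \emph{inhomogeneous} differential form on $M$: the free part $\Omega_0$ is a $0$-form, but the perturbative part $\boldsymbol{\Omega}_{\mathrm{pert}}$ is built from the globalization vertices $(\hat R_k)_j dx^j$ and $(\hat R_k)_{\bar j}dx^{\bar j}$ and therefore carries form degree $\geq 1$ on $M$. For this reason neither $\boldsymbol{\Omega}_{\partial\Sigma_3}^2=0$ nor $[d_M,\boldsymbol{\Omega}_{\partial\Sigma_3}]=0$ holds on its own. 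What the paper shows (and what actually follows from the classical side) is the single Maurer--Cartan--type identity
\[
i\hbar\, d_M\boldsymbol{\Omega}_{\partial\Sigma_3}\;-\;\tfrac12\big[\boldsymbol{\Omega}_{\partial\Sigma_3},\boldsymbol{\Omega}_{\partial\Sigma_3}\big]\;=\;0,
\]
and it is precisely the combination $\frac{i}{\hbar}[d_M,\boldsymbol{\Omega}]-\frac{1}{\hbar^2}\boldsymbol{\Omega}^2$ in your expansion that vanishes, not the two summands individually. Your justification --- that the classical boundary charge has zero self-bracket, and that the family is ``covariantly constant'' under $d_M$ --- is the mistake: already classically one has $d_M\mathcal S_R+\tfrac12(\mathcal S_R,\mathcal S_R)=0$ (flatness of the Grothendieck connection), not $d_M\mathcal S_R=0$, so the quantized statement cannot separate either.

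The paper's argument is exactly this combined identity, proved by applying Stokes' theorem to the configuration-space integrals defining $\boldsymbol{\Omega}_\Gamma^{\mathbb B}$. Acting with $d_M$ produces $[\Omega_0,\boldsymbol{\Omega}_\Gamma]$ plus bulk terms $d_x\omega_\Gamma$, $d_{\bar x}\omega_\Gamma$, plus a boundary integral over $\partial\tilde{\mathrm C}_\Gamma$. The boundary faces split into three types: hidden faces ($\geq 3$ bulk points collapsing), which vanish by the anomaly-free assumption; a subgraph collapsing to the boundary, which produces $\tfrac12[\boldsymbol{\Omega}_{\mathrm{pert}},\boldsymbol{\Omega}_{\mathrm{pert}}]$; and pairs of bulk vertices collapsing, which come in four colour combinations (black/black, red/red, black/red, red/black) and cancel against $d_x\omega_\Gamma$ and $d_{\bar x}\omega_\Gamma$ via the four components \eqref{dcme_1}--\eqref{dcme_4} of the dCME. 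The remaining easy identities in your list ($d_M^2=0$, $\Delta^2=0$, $[\Delta,\boldsymbol{\Omega}]=0$, $[d_M,\Delta]=0$) are used implicitly to reduce $(\nabla_{\mathrm G})^2=0$ to this single equation, but the substance of the proof lives entirely in the Stokes/face analysis, not in the term-by-term cancellation you propose.
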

\begin{theoremn}[mdQME for anomaly-free globalized split RW model (Theorem \ref{thm:mdQME})]
 Consider the full covariant perturbative state $\hat{\psi}_{\Sigma_3,x}$ as a quantization of the anomaly-free globalized split RW model. Then 
   \begin{equation}
   \label{mdqme_thm}
        \bigg(d_M-i\hbar \Delta_{\mathcal{V}_{\Sigma_3, x}}+\frac{i}{\hbar}\boldsymbol{\Omega}_{\partial \Sigma_3}\bigg)\boldsymbol{\hat{\psi}}\surgR=0.
    \end{equation}
 \end{theoremn}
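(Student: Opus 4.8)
The plan is to reduce the modified differential Quantum Master Equation to a statement about the behavior of the perturbative state under the individual operators $d_M$, $\Delta_{\mathcal{V}_{\Sigma_3,x}}$ and $\boldsymbol{\Omega}_{\partial\Sigma_3}$, and then to assemble these pieces using the flatness of the qGBFV operator (Theorem \ref{thm:flatness}) together with the differential Classical Master Equation for $\Tilde{\Sc}\surgS$. First I would recall that the covariant perturbative state $\hat{\psi}\surgR$ is defined as a Feynman-diagrammatic expansion, i.e. a sum over graphs of integrals of products of propagators against the vertices coming from the interaction part of $\Tilde{\Sc}\surgS$ (the two Grothendieck-connection terms $(\hat R\sur)_j\,dx^j$ and $(\hat R\sur)_{\bar j}\,dx^{\bar j}$, expanded in powers of $\hat{\mathbf A}+\hat{\mathbf B}$), with the residual fields and the boundary fields left as free arguments. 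The key structural input is the standard quantum BV-BFV identity in the form proved in \cite{CMR17,CMoW19}: for a split AKSZ theory the non-globalized state satisfies $\big({-}i\hbar\Delta_{\mathcal V} + \tfrac{i}{\hbar}\boldsymbol\Omega_{\partial}\big)\hat\psi = 0$ provided the relevant quantum master equation holds on the bulk (anomaly-freeness) and the boundary BFV operator is the one induced by the bulk action. So the only genuinely new ingredient relative to \cite{CMoW19,CMoW20} is the extra term $d_M = d_x + d_{\bar x}$, the sum of the Dolbeault differentials on the target, acting on the $x$-dependence of the state.

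The main step is therefore to compute $d_M\hat\psi\surgR$ and show it cancels against the change produced by varying $x$ inside the propagators and vertices. Concretely, differentiating the diagrammatic expansion with respect to $x$ (both holomorphically and antiholomorphically) produces two types of contributions: (i) $d_M$ hitting a vertex, which by the explicit formulas for $R^i_j$ and $R^i_{\bar j}$ in the excerpt is controlled by the dCME $d_M\Tilde{\mathcal S}\surg + \tfrac12(\Tilde{\mathcal S}\surg,\Tilde{\mathcal S}\surg)=0$ — this is where the Maurer–Cartan/flatness property of the Grothendieck connection enters — and (ii) $d_M$ hitting a propagator, which is handled by the fact that the propagator is (up to the usual boundary terms) a chain homotopy for the de Rham differential on the source, so that $d_M$ of it is $x$-independent closed data. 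One then reorganizes terms: the $(\Tilde{\mathcal S}\surg,\Tilde{\mathcal S}\surg)$ piece coming from (i), together with the contraction of two vertices through a propagator, reproduces exactly the action of $-i\hbar\Delta_{\mathcal V}$ on a lower-order part of the state, while the uncontracted legs that reach the boundary assemble into $\tfrac{i}{\hbar}\boldsymbol\Omega_{\partial\Sigma_3}$ acting on the state. Equivalently and more cleanly, I would phrase the argument as: the combined operator $\nabla_{\mathrm G}$ is a differential on the space of states by Theorem \ref{thm:flatness}, the state $\hat\psi\surgR$ is constructed so that $\nabla_{\mathrm G}\hat\psi\surgR$ is a sum of boundary/anomaly contributions, each of which vanishes for the anomaly-free model, so that $\nabla_{\mathrm G}\hat\psi\surgR=0$.

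The hard part will be bookkeeping of the globalization corrections: one must check that the formal exponential map $\varphi$ and the resulting infinite jet expansion do not spoil the cancellations — i.e. that term-by-term in the $\hbar$-expansion and in the number of residual fields the identity (ii) holds including all higher Taylor coefficients of $R^i_j$, $R^i_{\bar j}$ in $\hat{\mathbf A}+\hat{\mathbf B}$ — and, crucially, that the anomaly (the potential obstruction coming from $\Delta_{\mathcal V}$ acting on individual vertices, i.e. the tadpole/short-loop contributions) vanishes, which is precisely the anomaly-freeness hypothesis and is ultimately the RW-theoretic incarnation of the AS relation (no tadpoles). I would isolate this in a lemma stating that the only possible anomaly is a local functional on the boundary of fixed form, and then invoke the hypothesis that the model is anomaly-free to set it to zero; combined with the dCME and the flatness of $\nabla_{\mathrm G}$, the mdQME \eqref{mdqme_thm} follows.
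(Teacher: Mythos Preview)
Your proposal has the right ingredients in spirit, but the organization contains genuine confusions that would prevent the argument from closing. The propagator $\eta$ is a form on the configuration space of the \emph{source} $\Sigma_3$ and does not depend on the background point $x\in M$, so $d_M$ never ``hits a propagator''; your item (ii) is empty. What the paper actually does (following \cite{CMoW19}) is apply Stokes' theorem for fiber integration over the compactified configuration space: $d_M\!\varint_{\mathrm C_\Gamma}\omega_\Gamma=\varint_{\mathrm C_\Gamma}(d+d_{\bar x})\omega_\Gamma-\varint_{\partial\mathrm C_\Gamma}\omega_\Gamma$, where $d$ is the total differential on $M\times\mathrm C_\Gamma$ and splits as $d_x+d_1+d_2$. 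The match with $\Delta_{\mathcal V_{\Sigma_3,x}}$ comes from $d_1$ (the source de~Rham differential) acting on propagators, via the chain-homotopy identity for $\eta$, \emph{not} from the bracket $(\Tilde{\mathcal S},\Tilde{\mathcal S})$; the match with $\Omega_0$ comes from $d_2$ on boundary fields; and $\boldsymbol\Omega_{\mathrm{pert}}$ arises from the boundary strata of $\partial\mathrm C_\Gamma$ where a subgraph collapses to $\partial\Sigma_3$, not from ``uncontracted legs''. The dCME is used to cancel precisely the faces where exactly two bulk vertices collapse, with four sub-cases (black/black, red/red, black/red, red/black) matching Eqs.~\eqref{dcme_1}--\eqref{dcme_4} against $(d_x+d_{\bar x})$ on the vertex tensors.

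Second, you conflate two distinct hypotheses. The anomaly-free assumption (Assumption~\ref{ass_hidden_faces}) concerns hidden faces where at least three bulk vertices collapse in the bulk; this is unrelated to tadpoles. Tadpole vanishing in RW (the AS relation) is why the separate \emph{unimodularity} hypothesis of \cite{CMoW19} can be dropped, but it does not substitute for anomaly-freeness. Finally, the flatness Theorem~\ref{thm:flatness} is logically independent of the mdQME and is not invoked in its proof; both are established by parallel Stokes arguments, so appealing to $(\nabla_{\mathrm G})^2=0$ does not shortcut the computation of $\nabla_{\mathrm G}\boldsymbol{\hat\psi}\surgR$.
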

The proof of both the theorems is very similar to the ones exhibited in \cite{CMoW19} for non linear split AKSZ sigma models. Hence, we refer to \cite{CMoW19} when the procedure is the same whereas we remark when there are differences (which are related to the presence of the sum of the holomorphic and antiholomorphic Dolbeault differentials in the quantum Grothendieck BFV operator instead of the de Rham differential as in \cite{CMoW19}). 

We provide an explicit expression for the BFV boundary operator up to one bulk vertices in the $\mathbb{B}$-representation by adapting to our case the degree counting techniques of \cite{CMoW19}. Unfortunately, due to some complications related to the number of Feynman rules, we are not able to provide an explicit expression of the BFV boundary operator in the $\mathbb{B}$-representation in the case of a higher number of bulk vertices. See \cite{Sac21} for a limited example of graphs that appear when there are three bulk vertices. 

This paper is structured as follows:
\begin{itemize}
    \item In Section \ref{sec:BV-BFV} we introduce the most important notions of the classical an quantum BV-BFV formalism. Moreover, we give an overview of \textit{AKSZ theories}.%, which are a particularly nice subclass of BV theories which often appear as suitable deformations of abelian $BF$ theories.
    \item In Section \ref{sec:RW_model} we introduce the necessary preliminaries to understand the RW model.%, as well as to construct its globalization in the case of a manifold with boundary, which is the focus of this paper. 
   % In this regard, we reserve Appendix \ref{app:TQFT} to give some basics about topological quantum field theories (TQFTs) of which the RW model is a prominent example. \textcolor{red}{OK?}
    \item In Section \ref{sec:Classical_Theory} we define an AKSZ model which upon globalization can be reduced to the RW model.
    \item In Section \ref{sec:comp_orig_RW} we compare our construction to  the original construction by Rozansky and Witten.
    \item In Section \ref{sec:comp_other} we compare our globalization construction with other globalization constructions of the RW model. 
    \item In Section \ref{sec:BF-like_formulation} we give a $BF$-like formulation by a splitting of the fields of the RW model in order to to be able to give a suitable description of its quantization.
    \item In Section \ref{sec:pert_quant_RW} we quantize the globalized split RW model according to the quantum BV-BFV formalism introduced in Section \ref{sec:BV-BFV}.
    \item In Section \ref{sec:mdQME} we introduce the \emph{quantum Grothendieck BFV operator} for the globalized split RW model, we prove that it is flat and, in the end, we use it to prove the modified differential Quantum Master Equation.
    \item Finally, in Section \ref{sec:outlook} we present some possible future directions.
\end{itemize}

\textbf{Notation}
Throughout the whole paper, we will keep the following conventions:
\begin{itemize}
    \item we will drop the wedge product wherever its presence would make the expressions too cumbersome;
    % \item $\phi$: formal exponential map
    % \item $\varphi$: holomorphic formal exponential map
    \item we will employ the Einstein summation convention, meaning that expressions of the form $A^iB_i$ should be interpreted as $\sum_i A^iB_i$;
    \item we will denote the dual of a vector space $V$ as $V^\vee$.
\end{itemize}

\textbf{Acknowledgements}
We thank A. S. Cattaneo, I. Contreras, P. Mnev and T. Willwacher for comments and remarks.
D. S. wants to thank P. Steffens for sending him his master thesis and for valuable comments.
This research was (partially) supported by the NCCR SwissMAP, funded by the Swiss National Science Foundation, and by the SNF grant No. 200020\_192080.
This paper is based on the master thesis of D. S. \cite{Sac21}.

\section{The BV-BFV formalism}
\label{sec:BV-BFV}

\subsection{Classical BV-BFV formalism}
In this section, we will recall the classical BV-BFV formalism as in \cite{CMR14}. We also refer to \cite{Mnev2019} for an excellent introduction to the BV formalism and to \cite{CattMosh1} for a more detailed exposition on the BV-BFV formalism.
\begin{defn}[BV manifold]
\label{BVmfld}
A \textit{BV manifold} is a triple $(\F, \omega, \mathcal{S})$ where $\F$ is a supermanifold\footnote{See \cite{BerLei75} for an original reference and \cite{CattaneoSchaetz2011,CattMosh1} for a concise introduction.} with $\mathbb{Z}$-grading, $\omega$ an odd symplectic form of degree $-1$ and $\mathcal{S}$ an even function of degree zero on $\F$, satisfying the \textit{Classical Master
Equation} (CME):
\begin{equation}
\label{cms}
    (\mathcal{S},\mathcal{S})=0.
\end{equation}
 We denote the odd Poisson bracket associated to $\omega$ with round brackets $(-,-)$. Usually we refer to $(-,-)$ as the \textit{BV bracket} (or \emph{anti bracket}).
\end{defn}
\begin{rmk}
%Let us devote some lines to a remark regarding the grading on $\F$. 
Note that we have two different gradings, the $\mathbb{Z}_2$-grading from
the supermanifold structure and an additional $\mathbb{Z}$-grading. In a physics context, the $\mathbb{Z}$-grading corresponds to the ghost number $gh$ and the $\mathbb{Z}_2$-grading corresponds to the ``parity", which distinguishes bosonic and fermionic particles.
\end{rmk}

Equivalently, one may introduce the \textit{Hamiltonian vector field} $Q$ of $\mathcal{S}$, with ghost degree 1, defined by
\begin{equation}
\label{cms2}
        \iota_{Q}\omega=\delta \mathcal{S}  
 \end{equation}
   Furthermore, we require $Q$ to be a \textit{cohomological vector field}, i.e. 
   \begin{equation}
   \label{Q}
       [Q,Q]=0,
   \end{equation}
    where $[-,-]$ denotes the Lie bracket of vector fields.
    
\begin{rmk}
 Using symplectic geometry tools, one can show that Eq. \eqref{Q} and Eq. (\ref{cms2}) together imply $(S,S)=\text{constant}$, which for degree reasons reduces to Eq. (\ref{cms}).
This allows us then to write an equivalent definition of a BV manifold as a quadruple collection of data $(\F, \omega, \mathcal{S}, Q)$ satisfying Eq. \eqref{cms2} and Eq. (\ref{Q}).
\end{rmk}

 \begin{defn}[BV theory] A $d$-dimensional \emph{BV theory} on a closed manifold $M$ is the association of a BV manifold to every closed $d$ manifold $M$:
 \begin{equation}
     M \mapsto (\F_M,\omega_M,\mathcal{S}_M,Q_M)
 \end{equation}
 \end{defn}
 The natural question one might ask is how these definitions extend to the case of a manifold with boundary, which will also be the relevant case for our work.
 
 \begin{defn}[BV extension] A BV theory $(\F_M,\omega_M,\mathcal{S}_M,Q_M)$ is a \textit{BV extension} of a local field theory $M \mapsto (F_M, S_M)$ if for all $d$-manifolds $M$, the degree zero part $(\F_M)_0$ of $\F_M$ satisfies $(\F_M)_0=F_M$ and $\mathcal{S}_M\vert_{(\mathcal{F}_M)_0}=S_M$. In addition we want $\omega_M, \mathcal{S}_M$ and $Q_M$ to be local.
 \end{defn}
 Extending the BV formalism to manifolds with boundary amounts to considering its Hamiltonian counterpart, namely the BFV formalism.
 \begin{defn}[BFV manifold] A \emph{BFV manifold} is a triple $(\F^\partial, \omega^\partial,Q^\partial)$, where similarly as in Definition \ref{BVmfld}, $\F^\partial$ is a graded manifold, $\omega$ an even symplectic form of degree zero, and $Q^\partial$ a degree 1 cohomological vector field on $\F^\partial$. Moreover, if $\omega^\partial=\delta \alpha^\partial$, i.e. exact, the BFV manifold is called \textit{exact}.
 \end{defn}
 The result of merging BV and BFV formalism is encapsulated in the following definition:
 \begin{defn}[BV-BFV manifold, \cite{CMR14}]
 A \emph{BV-BFV manifold} over a given exact BFV manifold $\F^\partial=(\F^\partial,\omega^\partial=\delta\alpha^\partial,Q^\partial)$
is a quintuple
\begin{equation}
    \F=(\F,\omega,\mathcal{S},Q,\pi)
\end{equation}
 with $\pi:\F \rightarrow \F^\partial$ a surjective submersion obeying
 \begin{equation}
 \label{cme_boundary}
 \iota_Q\omega=\delta \mathcal{S}+\pi^*\alpha^\partial
 \end{equation}
 and $Q^\partial=\delta \pi Q$, where $\delta \pi$ is the differential of $\pi$.  
 \end{defn}
 \begin{rmk}
 Note that if $\F^\partial$ is a point then $(\F,\omega,\mathcal{S})$ is a BV manifold. 
 \end{rmk}
 We will adopt the short notation $\pi:\F \rightarrow \F^\partial$ for a BV-BFV manifold.
 
 We can now formulate a generalization of a BV theory:
 \begin{defn}[BV-BFV theory]
 A $d$-dimensional \textit{BV-BFV theory} associates to every closed
$(d-1)$-dimensional manifold $\Sigma$ a BFV manifold $\F^\partial_\Sigma$, and to a $d$-dimensional manifold $M$ with
boundary $\partial M$ a BV-BFV manifold $\pi_M:\F_M \rightarrow \F^\partial_{\partial M}$.
\end{defn}
\begin{rmk}
 For $Q$ a Hamiltonian vector field of $\mathcal{S}$, one can formally write
 \[(\mathcal{S},\mathcal{S})=\iota_Q\iota_Q \omega = Q(\mathcal{S}).\] In the case of a BV-BFV theory for a manifold $M$ with boundary $\partial M$, we have
 \begin{equation}
 Q(\mathcal{S})=\pi^*(2\mathcal{S}^\partial-\iota_{Q^\partial}\alpha^\partial).
 \end{equation}
 This can be phrased equivalently as
 \begin{equation}
     \iota_Q\iota_Q\omega=2\pi^*\mathcal{S}^\partial,
 \end{equation}
 which we will refer to as the \textit{modified Classical Master Equation} (mCME).
\end{rmk}
An important as well as classical example of a BV-BFV theory are the \textit{$BF$-like theories}.
\begin{defn}[$BF$-like theory]A BV-BFV theory is called \emph{$BF$-like}  if 
\begin{equation}
    \begin{split}
        \F_M&=\big(\Omega^\bullet(M) \otimes V[1]\big) \oplus \big(\Omega^\bullet (M)\otimes V^\vee[d-2]\big),\\
   \mathcal{S}_M&=\varint_M\big(\braket{B,dA} + \mathcal{V}(A,B)\big),
    \end{split}
 \end{equation}
 with $V$ a graded vector space, $\langle -, - \rangle$ a pairing between $V^\vee$ and $V$, and $\mathcal{V}$ a density-valued function of the fields $A$ and $B$, such that $\mathcal{S}_M$ satisfies the CME for $M$ without boundary.
\end{defn}
\begin{rmk}
\label{bv-bfv:rmk_bf_like}
 Equivalently, by picking up a graded basis $e^i$ for $V$ and $e_i$ for $V^\vee$, we may define a $BF$-like theory as a BV-BFV theory with 
\begin{equation}
    \begin{split}
        \F_M&=\big(\Omega^\bullet(M) \otimes V[k_i]\big) \oplus \big(\Omega^\bullet (M)\otimes V^\vee[d-k_i-1]\big),\\
   \mathcal{S}_M&=\varint_M\big(B_idA^i + \mathcal{V}(A,B)\big).
    \end{split}
 \end{equation}
 To pass from one definition to the other, it is sufficient to set $k_i=1-|e_i|$, where $|e_i|$ is the degree of $e_i$.
\end{rmk}

\subsection{Quantum BV-BFV formalism}
\label{sec_qbvbfv}
In this section, we introduce a perturbative quantization method for BV-BFV theories compatible with cutting and gluing. Originally, this procedure was proposed in \cite{CMR17} under the name of \textit{quantum BV-BFV formalism}. We start by defining what is a quantum BV-BFV theory and then we will explain how to produce such a theory by quantizing perturbatively a classical BV-BFV theory.

\begin{defn}[Quantum BV-BFV theory]
\label{def_Quantum BV-BFV theory}
Given a BV-BFV theory\footnote{The perturbative quantization scheme goes through if certain conditions are satisfied. In the following, we will be interested to $BF$-like theories for which this method works smoothly.}, a $d$-dimensional \textit{quantum BV-BFV theory} associates
\begin{itemize}
    \item To every closed $(d-1)$-dimension manifold $\Sigma$ a graded $\mathbb{C}[\![\hbar]\!]$-module $\mathcal{H}_{\Sigma}$, called the \textit{space of states}. %It is produced as geometric quantization\footnote{For an introduction to geometric quantization see \cite{} \CIT Weinstein bates.} of the symplectic manifold $\mathcal{F}\bonN$.
    \item To every $d$-dimensional manifold (possibly with boundary) $M$:
    \begin{itemize}
        \item a degree 1 coboundary operator $\Omega\surM$ on $\mathcal{H}\surM$ called \textit{quantum BFV operator}. We call $\mathcal{H}\surM$ \textit{space of boundary states}
        \item a finite-dimensional BV manifold $\mathcal{V}_M$, called \textit{space of residual fields}. 
        \item a homogeneous element\footnote{Usually, the quantum state $\hat{\psi}_M$ will have degree 0. This is always the case when the gauge-fixing Lagrangian has degree 0, which is true for all the examples considered in this paper.}, the \textit{quantum state} $ \hat{\psi}_M\in \hat{\mathcal{H}}_M$. By denoting the space of half-densities on $\mathcal{V}_M$ as $\Dens^{\frac12}(\mathcal{V}_M)$, we define $\hat{\mathcal{H}}_M$ as $\hat{\mathcal{H}}_M\coloneqq\Dens^{\frac12}(\mathcal{V}_M)\otimes \mathcal{H}\surM$. It is a graded vector space endowed with two commuting boundary operators 
        \begin{equation}
            \hat\Omega_{\partial M}\coloneqq\Id \otimes \Omega\surM\quad \text{and}\quad \hat\Delta_{\mathcal{V}_M}\coloneqq\Delta_{\mathcal{V}_M}\otimes\Id,
        \end{equation}
        where $\Delta_{\mathcal{V}_M}$ is the canonical BV Laplacian on half-densities on residual fields. However, by abuse of notation, we will still write $\Omega_{\partial M}$ whenever we actually mean $\hat{\Omega}_{\partial M}$. The same is done for the BV Laplacian.\\
        We require the state to satisfy the \textit{modified Quantum Master Equation (mQME)}:
        \begin{equation}
        \label{ov:bv-bfv_mqme}
         (\hbar^2\Delta_{\mathcal{V}_M}+\Omega\surM)\hat{\psi}_M=0
        \end{equation}
    \end{itemize}
    %It arises as quantization of $S\bonN$.
\end{itemize}
\end{defn}

In the following, we will refer to a quantum BV-BFV theory with the shorthand notation 
\[M\mapsto (\hat{\mathcal{H}}_M, \hat{\psi}_M,\Delta_{\mathcal{V}_M}, \Omega\surM).\]

\begin{rmk}
Since $\Delta_{\mathcal{V}_M}^2=0$, $\Omega_M$ and $\Delta_M$ endow $\hat{\mathcal{H}}_M$ with the structure of a bicomplex.
\end{rmk}

\begin{rmk}
Here we would like to precise the terminology used in Definition \ref{def_Quantum BV-BFV theory} by relating it to the literature. First of all, we call $\mathcal{H}_\Sigma$ space of fields because it is constructed by quantizing the symplectic manifold of boundary fields (as we will see below). An element of this space is thus called state. It is produced by integrating over bulk fields. However, following Wilson's ideas, it is useful to split the contribution of bulk fields into ``low energy" (or ``slow") fields, which we refer to as residual fields, and a complement (usually called ``high energy" or ``fluctuation" fields) on which we integrate over. Hence, our state will depend on both residual fields and boundary contribution. We have the following cases:
\begin{enumerate}
    \item in absence of residual fields, $\hat{\psi}_M$ is referred as state in \cite{Wit89},
    \item when $M$ is a cylinder, $\hat{\psi}_M$ is an evolution operator,
    \item in absence of boundaries and residual fields, $\hat{\psi}_M$ is referred as partition function (see \eqref{ov.tft:part_funct_1}),
    \item in the presence of both boundaries and residual fields, $\hat{\psi}_M$ will be a proper state only after we have integrated out the residual fields. We note that this is actually not always possible (see e.g. \cite{Mo20b} and references therein).
\end{enumerate}
Keeping in mind these possibilities, we still prefer to refer to $\psi_M$ as state. %\textcolor{red}{REMARK 2.25 PERTURBATIVE QUANTUM CMR PAGE 19}
\end{rmk}

%\iffalse
\begin{defn}[Equivalence]
\label{bv-bfv:def_equiv}
Two quantum BV-BFV theories $(\hat{\mathcal{H}}_M, \hat{\psi}_M,\Delta_{\mathcal{V}_M}, \Omega\surM)$\\ and $(\hat{\mathcal{H'}}_M, \hat{\psi'}_M,\Delta_{\mathcal{V'}_M}, \Omega'\surM)$ are \textit{equivalent} if for every manifold $M$ with boundary $\partial M$ there is a quasi-isomorphism of bicomplexes
\begin{equation}
    I_M: (\hat{\mathcal{H}}_M,\Delta_{\mathcal{V}_M}, \Omega\surM)\rightarrow (\hat{\mathcal{H'}}_M,\Delta_{\mathcal{V'}_M}, \Omega'\surM)
\end{equation}
such that $I_M(\hat{\psi}_M)=\hat{\psi'}_M$.
\end{defn}

\begin{defn}[Change of data]
\label{bv-bfv:def_change_data}
Two quantum BV-BFV theories $(\hat{\mathcal{H}}_M, \hat{\psi}_M,\Delta_{\mathcal{V}_M}, \Omega\surM)$ and $(\hat{\mathcal{H'}}_M, \hat\psi'_M,\Delta_{\mathcal{V'}_M}, \Omega'\surM)$ are related by change of data if there is an operator $\tau$ of degree $0$ on $\mathcal{H}\surM$ and an element $\chi\in\hat{\mathcal{H}}_M$ with $\deg(\chi)=\deg(\psi)-1$ such that
\begin{equation}
    \begin{split}
        \Omega'\surM&=[\Omega\surM, \tau],\\
        \hat\psi'_M&=(\hbar^2\Delta_{\mathcal{V}_M}+\Omega\surM)\chi-\hat{\tau}\hat{\psi}_M,
    \end{split}
\end{equation}
where $\hat{\tau}=\Id\otimes \tau$ is the extension of $\tau$ to $\hat{\mathcal{H}}_M$.
\end{defn}
%\fi

\subsubsection{BV pushforward}
Let $(\mathcal{M}_1,\omega_1)$ and $(\mathcal{M}_2,\omega_2)$ two graded manifolds with odd symplectic forms $\omega_1$ and $\omega_2$ and canonical Laplacians $\Delta_1$ and $\Delta_2$, respectively. Consider $\mathcal{M}=\mathcal{M}_1 \times \mathcal{M}_2$ with symplectic form $\omega=\omega_1+\omega_2$ and canonical Laplacian $\Delta$. The space of half-densities on $\mathcal{M}$ factorizes as
\begin{equation}
    \Dens^{\frac12}(\mathcal{M})=\Dens^{\frac12}(\mathcal{M}_1)\hat{\otimes}\Dens^{\frac12}(\mathcal{M}_2).
\end{equation}
If we do a BV integration in the second factor, over the Lagrangian submanifold $\mathcal{L} \subset \mathcal{M}_2$, we are able to define a BV pushforward map on half-densities
\begin{equation}
    \varint_{\mathcal{L}}: \Dens^{\frac12}(\mathcal{M})\xrightarrow{\Id\otimes \varint_{\mathcal{L}}}\Dens^{\frac12}(\mathcal{M}_1).
\end{equation}
This map is also called \textit{fiber BV integral}, its properties are defined by the following theorem.

\begin{thm}[Batalin--Vilkovisky--Schwarz]
\label{bv_push}
Let $(\mathcal{M}_1,\omega_1)$ and $(\mathcal{M}_2,\omega_2)$ two graded manifolds with odd symplectic forms $\omega_1$ and $\omega_2$ and canonical Laplacian $\Delta_1$ and $\Delta_2$, respectively. Consider $\mathcal{M}=\mathcal{M}_1 \times \mathcal{M}_2$ with product symplectic form $\omega$ and canonical Laplacian $\Delta$ and let $\mathcal{L}, \mathcal{L}' \subset \mathcal{M}_2$ be any two Lagrangian submanifolds which can be deformed into each-other. For any half-density $f \in \mathrm{Dens}^{\frac12}(\mathcal{M})$ one has:
\begin{enumerate}
    \item $\varint_{\mathcal{L}}\Delta f=\Delta_1 \varint_{\mathcal{L}}f$
    \item $\varint_{\mathcal{L}}f-\varint_{\mathcal{L'}}f=\Delta_1 \xi$ for some $\xi \in \mathrm{Dens}^{\frac12}(\mathcal{M}_1)$, if $\Delta f=0$.
\end{enumerate}
\end{thm}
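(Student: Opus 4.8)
The plan is to reduce everything to the local structure of the BV Laplacian on a product of odd-symplectic manifolds. First I would choose Darboux coordinates $(x^a,\xi_a)$ on $\mathcal{M}_1$ and $(y^\mu,\eta_\mu)$ on $\mathcal{M}_2$, so that $\omega=\omega_1+\omega_2$ is the standard odd-symplectic form and the canonical Laplacian on half-densities splits as $\Delta=\Delta_1+\Delta_2$, where $\Delta_i=\partial^2/\partial(\text{coordinate})\partial(\text{momentum})$ acts only on the respective factor. Since $\Delta_1$ differentiates only in the $\mathcal{M}_1$-directions and the fiber integral $\varint_{\mathcal L}$ integrates only in the $\mathcal{M}_2$-directions (over a Lagrangian $\mathcal L\subset\mathcal M_2$), these two operations commute formally; this is essentially the statement that differentiation under the integral sign is legitimate. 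The content of part (1), $\varint_{\mathcal L}\Delta f=\Delta_1\varint_{\mathcal L} f$, then amounts to showing that $\varint_{\mathcal L}\Delta_2 f=0$. For this I would invoke the fact that $\Delta_2 f$ restricted to $\mathcal L$ is an exact form on $\mathcal L$ in the appropriate sense (the ``Schwarz lemma'': the integral over a Lagrangian of a $\Delta$-exact half-density vanishes), so that integration over the closed Lagrangian $\mathcal L$ kills it; this uses that $\mathcal L$ is without boundary (or that $f$ has suitable decay).

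For part (2), I would use the standard deformation argument. Given a one-parameter family of Lagrangians $\mathcal L_t$ interpolating between $\mathcal L=\mathcal L_0$ and $\mathcal L'=\mathcal L_1$, generated by a vector field $v_t$ on $\mathcal M_2$, I would compute
\begin{equation}
\frac{d}{dt}\varint_{\mathcal L_t} f=\varint_{\mathcal L_t}\mathcal{L}_{v_t} f,
\end{equation}
and then use Cartan's magic formula together with the identity relating the Lie derivative along a Hamiltonian-type vector field to $\Delta$ (namely that $\mathcal{L}_{v}$ acting on half-densities, for $v$ the Hamiltonian vector field of a function $H$, equals $[\Delta,H\cdot]$ up to signs). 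When $\Delta f=0$, this shows the $t$-derivative is $\Delta_1$ of something (after pulling $\Delta_1$ out as in part (1), and observing the $\Delta_2$-part integrates to zero over $\mathcal L_t$), and integrating in $t$ yields $\varint_{\mathcal L}f-\varint_{\mathcal L'}f=\Delta_1\xi$ with $\xi=\int_0^1\varint_{\mathcal L_t}(\,\cdot\,)\,dt$.

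The main obstacle I anticipate is not the formal manipulation but making the ``integral over a Lagrangian of a $\Delta$-exact half-density is zero'' step rigorous in the graded/infinite-dimensional-adjacent setting: one must be careful about convergence of the Berezin–Lebesgue integrals, about the half-density (rather than function) transformation law under the Lagrangian deformation, and about boundary terms when $\mathcal L$ is noncompact. In a paper at this level these points are standard and one would simply cite the original references of Batalin–Vilkovisky and Schwarz; accordingly I would present the argument at the level of the local coordinate computation, flag the decay/boundary hypotheses explicitly, and refer the reader to \cite{Schwarz1993} (or the corresponding treatments in \cite{CMR17,Mnev2019}) for the analytic details, since in our applications $\mathcal M_2$ is finite-dimensional and the integrand is Gaussian times polynomial, so convergence is automatic.
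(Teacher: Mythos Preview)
The paper does not actually prove this theorem: it is stated as a known result attributed to Batalin, Vilkovisky and Schwarz and is simply invoked, with no proof or sketch given in the text. Immediately after the statement the paper moves on to the next subsubsection. So there is no ``paper's own proof'' to compare your proposal against.

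That said, your outline is the standard argument one finds in the original references and in the expository treatments the paper cites (e.g.\ \cite{CMR17,Mnev2019}): split $\Delta=\Delta_1+\Delta_2$ in product Darboux coordinates, use that the integral of a $\Delta_2$-exact half-density over a Lagrangian in $\mathcal{M}_2$ vanishes to get (1), and for (2) differentiate along a Lagrangian homotopy and rewrite the Lie derivative as a $\Delta$-commutator to exhibit the difference as $\Delta_1$-exact. Your identification of the delicate point---making the ``$\varint_{\mathcal L}\Delta_2 f=0$'' step and the half-density transformation law precise---is accurate, and citing Schwarz for the analytic details is exactly what is done in the literature. In the context of this paper, where the theorem is used only as background for finite-dimensional BV pushforwards, your level of detail would already exceed what the authors provide.
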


\subsubsection{Summary}
Let us explain here how to construct a quantum BV-BFV theory. Consider a classical BV-BFV theory $\pi: \mathcal{F}_M\rightarrow\mathcal{F}^{\partial}_{\partial M}$. Note that from now on we will assume $\mathcal{F}_M$ and $\mathcal{F}^{\partial}_{\partial M}$ to be vector spaces. This will be the case when we will quantize the \textit{globalized split RW theory}. 

The main steps can be summarized as follows
\begin{enumerate}
\item[(i)]\textbf{(Geometric Quantization)} Given a $(d-1)$-manifold $\Sigma$, the BV-BFV theory associates to it a symplectic manifold $(\mathcal{F}^{\partial}_\Sigma, \omega^{\partial}_\Sigma, Q^{\partial}_\Sigma$). The idea here is to construct the space of states $\mathcal{H}_\Sigma$ and the quantum BFV operator $\Omega_\Sigma$ as a \textit{geometric quantization}\footnote{For an introduction to geometric quantization see e.g. \cite{BW97}.} of this symplectic vector space\footnote{In fact, for the case when $\Sigma$ is given by the boundary of another manifold, the BFV operator is constructed by the methods of \emph{deformation quantization} as it was also pointed out in \cite{Moshayedi2021}.}.
In order to accomplish such task, we require the data of a polarization $\mathcal{P}$ on this symplectic vector space, in particular, we consider real fibrating polarizations. Then, it is sufficient to split $\mathcal{F}^{\partial}_\Sigma$ into Lagrangian subspaces as
\begin{equation}
    \mathcal{F}^{\partial}_\Sigma\cong \mathcal{B}^{\mathcal{P}}_{\Sigma}\times \mathcal{K}^{\mathcal{P}}_\Sigma,
\end{equation}
with $\mathcal{K}^{\mathcal{P}}_\Sigma$ thought as a Lagrangian distribution on $\mathcal{F}^{\partial}_\Sigma$ and $\mathcal{B}^{\mathcal{P}}_{\Sigma}$ identified with the leaf space of the polarization, i.e. $\mathcal{B}^{\mathcal{P}}_{\Sigma}=\mathcal{F}^{\partial}_\Sigma/\mathcal{P}$. If we assume the 1-form $\alpha^{\partial}_\Sigma$ to vanish along $\mathcal{P}$ and in the case of real polarization, the space of states $\mathcal{H}_{\Sigma}$ is modeled as a space of complex-valued functionals on $\mathcal{B}^{\mathcal{P}}_{\Sigma}$ (or more generally $\mathcal{H}_{\Sigma}$ is the space of polarized sections of the trivial ``prequantum" line bundle over $\mathcal{F}^{\partial}_{\Sigma}$). This means that the space of states is obtained as a geometric quantization of the space of boundary fields as we preannounced above. %more precisely $\mathcal{H}_{\Sigma}=\Dens^{\frac12}(\mathcal{B}^{\mathcal{P}}_{\Sigma}))$. 
On the other hand, when $\alpha^{\partial}_\Sigma   \Big|_{\mathcal{P}}\neq 0$, we can use a gauge transformation and modify $\alpha^{\partial}_\Sigma$ by an exact term $\delta f^{\mathcal{P}}_\Sigma$, with $f^{\mathcal{P}}_\Sigma$ a local functional. Consequently, if we assume from now on $\Sigma=\partial M$, to preserve Eq. (\ref{cme_boundary}), we change $\Sc$ by a boundary term obtaining $\Sc^{\mathcal{P}}$. In this case, with $\Sc^{\mathcal{P}}$ and $\alpha^{\mathcal{P}}_{\partial M}$, we have a new BV-BFV manifold.

%The space of states $\mathcal{H}\surPn$ is constructed as a geometric quantization of the BFV theory (located on the boundary when we are aiming to quantize a manifold with boundary). 
%    \item[i](Polarization) Choose a polarization $\mathcal{P}$ on $\mathcal{F}\surPn$, given by Lagrangian foliation with a smooth leaf space $\mathcal{B}\surPn$. It is . The space of states  is a graded vector space produced as 
   
    \item[(ii)] \textbf{(Extraction of boundary fields)} The aim is to split bulk and boundary field contributions in the space of fields $\mathcal{F}_M$. We proceed as follows: consider the projection $\mathcal{F}^{\partial}_{\partial M}\xrightarrow{p^{\mathcal{P}}_{\partial M}} B^{\mathcal{P}}_{\partial M}$, we have a surjective submersion
    \begin{equation}
        p^{\mathcal{P}}_{\partial M}\circ \pi:\mathcal{F}_M\rightarrow B^{\mathcal{P}}_{\partial M}.
    \end{equation}
    Assume we can choose a section $\sigma$ on $\mathcal{F}_M\rightarrow B^{\mathcal{P}}_{\partial M}$ such that we can split
    \begin{equation}
    \label{ov:bv-bfv_splitting}
        \mathcal{F}_M\cong \sigma(B^{\mathcal{P}}_{\partial M}) \otimes \mathcal{Y}.
    \end{equation}
    The space $\sigma(B^{\mathcal{P}}_{\partial M})$ is a bulk extension of $B^{\mathcal{P}}_{\partial M}$ which we denote as $\tilde{B}^{\mathcal{P}}_{\partial M}$.
    This splitting is subjected to the following assumption\footnote{In order to accomplish this assumption, we are forced to choose singular extensions of boundary fields. The boundary fields are thus extended by 0 to the bulk.}:
    \begin{assump}
    \label{ov:bv-bfv_ass1}
    There is a weakly symplectic form $\omega_{\mathcal{Y}}$ on $\mathcal{Y}$ such that $\omega_M$ is the extension of $\omega_{\mathcal{Y}}$ to $\mathcal{F}_M$.
    \end{assump}
    %From now on, we suppress $\sigma$ from the notation. 
    In the splitting (\ref{ov:bv-bfv_splitting}), the space $\mathcal{Y}$ is a complement of $\Tilde{B}^{\mathcal{P}}_{\partial M}$ which is interpreted as the space of \textit{bulk fields} (while $\Tilde{B}^{\mathcal{P}}_{\partial M}$ is thought of as the space of \textit{boundary fields} extended to the bulk). 
    \item[(iii)] \textbf{(Construction of $\Omega_{\partial M}$)} As a result of the geometric quantization procedure, $\mathcal{H}_{\partial M}$ is a cochain complex. Following the same line of thought, we construct the coboundary operator $\Omega_{\partial M}$ as quantization of the boundary actions $\Sc^{\partial}_{\partial M}$. We can proceed as follows. Assume we have Darboux coordinates $(q,p)$ on $\mathcal{F}^{\partial}_{\partial M}$. In particular $q$ are coordinates on $\mathcal{B}^{\mathcal{P}}_{\partial M}$ and $p$ are coordinates on the fiber $p^{\mathcal{P}}_{\partial M}:\mathcal{F}^{\partial}_{\partial M}\rightarrow B^{\mathcal{P}}_{\partial M}$, which is still part of $\mathcal{Y}$. We define $\Omega_{\partial M}$ as the standard-ordering quantization of $\Sc^{\partial}_{\partial M}$:
    \begin{equation}
    \label{omega_std_ordering}
        \Omega_{\partial M}\coloneqq \Sc^{\partial}_{\partial M}\bigg(q, -i\hbar \frac{\partial}{\partial q}\bigg),
    \end{equation}
    where all the derivatives are positioned on the right. 
    \item[(iv)] \textbf{(Choice of residual fields)} %In order to recognize which contributions in the bulk fields come from residual fields ("low-energy" fields) and fluctuations ("high-energy" fields), 
    We further split the bulk contributions in $\mathcal{Y}$ into residual fields and a complement $\mathcal{Y}'$, which represents the space of \textit{fluctuation fields} (also called ``high-energy" or ``fast" fields). This means, we choose a splitting
    \begin{equation}
        \mathcal{Y}\cong \mathcal{V}^{\mathcal{P}}_M\times \mathcal{Y}'
    \end{equation}
    which depends on the boundary polarization and satisfies the following assumption:
    \begin{assump}
    \label{ov:bv-bfv_ass2}
    The following holds:
    \begin{enumerate}
        \item[(1)] $\mathcal{V}^{\mathcal{P}}_M$ and $\mathcal{Y}'$ are BV manifolds,
        \item[(2)] $\mathcal{V}^{\mathcal{P}}_M$ is finite-dimensional,
        \item[(3)] The symplectic form splits as $\omega_{\mathcal{Y}}=\omega_{\mathcal{V}^{\mathcal{P}}_M}+\omega_{\mathcal{Y}'}$.
    \end{enumerate}
    \end{assump}
    Usually, the space $\mathcal{V}^{\mathcal{P}}_M$ is chosen as the space of solutions of $\delta \Sc^0_M=0$ modulo gauge transformations, where $\Sc^0_M$ is the quadratic part of the action $\Sc_M$. This is called minimal choice, and we refer to this space as the space of \textit{zero modes}. Other choices are possible and they are all related by the equivalence relations defined above (see Definition \ref{bv-bfv:def_equiv} and Definition \ref{bv-bfv:def_change_data}). Finally, we sum up the last two bullet points with the following definition:
    \begin{defn}[Good splitting, \cite{CMoW19}]
    \label{bv-bfv_good_split}
    A splitting 
    \begin{equation}
        \mathcal{F}_M\cong \mathcal{B}^{\mathcal{P}}_{\partial M}\times \mathcal{V}^{\mathcal{P}}_M\times \mathcal{Y}'
    \end{equation}
    is called \textit{good} if it satisfies Assumptions \ref{ov:bv-bfv_ass1} and \ref{ov:bv-bfv_ass2}.
    \end{defn}
    
    Given a good splitting, an element $\mathbf{X}$ of $\mathcal{F}_M$ is written accordingly as $\mathbf{X}=\mathbb{X}+\mathsf{x}+\xi$.
    \item[(v)] \textbf{(The state)} When we have a good splitting, the gauge-fixing consists of choosing a Lagrangian $\mathcal{L}\subset \mathcal{Y}'$. Set $\mathcal{Z}_M=\mathcal{B}^{\mathcal{P}}_{\partial M}\times \mathcal{V}^{\mathcal{P}}_M$ (\textit{bundle of residual fields} over $\mathcal{B}^{\mathcal{P}}_{\partial M}$). Then, we define $\hat{\mathcal{H}}^{\mathcal{P}}_M\coloneqq\Dens^{\frac12}(\mathcal{Z}_M)= \Dens^{\frac12}(\mathcal{V}^{\mathcal{P}}_M)\times \Dens^{\frac12}(\mathcal{B}^{\mathcal{P}}_{\partial M})$ and the BV Laplacian $\hat{\Delta}_{\mathcal{V}^{\mathcal{P}}_M}\coloneqq\Id\otimes \Delta_{\mathcal{V}^{\mathcal{P}}_M}$ (as before the hat will be omitted in the following). 
    The state is then defined as a BV pushforward of the exponential of the bulk action
    \begin{equation}
    \label{ov.bv-bfv_state}
        \hat{\psi}_M(\mathbb{X},\mathsf{x})\coloneqq\varint_{\xi\in \mathcal{L}}e^{\frac{i}{\hbar}\Sc_M(\mathbb{X}+\mathsf{x}+\xi)}.
    \end{equation}
    Moreover, if $\Delta_{\mathcal{Y}}\Sc^{\mathcal{P}}_M=0$, as a consequence of Theorem \ref{bv_push}:
    \begin{itemize}
        \item $\hat{\psi}_M$ is closed under the coboundary operator $\hbar^2\Delta_{\mathcal{V}^{\mathcal{P}}_M}+\Omega\surM$, i.e. Eq. (\ref{ov:bv-bfv_mqme}) holds,
        \item the state does not change under smooth deformation of the gauge-fixing Lagrangian $\mathcal{L}$ used in the BV pushforward up to $(\hbar^2\Delta_{\mathcal{V}^{\mathcal{P}}_M}+\Omega\surM)$-exact terms. 
    \end{itemize}
    % As a consequence of the previous,
    % \begin{equation}
    %     (\hbar^2\Delta_{\mathcal{V}_M}+\Omega\surM)\hat{\psi}_M=0
    % \end{equation}
    \item[(vi)] \textbf{(Perturbative expansion)} The procedure detailed so far is valid for finite-dimensional situations. However, the space of fields $\mathcal{F}_M$ is usually infinite-dimensional, since, for example, it can contain the de Rham complex of differential forms over $\partial M$. As a result, the integral in Eq. (\ref{ov.bv-bfv_state}) is ill-defined. To fix this problem, we \textit{define} the integral perturbatively, i.e. as formal power series in $\hbar$ with coefficients given by sums of Feynman diagrams. For the perturbative expansion to be well-defined, we need the following assumption to be satisfied:
    \begin{assump}
    \label{ov.bv-bfv_ass3}
    The restriction of the action $\Sc^{\mathcal{P}}_M$ to $\mathcal{L}$ has isolated critical points.
    \end{assump}
    We note that this does not hold for every Lagrangian.
    
    \begin{rmk}
    It is important to highlight that, for Assumption \ref{ov.bv-bfv_ass3} to be satisfied, we need to choose carefully the residual fields. The problem here is represented by \textit{zero modes} $\mathcal{V}^0_M$, which can be present in the quadratic part of the bulk action. The zero modes are bulk fields configurations that are annihilated by the kinetic operator and correspond to the tangent directions to the Euler--Lagrange moduli space (solutions of $\delta \Sc^0_M=0$ modulo gauge transformation). Hence, their presence implies non-isolated critical points in the action: the perturbative expansion is obstructed. To solve this situation, we need the space of residual fields to at least contain the space of zero modes, i.e. $\mathcal{V}^0_M\subseteq \mathcal{V}_M$. In this way, we can obtain a good gauge-fixing Lagrangian, which satisfies Assumption \ref{ov.bv-bfv_ass3}. We call it \textit{minimal choice} (or \textit{minimal realization} of the state)\footnote{We have a non-minimal realization when $\mathcal{V}^0_M\subset \mathcal{V}_M$. In that case, we can pass from a non-minimal realization to a smaller one by BV pushforward, which can be interpreted as a sort of \textit{renormalization group flow} \cite{Ir18}.}, when $\mathcal{V}_M\cong\mathcal{V}^0_M$. 
    \end{rmk}
    
    \noindent When we pass to the infinite case, another problem arises: the BV Laplacian is ill-defined. Therefore, every equation containing it is only \textit{formal}. In this regard, Theorem \ref{bv_push} has only been proven in the finite-dimensional setting. Hence, we can not conclude that the mQME is satisfied even if the action is formally annihilated by the Laplacian. The mQME has to be verified for each theory at the level of Feynman diagrams.  
    In this paper, we add the globalized RW theory to the class of $BF$-like theories for which the mQME has been proven in the infinite-dimensional perturbative setting. The proof relies on Stokes' Theorem for integrals over compactified configuration spaces.
\end{enumerate}

\subsection{Quantum states in $BF$-like theories} 
\label{bv-bfv_sub_qs_bf_like}
In $BF$-like theories one can define the quantum state in a perturbative way using  Feynman graphs via integrals defined on the configuration space of these graphs. Two convenient choices of polarizations in $BF$-like theories are the $\frac{\delta}{\delta \mathbf{A}}$- and the $\frac{\delta}{\delta \mathbf{B}}$-polarization. Concretely, we fix a polarization  by splitting the boundary $\partial M$ into two parts $\partial_1 M$ and $\partial_2M$, where we choose the polarization  $\frac{\delta}{\delta \mathbf{B}}$ on $\partial_1M$ and $\frac{\delta}{\delta \mathbf{A}}$ on $\partial_2M$. The associated space of leaves for the $A$-leaf and $B$-leaf are denoted by $\mathbb{A} \in \mathcal{B}_{\partial M}^{\frac{\delta}{\delta \mathbf{B}}}$ and $\mathbb{B} \in \mathcal{B}_{\partial M}^{\frac{\delta}{\delta \mathbf{A} }}$ respectively.

For $BF$-like theories, the first splitting determined by the polarization is
\begin{equation}
\begin{split}
    \mathcal{B}^{\mathcal{P}}_{\partial M}&=\big(\Omega^\bullet (\partial_1M)\otimes V[1]\big) \oplus \big(\Omega^\bullet(\partial_2M)\otimes V^{\vee}[d-2]\big), \\
    \mathcal{Y}&=\big(\Omega^\bullet(M,\partial_1M) \otimes V[1]\big) \oplus \big(\Omega^\bullet(M,\partial_2M)\otimes V^{\vee}[d-2]\big).
\end{split}
\end{equation}
The minimal space of residual fields is
\begin{equation}
    \mathcal{V}^{\mathcal{P}}_M \cong \big(H^\bullet (M,\partial_1M) \otimes V[1]\big) \oplus (H^\bullet \big(M,\partial_2M)\otimes V^{*}[d-2]\big)  
\end{equation}
for $V$ some graded vector space. One way to get a good splitting is then by considering a splitting of the complex of de Rham forms with relative boundary conditions into a subspace $\mathcal{V}^{\mathcal{P}}_M$ isomorphic to cohomology and a complementary space $\mathcal{Y}'$ in a way compatible with the symplectic structure. This can be done by using a Riemannian metric and embed the cohomology as harmonic forms. 
As a result, the space of fields $\F_M$ splits as $\F_M=\mathcal{B}^{\mathcal{P}}_{\partial M} \times \mathcal{V}^{\mathcal{P}}_M \times \mathcal{Y}'$, where an element $(\mathbf{A}, \mathbf{B}) \in \F_M$ is given by
\begin{equation}
\begin{split}
    \mathbf{A}&=\mathbb{A} + \underline{\mathbf{A}}=\mathbb{A} +\mathsf{a} + \alpha,\\
    \mathbf{B}&=\mathbb{B}+\underline{\mathbf{B}}=\mathbb{B} + \mathsf{b} + \beta.
\end{split}
\end{equation}

There is one last ingredient that we need to introduce before defining the quantum state, namely the \textit{composite fields}. We denote them by square brackets $[  \hspace{2mm}]$, i.e. for a boundary field $\mathbb{A}$ we have $[\mathbb{A}^{i_1} \dots \mathbb{A}^{i_k}]$. One can think of them as a \textit{regularization}  of higher functional derivatives, in the sense %of the higher functional derivative $\frac{\delta^k}{\delta\mathbb{A}^{i_1}\dots \delta \mathbb{A}^{i_k}}$. 
that a higher functional $\frac{\delta^k}{\delta \mathbb{A}^{i_1}\dots \delta\mathbb{A}^{i_k}}$ is replaced by a first order functional derivative $\frac{\delta^k}{[\delta \mathbb{A}^{i_1}\dots \delta\mathbb{A}^{i_k}]}$. For further details see \cite{CMR17}.

%Practically this corresponds to introducing additional boundary vertices as in FIG.\textcolor{red}{REFER FIGURE}

\begin{defn}[Regular functional]
\label{regularfct}
A \textit{regular functional} on the space of base boundary fields is a linear combination of expressions of the form
\begin{equation}
\label{ov:bv-bfv.reg_func}
    \begin{split}
        \varint_{\mathrm{C}_{m_1}(\partial_1M)\times\mathrm{C}_{m_2}(\partial_2M)} L^{J^1_1\dots J^{l_1}_1\dots J^1_2\dots J^{l_2}_2\dots}_{I^1_1\dots I^{r_1}_1\dots I^1_2\dots I^{r_2}_2\dots}\wedge \pi^*_1\prod^{r_1}_{j=1}\bigg[\mathbb{A}^{I^j_1}\bigg]\wedge \dots&\wedge \pi^*_{m_1}\prod^{r_{m_1}}_{j=1}\bigg[\mathbb{A}^{I^j_{m_1}}\bigg]\wedge \dots \\
        &\wedge \pi^*_1\prod^{l_1}_{j=1}\bigg[\mathbb{B}_{J^j_1}\bigg]\wedge \dots \wedge \pi^*_{m_1}\prod^{l_{m_2}}_{j=1}\bigg[\mathbb{B}_{J^j_{m_2}}\bigg],
    \end{split}
\end{equation}
where $I^j_i$ and $J^j_i$ are (target) multi-indices and $L^{J^1_1\dots J^{l_1}_1\dots J^1_2\dots J^{l_2}_2\dots}_{I^1_1\dots I^{r_1}_1\dots I^1_2\dots I^{r_2}_2\dots}$ is a smooth differential form on the direct product of compactified configuration spaces  $\mathrm{C}_{m_1}(\partial_1M)\times \mathrm{C}_{m_2}(\partial_2M)$, which depends on the residual fields. A regular functional is called \textit{principal} if all multi-indices have length 1. For more details on configuration spaces and configuration space integrals we refer to \cite{Kontsevich1994,BC,CamposIdrissiLambrechtsWillwacher2018} (see also Remark \ref{rem:conf}).
\end{defn}

\begin{defn}[Full space of boundary states]
The \textit{full space of boundary states} $\mathcal{H}^{\mathcal{P}}_{\partial M}$ is given by the linear combination of regular functionals of the form (\ref{ov:bv-bfv.reg_func}).
\end{defn}
\begin{defn}[Principal space of boundary states] The \textit{principal space of boundary
states} $\mathcal{H}_{\partial M}^{\mathcal{P}, \text{princ}}$ is defined as the subspace of $H^{\mathcal {P}}_{\partial M}$
, where we only consider principal regular functionals. 
\end{defn}
We use Feynman rules and graphs to define the state. Let us elaborate them in the BV-BFV setting (for perturbations of abelian $BF$-theory).
\begin{defn}[($BF$) Feynman graph]
A \textit{(BF) Feynman graph} is an oriented graph with three types of vertices $V(\Gamma)=V_{\text{bulk}}(\Gamma)\sqcup V_{\partial_1}\sqcup V_{\partial_2}$, called bulk vertices and type 1 and 2 boundary vertices, such that
\begin{itemize}
    \item bulk vertices can have any valence,
    \item type 1 boundary vertices carry any number of incoming half-edges (and no outgoing half-edges),
    \item type 2 boundary vertices carry any number of outgoing half-edges (and no incoming half-edges),
    \item multiple edges and loose half-edges (leaves) are allowed.
\end{itemize}
\end{defn}
A labeling of a Feynman graph is a function from the set of half-edges to $\{1,\dots,\dim V\}$.
\begin{defn}[Principal graph]A Feynman graph is called \textit{principal} if all boundary vertices (type 1 and type 2) are univalent or zero valent.
\end{defn}
 Let $\Gamma$ be a Feynman graph and $M$ a manifold with boundary $\partial M=\partial_1 M \sqcup \partial_2M$ and define
\begin{equation}
    \mathrm{Conf}_\Gamma(M)\coloneqq \mathrm{Conf}_{V_{\text{bulk}}}(M) \times \mathrm{Conf}_{V_{\partial_1}}(\partial_1 M) \times \mathrm{Conf}_{V_{\partial_2}}(\partial_2 M).
\end{equation}
The Feynman rules are given by a map associating to a Feynman graph $\Gamma$ a differential form $\omega_\Gamma \in \Omega^\bullet(\mathrm{Conf}_\Gamma(M))$.
\begin{defn}[($BF$) Feynman rules] Let $\Gamma$ be a labeled Feynman graph. We choose a configuration $\iota:V(\Gamma) \rightarrow \mathrm{Conf} (\Gamma)$, such that decompositions are respected. Then, we \textit{decorate} the graph according to the following rules, namely, the \textit{Feynman rules}:
\begin{itemize}
    \item Bulk vertices in $M$ decorated by ``vertex tensors"
    \begin{equation}
        \mathcal{V}^{i_1\dots i_s}_{j_1\dots j_t} \coloneqq \frac{\partial^{s+t}}{\partial \underline{\mathbf{A}}^{i_1}\dots\partial \underline{\mathbf{A}}^{i_s} \partial \underline{\mathbf{B}}_{j_1}\dots \partial \underline{\mathbf{B}}_{j_t}} \bigg|_{\underline{\mathbf{A}}=\underline{\mathbf{B}}=0} \mathcal{V}(\underline{\mathbf{A}},\underline{\mathbf{B}}),
    \end{equation}
    where $s, t$ are the out- and in-valencies of the vertex and $i_1, \dots, i_s$ and $j_1, \dots, j_t$ are the labels of the out- and in- oriented half-edges and $\mathcal{V}(\underline{\mathbf{A}},\underline{\mathbf{B}})$ is the interaction term in a $BF$-like theory.

\item Boundary vertices $v \in V_{\partial_1}(\Gamma)$ with incoming half-edges labeled $i_1, \dots, i_k$ and no out-going  half-edges are decorated by a composite field $[\mathbb{A}^{i_1} \dots \mathbb{A}^{i_k}]$ evaluated at the point (vertex location) $\iota(v)$ on $\partial_1 M$.
\item Boundary vertices $v \in V_{\partial_2}$ on $\partial_2 M$ with outgoing half-edges labeled $j_1 \dots j_l$ and no in-going  half-edges are decorated by $[\mathbb{B}_{j_1} \dots \mathbb{B}_{j_l}]$ evaluated at the point on $\partial_2 M$. 
\item Edges between vertices $v_1, v_2$ are decorated with the propagator $\eta (\iota(v_1),\iota(v_2))\cdot \delta^i_j$, with $\eta$ the propagator induced by $\mathcal{L} \subset \mathcal{Y}'$, the gauge-fixing Lagrangian.
\item Loose half-edges (leaves) attached to a vertex $v$ and labeled $i$ are decorated with the residual
fields $\mathsf{a}_i$ (for out-orientation), $\mathsf{b}^i$
(for in-orientation) evaluated at the point $\iota(v)$.
\end{itemize}

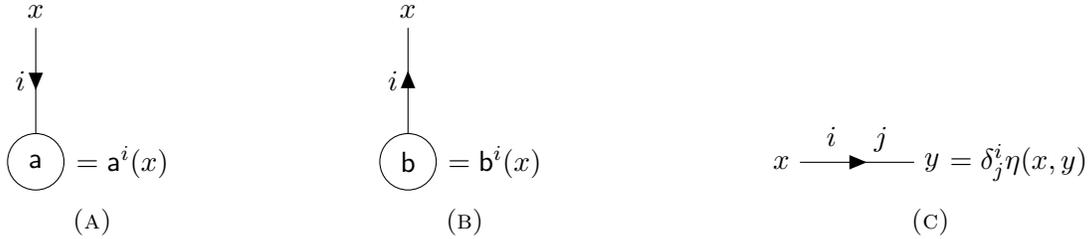
\begin{figure}[hbt!]
    \centering
    \begin{subfigure}[b]{0.25\textwidth}
    \centering
\begin{tikzpicture}
  \begin{feynman}[every blob={/tikz/fill=white!30,/tikz/inner sep=1pt}]
  \vertex[blob] (m) at (0,-2) {$\mathsf{a}$};
  \vertex (a) at (0,0) {$x$} ;
 \diagram*{
      (a) -- [fermion, edge label' = $i$] (m)
    };
  \vertex [right=3em of m] {\(=\mathsf{a}^i(x)\)};
  \end{feynman}
\end{tikzpicture}
\caption{}
        \label{}
    \end{subfigure}%
    \hfill
\begin{subfigure}[b]{0.25\textwidth}
\centering
    \begin{tikzpicture}
  \begin{feynman}[every blob={/tikz/fill=white!30,/tikz/inner sep=0.5pt}]
  \vertex[blob] (m) at (0,-2) {$\mathsf{b}$};
  \vertex (a) at (0,0) {$x$} ;
 \diagram*{
      (m) -- [fermion, edge label = $i$] (a)
    };
     \vertex [right=3em of m] {\(=\mathsf{b}^i(x)\)};
  \end{feynman}
\end{tikzpicture}
  \caption{}
    \label{}
\end{subfigure}%
\hfill
\begin{subfigure}[b]{0.4\textwidth}
\centering
    \begin{tikzpicture}
  \begin{feynman}[every blob={/tikz/fill=white!30,/tikz/inner sep=0.5pt}]
  \vertex (a) at (0,0) {$x$};
  \vertex (b) at (2,0) {$y$} ;
 \diagram*{
      (a) -- [fermion, edge label = $i\hspace{5mm} j$] (b)
    };
     \vertex [right=3em of b] {\(=\delta^i_j\eta(x,y)\)};
  \end{feynman}
\end{tikzpicture}
  \caption{}
    \label{}
\end{subfigure}
    \caption{Feynman rules for residual fields and propagator.}
    \label{bv-bfv:fig_feyn_rules1}
\end{figure}

\begin{figure}[hbt!]
    \centering
    \begin{subfigure}[b]{0.3\textwidth}
    \centering
       \begin{tikzpicture}
  \begin{feynman}[every blob={/tikz/fill=white!30,/tikz/inner sep=0.5pt}]
  %\vertex (m) at (0,0) {dot};
  \vertex (a) at (-1,0);% {$\partial_2 \Sigma_3$};
  \vertex (b) at (1,0);
  \vertex (m1) at (0, 1);
 
 \diagram*{
      (a) -- m [dot] -- (b),
      (m1) -- [fermion] m
      %(m1) -- [fermion] (m)
    };
  \vertex [below=0.75em of m] {\(\mathbb{B}\)};
  \vertex [left=0.25em of a] {\(\partial_2 \Sigma_3\)};
  \end{feynman}
 % \draw (m) circle (0.5);
\end{tikzpicture}
\caption{}
        \label{}
    \end{subfigure}%
    \quad
\begin{subfigure}[b]{0.3\textwidth}
\centering
    \begin{tikzpicture}
  \begin{feynman}[every blob={/tikz/fill=white!30,/tikz/inner sep=0.5pt}]
  %\vertex (m) at (0,0) {dot};
  \vertex (a) at (-1,0);% {$\partial_2 \Sigma_3$};
  \vertex (b) at (1,0);
  \vertex (m1) at (0, -1);
 
 \diagram*{
      (a) -- m [dot] -- (b),
      m -- [fermion] (m1)
      %(m1) -- [fermion] (m)
    };
  \vertex [above=0.75em of m] {\(\mathbb{A}\)};
  \vertex [left=0.25em of a] {\(\partial_1 \Sigma_3\)};
  \end{feynman}
 % \draw (m) circle (0.5);
\end{tikzpicture}
  \caption{}
    \label{}
\end{subfigure}%
\hfill
\begin{subfigure}[b]{0.42\textwidth}
\centering
        \begin{tikzpicture}
  \begin{feynman}[every blob={/tikz/fill=white!30,/tikz/inner sep=0.5pt}]
  \vertex (a) at (-1.2, 0.35);
  \vertex (b) at (-0.3, 0.79);
  \vertex (c) at (+1.2, 0.4);
  \vertex (e) at (-1.2, -0.35);
  \vertex (f) at (-0.3, -0.79);
  \vertex (g) at (+1.2, -0.4);
%   \vertex (e) at (-0.8, -0.71);
%   \vertex (f) at (+0.5, -0.71);
  \diagram*{
  (a) -- [fermion] m [dot],
  (b) -- [fermion] m [dot],
  (c) -- [fermion] m [dot],
  (e) -- [anti fermion] m [dot],
  (f) -- [anti fermion] m [dot],
  (g) -- [anti fermion] m [dot],
%   (e) -- m [dot],
%   (f) -- m [dot]
  }; 
 \vertex [right=7em of m] {\(=\mathcal{V}^{i_1\dots i_s}_{j_1\dots j_t}\)};
 
 \vertex [above=0.2em of m, label=80:\(\dots\)] {}; 
 %\vertex [above=0.2em of m, label=140:\(j_2\)] {}; 
 \vertex [] at (-1.35,0.4) {\(j_1\)};
  \vertex [] at (-0.45,1) {\(j_2\)};
 \vertex [] at (1.4, 0.4) {\(j_t\)};
  \vertex [] at (-1.35,-0.4) {\(i_1\)};
  \vertex [] at (-0.45,-1) {\(i_2\)};
 \vertex [] at (1.4, -0.4) {\(i_s\)};
  \end{feynman}
  \end{tikzpicture}
  \caption{}
  \label{}
    \end{subfigure}%
    \caption{Feynman rules for boundary fields and interaction vertices.}
    \label{bv-bfv:fig_feyn_rules2}
\end{figure}
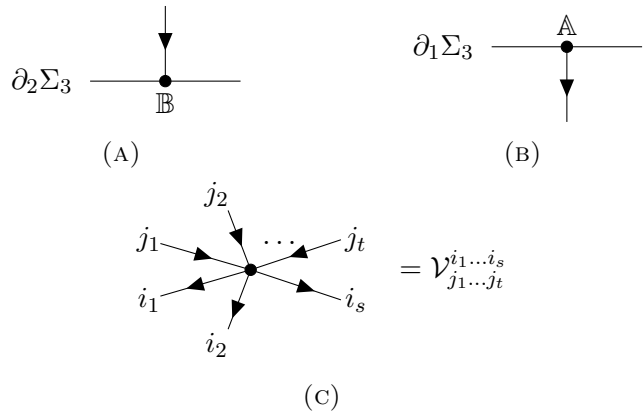

\begin{figure}[hbt!]
    \centering
    \begin{subfigure}[b]{0.3\textwidth}
    \centering
       \begin{tikzpicture}
  \begin{feynman}[every blob={/tikz/fill=white!30,/tikz/inner sep=0.5pt}]
  %\vertex (m) at (0,0) {dot};
  \vertex (a) at (-2,0);% {$\partial_2 \Sigma_3$};
  \vertex (b) at (2,0);
  \vertex (d) at (0.4, 0.75);
  \vertex (e) at (1, 0.75);
  \vertex (f) at (-1, 0.75);
 \diagram*{
      (a) -- m [dot] -- (b),
      (d) -- [anti fermion] m,
      (e) -- [anti fermion] m,
      (f) -- [anti fermion] m
      %(m1) -- [fermion] (m)
    };
  \vertex [below=0.75em of m] {\([\mathbb{A}^{i_1}\dots \mathbb{A}^{i_k}]\)};
  \vertex [left=0.25em of a] {\(\partial_1 \Sigma_3\)};
  \node at (-0.2, 0.5) {$\dots$};
  \node at (-1.1, 0.6) {$i_k$};
  \node at (0.5, 1) {$i_2$};
  \node at (1.2, 0.9) {$i_1$};
  \end{feynman}%
 % \draw (m) circle (0.5);
\end{tikzpicture}
\caption{}
        \label{fig:comp_fields}
    \end{subfigure}%
    \qquad \begin{subfigure}[b]{0.3\textwidth}
    \centering
       \begin{tikzpicture}
  \begin{feynman}[every blob={/tikz/fill=white!30,/tikz/inner sep=0.5pt}]
  %\vertex (m) at (0,0) {dot};
  \vertex (a) at (-2,0);% {$\partial_2 \Sigma_3$};
  \vertex (b) at (2,0);
  \vertex (d) at (0.4, 0.75);
  \vertex (e) at (1, 0.75);
  \vertex (f) at (-1, 0.75);
 \diagram*{
      (a) -- m [dot] -- (b),
      (d) -- [fermion] m,
      (e) -- [fermion] m,
      (f) -- [fermion] m
      %(m1) -- [fermion] (m)
    };
  \vertex [below=0.75em of m] {\([\mathbb{B}_{j_1}\dots \mathbb{B}_{j_l}]\)};
  \vertex [left=0.25em of a] {\(\partial_2 \Sigma_3\)};
  \node at (-0.2, 0.5) {$\dots$};
  \node at (-1.1, 0.6) {$j_l$};
  \node at (0.5, 1) {$j_2$};
  \node at (1.2, 0.9) {$j_1$};
  \end{feynman}
\end{tikzpicture}
\caption{}
        \label{}
    \end{subfigure}%
    \caption{Feynman rules for the composite fields.}
    \label{fig:comp_fields_old}
\end{figure}
The differential forms given by the decorations are denoted collectively by $\omega_d$. The differential form $\omega_\Gamma$ at $\iota$ is then defined by the product of all decorations and summing over all labels:
\begin{equation}
    \omega_\Gamma=\sum_{\text{labellings of $\Gamma$}}\;\,\, \prod_{{\text{decorations $d$ of $\Gamma$}}} \omega_d.
\end{equation}
The Feynman rules are represented in Figs. \ref{bv-bfv:fig_feyn_rules1}, \ref{bv-bfv:fig_feyn_rules2} and \ref{fig:comp_fields_old}.
\end{defn}
\begin{rmk}[Configuration spaces]\label{rem:conf}
We will exploit the Fulton--MacPherson/Axelrod--Singer compactification of configuration spaces on manifolds with boundary (FMAS compactification \cite{FM94,AS94}). Axelrod and Singer proved via non-trivial analytic tools, that the propagator, a priori defined only on the open configuration space $\mathrm{Conf}_2 (M)$, extends to the compactification $\mathrm{C}_2(M)$. This implies that also $\omega_\Gamma$, for all Feynman graphs $\Gamma$, extends to the compactification $\mathrm{C}_\Gamma (M)$ of $\mathrm{Conf}_\Gamma(M)$. But, adding strata of lower codimension leaves the integrals unchanged, and hence this proves that the integrals appearing in Eq. \eqref{principalqs} below are finite. In addition, one can exploit the combinatorics of the stratification for various computations using Stokes' theorem.
\end{rmk}
\begin{defn}[Principal quantum state]
Let $M$ be a manifold (with boundary). Given a $BF$-like  BV-BFV theory $\pi_M:\F_M \rightarrow \F^\partial_{\partial M}$, a polarization $\mathcal{P}$ on $\F^\partial_{\partial M}$, a good splitting $\F_M=\mathcal{B}^{\mathcal{P}}_{\partial M} \times \mathcal{V}^{\mathcal{P}}_M \times \mathcal{Y}'$ and $\mathcal{L} \subset \mathcal{Y}'$, the gauge-fixing Lagrangian,  we can define the \textit{principal part of the quantum state} by the formal power series
\begin{equation}
\label{principalqs}
    \hat{\psi}_{M}(\mathbb{A},\mathbb{B}; \mathsf{a},\mathsf{b}) \coloneqq T_{M}\exp\bigg(\frac{i}{\hbar}\sum_{\Gamma}\frac{(-i\hbar)^{\loops(\Gamma)}}{|\Aut(\Gamma)|}\varint_{\text{C}_\Gamma(M)}\omega_{\Gamma}(\mathbb{A}, \mathbb{B}; \mathsf{a}, \mathsf{b})
    \bigg),
\end{equation}
where for an element $(\mathbf{A}, \mathbf{B}) \in \F_M$, we denote the split by
\begin{equation}
\begin{split}
    \mathbf{A}&=\mathbb{A} + \mathsf{a} + \alpha,\\
    \mathbf{B}&=\mathbb{B} + \mathsf{b} + \beta.
\end{split}
\end{equation}
The sum runs over all connected, oriented, principal $BF$ Feynman graphs $\Gamma$, $\Aut(\Gamma)$ denotes the set of all automorphisms of $\Gamma$, and $\loops(\Gamma)$ denotes the number of all loops of $\Gamma$. The coefficient $T_M$ is related to the Reidemeister torsion of $M$%ref
; its exact expression is not needed in our context.
\end{defn}
\begin{rmk}
The formal power series in  (\ref{principalqs}) is the definition of the formal perturbative expansion
of the BV integral
\begin{equation}
    \hat{\psi}_M(\mathbb{A},\mathbb{B}; \mathsf{a},\mathsf{b})=\varint_{\mathcal{L}\subset \mathcal{Y}'}e^{\frac{i}{\hbar}\mathcal{S}_M(\mathbf{A},\mathbf{B})} \in \hat{\mathcal{H}}_M^{\mathcal{P}} \coloneqq \hat{\mathcal{H}}_{\partial M}^{\mathcal{P}} \otimes \Dens^{\frac12}(\mathcal{V}_M^{\mathcal{P}}).
\end{equation}
\end{rmk}
Given a good splitting (see Definition \ref{bv-bfv_good_split}), the action can be decomposed as (\cite{CMR17})
\begin{equation}
    \Sc_M=\hat{\Sc}_M+\hat{\Sc}^{\, \text{pert}}_M+\Sc^{\, \text{res}}+\Sc^{\, \text{source}},
\end{equation}
where
\begin{equation}
    \begin{split}
        \hat{\Sc}_M&=\varint_{\Sigma_3}\beta_id\alpha^i,\\
        \hat{\Sc}^{\, \text{pert}}_M&=\varint\sur \mathcal{V}(\mathsf{a}+\alpha, \mathsf{b}+\beta),\\
        \Sc^{\, \text{res}}&= - \bigg(\varint_{\partial_2\Sigma_3}\mathbb{B}_i\mathsf{a}^i-\varint_{\partial_1\Sigma_3}\mathsf{b}_i\mathbb{A}^i\bigg),\\
        \Sc^{\, \text{source}}&= - \bigg(\varint_{\partial_2\Sigma_3}\mathbb{B}_i\alpha^i-\varint_{\partial_1\Sigma_3}\beta_i\mathbb{A}^i\bigg).\\
    \end{split}
\end{equation}
\begin{rmk}
We can rewrite 

\begin{equation}
    \hat{\psi}_M(\mathbb{A},\mathbb{B}; \mathsf{a},\mathsf{b})=T_{M}\big\langle e^{\frac{i}{\hbar}(\Sc^{\text{res}}+\Sc^{\text{source})}}\big\rangle=T_{M}e^{\frac{i}{\hbar}\Sc^{\text{eff}}},
\end{equation}
\end{rmk}
where $\langle\ \rangle$ denotes the expectation value with respect to the bulk theory $\hat{\Sc}+\hat{\Sc}^{\text{pert}}$ and
\begin{equation}
    \Sc^{\, \text{eff}}=- \bigg(\varint_{\partial_2M}\mathbb{B}_i\mathsf{a}^i-\varint_{\partial_1M}\mathsf{b}_i\mathbb{A}^i\bigg)+\varint_{\partial_2M\times\partial_1M}\pi^*_1\mathbb{B}_i\eta^i_{\; j} \pi^*_2\mathbb{A}^j.
\end{equation}
Note that the \textit{effective action} manifests as we sum over connected graphs. 

We are now interested in constructing a product on the full state space using composite fields. We define the \textit{bullet product}:
\begin{equation}
\label{bulletproof}
\begin{split}
    &\varint_{\partial_1M} u_i \wedge \mathbb{A}^i \bullet \varint_{\partial_1M} v_j \wedge \mathbb{A}^j \coloneqq \\
   &(-1)^{\mid \mathbb{A}^i \mid (d-1+\mid v_j\mid)+\mid u_i\mid(d-1)} \bigg( \varint_{\mathrm{C}_2(\partial_1M)} \pi_1^*u_i \wedge \pi_2^*v_j \wedge \pi_1^*\mathbb{A}^i \wedge \pi_2^*\mathbb{A}^j+\varint_{\partial_1M}u_i \wedge v_j \wedge [\mathbb{A}^i \mathbb{A}^j] \bigg ),
\end{split}
\end{equation}
with $u,v$ smooth differential forms depending on the bulk and residual fields.
\begin{rmk}
Consider the operator $\varint_{\partial_1M} F^{ij} \frac{\delta^2}{\delta \mathbb{A}^i \delta \mathbb{A}^j}$. It can be interpreted as $\varint_{\partial_1M} F^{ij} \frac{\delta}{\delta[\mathbb{A}^i\mathbb{A}^j]}$, and therefore, we have
\begin{equation}
    \varint_{\partial_1M}F^{ij}\frac{\delta^2}{\delta \mathbb{A}^i \delta \mathbb{A}^j}\bigg (\varint_{\partial_1M} u_i \wedge \mathbb{A}^i \bullet \varint_{\partial_1M}v_j \wedge \mathbb{A}^j \bigg )= \varint_{\partial_1M} u_iv_jF^{ij},
\end{equation}
which matches our prediction.
\end{rmk}

\begin{defn}[Full quantum state]
\label{full_quantum_state}
Let $M$ be a manifold (with boundary). Given a $BF$-like  BV-BFV theory $\pi_M:\F_M \rightarrow \F^\partial_{\partial M}$, a polarization $\mathcal{P}$ on $\F^\partial_{\partial M}$, a good splitting $\F_M=\mathcal{B}_{\partial M}^{\mathcal{P}} \times \mathcal{V}_M^{\mathcal{P}} \times \mathcal{Y}'$ and $\mathcal{L} \subset \mathcal{Y}'$, the gauge-fixing Lagrangian,  we can define the \textit{full quantum state} by the formal power series
\begin{equation}
    \boldsymbol{\hat{\psi}}_M(\mathbb{A},\mathbb{B};\mathsf{a},\mathsf{b})=T_{M}\exp\bigg(\frac{i}{\hbar}\sum_{\Gamma}\frac{(-i\hbar)^{\text{loops}(\Gamma)}}{|\Aut(\Gamma)|}\varint_{\text{C}_\Gamma(M)}\omega_{\Gamma}(\mathbb{A}, \mathbb{B}; \mathsf{a}, \mathsf{b})
    \bigg).
\end{equation}
\end{defn}

\begin{rmk}Exploiting the bullet product in \eqref{bulletproof}, we can write the full quantum state as the expectation value
\begin{equation}
    \boldsymbol{\hat{\psi}}_M(\mathbb{A},\mathbb{B};\mathsf{a},\mathsf{b})=T_{M}\big\langle e^{\frac{i}{\hbar}(\Sc^{\text{res}}+\Sc^{\text{source}})}_{\bullet}\big\rangle=T_{M}e^{\frac{i}{\hbar}\Sc^{\text{eff}}}_{\bullet},
\end{equation}
with $e_\bullet$ the exponential with respect to the bullet product.
\end{rmk}

\subsubsection{The BFV boundary operator}
 peOur next ingredient is the quantum BFV boundary operator for $BF$-like theories \cite{CMR17}. We will follow the same procedure as with the state, writing firstly its principle part and then extending to a regularization using the composite fields. One obtains the quantum BFV boundary operator via the quantization of the BFV action such that Theorem \ref{bv-bfv_thm_mqme} is satisfied.
\begin{defn}[Principal part of the BFV boundary operator] The \textit{principal part} of the BFV boundary operator is given by
\begin{equation}
    \Omega^{\text{princ}}=\underbrace{\Omega_0^{\mathbb{A}}+\Omega_0^{\mathbb{B}}}_{\coloneqq \Omega_0}+\underbrace{\Omega_{\text{pert}}^{\mathbb{A}}+\Omega_{\text{pert}}^{\mathbb{B}}}_{\coloneqq \Omega_{\text{pert}}^{\text{princ}}},
\end{equation}
where
\begin{equation}
    \begin{split}
        \Omega_0^{\mathbb{A}} &\coloneqq (-1)^d i\hbar \varint_{\partial_1M} \bigg(d\mathbb{A} \frac{\delta}{\delta \mathbb{A}} \bigg),\\
    \Omega_0^{\mathbb{B}} &\coloneqq (-1)^d i\hbar \varint_{\partial_2M} \bigg(d\mathbb{B} \frac{\delta}{\delta \mathbb{B}} \bigg),
    \end{split}
\end{equation}
\begin{equation}
   \begin{split}
        \Omega_{\text{pert}}^{\mathbb{A}}&\coloneqq \sum_{n,k \geq 0}\sum_{\Gamma_1'}\frac{(i \hbar)^{\text{loops($\Gamma_1')$}}}{\mid \Aut(\Gamma_1')\mid} \varint_{\partial_1M}\bigg(\sigma_{\Gamma_1'}\bigg)_{i_1\dots i_n}^{j_1\dots j_k}\wedge \mathbb{A}^{i_1}\wedge \dots \wedge \mathbb{A}^{i_n}\bigg((-1)^d i\hbar\frac{\delta}{\delta \mathbb{A}^{j_1}}\bigg)\dots \bigg((-1)^d i\hbar \frac{\delta}{\delta \mathbb{A}^{j_k}} \bigg),\\
    \Omega_{\text{pert}}^{\mathbb{B}}&\coloneqq \sum_{n,k \geq 0}\sum_{\Gamma_2'}\frac{(i \hbar)^{\text{loops($\Gamma_2')$}}}{\mid \Aut(\Gamma_2')\mid} \varint_{\partial_2M}\bigg(\sigma_{\Gamma_2'}\bigg)_{i_1\dots i_n}^{j_1\dots j_k}\wedge \mathbb{B}_{j_1}\wedge \dots \wedge \mathbb{B}_{j_k}\bigg((-1)^d i\hbar\frac{\delta}{\delta \mathbb{B}_{i_1}}\bigg)\dots \bigg((-1)^d i\hbar \frac{\delta}{\delta \mathbb{B}_{i_n}} \bigg),
   \end{split}
\end{equation}
where, for $\mathbb{F}_1=\mathbb{A}$, $\mathbb{F}_2=\mathbb{B}$ and $l \in \{1,2\}$, $\Gamma_l'$ runs over graphs with  
\begin{itemize}
    \item $n$ vertices on $\partial_lM$ of valence 1 with adjacent half-edges oriented inwards and decorated with boundary fields $\mathbb{F}_l^{i_1},\dots,\mathbb{F}_l^{i_n}$ all evaluated at the point of collapse $p\in \partial_lM$,
    \item $k$ outward leaves if $l = 1$ and $k$ inward leaves if $l = 2$, decorated with variational derivatives
in boundary fields
\begin{equation}
    (-1)^d i\hbar \frac{\delta}{\delta \mathbb{F}_l^{j_1}},\dots,(-1)^di\hbar\frac{\delta}{\delta\mathbb{F}_l^{j_k}} 
\end{equation}
at the point of collapse, 
\item  no outward leaves if $l = 2$ and no inward leaves if $l = 1$ (graphs with them do not contribute).
\end{itemize}
\end{defn}
The form $\sigma_{\Gamma_l'}$ is obtained as the integral over the compactification $\tilde{\mathrm{C}}_{\Gamma_l'}(\mathbb{H}^d)$ of the  open configuration space modulo scaling and translation, with $\mathbb{H}^d$ the $d$-dimensional upper half-space:
\begin{equation}
    \sigma_{\Gamma_l'}=\varint_{\tilde{\mathrm{C}}_{\Gamma_l'(\mathbb{H}^d)}} \omega_{\Gamma_l'},
\end{equation}
where $\omega_{\Gamma_l'}$ is the product of limiting propagators at the point $p$ of collapse and vertex tensors.

Our goal now is to describe the BFV boundary operator with composite fields. For this, we introduce the following auxiliary concept.

Consider the regular functional in (\ref{regularfct}), where we get a term $L$ replaced by $dL$ plus all the terms corresponding to the boundary of the configuration space.  Since $L$ is smooth, its restriction to the boundary is smooth as well, and can be integrated on the fibers giving rise to a smooth form on the base configuration space. For example
\begin{equation}
    \Omega_0 \varint_{\partial_1M}L_{IJ} \wedge [\mathbb{A}^I] \wedge [\mathbb{A}^{J}]= \pm i \hbar \varint_{\partial_1M}dL_{IJ}\wedge [\mathbb{A}^I]\wedge [\mathbb{A}^J],
\end{equation}
\begin{equation}
    \Omega_0\varint_{\mathrm{C}_2(\partial_1M)}L_{IJK}\wedge \pi_1^*([\mathbb{A}^I]\wedge[\mathbb{A}^J]) \wedge \pi_2^*[\mathbb{A}^K]\pm i\hbar\varint_{\partial_1M}\underline{L_{IJK}}\wedge [\mathbb{A}^I]\wedge [\mathbb{A}^J] [\mathbb{A}^K],
\end{equation}
with $\underline{L_{IJK}}=\pi_*^{\partial}$, where $\pi^\partial : \partial \mathrm{C}_2(\partial_1M)\rightarrow\partial_1M$ is the canonical projection. \newline
For any two regular functionals $S
_1$ and $S_2$ we can write
\begin{equation}
    \Omega_0(S_1 \bullet S_2)=\Omega_0(S_1) \bullet S_2 \pm S_1 \bullet \Omega_0(S_2).
\end{equation}
The rest of the allowed generators are products of expressions in the following shape:
\begin{equation}
    \varint_{\partial_1M}L^J_{I^1 \dots I^r}[\mathbb{A}^{I_1}]\wedge \dots \wedge [\mathbb{A}^{I_r}] \frac{\delta^{\mid J \mid}}{\delta [\mathbb{A}^I]},
\end{equation}
\begin{equation}
    \varint_{\partial_2M}L^{J^1\dots J^l}_{I}[\mathbb{B}_{J_1}]\wedge \dots \wedge [\mathbb{B}_{J_l}] \frac{\delta^{\mid I \mid}}{\delta [\mathbb{B}_J]}.
\end{equation}
\begin{defn}[Full BFV boundary operator] 
\label{full_bfv_def}
The \textit{full BFV boundary operator} is
\begin{equation}
    \boldsymbol{\Omega}_{\partial M}\coloneqq\Omega_0 + \underbrace{\boldsymbol{\Omega}_{\text{pert}}^{\mathbb{A}} + \boldsymbol{\Omega}_{\text{pert}}^{\mathbb{B}}}_{\boldsymbol{\Omega}_{\text{pert}}},
\end{equation}
with 
\begin{equation}
    \begin{split}
        \boldsymbol{\Omega}_{\text{pert}}^{\mathbb{A}}&\coloneqq \sum_{n,k \geq 0}\sum_{\Gamma_1'}\frac{(i \hbar)^{\text{loops($\Gamma_1')$}}}{\mid \Aut(\Gamma_1')\mid} \varint_{\partial_1M}\bigg(\sigma_{\Gamma_1'}\bigg)_{I_1\dots I_n}^{J_1\dots J_k}\wedge \mathbb{A}^{I_1}\wedge \dots \wedge \mathbb{A}^{I_n}\bigg((-1)^{kd} (i\hbar)^{k}\frac{\delta^{\mid J_1\mid + \dots +\mid J_k\mid}}{\delta [\mathbb{A}^{J_1} \dots \mathbb{A}^{J_k}]}\bigg),\\
    \boldsymbol{\Omega}_{\text{pert}}^{\mathbb{B}}&\coloneqq \sum_{n,k \geq 0}\sum_{\Gamma_2'}\frac{(i \hbar)^{\text{loops($\Gamma_2')$}}}{\mid \Aut(\Gamma_2')\mid} \varint_{\partial_2M}\bigg(\sigma_{\Gamma_2'}\bigg)^{I_1\dots I_n}_{J_1\dots J_k}\wedge \mathbb{B}_{I_1}\wedge \dots \wedge \mathbb{B}_{I_n}\bigg((-1)^{kd} (i\hbar)^{k}\frac{\delta^{\mid J_1\mid + \dots +\mid J_k\mid}}{\delta [\mathbb{B}_{J_1} \dots \mathbb{B}_{J_k}]}\bigg),
    \end{split}
\end{equation}     
where, for $\mathbb{F}_1=\mathbb{A}$, $\mathbb{F}_2=\mathbb{B}$ and $l \in \{1,2\}$, $\Gamma_l'$ runs over graphs with  
\begin{itemize}
    \item $n$ vertices on $\partial_lM$, where vertex $s$ has valence $| I_s| \geq 1$, with adjacent half-edges oriented inwards and decorated with boundary fields $[\mathbb{F}_l^{I_1}],\dots,[\mathbb{F}_l^{I_n}]$ all evaluated at the point of collapse $p\in \partial_lM$,
    \item $| J_1 |+ \dots+| J_k|$ outward leaves if $l = 1$ and $| J_1|+ \dots+| J_k |$ inward leaves if $l = 2$, decorated with variational derivatives
in boundary fields
\begin{equation}
    (-1)^d i\hbar \frac{\delta}{\delta [\mathbb{F}_l^{J_1}]},\dots,(-1)^di\hbar\frac{\delta}{\delta[\mathbb{F}_l^{J_k}]} 
\end{equation}
at the point of collapse, 
\item  no outward leaves if $l = 2$ and no inward leaves if $l = 1$ (graphs with them do not contribute).
\end{itemize}
\end{defn}
Similarly as before, the form $\sigma_{\Gamma_l'} $
can be obtained as the integral over the compactified configuration space $\tilde{\mathrm{C}}_{\Gamma_l'}(\mathbb{H}^d)$, given by
\begin{equation}
    \sigma_{\Gamma_l'}=\varint_{\tilde{\mathrm{C}}_{\Gamma_l'(\mathbb{H}^d)}} \omega_{\Gamma_l'},
\end{equation}
where $\omega_{\Gamma_l'}$ is the product of limiting propagators at the point $p$ of collapse and vertex tensors.

\begin{thm}[\cite{CMR17}]
\label{bv-bfv_thm_mqme}
%\textcolor{red}{cite}
Let $M$ be a smooth manifold (possibly with boundary). Then the following statements are satisfied:
\begin{enumerate}
    \item  The full covariant state $\boldsymbol{\hat{\psi}}_M$ satisfies the \textit{modified Quantum Master Equation} (mQME):
    \begin{equation}
        (\hbar^2 \Delta_{\mathcal{V}_M} + \boldsymbol{\Omega}_{\partial M}) \boldsymbol{\hat{\psi}}_M=0.
    \end{equation}
    \item The full BFV boundary operator $\boldsymbol{\Omega}_{\partial M}$ squares to zero:
    \begin{equation}
       (\boldsymbol{\Omega}_{\partial M})^2=0.
    \end{equation}
    \item A change of propagator or residual fields leads to a theory related by change of data as in  Definition \ref{bv-bfv:def_change_data}.
\end{enumerate}
\end{thm}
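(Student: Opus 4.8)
The plan is to prove the three statements together, following \cite{CMR17}, by feeding two inputs — one algebraic, one analytic — into Stokes' theorem on the Fulton--MacPherson/Axelrod--Singer compactifications $\mathrm{C}_\Gamma(M)$. The algebraic input is the modified Classical Master Equation $\iota_Q\iota_Q\omega_M=2\pi^*\Sc^\partial_{\partial M}$ of the classical $BF$-like theory, which on the level of Feynman rules becomes a quadratic ("Jacobi-type") identity among the vertex tensors $\mathcal{V}^{i_1\dots i_s}_{j_1\dots j_t}$, together with compatibility of the boundary action $\Sc^\partial_{\partial M}$ with this structure. The analytic input is that the propagator $\eta$ induced by $\mathcal{L}\subset\mathcal{Y}'$ extends to a smooth form on $\mathrm{C}_2(M)$ (Remark \ref{rem:conf}) which is a parametrix: $d\eta$ equals, up to sign, the pullback of a fixed form built from the residual fields, and the restrictions of $\eta$ to the boundary faces of $\mathrm{C}_2(M)$ are the limiting propagators entering $\boldsymbol{\Omega}_{\partial M}$. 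Everything else is bookkeeping of graphs and of the codimension-one faces of $\partial\mathrm{C}_\Gamma(M)$.

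For statement (1) I would first read off what the two operators do to the full quantum state $\boldsymbol{\hat{\psi}}_M$ of Definition \ref{full_quantum_state}: $\hbar^2\Delta_{\mathcal{V}_M}$ contracts pairs of residual-field leaves (an $\mathsf{a}$-leaf with a $\mathsf{b}$-leaf, inside one graph or joining two), while $\boldsymbol{\Omega}_{\partial M}$ acts by $d$ on the boundary-field decorations (its $\Omega_0$ part) and by collapsing bulk clusters onto $\partial_1\Sigma_3$ or $\partial_2\Sigma_3$ (its $\boldsymbol{\Omega}_{\text{pert}}$ part), as spelled out in \eqref{omega_std_ordering} and Definition \ref{full_bfv_def}. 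Then, for each connected principal (and afterwards composite) graph $\Gamma$, I apply Stokes: $\int_{\mathrm{C}_\Gamma(M)}d\omega_\Gamma=\int_{\partial\mathrm{C}_\Gamma(M)}\omega_\Gamma$. On the left, $d$ hitting a propagator replaces it by $d\eta=\pm\pi^*(\text{residual-field form})$, and summed over edges these assemble into $\hbar^2\Delta_{\mathcal{V}_M}\boldsymbol{\hat{\psi}}_M$, while $d$ hitting a boundary decoration produces the $\Omega_0$-term (the vertex tensors here carry no position dependence, so $d$ does not touch them). On the right, the faces where two bulk points collide in the interior cancel in groups by the vertex-tensor Jacobi identity, and the faces where a cluster of bulk points reaches $\partial M$ reproduce $\boldsymbol{\Omega}_{\text{pert}}\boldsymbol{\hat{\psi}}_M$ by the construction of the $\sigma_{\Gamma'}$. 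Equating the two sides and summing over $\Gamma$ gives \eqref{ov:bv-bfv_mqme}.

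For statement (2) I would rerun the same argument on the scaling- and translation-quotiented half-space configuration spaces $\tilde{\mathrm{C}}_{\Gamma'}(\mathbb{H}^d)$ that define the forms $\sigma_{\Gamma'}$: applying Stokes to $\int_{\tilde{\mathrm{C}}_{\Gamma'}(\mathbb{H}^d)}d\omega_{\Gamma'}$, the left side vanishes because the limiting propagators are closed and the vertex tensors obey the Jacobi identity, while the codimension-one faces on the right split into sub-collapses, which assemble exactly into $(\boldsymbol{\Omega}_{\partial M})^2$, and a remainder that drops out for degree reasons; hence $(\boldsymbol{\Omega}_{\partial M})^2=0$. For statement (3), a change of gauge-fixing Lagrangian changes the propagator by $\eta\mapsto\eta+d\lambda$ for a chain homotopy $\lambda$, and a change of residual fields is a further fiber BV integral along a Lagrangian of $\mathcal{Y}'$; in either case Theorem \ref{bv_push}(2) shows that the new state differs from the old by a $(\hbar^2\Delta_{\mathcal{V}_M}+\boldsymbol{\Omega}_{\partial M})$-exact term, and tracking the induced modification of the boundary operator yields a degree-$0$ operator $\tau$ with $\boldsymbol{\Omega}'_{\partial M}=[\boldsymbol{\Omega}_{\partial M},\tau]$, so that $\lambda$ and the partially integrated state supply the data $(\tau,\chi)$ of Definition \ref{bv-bfv:def_change_data}.

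The hard part throughout is the face analysis of the compactified configuration spaces: one must show that every \emph{hidden face} of $\partial\mathrm{C}_\Gamma(M)$ — roughly, a face where more than two points collapse simultaneously — contributes zero, and that short-loop/tadpole configurations produce no anomalous term. The first is a dimension/degree count of the configuration-space integrand against the dimension of the face; the second is precisely where the \emph{anomaly-free} hypothesis enters. Both are carried out in \cite{CMR17} (and adapted in \cite{CMoW19}), so for this statement one may in fact simply invoke \cite{CMR17}.
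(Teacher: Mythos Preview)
The paper does not prove this theorem: it is stated as a quotation from \cite{CMR17} and carries no proof in the text (the next subsection begins immediately after the statement). So there is nothing to compare your argument against here; your final sentence, that one may simply invoke \cite{CMR17}, is exactly what the paper does.

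That said, your outline is a faithful summary of the \cite{CMR17} argument and would serve as a proof sketch. Two small remarks. First, in your treatment of~(1) you say ``the vertex tensors here carry no position dependence, so $d$ does not touch them''; this is true for the abstract $BF$-like setting of this section, but be aware that in the globalized setting later in the paper the vertex tensors \emph{do} depend on the background point $x\in M$, and the extra $d_M$-term this produces is precisely what upgrades the mQME to the mdQME. Second, your handling of hidden faces is slightly imprecise: the vanishing of faces where three or more bulk points collide in the interior is not a pure dimension count in general --- it is exactly the content of the anomaly-free assumption (stated later in the paper as Assumption~\ref{ass_hidden_faces}), and \cite{CMR17} isolates this as a hypothesis rather than something proved from degree reasons alone. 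With those caveats, your sketch is correct.
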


\subsection{AKSZ theories}
\label{subsec:AKSZ}
In \cite{AKSZ97}, Alexandrov, Kontsevich, Schwarz, and Zaboronsky presented a class of local field theories that are compatible with the BV construction called \textit{AKSZ theories}. The compatibility here means that the constructed local actions are solutions to the CME. These theories thus form a subclass of BV theories. We describe here the essential concepts needed for the future sextions\footnote{Maybe it is appropriate here to give a short warning about the notation. In the previous chapters, we have mostly denoted by $M$ the source, whereas now we will denote the target by $M$. Moreover, before the letter $\Sigma$ was reserved for a manifold with one dimension less than the target. From now on it will be mostly used for the source.}.
\begin{defn}[Differential graded symplectic manifold]  A \textit{differential graded
symplectic manifold} of degree $k$ is a triple
\begin{equation}
    (M,\Theta_M,\omega_M=d_M\alpha_M)
\end{equation}
with $M$ a $\mathbb{Z}$-graded manifold, $\Theta_M \in \mathcal{C}^{\infty}(M)$ a function on $M$ of degree $k+1$, $d_M$ the de Rham differential on $M$ and $\omega_M \in \Omega^2(M)$, an exact symplectic form of degree $k$ with primitive 1-form $\alpha_M \in \Omega^1(M)$, such that   
\begin{equation}
    (\Theta_M, \Theta_M)_{\omega_M}=0,
\end{equation}
where $(-,-)_{\omega_M}$ is the odd Poisson bracket induced by the symplectic form $\omega_M$.
\end{defn}
\begin{rmk}
 We denote by $Q_M$ the Hamiltonian vector field of $\Theta_M$. Then the quadruple $(M,Q_M,\Theta_M,\omega_M=d_M\alpha_M)$ is also called a \textit{Hamiltonian $Q$-manifold}.
\end{rmk}

\subsubsection{AKSZ sigma models}
Let $\Sigma_d$  be a $d$-dimensional compact, oriented manifold and let $T[1]\Sigma_d$ be its shifted tangent bundle. We fix a Hamiltonian $Q$-manifold
\begin{equation}
    (M,Q_M,\Theta_M,\omega_M=d_M \alpha_M)
\end{equation}
 of degree $d-1$ for $d \geq 0$. The space of fields can be defined as the mapping space space of graded manifolds:
 \begin{equation}
     \F_{\Sigma_d} \coloneqq \Maps(T[1]\Sigma_d,M)
 \end{equation}
 where $\Maps$ denotes the mapping space. Our goal is to endow $\F_{\Sigma_d}$ with a $Q$-manifold structure, and to do this we consider the lifts of the de Rham differential $d_{\Sigma_d}$ on $\Sigma_d$ and of the cohomological vector field $Q_M$ on the target $M$ to the mapping space. Therefore, we get the following cohomological vector field
 \begin{equation}
     Q_{\Sigma_d} \coloneqq \hat{d}_{\Sigma_d}+\hat{Q}_M,
 \end{equation}
with $\hat{d}_{\Sigma_d}$ and $\hat{Q}_M$ the corresponding lifts to the mapping space. We remark that we can see $d_{\Sigma_d}$ as the cohomological vector field on $T[1]\Sigma_d$. Consider the push-pull diagram
\begin{equation}
    \F_{\Sigma_d} \xleftarrow{\text{p}} \F_{\Sigma_d} \times T[1]\Sigma_d \xrightarrow {\text{ev}} M,
\end{equation}
with $\mathrm{p}$ and $\mathrm{ev}$ the projection and evaluation map respectively. One can construct a \textit{transgression} map
\begin{equation}
    \mathcal{T}_{\Sigma_d} \coloneqq \mathrm{p}_*\mathrm{ev}^*:\Omega^\bullet(M) \rightarrow \Omega^\bullet (\F_{\Sigma_d}).
\end{equation}
Note that map $p_*$ is given by fiber integration on $T[1]\Sigma_d$. As a next step, we will endow the space of fields with a symplectic structure $\omega_{\Sigma_d}$ defined as:
\begin{equation}
    \omega_{\Sigma_d} \coloneqq (-1)^d \mathcal{T}_{\Sigma_d}(\omega_M) \in \Omega^2(\F_{\Sigma_d}).
\end{equation}
Remarkably, we get a solution $\mathcal{S}_{\Sigma_d}$ of the CME, namely the BV action functional 
\begin{equation}
    \mathcal{S}_{\Sigma_d} \coloneqq \underbrace{ \iota_{{\hat{d}_{\Sigma_d}}}\mathcal{T}_{\Sigma_d}(\alpha_M)}_{\coloneqq \mathcal{S}_{\Sigma_d}^{\text{kin}}} \underbrace{ +\mathcal{T}_{\Sigma_d}(\Theta_M)}_{\coloneqq \mathcal{S}_{\Sigma_d}^{\text{target}}}\in \mathcal{C}^\infty(\F_{\Sigma_d}).
\end{equation}
We can indeed check that 
\begin{equation}
    (\mathcal{S}_{\Sigma_d},\mathcal{S}_{\Sigma_d})_{\omega_{\Sigma_d}}=0.
\end{equation}

Note that the symplectic form $\omega_{\Sigma_d}$ has degree $(d-1)-d=-1$ as predicted and the action $\mathcal{S}_{\Sigma_d}$ has degree 0. Hence, this setting induces a BV manifold $(\F_{\Sigma_d}, \mathcal{S}_{\Sigma_d}, \omega_{\Sigma_d})$. Let $\{x^\mu\}$ and $\{u^i\}$ $(1 \leq i \leq d)$, be  local coordinates on $M$  and $\Sigma_d$ respectively. We will denote the odd fiber  coordinates of degree $+1$ on $T[1]\Sigma_d$ by $\theta^i=d_{\Sigma_d}u^i$. For a field $\mathbf{X} \in \F_{\Sigma_d}$ we then have the following local expression
\begin{equation}
\begin{split}
    \mathbf{X}^\mu(u,\theta)=\sum_{l=0}^d\,\, \underbrace{\sum_{1 \leq i_1<\dots<i_l\leq d}\mathbf{X}^\mu_{i_1\dots i_l}(u) \theta^{i_1} \wedge \dots\wedge \theta^{i_l}}_{\mathbf{X}^\mu_{(l)}(u,\theta)} \in \bigoplus_{l=0}^d\mathcal{C}^\infty(\Sigma_d) \otimes \bigwedge\nolimits^lT^\vee \Sigma_d. 
\end{split}
\end{equation}
The functions $\mathbf{X}^\mu_{i_1\dots i_l} \in \mathcal{C}^\infty(\Sigma_d)$ have degree $\deg(x^\mu)-l$ on $\F_{\Sigma_d}$. The symplectic form $\omega_M$ and its primitive 1-form $\alpha_M$ on $M$ are given by  
\begin{equation}
    \begin{split}
        \alpha_M&=\alpha_\mu(x)d_Mx^\mu \in \Omega^1(M),\\
    \omega_M&=\frac{1}{2}\omega_{\mu_1 \mu_2}(x)d_Mx^{\mu_1} d_Mx^{\mu_2} \in \Omega^2(M).
    \end{split}
\end{equation}
Using the above equations, we locally get the following  expressions for the BV symplectic form, its primitive 1-form and the BV action functional:
\begin{equation}
   \begin{split}
        \alpha_{\Sigma_d}&=\varint_{\Sigma_d}\alpha_\mu(\mathbf{X})\delta\mathbf{X}^\mu \in \Omega^1(\F_{\Sigma_d}),\\
   \omega_{\Sigma_d}&=(-1)^d\frac{1}{2}\varint_{\Sigma_d}\omega_{\mu_1\mu_2}(\mathbf{X})\delta \mathbf{X}^{\mu_1}  \delta \mathbf{X}^{\mu_2} \in \Omega^2(\F_{\Sigma_d}),\\
    \mathcal{S}_{\Sigma_d}&=\varint_{\Sigma_d} \alpha_\mu(\mathbf{X})d_{\Sigma_d}\mathbf{X}^\mu + \varint_{\Sigma_d} \Theta_M(\mathbf{X}) \in \mathcal{C}^\infty(\F_{\Sigma_d}).
   \end{split}
\end{equation}
We have denoted by $\delta$ the de Rham differential on $\F_{\Sigma_d}$. Using Darboux coordinates on $M$, we can write
\begin{equation}
    \omega_M=\frac{1}{2}\omega_{\mu_1\mu_2}d_Mx^{\mu_1} d_Mx^{\mu_2},
\end{equation}
with $\omega_{\mu_1 \mu_2}$ constant, implying that $\alpha_M=\frac{1}{2}x^{\mu_1}\omega_{\mu_1\mu_2}d_Mx^{\mu_2}$. We get the BV symplectic form
\begin{equation}
\begin{split}
    \omega_{\Sigma_d}&=\frac{1}{2} \varint_{T[1]\Sigma_d}\mu_{\Sigma_d}(\omega_{\mu_1\mu_2} \delta \mathbf{X}^{\mu_1} \delta\mathbf{X}^{\mu_2}) \\
    &=\frac{1}{2}\varint_{\Sigma_d}(\omega_{\mu_1 \mu_2}\delta \mathbf{X}^{\mu_1} \delta \mathbf{X}^{\mu_2})^{\text{top}}.
\end{split}
\end{equation}
The (master) action is 
\begin{equation}
    \mathcal{S}_{\Sigma_d}=\varint_{T[1]\Sigma_d}\mu_{\Sigma_d}\bigg(\frac{1}{2} \mathbf{X}^{\mu_1} \omega_{\mu_1 \mu_2} D\mathbf{X}^{\mu 2}\bigg)+(-1)^d \varint_{T[1]\Sigma_d}\mu_
    {\Sigma_d}\mathbf{X}^*\Theta_M,
\end{equation}
with $\mu_{\Sigma_d}$ a canonical measure on $T[1]\Sigma_d$ and $D=\theta^j\frac{\delta}{\delta u_j}$, the superdifferential n $T[1]\Sigma_d$.

\section{The Rozansky--Witten model}
\label{sec:RW_model}
The RW model is a 3-dimensional topological sigma model. It was originally discovered with target a hyperK{\"a}hler manifold in \cite{RW96} as a result of a topological twist of 3-dimensional $N=4$ super Yang-Mills theory. However, shortly after, Kapranov \cite{KA99} and Kontsevich \cite{Ko99} showed that the model required less structure than originally thought: the target manifold does not have to be hyperk{\"a}hler, but, more generally, it can have a holomorphic symplectic structure. Since we will focus on this latter case, here we will present how this generalization proposed by Kapranov and Kontsevich was understood in the context of topological sigma models by Rozansky and Witten. After the work of Kapranov and Kontsevich, Rozansky and Witten added an appendix to \cite{RW96}, where they explained how to extend their formulation of the model to the case of a holomorphic symplectic target manifold.

\begin{notat}
Except for the name of the manifolds, which we adapt to the notation we will use in Section \ref{sec:Classical_Theory}, the notation will be the same as in \cite{RW96}.
\end{notat}

\subsection{First definitions}
Let $\Sigma_3$ be the source 3-dimensional manifold and $(M, \omega)$ the target holomorphic symplectic manifold. The fields are the following:
\begin{itemize}
    \item \textit{bosonic fields} described by the smooth maps $\phi:\Sigma_3\rightarrow M$, in local coordinates we have $\phi^I(x^\mu)$ and $\Bar{\phi}^{\Bar{I}}(x^\mu)$,
    \item \textit{fermionic (or Grassman) fields} $\eta\in\Gamma(\Sigma_3,\phi^*T^{0,1}M)$ and $\chi\in\Gamma(\Sigma_3,\Omega^1(M)\otimes\phi^*T^{1,0}M)$, with $T^{1,0}(M)$ and $T^{0,1}M$, the holomorphic and anti-holomorphic tangent bundle, respectively. In local coordinates we can write them as $\eta^{\Bar{I}}_\mu(x^\mu)$ and $\chi^I_\mu(x^\mu)$.
\end{itemize}

Consider a single fermionic symmetry on these fields, which we will denote by $\Bar{Q}$. Its action is:
\begin{alignat}{2}
         &\delta\phi^I=0,\qquad  &&\delta\Bar{\phi}^{\Bar{I}}=\eta^{\Bar{I}},\\
         &\delta\eta^{\bar{I}}=0,\qquad  &&\delta\chi_\mu^I=-\partial_\mu\phi^I.
\end{alignat}

To introduce the Lagrangian density of the theory, we add some extra structure to the target manifold. Let $\Gamma^I_{JK}$ be a symmetric connection in the holomorphic tangent bundle of $M$, i.e. $\Gamma^I_{JK}=\Gamma^I_{KJ}$. The $(1,1)$-part of the curvature, related to $\Gamma_{JK}^I$ represents the \textit{Atiyah class}\footnote{The Atiyah class is defined to be the obstruction to the existence of a global holomorphic connection.} of $M$ \cite{At57}:
\begin{equation}
    \Riem{I}{J}{K}{\Bar{L}}=\frac{\partial \Chr{I}{J}{K}}{\partial \Bar{\phi}^{\Bar{L}}}.
\end{equation}

In \cite{RW96}, it is noted that the connection does not have to be compatible with the holomorphic symplectic form\footnote{This compatibility condition will be assumed for the RW model when we compare it with the model we will develop in Section \ref{sec:Classical_Theory} (see Section \ref{sec:comp_orig_RW}).} $\omega_{IJ}$. We require $\omega_{IJ}$ to be non-degenerate and closed, i.e.
\begin{equation}
    \frac{\partial \omega_{IJ}}{\partial \Bar{\phi}^{\Bar{K}}}=0,\qquad \frac{\partial \omega_{IJ}}{\partial \phi^K}+\frac{\partial \omega_{KI}}{\partial \phi^J}+\frac{\partial \omega_{JK}}{\partial \phi^I}=0.
\end{equation}

Rozansky and Witten define a $\Bar{Q}$-invariant Lagrangian density $\mathscr{L}$ to be $\mathscr{L}\coloneqq \mathscr{L}_2+\mathscr{L}_1$. The Lagrangian $\mathscr{L}_2$ is given by
     \begin{equation}
         \mathscr{L}_2=\frac12\frac{1}{\sqrt{h}}\epsilon^{\mu\nu\rho}\bigg(\omega_{IJ}\chi_\mu^I\nabla_\nu\chi_\rho^J-\frac{1}{3}\omega_{IJ}R^J_{KL\Bar{M}}\chi_\mu^I\chi_\nu^K\chi_\rho^L\eta^{\Bar{M}}+\frac{1}{3}(\nabla_L\Omega_{IK})(\partial_\mu\phi^I)\chi_\nu^K\chi_\rho^L \bigg),
    \end{equation}
where $\nabla_\mu$ is a covariant derivative with respect to the pullback of the connection $\Chr{I}{J}{K}$:
\begin{equation}
    \nabla_\mu\chi^I_\nu=\partial_\mu\chi^I_\nu+(\partial_\mu\phi^J)\Chr{I}{J}{K}\chi^K_\nu.
\end{equation}

In order to construct the $\Bar{Q}$-exact Lagrangian $\mathscr{L}_1$, we need to choose an Hermitian metric $g_{i\Bar{j}}$ on $M$. We define by
\begin{equation}
    \Tilde{\Gamma}^{\Bar{I}}_{\Bar{J}\Bar{K}}\coloneqq\frac12g^{\Bar{I}L}\bigg(\frac{\partial g_{L\Bar{J}}}{\partial \Bar{\phi}^{\Bar{K}}}+\frac{\partial g_{L\Bar{K}}}{\partial \Bar{\phi}^{\Bar{J}}}\bigg),\quad \Tilde{T}^{\Bar{I}}_{\Bar{J}\Bar{K}}\coloneqq\frac12g^{\Bar{I}L}\bigg(\frac{\partial g_{L\Bar{J}}}{\partial \Bar{\phi}^{\Bar{K}}}-\frac{\partial g_{L\Bar{K}}}{\partial \Bar{\phi}^{\Bar{J}}}\bigg),
\end{equation}
the \textit{symmetric connection} and the \textit{torsion} associated with $g_{I\Bar{J}}$. Then $\mathscr{L}_1$ is defined by
\begin{equation}
    \mathscr{L}_1\coloneqq\Bar{Q}\bigg(g_{I\Bar{J}}\chi_\mu^I(\partial_\mu\Bar{\phi}^{\Bar{J}})\bigg)=g_{I\Bar{J}}\partial_\mu\phi^I\partial_\mu\Bar{\phi}^{\Bar{J}}+g_{I\Bar{J}}\chi^I_\mu\Tilde{\nabla}_\mu\eta^{\Bar{J}},
\end{equation}
where $\Tilde{\nabla}_\mu$ is a covariant derivative with respect to the connection  $\Tilde{\Gamma}^{\Bar{I}}_{\Bar{J}\Bar{K}}+\Tilde{T}^{\Bar{I}}_{\Bar{J}\Bar{K}}$, i.e. we have
\begin{equation}
    \Tilde{\nabla}_\mu\eta^{\Bar{I}}=\partial_\mu \eta^{\Bar{I}}+(\partial_\mu\Bar{\phi}^{\Bar{J}})(\Tilde{\Gamma}^{\Bar{I}}_{\Bar{J}\Bar{K}}+\Tilde{T}^{\Bar{I}}_{\Bar{J}\Bar{K}})\eta^{\Bar{K}}.
\end{equation}
Moreover, if $g_{I\Bar{J}}$ is a K{\"a}hler metric, then $\Tilde{T}^{\Bar{I}}_{\Bar{J}\Bar{K}}=0$.

\subsection{Perturbative quantization}
\label{sec:pert_exp}
The partition function of the RW model is 
\begin{equation}
    Z_{M}(\Sigma_3)\coloneqq\varint e^{\frac{i}{\hbar}S} \mathscr{D}[\phi^i]\mathscr{D}[\eta^I]\mathscr{D}[\chi^I_\mu],
\end{equation}
where $S\coloneqq\varint_{\Sigma_3} \mathscr{L}$ and $\mathscr{D}$ is a formal measure.

As mentioned in \cite{RW96} (see also \cite{Th99,HT99}), in order to do a perturbative expansion around critical points of the action (which are constant maps from $\Sigma_3$ to $M$), we need to deal with the \textit{zero modes}: 
\begin{itemize}
    \item \textit{bosonic zero modes}: they are constant modes of $\phi$;
    \item \textit{fermionic zero modes}: here we should distinguish two cases
    \begin{itemize}
        \item if $M$ is a rational homology sphere (i.e. the first Betti number $b_1=0$), the fermionic zero modes are the constant modes of $\eta$. There are $2n$ zero modes if $\dim M=4n$.
        \item if $M$ is not a rational homology sphere but the first Betti number $b_1>0$, then there are $2nb_1$ zero modes of $\chi_\mu$ additionally.
    \end{itemize}
\end{itemize}

Taking into account the zero modes, one can decompose $\phi^i=\phi^i_0+\phi^i_\perp$, where $\phi^i_0$ are the constant maps and $\phi^i_\perp$ are required to be orthogonal to $\phi^i_0$. Similarly, the $\eta^I$ are also decomposed as $\eta^I=\eta^I_0+\eta^I_\perp$, where $\eta^I_0$ are harmonic 0-forms with coefficients in the fiber $V_{\phi_0}$ of the $\Sp(n)$-bundle $V\rightarrow X$ and $\eta^I_\perp$ are orthogonal to the harmonic part.

For our purposes, we will only consider the Lagrangian $\mathscr{L}_2$, in light of these decompositions we can rewrite it as
\begin{equation}
\label{RW:L_2}
    \mathscr{L}_2=\frac12\bigg(\omega_{IJ}(\phi_0)\chi^Id\chi^J+\frac13\Riem{}{IJ}{K}{\Bar{L}}(\phi_0)\chi^I\chi^J\chi^K\eta^{\Bar{L}}_0+\frac{1}{3}(\nabla\Omega_{IK})(\partial\phi^I_\perp)\chi^K\chi^L\bigg).
\end{equation}
  
As a result of an analysis on the absorption of fermionic zero modes by the Feynman diagrams, Rozansky and Witten concluded that only diagrams with trivalent vertices contribute. Moreover, these trivalent vertices have to be exactly $2n$ to saturate the $2n$ zero modes of $\eta$. They call these diagrams ``minimal". The Lagrangian $L_2$ contains the following vertex with the needed properties
\begin{equation}
    V=\frac16\Riem{}{IJ}{K}{\Bar{L}}(\phi_0)\chi^I\chi^J\chi^K\eta^{\Bar{L}}_0.
\end{equation}
Here we should think of $\eta_0$ as a ``coupling constant", in fact, we should focus on the order of the $\eta^I_0$ during the perturbative expansion.

Since all the fields $\eta$ are used to absorb zero modes, we only need the propagators for the fields $\phi^i$ and $\chi^I_{mu}$. According to \cite{RW96}, these are
\begin{equation}
\label{propagators_RW}
    \begin{split}
        &\braket{\chi^I_\mu(x_1),\chi^J_\nu(x_2)}=\hbar\omega^{IJ}G^{\chi}_{\mu\nu}(x_1,x_2),\\
        &\braket{\phi^i(x_1),\phi^j(x_2)}=-\hbar g^{ij}G^{\phi}(x_1,x_2),
    \end{split}
\end{equation}
with $G^{\chi}_{\mu\nu}(x_1,x_2)$ and $G^{\phi}(x_1,x_2)$ Green's functions. We refer to \cite{RW96} for a detailed description of the Green's functions.

The Feynman diagrams participating in the calculation of the partition function depend only on the dimension of the target manifold $M$ and on the first Betti number $b_1$ of the source 3-manifold $\Sigma_3$. The former causes the number of vertices of the graphs to be equal to $2n$. The latter has consequences on the valence of the vertices. We have the following cases:
\begin{itemize}
    \item($b_1=0$) There are no $\chi$ zero modes to absorb. Hence, all the Feynman diagrams are closed graphs with $2n$ trivalent vertices. This is the case when $\Sigma_3$ is a rational homotopy sphere.
    \item($b_1=1$) There are $2n$ $\chi$ zero modes coming from a harmonic 1-form. As a consequence, each vertex absorbs exactly one zero mode of $\chi$, and thus all the Feynman diagrams are closed graphs with $2n$ bivalent vertices.
    \item($b_1=2$) There are $4n$ $\chi$ zero modes coming from a two harmonic 1-forms on $\Sigma_3$. As a consequence, each vertex absorbs exactly two zero modes of $\chi$, one for each harmonic 1-form, and thus all the Feynman diagrams are closed graphs with $2n$ univalent vertices.
    \item($b_1=3$) There are $6n$ $\chi$ zero modes coming from three harmonic 1-forms on $\Sigma_3$. As a consequence, each vertex absorbs exactly three zero modes of $\chi$, one for each harmonic 1-form, and thus all the Feynman diagrams are a collection of $2n$ totally disconnected vertices with no edges.
    \item($b_1\geq 4$) The $\chi$ zero modes become too many and they can not be integrated out by the $\chi$ present in the vertices  (at most three), so the RW partition function vanishes.
\end{itemize}

Let us denote by $\Gamma_{n,m}$ the set of all closed graphs with $2n$ $m$-valent vertices and as $Z_{M, \Gamma}(\Sigma_3; \phi^i_0)$ the sum of all the contributions of the minimal Feynman diagrams corresponding to a given graph $\Gamma$. The total contribution of Feynman diagrams is
\begin{equation}
   Z_{M}(\Sigma_3; \phi^i_0)=\sum_{\Gamma\in\Gamma_{n,3-b_1(\Sigma_3)}}Z_{M, \Gamma}(\Sigma_3; \phi^i_0),
\end{equation}
where each $Z_{M, \Gamma}(\Sigma_3; \phi^i_0)$ can actually be written as a sum of two factors
\begin{equation}
   Z_{M, \Gamma}(\Sigma_3; \phi^i_0)=\text{W}_\Gamma(M;\phi_0^i)\sum_a I_{\Gamma,a}(\Sigma_3).
\end{equation}
We will explain each factor one-by-one. First, $I_{\Gamma,a}(\Sigma_3)$ includes the integral over $\Sigma_3$ of the propagators $G^{\chi}_{\mu\nu}(x_1,x_2)$ and $G^{\phi}(x_1,x_2)$ as well as $\chi$ zero modes. The sum is over all the possible ways to contract the fields of $\Gamma$ with the propagators (\ref{propagators_RW}). On the other hand, the factor $\text{W}_{\Gamma}(M;\phi^i_0)$ is a product of tensors $\Riem{}{IJ}{K}{L}$ coming from the vertices $V_1$ and $V_2$, which are contracted by the $\omega^{IJ}$ contained in the propagators. After antisymmetrizing over the anti-holomorphic indices (coming from the zero modes' contributions), we obtain a $\Bar{\partial}$-closed $(0,2n)$-form on $M$. In other words, we have a map
\begin{equation}
        \Gamma_{n,3}\rightarrow H^{0,2n}(M),
\end{equation}  
where $H^{0,2n}(X)$ is the Dolbeault cohomology. This corresponds to a weight system, the \textit{Rozansky--Witten weight system}. By definition a function on $\Gamma_{n,3}$ is called a weight if it satisfies the AS and IHX relations (see also \cite{B95}).

The AS relation means that $\Gamma_{n,3}$ is antisymmetric under the permutation of legs at a vertex. For RW, this is not valid on the nose since the curvature tensor is completely symmetric. However, we can prove the vanishing of tadpole diagrams (i.e. diagrams with a loop centered at a vertex), which is consistent with the AS relation. The proof follows simply because the loop is constructed by contracting two indices of the symmetric tensor $\Riem{}{IJ}{K}{\Bar{L}}\eta^{\Bar{L}}_0$ with $\omega^{IJ}$, which is antisymmetric.

On the other hand, the IHX relation means that the sum over all possible (three) ways of collapsing a propagator such that we obtain a graph with a four-valent vertex, while the other vertices are trivalent, vanishes (see Fig. \ref{fig:IHX}).

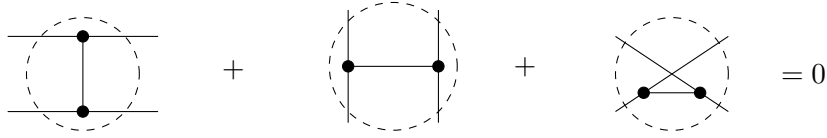
\begin{figure}[H]
\centering
\begin{subfigure}[b]{0.33\textwidth}
\centering
\begin{tikzpicture}
  \begin{feynman}[every blob={/tikz/fill=white!30,/tikz/inner sep=0.5pt}]
  \diagram*{
  a -- m [dot] -- b,
  c -- e [dot] -- d,
  m -- e
  }; 
  \end{feynman}
\draw [dashed] (1.75,-0.5) arc[x radius=0.75, y radius=0.75, start angle=0, end angle=360];
\node[] at (3, -0.45) {$+$};
  \end{tikzpicture}
  \end{subfigure}\hspace{-1.5cm}
  \begin{subfigure}[b]{0.33\textwidth}
  \centering
  \begin{tikzpicture}
  \begin{feynman}
   \vertex (c) at (-0.6,0.25);
  \node[dot] (m) at (-0.6, -0.5);
   \node[dot] (e) at (0.6, -0.5);
   \vertex (b) at (-0.6, -1.25);
   \vertex (a) at (0.6, -1.25);
   \vertex (f) at (0.6, 0.25);
  \diagram*{
  (c) -- (m) -- (b),
  (f) -- (e) -- (a),
   (m) -- (e)
  }; 
  \end{feynman}
\draw [dashed] (0.85,-0.5) arc[x radius=0.85, y radius=0.85, start angle=0, end angle=360];
\node at (1.75,-0.5) {$+$};
  \end{tikzpicture}  
  \end{subfigure}\hspace{-1.75cm}
  \begin{subfigure}[b]{0.33\textwidth}
  \centering
  \begin{tikzpicture}
  \begin{feynman}
   \vertex (c) at (-0.75,-1);
  \node[dot] (m) at (-0.375, -0.75);
   \node[dot] (e) at (0.375, -0.75);
   \vertex (b) at (0.75, -1);
   \vertex (a) at (-0.75, 0);
   \vertex (f) at (0.75, 0);
  \diagram*{
  (c) -- (m) -- (f),
  (b) -- (e) -- (a),
   (m) -- (e)
  }; 
  \end{feynman}
\draw [dashed] (0.75,-0.5) arc[x radius=0.75, y radius=0.75, start angle=0, end angle=360];
\node at (1.75,-0.5) {$=0$};
  \end{tikzpicture}  
  \end{subfigure}
    \caption{The IHX relation. It holds whenever the graphs are identical outside the dashed circle.} \label{fig:IHX}
\end{figure}
Explicitly, the sum of the three contributions is equal to the expression
\begin{equation}
    \eta^{\Bar{L}}_0\eta^{\Bar{L'}}_0\omega^{IJ}\bigg(\Riem{}{IJ}{K}{\Bar{L}}\Riem{}{I'J'}{K'}{\Bar{L'}}+\Riem{}{IJ}{K'}{\Bar{L}}\Riem{}{I'J'}{K}{\Bar{L'}}+\Riem{}{IJ}{J'}{\Bar{L}}\Riem{}{I'K}{K'}{\Bar{L'}}\bigg)- \Bar{L}\leftrightarrow \Bar{L'},
\end{equation}
with the notation $\Bar{L}\leftrightarrow \Bar{L'}$ we mean that we are subtracting the same quantity with the indices $\Bar{L}$ and $\Bar{L}'$ switched, in this way the expression vanishes. In other words, the IHX relation follows as a result of the Bianchi identity for the curvature tensor $R$. The validity of the IHX relation ensures we are obtaining topological invariants of 3-manifolds in the perturbative expansion of the partition function \cite{Sa04}. 

At this point, we can take the product of the $(0,2n)$-form which is the image of the graph $\Gamma$ with the $(2n,0)$-form $\omega^n\in H^{2n,0}(M)$ and integrate the resulting $(2n,2n)$-form over $M$. In this way we obtain the weights $b_\Gamma(M)$, which are numbers called \textit{Rozansky--Witten invariants} studied by Sawon in \cite{Sa04}. More explicitly, we have
\begin{equation}
    b_\Gamma(M)= \frac{1}{(2\pi)^{2n}} \varint_M \text{W}_\Gamma(M;\phi^i_0)\omega^n.
\end{equation}

Finally, the RW partition function is shown to be \cite{RW96}
\begin{equation}
    Z_{M}(\Sigma_3)=\Big|H_1(\Sigma_3, \mathbb{Z})\Big|'\sum_{\Gamma\in\Gamma_{n,3-b_1(\Sigma_3)}}b_\Gamma(M)\sum_aI_{\Gamma,a},
\end{equation}
where $\Big|H_1(\Sigma_3, \mathbb{Z})\Big|'$ is the number of torsion elements in $H_1(\Sigma_3, \mathbb{Z})$ (see \cite{FG91}). 

\subsection{Comparison with Chern--Simons theory}
In this section, we are going to explore briefly the similarities between CS theory and the RW theory as exhibited in \cite{RW96}. The main message is that RW is a kind of ``Grassmann odd version" of CS theory. Let us make this more precise. 

Recall the CS Lagrangian
\begin{equation}
    \mathscr{L}_{\text{CS}}=\Tr\bigg(AdA+\frac{2}{3}A^3\bigg).
\end{equation}
Let us compare it with the RW Lagrangian in \eqref{RW:L_2}. As we can see from Table \ref{Tab:RW_CS} (where we denote by $T_a$ the generators of the Lie algebra and $f_{abc}$ the structure constants), there is almost a direct match. We are using the word ``almost" because the symmetry properties of the various objects in the table are reversed: $\Tr T_aT_b$ is symmetric in its arguments while the holomorphic symplectic form $\omega_{IJ}$ is antisymmetric, $f_{abc}$ is totally antisymmetric whereas $\Riem{}{IJ}{K}{L}\eta^L_0$ is totally symmetric. However, this should not come as a surprise since by definition $A^a$ is an anti-commuting object, while $\chi^I$ is commuting.
\begin{table}[h]
\centering
\begin{tabular}{ |cc| } 
 \toprule
 CS & RW \\
 \midrule
 $A^a$ & $\chi^I$ \\ 
 $\Tr T_aT_b$ & $\omega_{IJ}$ \\
 $f_{abc}$ & $\Riem{}{IJ}{K}{L}\eta^L_0$ \\
 \bottomrule
\end{tabular}
\caption{Comparison between CS theory and RW theory.}
\label{Tab:RW_CS}
\end{table}

By doing the associations in the table, the vertex in CS is the same as the vertex in RW. It follows that the diagrams of the two theories coincide. Consequently, the partition function differs only for the weight factors since for RW they are proportional to the curvature tensor of $M$ rather than the structure constants of a Lie group. 

Other similarities come at the level of gauge-fixing. We refer the interested reader to \cite{RW96,HT99} for a detailed discussion.

% Moreover the space of fields of the two theories are very similar. In both cases the structure is
% \begin{equation}
%     \Tilde{\mathcal{F}}\surg=\Omega^{\bullet}(\Sigma_3)\otimes \mathfrak{g}_{M}[1].
% \end{equation}
% The difference lies in the term $\mathfrak{g}_{M}[1]$, which is a Lie Algebra in CS and a curved $\Linf$-algebra for RW. \textcolor{red}{REMOVE?}

\begin{rmk}
\label{rmk_diff_rw_cs}
There is an important difference between CS and RW theories. In the RW model, the vertex carries an odd Grassmann odd harmonic mode $\eta^I_0$, hence it can never appear more than $2n$ times in any diagram. This corresponds to a natural \textit{cut-off} of the perturbative expansion of the RW model.
\end{rmk}

\section{Classical formal globalization}
\label{sec:Classical_Theory}
The idea is to construct a 3-dimensional topological sigma model, which, when globalized, reduces to the original RW model. In particular, we are interested in the formulation of the RW model with target a holomorphic symplectic manifold\footnote{This construction reduces to the RW model considered in the bulk of \cite{RW96}, when we consider as target a hyperK{\"a}hler manifold.} (i.e. a complex symplectic manifold with a holomorphic symplectic form, see also the appendix in \cite{RW96}). Hence, let $M$ be a holomorphic symplectic manifold endowed with coordinates $X^i$ and $X^{\Bar{i}}$, and with a holomorphic symplectic form $\Omega=\Omega_{ij}\delta X^i\delta X^j$, i.e. a closed, non degenerate $(2,0)$ form. Moreover, consider a 3-dimensional manifold $\Sigma_3$ and construct an AKSZ sigma model\footnote{See Section \ref{subsec:AKSZ} for an introduction.} with source $T[1]\Sigma_3$ and target $M$. In this case, the space of maps is
\begin{equation}
    \mathcal{F}\sur\coloneqq \Maps\bigg(T[1]\Sigma_3, M\bigg).
\end{equation}

On the source manifold, we choose bosonic coordinates $\{u\}$ (ghost degree 0) on $\Sigma_3$ and fermionic odd coordinates $\{\theta\}$ (ghost degree 1) on the fibers of $T[1]\Sigma_3$. Moreover, by picking up local coordinates $X^i$ on $M$, maps in $\mathcal{F}\sur$ can be described by a superfield $\mathbf{X}$, whose components are chosen as:
\begin{equation}
    \mathbf{X}^i=X^i(u)+\theta^\mu X^i_\mu(u)+\frac{1}{2!}\theta^\mu\theta^\nu X^i_{\mu\nu}(u)+\frac{1}{3!}\theta^\mu\theta^\nu\theta^\chi X^i_{\mu\nu\chi}(u),
\end{equation}
where $X^i$ is a 0-form, $X^i_\mu$ is a 1-form etc. %We will use the notation $X^i$ for $X^i$, $X^i_{(1)}$ for $X^i_\mu d\xi^\mu$, etc. 
To these maps maps $X^i, X^i_\mu, \dots$ we assign ghost degrees such that the ghost degree of $\mathbf{X}$ is equal to the one of $X^i$ (that is 0), for example $X^i_{\mu\nu}$ has form degree 2 and ghost degree $-2$.

% We can sum up the conventions on the degrees in Table \ref{class:Tab_degrees}.

% \begin{table}[h!]
%     \centering
%     \begin{tabular}{|lcc|}
%     \toprule
%          & Form degree &  Ghost degree \\
%          \midrule
%     $X^i_0$ & 0     &  0 \\
%     $X^i_{(1)}$ & 1     &  $-1$ \\
%     $X^i_{(2)}$ & 2     &  $-2$\\
%     $X^i_{(3)}$ & 3     &  $-3$\\
%     \bottomrule
%     \end{tabular}
%     \caption{Form degree and ghost degree for each component of the superfields $\mathbf{X}$.}
%     \label{class:Tab_degrees}
% \end{table}

Now, we can define a symplectic form for the space of fields. Since it should have ghost degree $-1$, we assign ghost degree\footnote{More precisely, we add a formal parameter $q$ of ghost degree 2 in front of $\Omega_{ij}$. The parameter is immediately suppressed from the notation.} 2 to $\Omega_{ij}$ (in this way the target manifold has degree 2 and the AKSZ construction can be done without problems) and define
\begin{equation}
    \omega\sur=\varint_{T[1]\Sigma_3}\mu\sur\bigg(\frac12\Omega_{ij}\delta \mathbf{X}^i\delta \mathbf{X}^j\bigg),
\end{equation}
where by $\delta$ we denote the de Rham differential on the space of fields. Since we have a canonical Berezinian $\mu_{\Sigma_3}$ on $T[1]\Sigma_3$ of degree $-3$, the symplectic form has degree $-1$ as we desired. Hence, the space of fields is equipped with an odd Poisson bracket $(-,-)$.%, which we call the \textit{BV bracket} (or \textit{anti bracket}). 
We have an associated AKSZ action given by
\begin{equation}
        \Sc\sur=\varint_{T[1]\Sigma_3}\mu_{\Sigma_3}\bigg(\frac{1}{2}\Omega_{ij}\mathbf{X}^iD\mathbf{X}^j\bigg),
\end{equation}
where $D=\theta^\mu\frac{\partial}{\partial u^\mu}$ is the differential on $T[1]\Sigma_3$. When $\Sigma_3$ is a closed manifold, the action $\Sc\sur$ satisfies the CME
\begin{equation}
    (\Sc\sur,\Sc\sur)=0.
\end{equation}
Equivalently, we can introduce a cohomological Hamiltonian vector field $Q\sur$ on $\mathcal{F}\sur$ defined by
\begin{equation}
    \iota_{Q\sur}\omega\sur=\delta S\sur.
\end{equation}
This vector field has the following form
\begin{equation}
    Q\sur=\varint_{T[1]\Sigma_3}\mu_{\Sigma_3}\bigg( D\mathbf{X}^i\frac{\delta}{\delta \mathbf{X}^i}\bigg).
\end{equation}

The above can be restated by saying that $(\mathcal{F}\sur, \omega\sur, \Sc\sur)$ is a BV manifold.

In the presence of boundaries, the model can be extended to a BV-BFV theory by associating the BV-BFV manifold
\begin{equation}
    (\mathcal{F}\bon, \omega\bon=\delta \alpha\bon, \Sc\bon, Q\bon)
\end{equation}
over the BV manifold $(\mathcal{F}_{\Sigma_3}, \omega_{\Sigma_3}, Q_{\Sigma_3})$, with the following set of data
\begin{equation}
    \begin{split}
        &\mathcal{F}\bon=\Maps (T[1]\partial \Sigma_3, X), \\
        &\Sc\bon=\varint_{T[1]\partial\Sigma_3}\mu_{\partial \Sigma_3}\bigg(\frac{1}{2}\Omega_{ij}\mathbf{X}^iD\mathbf{X}^j\bigg),\\
        &\alpha\bon=\varint_{T[1]\partial\Sigma_3}\mu_{\partial \Sigma_3}\bigg(\frac12\Omega_{ij}\mathbf{X}^i\delta \mathbf{X}^j\bigg),\\
        &Q\bon=\varint_{T[1]\partial\Sigma_3}\mu_{\partial \Sigma_3}\bigg(D\mathbf{X}^i\frac{\delta}{\delta \mathbf{X}^i}\bigg),
    \end{split}
\end{equation}
with $\mu_{\partial \Sigma_3}$ the Berezinian on the boundary $\partial \Sigma_3$ of degree $-2$. The data is such that
\begin{equation}
    \iota_{Q\sur}\omega\sur=\delta \Sc\sur+\pi^*\alpha\bon.
\end{equation}

\begin{rmk}
%The model described is a simple example of AKSZ construction where the target manifold has a 0 Hamiltonian. However, in principle, one could add an Hamiltonian term. 
A possible modification of the model consists in coupling the target manifold with $\mathfrak{g}^{\vee}[1]\otimes \mathfrak{g}[1]$ or $\mathfrak{g}[1]$, forming thus the ``$BF$-RW" model and the ``CS-RW" model \cite{KQZ13}, respectively. In this way, after globalization, one should get an extension of the results obtained by K{\"a}llén, Qiu and Zabzine \cite{KQZ13}.
\end{rmk}

%Moreover $\iota_{Q\bon}\omega\bon=\delta \bon$ and $\iota_{Q\bon}\iota_{Q\bon}\omega\bon=2\pi^*_{\Sigma}S\bon$. \fix

\subsection{Globalization}
\label{sec:globalization}
In the last section, we introduced a very simple AKSZ sigma model. Here we globalize that construction using methods of formal geometry \cite{GK71,Bo11} (see Appendix \ref{app:formal_geometry} for an introduction) following \cite{CMoW19}. First, we expand around critical points of the kinetic part of the action. The Euler--Lagrange equations for our model are simply $d\mathbf{X}^i=0$, which means that the component of $\mathbf{X}^i$ of ghost degree 0 is a constant map: we denote it by $x^i$ and we think of it as a \textit{background field} \cite{Moshayedi2021}. Moreover, since we want to vary $x$ itself, we lift the fields as the pullback of a formal exponential map at $x$. We also note that the fields $\mathbf{X}^{\Bar{i}}$ are just \emph{spectators}, which means that they do not contribute to the action, hence we can think of taking constant maps also in the antiholomorphic direction.

% The next remark describes the role of $x$:
% \begin{rmk}
% In general, recall that classical solution of AKSZ sigma models are given by differential graded maps:
% \begin{equation}
%     (T[1]\Sigma_3, d\sur)\rightarrow (M, Q\sur),
% \end{equation}
% where in our case $Q\sur=0$. As explained in \cite{Mo20}, if we consider the moduli space of classical solutions $\mathcal{M}_{\text{cl}}$ given by constant maps $x: T[1]\Sigma_3\rightarrow M$, we obtain an isomorphism $\mathcal{M}_{\text{cl}}\cong M$. These constant solutions can be thought of as \textit{background fields}.
% \end{rmk}

The above allows to linearize the space of fields $\mathcal{F}\sur$ by working in the formal neighbourhoods of the constant map $x\in M$. We define the following \textit{holomorphic formal exponential map}
\begin{align}
\begin{split}
\varphi:\ T^{1,0}M &\rightarrow M\\
(x,y)&\mapsto \varphi^{i}(x,y)=x^i+y^i+\frac{1}{2}\varphi^i_{jk}(x^i,x^{\Bar{i}})y^jy^k+\dots
\end{split}
\end{align}
\begin{rmk}
We think about the holomorphic formal exponential map here defined as an extension to the complex case of the formal exponential map used in e.g. \cite{CF01}. This notion should correspond to the ``canonical coordinates'' introduced in \cite{BCOV94} and the holomorphic exponential map applied by Kapranov to the RW case in \cite{KA99}. 
\end{rmk}

The formal exponential map lifts $\mathcal{F}\sur$ to
\begin{align}
\begin{split}
    \Tilde{\varphi}_{x}:\ \Tilde{\mathcal{F}}_{\Sigma_3, x}:=\Maps (T[1]\Sigma_3, T^{1,0}_{x}M)&\rightarrow \Maps (T[1]\Sigma_3, M)\\
    \hat{\mathbf{X}}&\mapsto \mathbf{X}
\end{split}
\end{align}
which is given by precomposition with $\varphi^{-1}_{x}$, i.e. $\Tilde{\mathcal{F}}_{\Sigma_3, x}=\varphi^{-1}_{x}\circ\mathcal{F}\sur$ and $\mathbf{X}=\varphi_{x}(\hat{\mathbf{X}})$. Now, since the target is linear, we can write the space of fields as
\begin{equation}
   \Tilde{\mathcal{F}}_{\Sigma_3, x}=\Omega^{\bullet}(\Sigma_3)\otimes T^{1,0}_{x}M.
\end{equation}
% where we used the isomorphism
% \begin{equation}
%     \mathcal{C}^{\infty}(T[1]\Sigma_3)\cong \Omega^\bullet(\Sigma_3).
% \end{equation}
Consequently, we lift the BV action, the BV 2-form and the primitive 1-form obtaining:
\begin{equation}
    \begin{split}
        &\Sc\surg:=\mathrm{T}\Tilde{\varphi}_{x}^*\Sc\sur=\varint_{T[1]\Sigma_3}\mu\sur\, \bigg(\frac12\Omega_{ij}\hat{\mathbf{X}}^iD\hat{\mathbf{X}}^j\bigg),\\
        &\omega\surg\coloneqq\Tilde{\varphi}^*_{x}\omega\sur=\varint_{T[1]\Sigma_3}\mu\sur\,\bigg(\frac12\Omega_{ij}\delta \hat{\mathbf{X}}^i\delta \hat{\mathbf{X}}^j\bigg),\\
        &\alpha^{\partial}_{\partial \Sigma_3, x}\coloneqq \Tilde{\varphi}^*_{x}\alpha\bon=\varint_{T[1]\partial\Sigma_3}\mu_{\partial\Sigma_3}\,\bigg(\frac12\Omega_{ij}\hat{\mathbf{X}}^i\delta \hat{\mathbf{X}}^j\bigg),
    \end{split}
\end{equation}
%where $\hat{\mathbf{X}}$ is the lift of $\mathbf{X}$  as the pullback of the
%formal "holomorphic" exponential map at a constant field ($\varphi_{x}$).
where $\mathrm{T}$ denotes the Taylor expansion around the fiber coordinates $\{y\}$ at zero.
This set of data satisfies the mCME for any $x\in M$:
\begin{equation}
    \iota_{Q\surg}\omega\surg=\delta \Sc\surg+\pi^*\alpha^{\partial}_{\partial \Sigma_3, x},
\end{equation}
with $Q\surg=\varint_{T[1]\Sigma_3}\mu_{\Sigma_3}\, \Big(D\hat{\mathbf{X}}^i\frac{\delta}{\delta \hat{\mathbf{X}}^i}\Big)$. Hence, we have a BV-BFV manifold associated to the space of fields $\Tilde{\mathcal{F}}\surg$.

The next remark introduces an important ingredient to write down the globalized action.
\begin{rmk}
\label{rmk_R}
The constant map $x:T[1]\Sigma_3\rightarrow M$ in $\mathcal{F}\sur$ can be thought of as an element in $M$. Hence, we have a natural inclusion $M\hookrightarrow \mathcal{F}\sur$.  We exploit this fact by defining, for a constant field $x$ and $\mathbf{X}\in \mathcal{F}\sur$, a 1-form:
\begin{equation}
    R\sur=\Big(R\sur\Big)_j(x;\mathbf{X})dx^j+\Big(R\sur\Big)_{\Bar{j}}(x;\mathbf{X})dx^{\Bar{j}}\in \Omega^1\Big(M, \Der\Big(\reallywidehat{\Sym}^\bullet(T^{\vee1,0}M)\Big)\Big).
\end{equation}
% on $M$ with values in differential operators on $\mathcal{F}\sur$, i.e. $R\sur\in \Omega^1\Big(M, \Der\Big(\reallywidehat{\Sym}^\bullet(T^{\vee1,0}M)\Big)\Big)$. 
As before, we lift this 1-form to $\Tilde{\mathcal{F}}\surg$. This lift, denoted by $\hat{R}\sur$, is locally written as:
\begin{equation}
     \hat{R}\sur=\Big(\hat{R}\sur\Big)_j(x;\hat{\mathbf{X}})dx^j+\Big(\hat{R}\sur\Big)_{\Bar{j}}(x;\mathbf{X})dx^{\Bar{j}}.
\end{equation}
\end{rmk}

\subsection{Variation of the classical background}
\label{class:sec:var_class_back}
So far, the classical background $x$ has been fixed. However, our aim is to vary $x$ and construct a global formulation of the action. Hence, we understand the collection $\{\Sc\surg\}_{x\in M}$ as a map $\hat{\Sc}\sur$ to be given by $\hat{\Sc}\sur: x\mapsto S\surg$ and we compute how it changes over $M$. In order to accomplish this task, inspired by \cite{CMoW19,CMoW20, KQZ13, BCM12}, choosing a background field $x\in M$, we define
\begin{equation}
\label{class:glob_terms}
    \Sc\surgR\coloneqq\varint_{\Sigma_3}\bigg(\Big(\hat{R}^i\sur\Big)_j(x; \hat{\mathbf{X}})\Omega_{il}\hat{\mathbf{X}}^l dx^j+\Big(\hat{R}^i\sur\Big)_{\Bar{j}}(x; \hat{\mathbf{X}})\Omega_{il}\hat{\mathbf{X}}^l dx^{\Bar{j}}\bigg)=\mathcal{S}_R+\Sc_{\bar{R}}.
\end{equation}
The integrand is a well defined term of degree 3, since we assigned degree 2 to the symplectic form and $\hat{R}\sur$ is a 1-form on $M$. After integration, $\Sc\surgR$ is then of total degree 0.

The term $\hat{R}\sur$ has been introduced in Remark \ref{rmk_R}. However, its connection with the globalization procedure is not clear. To explain it, we introduce the \textit{classical Grothendieck connection} adapted to our case (see Appendix \cite{CF01}).
    
\begin{defn}[Classical Grothendieck connection]
    Given a holomorphic formal exponential map $\varphi$, we can define the associated \textit{classical Grothendieck connection} on $\reallywidehat{\Sym}^\bullet(T^{\vee1,0}M)$, given by $\Gr \coloneqq d_M+R$, where $d_M$ is the sum of the holomorphic and antiholomorphic Dolbeault differentials on $M$ and $R\in \Omega^1\Big(M, \Der\Big(\reallywidehat{\Sym}^\bullet(T^{\vee1,0}M)\Big)\Big)$. By using local coordinates $\{x\}$ on the basis and $\{y\}$ on the fibers, we have $R=R_j(x;y)dx^j+R_{\bar{j}}(x;y)dx^{\bar{j}}$, where $R_j=R^i_j(x;y)\frac{\partial}{\partial y}$ and $R_{\bar{j}}=R^i_{\bar{j}}(x;y)\frac{\partial}{\partial y}$ with
    \begin{equation}
    \begin{split}
        &R^i_j(x;y)dx^j:=-\bigg[\bigg(\frac{\partial \varphi}{\partial y}\bigg)^{-1}\bigg]^i_p\frac{\partial \varphi^p}{\partial x^j}dx^j,\\
        &R^i_{\Bar{j}}(x;y)dx^{\Bar{j}}:=-\bigg[\bigg(\frac{\partial \varphi}{\partial y}\bigg)^{-1}\bigg]^i_p\frac{\partial \varphi^p}{\partial x^{\Bar{j}}}dx^{\Bar{j}}.
    \end{split}
\end{equation}
    \end{defn}

Note that $R^i_j(x;y)$ and $R^i_{\bar{j}}(x;y)$ are formal power series in the second argument, namely
\begin{equation}
    \begin{split}
        R^i_j(x;y)&=\sum^{\infty}_{k=0} R^i_{j;j_1,\dots,j_k}(x)y^{j_1}\dots y^{j_k},\\
        R^i_{\bar{j}}(x;y)&=\sum^{\infty}_{k=0} R^i_{{\bar{j}};j_1,\dots,j_k}(x)y^{j_1}\dots y^{j_k}.
    \end{split}
\end{equation}

\begin{rmk}
\label{class:RMK_Grothendieck_prop}
The classical Grothendieck connection has a couple of important properties:
\begin{itemize}
    \item It is flat, which can be rephrased by saying that the following equation is satisfied
    \begin{equation}
    \label{class:flatnessR}
        d_MR+\frac12[R,R]=0.
    \end{equation}
    \item A section $\sigma$ is closed under $\mathcal{D}_{\text{G}}$ i.e. $\mathcal{D}_{\text{G}}\sigma=0$ if and only if $\sigma=\mathrm{T}\varphi^*_xf$, where $f\in \mathcal{C}^\infty(M)$. 
\end{itemize}
In more down-to-Earth terms, the second property says that the classical Grothendieck connection selects those sections which are global.
\end{rmk}

Finally, we can clarify the relation between $\hat{R}\sur$ and the Grothendieck connection. The components $\Big(\hat{R}\sur^i\Big)_j(x;\hat{\mathbf{X}})$ and $\Big(\hat{R}\sur^i\Big)_{\bar{j}}(x;\hat{\mathbf{X}})$ are given by the components of the classical Grothendieck connection $R^i_j(x;y)$ and $R^i_{\bar{j}}(x;y)$ evaluated in the second argument at $\hat{\mathbf{X}}$. 

Having set up all the necessary tools, we can compute how $\hat{\Sc}\sur$ varies when we change the background $x\in M$. On a closed manifold, we have
\begin{equation}
\label{dx_eq}
    d_M \hat{\Sc}\sur=-(\Sc\surgR, \hat{\Sc}\sur),
\end{equation}
which follows from the Grothendieck connection and that $\Sc\surg=\mathrm{T}\varphi^*_x\Sc\sur$.

The above identities can be collected in a nicer way via  the following definition.

\begin{defn}(Formal global action) The \textit{formal global action} for the model is defined by
\begin{equation}
\label{glob_action}
    \begin{split}
        \Tilde{\Sc}\surg&\coloneqq\varint_{\Sigma_3}\bigg(\frac{1}{2}\Omega_{ij}\hat{\mathbf{X}}^id\hat{\mathbf{X}}^j+\Big(\hat{R}^i\sur\Big)_j(x; \hat{\mathbf{X}})\Omega_{il}\hat{\mathbf{X}}^l dx^j+\Big(\hat{R}^i\sur\Big)_{\Bar{j}}(x; \hat{\mathbf{X}})\Omega_{il}\hat{\mathbf{X}}^l dx^{\Bar{j}}\bigg)\\
    &=\hat{\Sc}\sur+\underbrace{\Sc_R+\Sc_{\bar{R}}}_{\coloneqq\Sc\surgR}.
    \end{split}
\end{equation}
\end{defn}

By using the formal global action, the \textit{differential Classical Master Equation} (dCME) is satisfied %\fix
\begin{equation}
\label{class:dCME}
    d_M\Tilde{\Sc}\surg+\frac{1}{2}(\Tilde{\Sc}\surg,\Tilde{\Sc}\surg)=0.
\end{equation}

\begin{rmk}
Note that $\Tilde{\Sc}\surg$ is an inhomogeneous form over $M$, where $\hat{S}\sur$ is a 0-form and $S\surgR$ is a 1-form. Therefore, Eq. \eqref{class:dCME} has a 0-form, a 1-form and a 2-form part. Specifically, the 0-form part
\begin{equation}
    (\hat{\Sc}\sur,\hat{\Sc}\sur)=0,
\end{equation}
is the usual CME. The 1-form part:
\begin{equation}
    d_M\hat{\Sc}\sur+(\Sc\surgR,\hat{\Sc}\sur)=0,
\end{equation}
means that $\hat{\Sc}\sur$ is a global object (see Remark \ref{class:RMK_Grothendieck_prop}). The 2-form part
\begin{equation}
    d_M\Sc\surgR+\frac{1}{2}(\Sc\surgR,\Sc\surgR)=0,
\end{equation}
means that the operator $\mathcal{D}_G$ is flat connection (see Eq. (\ref{class:flatnessR})). Explicitly, we have
\begin{align}
        &d_x\Sc_R+\frac12(\Sc_R,\Sc_R)=0 \label{dcme_1},\\
        &d_x\Sc_{\bar{R}}+\frac12(\Sc_R,\Sc_{\bar{R}})=0\label{dcme_2},\\
        &d_{\bar{x}}\Sc_R+\frac12(\Sc_{\bar{R}},\Sc_R)=0 \label{dcme_3},\\ 
        &d_{\bar{x}}\Sc_{\bar{R}}+\frac12(\Sc_{\bar{R}},\Sc_{\bar{R}})=0. \label{dcme_4}
\end{align}
\end{rmk}

Let $\Sigma_3$ be (again) a manifold with boundary. The BV-BFV theory on $\Tilde{\mathcal{F}}\surg$ furnishes the cohomological vector field $Q\surg$. Moreover, by using the lift of $\hat{R}\sur$, we can define 
\begin{equation}
    \Tilde{Q}\surg=Q\surg+\hat{R}\sur.
\end{equation}
Then, the \textit{modified differential Classical Master Equation} (mdCME) is satisfied:
\begin{equation}
\label{mdcme}
    \iota_{\Tilde{Q}\surg}\omega\surg=\delta \Tilde{\Sc}\surg+\pi^*\alpha^{\partial}_{\partial \Sigma_3, x},
\end{equation}
where
\begin{equation}
    \begin{split}
        \Tilde{Q}\surg=\varint\sur\bigg(-d\hat{\mathbf{X}}\frac{\delta}{\delta \hat{\mathbf{X}}}&-\Omega^{pq}\frac{\delta \Big(\hat{R}^i\sur\Big)_j(x; \hat{\mathbf{X}})}{\delta \hat{\mathbf{X}}^p}\Omega_{il}\hat{\mathbf{X}}^l  dx^j\frac{\delta}{\delta \hat{\mathbf{X}}^q}- 
    \Big(\hat{R}^i\sur\Big)_j(x; \hat{\mathbf{X}})dx^j\Omega_{ip}\frac{\delta}{\delta \hat{\mathbf{X}}^p}\\
    &-\Omega^{pq}\frac{\delta \Big(\hat{R}^i\sur\Big)_{\bar{j}}(x; \hat{\mathbf{X}})}{\delta \hat{\mathbf{X}}^p}\Omega_{il}\hat{\mathbf{X}}^l  dx^{\Bar{j}}\frac{\delta}{\delta \hat{\mathbf{X}}^q}-
    \Big(\hat{R}^i\sur\Big)_{\bar{j}}(x; \hat{\mathbf{X}})dx^{\Bar{j}}\Omega_{ip}\frac{\delta}{\delta \hat{\mathbf{X}}^p}\bigg).
    \end{split}
\end{equation}

In preparation for the comparisons we will draw in the following section, we redefine the components $\Big(\hat{R}^i\sur\Big)_j$ and $\Big(\hat{R}^i\sur\Big)_{\bar{j}}$ by a multiplicative factor $1/k!$  as
\begin{equation}
\label{redef_R}
    \begin{split}
        \Big(\hat{R}^i\sur\Big)_j(x;\hat{\mathbf{X}})&=\sum^{\infty}_{k=0} \frac{1}{(k+1)!}\hat{R}^i_{j;j_1,\dots,j_k}(x)\hat{\mathbf{X}}^{j_1}\dots \hat{\mathbf{X}}^{j_k},\\
       \Big(\hat{R}^i\sur\Big)_{\bar{j}}(x;\hat{\mathbf{X}})&=\sum^{\infty}_{k=0}\frac{1}{(k+1)!} \hat{R}^i_{{\bar{j}};j_1,\dots,j_k}(x)\hat{\mathbf{X}}^{j_1}\dots \hat{\mathbf{X}}^{j_k}.
    \end{split}
\end{equation}
% \begin{equation}
%     S\surg:=\text{T}\Tilde{\varphi}_{x}S\sur=\varint_{T[1]\Sigma_3}
%     \frac{1}{2}
%     \hat{\mathbf{X}}^i D\hat{\mathbf{X}}^j+Y^i_j(\hat{\mathbf{X}})\Omega_{il}\hat{\mathbf{X}}^l dx^j+Y^i_{\Bar{j}}(\hat{\mathbf{X}})\Omega_{il}\hat{\mathbf{X}}^l dx^{\Bar{j}}
% \end{equation}

\section{Comparison with the original Rozansky--Witten model}
\label{sec:comp_orig_RW}
 
In this section, we show that the globalized model we have just constructed reduces to the RW model (Section \ref{sec:RW_model}) and, moreover it provides a globalization of the former. 

In order to compare effectively these models, we need to be more explicit about the terms involved in the classical Grothendieck connection. First, we discuss the choice of holomorphic formal exponential map in more detail. Since our target is a symplectic manifold, we choose the formal exponential map which preserves the symplectic form considered in \cite{QZ15} and we adapt it to our case, i.e. 
\begin{equation}
\label{comp:exp_map_zq}
    \varphi^{i}=x^i+y^i-\frac{1}{2}\Chr{i}{j}{k}y^jy^k+\bigg\{-\frac{1}{6}\partial_c\Chr{i}{j}{k}+\frac{1}{3}\Chr{i}{m}{c}\Chr{m}{j}{k}-\frac{1}{24}\Riem{i}{c}{j}{k}\bigg\}y^cy^jy^k+O(y^4),
\end{equation}
where $\Riem{i}{c}{j}{k}=(\Omega^{-1})^{bi}R^{\hspace{2mm} a}_{bc\; k}\Omega_{aj}$.

The Grothendieck connection is then
\begin{equation}
\label{Grothendieck_con}
    \Gr=dx^i\frac{\partial}{\partial x^i}+dx^{\bar{i}}\frac{\partial}{\partial x^{\bar{i}}}+
    dx^j\Big(R\sur\Big)_j +dx^{\Bar{j}}\Big(R\sur\Big)_{\Bar{j}},
\end{equation}
where the third term on the right hand side was computed in \cite{QZ15}, 
\begin{equation}
    \Big(\hat{R}^i\sur\Big)_{j} dx^j=-\bigg[\bigg(\frac{\partial \varphi}{\partial y}\bigg)^{-1}\bigg]^i_p\frac{\partial \varphi^p}{\partial x^j}dx^j=-\bigg[dx^j\bigg(\delta^i_j+\Chr{i}{k}{j}y^k-\bigg(\frac{1}{8}R^{\hspace{1.5mm} i}_{j\; ks}+\frac{1}{4}R^{\hspace{2.5mm} i}_{jk\; s}\bigg)y^ky^s\bigg)\frac{\partial}{\partial y^i}+\dots\bigg],
\end{equation}
whereas the fourth term is
\begin{equation}
     \begin{split}
        \Big(\hat{R}^i\sur\Big)_{\bar{j}}dx^{\Bar{j}}&=-\bigg[\bigg(\frac{\partial \varphi}{\partial y}\bigg)^{-1}\bigg]^i_p\frac{\partial \varphi^p}{\partial x^{\Bar{j}}}dx^{\Bar{j}} =\bigg[-\frac{1}{2}\Chr{p}{ab}{,\Bar{j}}y^ay^bdx^{\Bar{j}}\bigg][\delta^i_p+\dots]       =-\bigg[-\frac{1}{2}\Chr{i}{ab}{,\Bar{j}}y^ay^bdx^{\Bar{j}}+\dots\bigg]\\
         &=\Riem{i}{a}{b}{\bar{j}}y^ay^bdx^{\bar{j}}-\dots .
     \end{split}
\end{equation}

Considering the terms coming from the classical Grothendieck connection and the redefinition \eqref{redef_R}, we can re-write the formal global action \eqref{glob_action} as 
\begin{equation}
\label{class:expl_S_global}
    \Tilde{\Sc}\surg=\varint\sur\bigg(\frac12 \Omega_{ij}\hat{\mathbf{X}}^id\hat{\mathbf{X}}^j-\frac12\Chr{i}{j}{k}\hat{\mathbf{X}}^k\Omega_{il}\hat{\mathbf{X}}^ldx^j-\delta^i_j\Omega_{il}\hat{\mathbf{X}}^ldx^j+\dots+\frac{1}{3!}\Riem{i}{k}{s}{\bar{j}}\hat{\mathbf{X}}^k\hat{\mathbf{X}}^s\Omega_{il}\hat{\mathbf{X}}^ldx^{\bar{j}}+\dots\bigg).
\end{equation}

For convenience, we recall the RW action \cite{RW96}
 \begin{equation}
 \label{class:S_RW_original}
         S_{\text{RW}}=\varint\sur\frac12\frac{1}{\sqrt{h}}\epsilon^{\mu\nu\rho}\bigg(\Omega_{IJ}\chi_\mu^I\nabla_\nu\chi_\rho^J-\frac13\Omega_{IJ}R^J_{KL\Bar{M}}\chi_\mu^I\chi_\nu^K\chi_\rho^L\eta^{\Bar{M}}+\frac13(\nabla_L\Omega_{IK})(\partial_\mu\phi^I_\perp)\chi_\nu^K\chi_\rho^L \bigg).
\end{equation}
If we assume that the connection is compatible with the symplectic form, the third term in the RW action \eqref{class:S_RW_original} drops. We are left with the first two terms. By associating $\hat{\mathbf{X}}^i\leftrightarrow \chi^I$ and $dx^{\bar{j}}\leftrightarrow \eta^{\bar{M}}$, we can sum up the comparison in Table \ref{class:Tab.model_RW}. 
\begin{table}[h!]
    \centering
    \begin{tabular}{|lcc|}
    \toprule
         & Kinetic term &  Interaction term \\
         \midrule
    Original RW model & $\frac12\frac{1}{\sqrt{h}}\epsilon^{\mu\nu\rho}\Omega_{IJ}\chi_\mu^I\nabla_\nu\chi_\rho^J$     &  $-\frac{1}{3!}\Omega_{IJ}R^J_{KL\Bar{M}}\chi_\mu^I\chi_\nu^K\chi_\rho^L\eta^{\Bar{M}}$ \\
    Our model & $\frac12 \Omega_{ij}\hat{\mathbf{X}}^id\hat{\mathbf{X}}^j-\frac12\Chr{i}{j}{k}\hat{\mathbf{X}}^k\Omega_{il}\hat{\mathbf{X}}^ldx^j$     &  $\frac{1}{3!}\Riem{i}{k}{s}{\bar{j}}\hat{\mathbf{X}}^k\hat{\mathbf{X}}^s\Omega_{il}\hat{\mathbf{X}}^ldx^{\bar{j}}$ \\
    \bottomrule
    \end{tabular}
    \caption{Comparison between kinetic term and interaction term for the RW theory and our model.}
    \label{class:Tab.model_RW}
\end{table}

The sign discrepancy comes from having defined the connection as $\nabla=d-\Gamma$ which gives a negative sign in front of the $\Chr{i}{j}{k}$ (see Eq. \eqref{comp:exp_map_zq}).

Moreover, when the curvature misses the $(2,0)$-part (which could happen when we have a Hermitian metric), the remaining terms in our model are just the perturbative expansion of $\Riem{i}{k}{s}{\bar{j}}$ around $x$. If we cut off the expansion at the first order, we are left with the original RW model.

\section{Comparison with other globalization constructions}
\label{sec:comp_other}
In the next sections, we are going to compare our globalization model with other three constructions: the first by \cite{CLL17}
uses tools of derived geometry to linearize the space of fields in the neighbourhood of a constant map as well as the Fedosov connection \cite{Fe49}, the second \cite{Ste17} is an extension to a manifold with boundary of the first procedure, while the third \cite{QZ09,QZ10,KQZ13} uses an approach similar to ours. 

\subsection{Comparison with the CLL construction}
\label{comp:sec:summary}
We compare our model with the formulation of the RW model constructed in \cite{CLL17} in the setting of derived geometry (see Appendix \ref{app:derived_geometry}). 
% We start by recalling briefly the procedure explained in \cite{CLL17}, we refer to the original article for further details and for the proves of the Lemmata as well as Propositions we present.

% In \cite{CLL17}, Chan, Leung and Li provide a BV construction for the RW model by using tools of derived geometry. As we have seen before, to construct a BV manifold we need to provide three ingredients: a space of fields, a 0 graded action functional and a $-1$ graded symplectic form. Let us resume how we can obtain these elements in their context.
Let $\Sigma_3$ be a closed $3$ dimensional manifold and $M$ be a holomorphic symplectic manifold with a non-degenerate holomorphic $2$-form $\omega$. 

To determine fields we use the language of $\Linf$-spaces (see \cite{Co11a,Co11b} for an introduction) and we define the the space of fields as 
\begin{equation}
\Maps(\MdR,M_{\Bar{\partial}}),
\end{equation}
where $\MdR$ is the elliptic ringed space equipped with a sheaf of differential forms over $\Sigma_3$, i.e. %$\mathcal{A}_{\Sigma_3}=
$\Omega^{\bullet}(\Sigma_3)$ and $M_{\Bar{\partial}}=(M,\g_M)$ is a sheaf of $\Linf$-algebras, where $\g_M=\Omega^{\bullet,\bullet}(M)\otimes T^{1,0}M[-1]$ with $T^{1,0}M$ the holomorphic tangent bundle.
Since the critical points of the action functional are constant maps from $\Sigma_3$ to $M$, we are going to study $\Maps(\MdR,M_{\Bar{\partial}})$ in the neighbourhood of a constant map $x\in\Maps(\MdR,\Xd)$, namely 
\begin{equation}
    \mathcal{F}_{\text{CLL}}\coloneqq \reallywidehat{\Maps}(\MdR,\Xd)=\Omega^{\bullet}(\Sigma_3)\otimes \g_M[1],
\end{equation}
with $\reallywidehat{\Maps}(\MdR,\Xd)$ defined as in \cite{Co11a,Co11b}. 

Having specified the space of fields, the shifted symplectic structure is given by
\begin{equation}
\label{class:sympl_cll}
\begin{split}
\braket{-,-}:\ &\FDg\otimes_{\Omega^{\bullet,\bullet}(M)}\FDg \rightarrow \Omega^{\bullet,\bullet}(M)[-1]\\
&\braket{\alpha\otimes g_1,\beta\otimes g_2}:=\underbrace{\omega(g_1,g_2)}_{\text{sympl. struct. on $M$}}\,\,\,\varint\sur\alpha\wedge \beta,
\end{split}
\end{equation}
where $\Omega^{\bullet,\bullet}(M)=\Gamma\left(\bigwedge^{\bullet}T^{\vee}M\right)$ is a section of the cotangent bundle.
% In particular, since our action functional will be a local functional, we consider the subspace $\mathcal{C}^{\infty}_{\text{loc}}(\FDg)=\reallywidehat{\Sym}^\bullet_{\Omega^\bullet(M)}(\mathrm{Jet}_{\FDg^{\vee}})[\![\hbar ]\!]\otimes \text{Dens}(\Sigma_3)\subset \mathcal{C}^{\infty}(\FDg)$, where $\text{Dens}(\Sigma_3)$ is the line bundle of densities on $\Sigma_3$. 

Since $C^{\bullet}(\g_M):=\reallywidehat{\Sym}^\bullet_{\Omega^{\bullet,\bullet}(M)}(\g^{\vee}_M[1])=\Omega^{\bullet,\bullet}(M)\otimes_{\mathcal{C}^{\infty}(M)}\reallywidehat{\Sym}^\bullet_{\mathcal{C}^{\infty}(M)}(T^{\vee1,0}M)$, to construct the action functional and to find our $L_{\infty}$-algebra we can use a procedure similar to the Fedosov's construction of a connection on a symplectic manifold \cite{F94}.
Let us denote the sections of the \textit{holomorphic Weyl bundle} on $M$ by
\begin{equation}
    \mathcal{W}=\Omega^{\bullet,\bullet}(M)\otimes_{\mathcal{C}^{\infty}(M)}\CS[\![\hbar ]\!]
\end{equation}
where $\CS[\![\hbar ]\!]$ is the completed symmetric algebra over $T^{\vee1,0}M$, the holomorphic cotangent bundle which has a local basis $\{y^i\}$ with respect to the local holomorphic coordinates $\{x^i\}$.  We call the sub-bundle $\Omega^{p,q}(M)\otimes \Sym^r(T^{\vee1,0}M)$ of $\W$ its $(p,q,r)$ component, in particular we refer to $r$ as \textit{weight}. To $\hbar$ is assigned a weight of 2.

\begin{prp}[\cite{CLL17}]
\label{prp_flatness}
There is a connection on the holomorphic Weyl bundle of the following form
\begin{equation}
\label{Fedosov_conn}
    \mathcal{D}_\mathrm{F}=\nabla-\delta+\frac{1}{\hbar}[I,-]_{\W},
\end{equation}
which is flat modulo $\hbar$ and $[-,-]_{\mathcal{W}}$ is defined as in \cite{CLL17}. Here $I$ is a $1$-form valued section of the Weyl bundle of weight $\geq 3$, i.e. $I\in \bigoplus_{r\geq 3}\Gamma(\Sym^r(T^{\vee 1,0}M)\otimes T^\vee M)$, $\nabla$ is the extension to $\mathcal{W}$ of a connection on $T^{1,0}M$ which is compatible with the complex structure as well as with the holomorphic symplectic form and torsion free, $\delta=dx^i\wedge \frac{\partial}{\partial y^i}$ is an operator on $\mathcal{W}$.
\end{prp}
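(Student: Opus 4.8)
The statement is a holomorphic version of Fedosov's iterative construction, and I would prove it along the classical lines while tracking the extra grading (the weight $r$) and working modulo $\hbar$. First I would complement $\delta=dx^i\wedge\frac{\partial}{\partial y^i}$ by a contracting homotopy $\delta^{-1}$ (which on a monomial contracts one $dx$-leg against a $y$ and divides by the total number of $y$- and $dx$-legs) together with the projection $\sigma$ onto the weight-zero, form-degree-zero component, so that $\delta\delta^{-1}+\delta^{-1}\delta=\mathrm{id}-\sigma$. I would record the bookkeeping: $\delta$ lowers the weight by $1$ and $\delta^{-1}$ raises it by $1$, whereas $\nabla$ and — since $\hbar$ carries weight $2$ — the operator $\tfrac{1}{\hbar}[-,-]_{\W}$ preserve it. Using compatibility of $\nabla$ with the holomorphic symplectic form, I would note that lowering one index of $R_{\nabla}$ with $\Omega$ makes it symmetric in the two fibre slots, so that $R_{\nabla}$ is an honest weight-$2$, $2$-form-valued section of $\W$; torsion-freeness of $\nabla$ gives $\nabla\delta+\delta\nabla=0$ on $\W$, and the two Bianchi identities give $\delta R_{\nabla}=0$ and $\nabla R_{\nabla}=0$.

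\textbf{Curvature equation.} Writing $\Fe=\nabla-\delta+\tfrac{1}{\hbar}[I,-]_{\W}$ and expanding $\Fe^{2}$ by means of $\delta^{2}=0$, $\nabla\delta+\delta\nabla=0$, $\nabla^{2}=\tfrac{1}{\hbar}[R_{\nabla},-]_{\W}$, the graded Leibniz rule and the Jacobi identity, one obtains
\[
\Fe^{2}=\tfrac{1}{\hbar}\big[\,W,\,-\,\big]_{\W},\qquad W:=R_{\nabla}+\nabla I-\delta I+\tfrac{1}{2\hbar}[I,I]_{\W}.
\]
Hence $\Fe$ is flat modulo $\hbar$ precisely when the curvature $W$ is central in $\W$ modulo $\hbar$, i.e. when the $\hbar^{0}$-part of $W$ lies in $\Omega^{\bullet}(M)$.

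\textbf{Solving for $I$ and centrality.} I would then solve the fixed-point equation
\[
I=\delta^{-1}\!\Big(R_{\nabla}+\nabla I+\tfrac{1}{2\hbar}[I,I]_{\W}\Big),\qquad \delta^{-1}I=0 .
\]
Since $R_{\nabla}$ has weight $2$ and $\delta^{-1}$ raises the weight, the weight-$r$ part of the right-hand side involves only components of $I$ of weight $<r$; the equation therefore determines $I$ uniquely by induction on the weight, its lowest term being $\delta^{-1}R_{\nabla}$ of weight $3$, so $I\in\bigoplus_{r\geq3}\Gamma\big(\Sym^{r}(T^{\vee1,0}M)\otimes T^{\vee}M\big)$ as asserted. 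Finally I would verify that this choice makes $W$ central modulo $\hbar$: from $\delta^{-1}I=0$ and the homotopy identity one deduces $\delta^{-1}W=0$, and the formal Bianchi identity $\Fe W=0$ (which follows from $\tfrac{1}{\hbar}\mathrm{ad}(\Fe W)=[\Fe,\Fe^{2}]=0$ together with $\delta^{-1}(\Fe W)=0$) then forces, by one more induction on the weight, the $\hbar^{0}$-part of $W$ to have no component of positive weight.

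\textbf{Main obstacle.} The routine part is the weight-graded recursion producing $I$; the delicate point is this last step — showing that the resulting curvature $W$ is \emph{automatically} central rather than merely centralizable. It requires combining both Bianchi identities with the gauge-fixing normalization $\delta^{-1}I=0$, and one has to make sure that the higher-order terms of the fibrewise product do not disturb the weight counting, which is exactly the reason the conclusion is flatness \emph{modulo} $\hbar$ and not flatness on the nose.
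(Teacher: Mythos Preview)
Your proposal is correct and follows essentially the same approach as the proof referenced in the paper: the paper does not give its own proof of this proposition but cites \cite{CLL17}, and the only ingredient it extracts from that proof is precisely the fixed-point equation $I=\delta^{-1}(R+\nabla I)+\tfrac{1}{\hbar}\delta^{-1}I^2$, which is your equation (with $\tfrac{1}{2\hbar}[I,I]_{\W}$ written as $\tfrac{1}{\hbar}I^2$). Your weight-filtration recursion and the Bianchi/centrality argument are the standard Fedosov mechanism underlying the cited result.
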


The connection $\Fe$ is called \textit{Fedosov connection} and it provides the $L_\infty$-structure on $\mathfrak{g}_M$. In these terms 
% Having specified the $\Linf$-structure, we can describe the action functional. With this goal in mind, let us consider the natural map
% \begin{equation}
%     \rho: \mathcal{W}\rightarrow \mathcal{C}^{\infty}_\mathrm{loc}(\FDg)
% \end{equation}
% which works as follows. Given a section $I$ of $\Sym^k(T^{\vee1,0}M)$, we can associate a(n) ($\Omega^{\bullet}$-valued) functional on $\mathfrak{g}_M[1]$ using the natural pairing between $T^{1,0}M$ and $T^{\vee1,0}M$. This functional can be further extended to $\mathcal{F}_{\text{CLL}}$ via integration of differential forms over $\Sigma_3$. Explicitly, we have
% \begin{equation}
%     \rho(I): \mathcal{F}_{\text{CLL}}\rightarrow \Omega^{\bullet}(M),\quad \alpha \mapsto \frac{1}{k!}\varint\sur I(\alpha,\dots, \alpha),
% \end{equation}
% with $\alpha\in \mathcal{F}_{\text{CLL}}$.
% The action is then defined as
% \begin{equation}
% \label{class:dg_action}
%     \Sc_{\text{CLL}}(\alpha)\coloneqq\varint\sur \frac{1}{2}\omega(d\sur\alpha+\nabla\alpha,\alpha)+\rho(I-\delta^{-1}(\omega))(\alpha),
% \end{equation}
% where $d\sur$ is the de Rham differential on $\Sigma_3$, $\omega$ is the holomorphic symplectic used to pair factors in $T^{1,0}M$, and $\alpha\in \mathcal{F}_{\text{CLL}}$.
the action can be written as
\begin{equation}
\label{class:dg_action_2}
    \Sc_{\text{CLL}}=\frac{1}{2}\braket{d\sur\alpha,\alpha}+\sum^{\infty}_{k=0}\frac{1}{(k+1)!}\braket{\ell_k(\alpha^{\otimes k}),\alpha}
\end{equation}
with $\alpha \in \mathcal{F}_{\text{CLL}}$, $\braket{-,-}$ defined as in (\ref{class:sympl_cll}) and $\ell_k$ are the higher brackets in the $\Linf$-algebra, and $d_{\Sigma_3}$ the de Rham differential on the source $\Sigma_3$. We can read $\ell_0$ from the Fedosov connection in \eqref{Fedosov_conn}, i.e.
\begin{equation}
        \ell_0=-dx^i\frac{\partial}{\partial y^i}
\end{equation}
% \fix (CHECK operators)
% This action is the same as \ref{glob_action}, we just need to add the factorials in front of the terms coming from the Grothendieck connection.

The $\Linf$-products $\ell_1$ and $\ell_2$ are computed in the next section, when we compare the Fedosov connection with the classical Grothendieck connection.

\begin{rmk}
The action in \eqref{class:dg_action_2} satisfies the CME $(\Sc_{\text{CLL}},\Sc_{\text{CLL}})=0$ (see \cite[Proposition 2.16]{CLL17}). Moreover, in \cite{CLL17} it was observed that this construction is the formal version of the original RW model in the case when the $(2,0)$-part of the curvature is zero (see \cite[Section 2.3]{CLL17}).  
\end{rmk}

\subsubsection{Comparison between the Fedosov connection and the classical Grothendieck connection}
The sufficient condition for the flatness of $\Fe$ (see the proof of Proposition \ref{prp_flatness} in \cite{CLL17})  implies that $I$ satisfies
\begin{equation}
\label{eqI}
    I=\delta^{-1}(R+\nabla I)+\frac{1}{\hbar}\delta^{-1}I^2,
\end{equation}
where $\delta^{-1}=y^i\cdot \iota_{\partial_{x^i}}$ (up to a normalization factor) is another operator on $\mathcal{W}$ and $R$ is the curvature tensor.

\begin{rmk}
Since $I$ is a $1$-form valued section of $\W$, we can decompose it into its holomorphic and antiholomorphic component respectively. In particular the antiholomoprhic part component is the Taylor expansion of the Atiyah class as noted in \cite{CLL17}. In the case $R^{2,0}=0$, the $\Linf$-algebra is fully encoded by Taylor expansion of the Atiyah class as first noted by Kapranov in \cite{KA99}.  
\end{rmk}

Since the operator $\delta^{-1}$ increases the weight by $1$, while $\nabla$ preserves the weight and $I$ has at least weight $3$, we can find a solution of the above equation with the following leading term (cubic term)\footnote{See also \cite{F00,GLL17}.}:
\begin{equation}
\label{deltaR}
    \delta^{-1}R=\frac18\big[-\Chr{}{ij}{k,r}+\Chr{}{si}{r}\Chr{}{pj}{k}\Omega^{sp}\big]y^i y^j y^r dx^k+\frac16\Riem{}{\Bar{k}r}{i}{j} y^i y^j y^r dx^{\Bar{k}}=\delta^{-1}R_t+\delta^{-1}\bar{R}.
\end{equation}
Since the Fedosov connection requires the computation of $\frac{1}{\hbar}[I,-]_{\W}$, we compute this commutator for the leading order term of $I$, which is the cubic term we have just found. For the first term on the right hand side of Eq. (\ref{deltaR}) we have
\begin{equation}
    \begin{split}
        \frac{1}{\hbar}[\delta^{-1}R_t,-]_{\mathcal{W}}&=\bigg[\frac{1}{8}\bigg(-\Chr{}{rj}{k,q}+\Chr{}{s}{rq}\Chr{}{p}{jk}\Omega^{sp}\bigg)+\frac{1}{4}\bigg(-\Chr{}{qj}{k,r}+\Chr{}{sq}{k}\Chr{}{pj}{r}\Omega^{sp}\bigg)\bigg]\Omega^{qi}y^j y^r dx^k\frac{\partial}{\partial y^i}\\
        &=\bigg[\frac18\bigg(-\Omega_{mr}\Chr{m}{j}{k,q}+\Omega_{mr}\Chr{m}{s}{k}\Chr{}{p}{jk}\Omega^{sp}\bigg)+\frac14\bigg(-\Omega_{mq}\Chr{m}{j}{k,r}+\Omega_{mq}\Chr{m}{s}{k}\Chr{}{pj}{r}\Omega^{sp}\bigg)\bigg]\times\\
        &\hspace{10cm} \times\Omega^{qi}y^j y^r dx^k\frac{\partial}{\partial y^i}\\
        &=\bigg[\frac18\Riem{\hspace{1.75mm} i}{k\ }{r}{j}+\frac14\Riem{\hspace{2.9mm}i}{k}{r\, }{j}\bigg]y^j y^r dx^k\frac{\partial}{\partial y^i}.
    \end{split}
\end{equation}
For the second term we have
\begin{equation}
    \frac{1}{\hbar}\big[\delta^{-1}\bar{R},-\big]_{\W}=\frac12\Chr{i}{j}{k,\Bar{r}}y^j y^k dz^{\Bar{r}}\frac{\partial}{\partial y^i}.
\end{equation}
After renaming some indices, the Fedosov connection is then
\begin{equation}
\label{class:Fed_conn_expl}
    \Fe=d_x+d_{\bar{x}}-dx^j\frac{\partial}{\partial y^j}-dx^j\Chr{i}{k}{j}y^k\frac{\partial}{\partial y^i}+dx^j\bigg(\frac18R^{\hspace{1.5mm} i}_{j\; ks}+\frac14R^{\hspace{2.5mm} i}_{jk\; s}\bigg)y^ky^s\frac{\partial}{\partial y^i}+\frac12dx^{\Bar{j}}\Chr{i}{ks}{,\Bar{j}}y^ky^s\frac{\partial}{\partial y^i}+\dots.
\end{equation}
More explicitly, 
\begin{equation}
\label{class:linf_prod}
    \begin{split}
        \ell_1&=-dx^j\Chr{i}{k}{j}y^k\frac{\partial}{\partial y^i},\\
        \ell_2&=dx^j\bigg(\frac18R^{\hspace{1.5mm} i}_{j\; ks}+\frac14R^{\hspace{2.5mm} i}_{jk\; s}\bigg)y^ky^s\frac{\partial}{\partial y^i}+\frac12dx^{\Bar{j}}\Chr{i}{ks}{,\Bar{j}}y^ky^s\frac{\partial}{\partial y^i}.
    \end{split}
\end{equation}
\begin{rmk}
The first terms in the Fedosov connection, explicitly written in (\ref{class:Fed_conn_expl}) coincide with the first terms for the classical Grothendieck connection (\ref{Grothendieck_con}). Furthermore, by substituting the explicit expressions of $\ell_1$ and $\ell_2$ in the action $\Sc_{\text{CLL}}$ (\ref{class:dg_action_2}), we can see that it coincides with the action $\Tilde{\Sc}\surg$ (\ref{class:expl_S_global}).
\end{rmk}

\subsubsection{Comparison between the CLL space of fields and globalization space of fields}

% Let us remind the reader about the space of fields in the derived geometry context, namely
% \begin{equation}
%     \mathcal{F}_{\text{CLL}}=\Omega^{\bullet}(\Sigma_3)\otimes \g_M[1]
% \end{equation}
% with $\mathfrak{g}_M[1]=\Omega^{\bullet,\bullet}(M)\otimes T^{1,0}M$.
By rephrasing the argument of \cite[Section 6.1]{Mo20} to our context, we can extend the classical Grothendieck connection $\Gr$ to the complex 
\begin{equation}
\label{extensioncomplex}
    \Gamma\bigg(\bigwedge\nolimits^{\bullet}T^\vee M\otimes   \CS \bigg).
\end{equation}
which is the algebra of functions on the formal graded manifold
\begin{equation}
    T[1]M\bigoplus T^{1,0}M.
\end{equation}
This graded manifold is turned into a differential graded manifold by the classical Grothendieck connection $\Gr$. Moreover, since $\Gr$ vanishes on the body of the graded manifold, we can linearize at $x\in M$ and we get
\begin{equation}
    T_x[1]M\bigoplus T_x^{1,0}M.
\end{equation}
On this graded manifold, we have a curved $\Linf$-structure (which is the same as $\mathfrak{g}_M[1]$) and Eq. (\ref{extensioncomplex}) can be interpreted as the Chevalley--Eilenberg complex of the aforementioned $\Linf$-algebra. Then, the space of fields for the globalized theory can be rewritten as 
\begin{equation}
\label{extendedsof}
    \Tilde{\mathcal{F}}\surg=\Omega^{\bullet}(\Sigma_3)\otimes \Omega^{\bullet,\bullet}(M)\otimes T^{1,0}_xM
\end{equation}
which is the same as $\FDg$ by linearizing at $x\in M$ the holomorphic tangent bundle as $\mathcal{D}_\text{F}$ vanishes on $M$. %\textcolor{red}{THIS $\mathfrak{g}_M[1]$ HAS TO BE CHECKED.}

%We conclude this section with two final comments.
\begin{rmk}
The idea that the classical Grothendieck connection and the Fedosov connection coincide is not new, in particular see \cite[Remark 3.6]{CMoW19} and \cite[Section 2.3]{CLL17}.
% In this section, we compared our globalization approach to the derived geometry one in the context of the RW model. To sum up, we showed that the lowest order terms in 
% classical Grothendieck connection and in the Fedosov connection coincide, as well as the action functional and the space of fields are the same. However, we have not proven rigorously that these two constructions are equivalent, loosely speaking we have just gathered clues. Furthermore, we note that this idea is not new, in particular see \cite[Remark 3.6]{CMoW19} and \cite[Section 2.3]{CLL17}.
%\textcolor{red}{rmk calaque abotu cattaneo?}
\end{rmk}
\begin{rmk}
Finally note that in \cite{CLL17} the source manifold $\Sigma_3$ was considered to be a closed manifold. As explained above (see Section \ref{class:sec:var_class_back}) our construction is valid also when $\partial\Sigma_3\neq \emptyset$. In the next section, we tackle this last setting by comparing our approach with \cite{Ste17}, where the derived geometric framework was implemented for manifolds with boundary.
\end{rmk}

\subsection{Comparison with Steffens' construction}
In \cite{Ste17}, Steffens applied the same derived geometry approach we have seen in the last section to what he calls \textit{AKSZ theories of Chern--Simons type}: CS theory and RW theories. In particular, his BV formulation of the RW model is completely analogue to the one in \cite{CLL17}: same space of fields, $\Linf$-algebra, action, etc. %even though he does not furnish an explicit shape for the $\Linf$ product.
However, he takes a step further. He proves a \textit{formal AKSZ theorem} \cite[Theorem 2.4.1]{Ste17} in the context of derived geometry. His RW model is then shown to be an AKSZ theory by attaching degree 2 to the holomorphic symplectic form (as we did ourselves in Section \ref{sec:Classical_Theory}). Consequently, he provides a BV-BFV formulation for the RW model. The BFV action found in \cite{Ste17} is analogous to the action in (\ref{class:dg_action_2}) in one dimension less (as it is customary with AKSZ theories). Even if the $\Linf$ products are not explicit in his construction, by using the ones in (\ref{class:linf_prod}), his BV-BFV formulation of the RW model is visibly  identical to ours.

\subsection{Comparison with the (K)QZ construction}
% In this section, we are going to sum up the work done by Qiu and Zabzine in \cite{QZ10,QZ09} as well as K{\"a}llén, Qiu and Zabzine in \cite{KQZ13} and compare with our construction of Section \ref{sec:globalization}.
Let $\Sigma_3$ be a 3-dimensional manifold and $M$ a hyperK{\"a}hler manifold with holomorphic symplectic form $\Omega$. Consider the symplectic graded manifold $\mathcal{M}\coloneqq T^{\vee0,1}[2]T^{\vee0,1}[1]M$ constructed out of $M$. It has the following coordinates: $X^i, X^{\Bar{i}}$ of degree 0 parametrizing $M$, $V^{\Bar{i}}$ of degree 1 parametrizing the fiber $T^{0,1}M$  and dual coordinates $P_{\Bar{i}}, Q_{\Bar{i}}$ of degree 2  and 1, respectively. The symplectic form is
\begin{equation}
\label{class:QZ_sympl_form}
    \omega_{\mathcal{M}}=dP_{\Bar{i}} \wedge dX^{\Bar{i}}+dQ_{\Bar{i}}\wedge dV^{\Bar{i}}+\frac{1}{2}\Omega_{ij}dX^i\wedge dX^j.
\end{equation} 
In order to have a ghost degree 2 symplectic form, the authors assign degree 2 to $\Omega$. With this setup, in \cite{QZ10,QZ09, KQZ13}, K{\"a}llén, Qiu and Zabzine construct an AKSZ model
\begin{align}
    \mathcal{F}_{\text{QZ}}&\coloneqq  \Maps(T[1]\Sigma_3, T^{\vee0,1}[2]T^{\vee0,1}[1]M)\\
    \label{comp:action_qz}
    \Sc_{\text{QZ}}&=\varint_{T[1]\Sigma_3}d^3zd^3\theta\bigg(\mathbf{P}_{\Bar{i}}D\mathbf{X}^{\Bar{i}}+\mathbf{Q}_{\Bar{i}}D\mathbf{V}^{\Bar{i}}+\frac{1}{2}\Omega_{ij}\mathbf{X}^iD\mathbf{X}^j+\mathbf{P}_{\Bar{i}}\mathbf{V}^{\Bar{i}}\bigg)
\end{align}
endowed with a cohomological vector field
\begin{equation}
    Q=\varint_{T[1]\Sigma_3}d^3zd^3\theta \bigg(D\mathbf{P}_{\Bar{i}}\frac{\partial}{\partial \mathbf{P}_{\Bar{i}}}+D\mathbf{Q}_{\Bar{i}}\frac{\partial}{\partial \mathbf{Q}_{\Bar{i}}}+D\mathbf{V}^{\Bar{i}}\frac{\partial}{\partial \mathbf{V}^{\Bar{i}}}+    D\mathbf{X}^i\frac{\partial}{\partial \mathbf{X}^i}+D\mathbf{X}^{\Bar{i}}\frac{\partial}{\partial \mathbf{X}^{\Bar{i}}}+\mathbf{P}_{\Bar{i}}\frac{\partial}{\partial \mathbf{Q}_{\Bar{i}}}+\mathbf{V}^{\Bar{i}}\frac{\partial}{\partial \mathbf{X}^{\Bar{i}}}\bigg),
\end{equation}
where to the source manifold $T[1]\Sigma_3$, we assign coordinates $\{z^i\}$ of ghost degree 0 and coordinates $\{\theta^i\}$ of degree 1.

\begin{rmk}
\label{rmk_comp_zq_rw}
With a suitable gauge-fixing consisting on a particular choice of Lagrangian submanifolds, the action $\Sc_{\text{QZ}}$ reduces to the RW model up to a factor of $\hbar$ (see \cite[Section 4]{QZ09}):
\begin{equation}
    \Sc_{\text{QZ}}\bigg|_{\text{GF}}=\frac{1}{2}\varint d^3z \bigg(\Omega_{ij} X^i_{(1)}\wedge d^{\nabla} X^{j}_{(1)}-\frac{1}{3}R_{k\Bar{k}j}^i X^k_{(1)}\wedge \Omega_{li}X^l_{(1)}\wedge X^j_{(1)}V^{\bar{k}}_{(0)}\bigg),
\end{equation}
with $d^{\nabla}X^i_{(1)}=dX^i_{(1)}+\Chr{i}{j}{k}dX^j_{(0)}X^k_{(1)}$. Note that the only fields left are the even scalar $X^i_{(0)}$, the odd 1-form $X^i_{(1)}$ and the odd scalar $V^{\bar{k}}_{(0)}$. A quick glance to our expression for the RW model in (\ref{class:expl_S_global}) (assume again the $(2,0)$ part of the curvature is zero as well as we cut off the perturbative expansion of the $(1,1)$ part at the $\Riem{i}{k}{s}{\bar{j}}$) suggests the association $V^{\bar{k}}_{(0)}\Leftrightarrow dx^{\bar{k}}$. %Even though these two quantities arise in two different context, their association can makes sense since both parametrize the fibers of $T^{*0,1}M$ and both have the same ghost number. 
We will comment more on this later.
\end{rmk}

By expanding $\mathbf{X}^i$ through the geodesic exponential map and by pulling back $\omega_{\mathcal{M}}$ as well as $S_{\text{QZ}}$ through it, the authors find 
% A closer look at the action in \eqref{comp:action_qz} suggests that $\mathbf{P}_{\Bar{i}}, \mathbf{X}^{\Bar{i}}$ are morally speaking, ``spectator fields" since $p_{\bar{i}}$ appears linearly in the action, similarly for $\mathbf{Q}_{\Bar{i}}$ and $  \mathbf{V}^{\Bar{i}}$ \cite{QZ10,KQZ13}. The only active field remains $\mathbf{X}^i$. The authors' idea is to use the Hodge decomposition to decompose every field into exact, harmonic and co-exact components. The gauge-fixing consists of choosing a Lagrangian submanifold such that all exact components of the superfields are set to zero. However, this decomposition does not work for $\mathbf{X}^i$ since the coordinates are not linear. The solution is to expand perturbatively $\mathbf{X}^i$ around the constant solutions of the Euler--Lagrange equation $D\mathbf{X}^i=0$ which we denote as $x$. The expansion is thus computed through the geodesic exponential map, we have
% \begin{equation}
%     \begin{split}
%         X^i&=\exp\bigg(y^i\frac{\partial}{\partial x^i}-\frac{1}{2}\Gamma^i_{jk}y^jy^k\frac{\partial}{\partial y^i}\bigg)x^i\\
%         &=x^i+y^i-\frac{1}{2}\Gamma^i_{jk}y^jy^k+\dots,
%     \end{split}
% \end{equation}
% with $y^i$ a tangent vector at $x$.
%The exponential map pulls back the symplectic form (\ref{class:QZ_sympl_form}) to
\begin{align}
        \exp^*\omega_{\mathcal{M}}&=dP_{\Bar{i}} \wedge dX^{\Bar{i}}+dQ_{\Bar{i}}\wedge dV^{\Bar{i}}+\frac{1}{2}\Omega_{ij}(x)dy^i\wedge dy^j-\delta X^{\bar{i}}\delta \Theta_{\bar{i}}\\
        \exp^*\Sc_{\text{QZ}}\bigg|_{\tilde{\mathbf{P}}}&=\varint_{T[1]\Sigma_3}d^3zd^3\theta \bigg(\tilde{{\mathbf{P}}}_{\Bar{i}}D \mathbf{X}^{\Bar{i}}+\mathbf{Q}_{\Bar{i}}D\mathbf{V}^{\Bar{i}}+\frac{1}{2}\Omega_{ij}\mathbf{y}^i D\mathbf{y}^j-\Tilde{\mathbf{P}}_{\Bar{i}}\mathbf{V}^{\Bar{i}}+\Theta_{\Bar{i}}(x;\mathbf{y})\mathbf{V}^{\Bar{i}}\bigg)
\end{align}
where $\Theta_{\Bar{i}}$ is of degree 2 and given by 
\begin{equation}
\label{class:qz_theta}
    \Theta_{\Bar{i}}(x;y)=\sum^{\infty}_{n=3}\frac{1}{n!}\nabla_{l_4}\dots \nabla_{l_n}R^{\hspace{3mm}k}_{\Bar{i}l_1\ l_3}\Omega_{kl_2}(x)y^{l_1}\dots y^{l_n}%:=\sum^{\infty}_{n=3}\frac{1}{n!}Y^{\quad k}_{\Bar{i}l_1\ \dots l_n}\Omega_{kl_2}y^{l_1}\dots y^{l_n}
\end{equation} %\fix \textcolor{red}{space of indices in R and Y}
and $\Tilde{P}_{\Bar{i}}:=P_{\Bar{i}}+\Theta_{\Bar{i}}$.
% the action \eqref{comp:action_qz} is pulled back as
% \begin{equation}
%     \exp^* \Sc_{\text{QZ}}=\varint_{T[1]\Sigma_3}d^3zd^3\theta \bigg(\mathbf{P}_{\Bar{i}}+\Theta_{\Bar{i}}(x;\mathbf{y}))D\mathbf{X}^{\Bar{i}}+\mathbf{Q}_{\Bar{i}}D\mathbf{V}^{\Bar{i}}+\frac{1}{2}\Omega_{ij}\mathbf{y}^iD\mathbf{y}^j-\mathbf{P}_{\Bar{i}}\mathbf{V}^{\Bar{i}}\bigg).
% \end{equation}

% K{\"a}llen, Qiu, Zabzine notice that in this case the momentum dual to $X^{\Bar{i}}$ is $P_{\Bar{i}}+\Theta_{\Bar{i}}$. Therefore, they suggest a change of variable: $\Tilde{P}_{\Bar{i}}:=P_{\Bar{i}}+\Theta_{\Bar{i}}$. As a result, the action becomes
% \begin{equation}
%     \exp^*\Sc_{\text{QZ}} \bigg|_{\Tilde{P}}=\varint_{T[1]\Sigma_3}d^3zd^3\theta \bigg(\Tilde{\mathbf{P}}_{\Bar{i}}D \mathbf{X}^{\Bar{i}}+\mathbf{Q}_{\Bar{i}}D\mathbf{V}^{\Bar{i}}+\frac{1}{2}\Omega_{ij}\mathbf{y}^i D\mathbf{y}^j-\Tilde{\mathbf{P}}_{\Bar{i}}\mathbf{v}^{\Bar{i}}+\Theta_{\Bar{i}}(x;\mathbf{y})\mathbf{V}^{\Bar{i}}\bigg).
% \end{equation}

After removing the \emph{spectator fields} (see \cite{QZ10,KQZ13}), the action becomes
\begin{equation}
    \varint_{T[1]\Sigma_3}d^3zd^3\theta\bigg(\frac12\Omega_{ij}\mathbf{y}^iD\mathbf{y}^j+\Theta_{\bar{i}}(x;\mathbf{y})\mathbf{V}^{\bar{i}}\bigg),
\end{equation}
which further reduces to
\begin{equation}
\label{class:qz_action}
    \varint_{T[1]\Sigma_3}d^3zd^3\theta\bigg(\frac12\Omega_{ij}\mathbf{y}^iD\mathbf{y}^j+\Theta_{\bar{i}}(x;\mathbf{y})V^{\bar{i}}_{(0)}\bigg)
\end{equation}
for degree reasons ($V^{\bar{i}}_{(0)}$ is an odd scalar). This action fails the CME by a $\bar{\partial}$-exact term due to $\Theta$ satisfying the Maurer--Cartan equation
\begin{equation}
\label{class:theta_MC}
    \bar{\partial}_{[\bar{i}}\Theta_{\bar{j}]}=-(\Theta_{\bar{i}}, \Theta_{\bar{j}}),
\end{equation}
where $\bar{\partial}$ is the Dolbeault differential and $[\bar{i}\bar{j}]$ denotes antisymmetrization over the indices $\bar{i}$ and $\bar{j}$.

The hyperK{\"a}hler structure is then relaxed. A new connection which still preserves $\Omega$ (crucial for the perturbative approach through the exponential map above) is found. However, since the connection is not Hermitian, the curvature of $\Gamma$ exhibits also a $(2,0)$-component. This complicates the exponential map which can not be worked out at all orders as in (\ref{class:qz_theta}). In \cite{KQZ13}, the authors argue that a solution to this problem should originate from principles related to the globalization issues discussed in \cite{BLN02} and the application of Fedosov connection in order to deal with perturbation theory on curved manifold \cite{CF01}. In the realm of this paper, we furnish an affirmative answer to both their ideas. In particular, as we have seen in Section \ref{comp:sec:summary}, the Fedosov connection allowed to compute the terms in the $\Linf$-algebra and thus to work out the exponential map. In Section \ref{class:sec:var_class_back}, we have seen the Grothendieck connection to accomplish the same in the context of formal geometry. 
%%%%%%%%%%%%%%%%%%%%%%%%%%%%%%%%%%%%%%%%%%%%%%%%%%%%
%\textcolor{red}{KNOT WEIGHT SYSTEMS FROM GRADED SYMPLC GEOMETERY QZ KNEW ABOUT GROTHENDIECK CONENCTION.}
%%%%%%%%%%%%%%%%%%%%%%%%%%%%%%%%%%%%%%%%%%%%%%%%%%%%
\begin{rmk}
We can compare the procedure above with our globalization construction by associating $V^{\bar{i}}_{(0)}$ with $dx^{\bar{i}}$. First, note that $\Big(R\sur\Big)_{\bar{i}}$ in Eq. (\ref{class:qz_theta}) matches with the second term in Eq. \eqref{class:glob_terms}. Second, the action in \eqref{class:qz_action} coincides with our globalized action in \eqref{glob_action} if we ``forget" the $(2,0)$-part of the curvature. In particular, by associating $\bar{\partial}$ with $dx^{\bar{i}}\frac{\partial}{\partial x^{\bar{i}}}$, we can interpret the failure of \eqref{class:qz_action} to satisfy the CME due to the term \eqref{class:theta_MC} as a consequence of the action satisfying the $(1,1)$-part of the dCME (Eq. (\ref{class:dCME})).
\end{rmk}

We reserve the last remark of the section to precise the association between $V^{\bar{i}}_{(0)}$ and $dx^{\bar{i}}$ as well as their ``meaning" as we promised in Remark \ref{rmk_comp_zq_rw}.
\begin{rmk}
\label{rmk_parameters}
As we have seen above, $V^{\bar{i}}_{(0)}$ and $dx^{\bar{i}}$ arise in two different contexts: the first is an odd scalar coordinate parametrizing the fibers of $T^{0,1}M$, while the second is introduced through the classical Grothendieck connection as well as the perturbative expansion.

Nevertheless, the association makes sense considering that $V^{\bar{i}}_{(0)}$ is interpreted as an odd harmonic zero mode in \cite{KQZ13}. In fact, recall from Section \ref{sec:globalization}, $x$ is the zero mode obtained from the Euler--Lagrange equation $D\mathbf{X}=0$. If we enlarge the complex (see Eq. \eqref{extensioncomplex}), the space of fields becomes (\ref{extendedsof}) meaning that $dx^{\bar{i}}\in T^{\vee 0,1}M$, i.e. an odd zero mode. This association was pointed out first by Qiu and Zabzine in \cite{QZ12}.

The presence of these quantities has been known in the literature since the early days of the RW model and has deep consequences. Since they are odd, there  can be as many as the dimension of $M$. As such, the perturbative expansion can not be infinite, but it can only stop at a certain order. This is a crucial difference between the CS and the RW theory, which was originally spotted in \cite{RW96} and attributed to the need for the RW theory to saturate the zero modes. According to Kontsevich in \cite{Ko99}, as a result the RW model can be understood as an AKSZ model with ``parameters" (these parameters are $V^{\bar{i}}_{(0)}$ or $dx^{\bar{i}}$). In the same article, he presented a different perspective on this subject by pointing out that the RW invariants come from characteristic classes of holomorphic connections. 
\end{rmk}

% We conclude the part of the classical theory of our globalized RW model by preparing the model for quantization in the next section.

\section{$BF$-like formulation of the Rozansky--Witten model}
\label{sec:BF-like_formulation}
In order to quantize our globalized version of the RW model in the quantum BV-BFV framework \cite{CMR17}, we need to formulate the model as a $BF$-like theory. This can be done by exploiting the similarities between the RW theory and the CS theory. These similarities have also been crucial in the construction of \cite{Ste17}. There it was argued that RW could be split following a similar approach to the one of Cattaneo, Mnev and Wernli for the CS theory in \cite{CMnW17} (see also \cite{We18} for a more detailed exposition). 
% In that case, their idea was to split the Lie algebra as $\mathfrak{g}=V\bigoplus W$, where $V$ and $W$ are maximally isotropic subspaces. There is an identification $W\cong V^\vee$ via the pairing $\braket{-,-}:\ \g\times \g \rightarrow\mathbb{R}$, which extends to homogeneous elements $\mathbf{A}\otimes v\in \Omega^{\bullet}(\Sigma_3)\otimes \mathfrak{g}$. Therefore, if $\mathbf{C}\in \Omega^{\bullet}(\Sigma_3)\otimes \mathfrak{g}[1]$, the decomposition is $\mathbf{C}=\mathbf{A}+\mathbf{B}$, where $\mathbf{A}\in \Omega^{\bullet}(\Sigma_3)\otimes V[1]$ and $\mathbf{B}\in \Omega^{\bullet}(\Sigma_3)\otimes W[1]$. %\fix

As shown in \cite{CLL17} (see Eq. (\ref{class:sympl_cll})), we have a pairing on $ \Tilde{\mathcal{F}}\surg$ given by the BV symplectic form which can be defined on homogeneous elements $\hat{\mathbf{Y}}\otimes g_1$ and $\hat{\mathbf{Z}}\otimes g_2$ as
\begin{equation}
    \begin{split}
        \braket{-,-}:\ & \Tilde{\mathcal{F}}\surg\otimes  \Tilde{\mathcal{F}}\surg \rightarrow \Omega^{\bullet,\bullet}(M),\\
&\braket{\hat{\mathbf{Y}}\otimes g_1,\hat{\mathbf{Z}}\otimes g_2}:=\underbrace{\Omega(g_1,g_2)}_{\text{sympl. struct. on $M$}}\,\,\varint_{T[1]\Sigma_3}\mu_{\Sigma_3}\bigg(\hat{\mathbf{Y}}\wedge \hat{\mathbf{Z}}\bigg).%=\varint_{T[1]\Sigma_3}\mu\bigg[\Omega_{ij}\hat{\mathbf{X}}^i\wedge \hat{\mathbf{X}}^j\bigg].
    \end{split}
\end{equation}
By expanding $\hat{\mathbf{X}}\in \Tilde{\mathcal{F}}\surg$ as $\hat{\mathbf{X}}=\hat{\mathbf{X}}^ie_i$, we have
\begin{equation}
    \braket{\hat{\mathbf{X}},\hat{\mathbf{X}}}=\Omega(e_i,e_j)\varint_{T[1]\Sigma_3}\mu_{\Sigma_3}\bigg(\hat{\mathbf{X}}^i\wedge \hat{\mathbf{X}}^j\bigg)=\varint_{T[1]\Sigma_3}\mu_{\Sigma_3}\bigg(\Omega_{ij}\hat{\mathbf{X}}^i\wedge \hat{\mathbf{X}}^j\bigg).
\end{equation}
We can rewrite the globalized action (\ref{glob_action}) in the same way as in \cite{CLL17} (see the action in \eqref{class:dg_action_2}), we have
\begin{equation}
    \Tilde{\Sc}\surg=\frac12\Big\langle\hat{\mathbf{X}}, D\hat{\mathbf{X}}\Big\rangle+\Big\langle\Big(\hat{R}\sur\Big)_j(x; \hat{\mathbf{X}})dx^j,\hat{\mathbf{X}}\Big\rangle+\Big\langle\Big(\hat{R}\sur\Big)_{\bar{j}}(x; \hat{\mathbf{X}})dx^{\bar{j}},\hat{\mathbf{X}}\Big\rangle,
\end{equation} 
with
\begin{equation}
\label{class:coeff}
    \begin{split}
        \Big(\hat{R}\sur\Big)_j(x,\hat{\mathbf{X}})&=\sum^{\infty}_{k=0}\frac{1}{(k+1)!}\Big(\hat{R}_k\Big)_j(\hat{\mathbf{X}}^{\otimes k}),\\
        \Big(\hat{R}\sur\Big)_{\bar{j}}(x,\hat{\mathbf{X}})&=\sum^{\infty}_{k=2}\frac{1}{(k+1)!}\Big(\hat{R}_k\Big)_{\bar{j}}(\hat{\mathbf{X}}^{\otimes k}).
    \end{split}
\end{equation}
Now, similarly to the approach in \cite{CMnW17}, we assume that we can split the $\Linf$-algebra as 
\begin{equation}
\mathfrak{g}[1]=\Omega^{\bullet,\bullet}(M)\otimes T^{\vee1,0}M=\Omega^{\bullet,\bullet}(M)\otimes V\oplus \Omega^{\bullet,\bullet}(M)\otimes W,
\end{equation}
with $V$ and $W$ two isotropic subspaces. We identify $W\cong V^\vee$ via the pairing (in particular thanks to the holomorphic symplectic form). Consequently, the superfield splits as $\hat{\mathbf{X}}=\hat{\mathbf{A}}+\hat{\mathbf{B}}=\hat{\mathbf{A}}^i\xi_i+\xi^i\hat{\mathbf{B}}_i$ with $\xi_i\in V$ and $\xi^i\in W$. Concerning the assignment of degrees, we make the following choices. Since $\Omega$ has ghost degree $2$ (and as such $\Omega^{-1}$ has ghost degree $-2$), we assign total degree 0 to $\mathbf{A}^i$ and $\xi_i$, total degree $2$ to $B_i$ and total degree $-2$ to $\xi^i$. We refer to Table \ref{class:Tab_degrees_split} for an explanation of the ghost degrees for the components of the superfields $\hat{\mathbf{A}}^i$ and $\hat{\mathbf{B}}_i$. Then $\hat{\mathbf{A}}^i\oplus \hat{\mathbf{B}}_i\in \Omega^{\bullet}(\Sigma_3)\oplus\Omega^{\bullet}(\Sigma_3)[2]$, which is a $BF$-like theory.

\begin{table}[hbt!]
    \centering
    \begin{tabular}{|lcc|}
    \toprule
         & Form degree &  Ghost degree \\
         \midrule
    $A^i_{(0)}$ & 0     &  0 \\
    $A^i_{(1)}$ & 1     &  $-1$ \\
    $A^i_{(2)}$ & 2     &  $-2$\\
    $A^i_{(3)}$ & 3     &  $-3$\\
    \midrule 
    $B_{(0)i}$ & 0     &  2 \\
    $B_{(1)i}$ & 1     &  1 \\
    $B_{(2)i}$ & 2     &  $0$\\
    $B_{(3)i}$ & 3     &  $-2$\\
    \bottomrule
    \end{tabular}
    \caption{Explanation for the form degree and ghost degree for the components of the superfields $\hat{\mathbf{A}}^i$ and $\hat{\mathbf{B}}_i$.}
    \label{class:Tab_degrees_split}
\end{table}

\begin{rmk}
As explained in \cite[Remark 4.2.2]{Ste17}, the splitting of the target $T^{\vee1,0}_xM$ into two transversal holomorphic Lagrangian subbundles is not possible when $M$ is a K3 surface. Instead, it is possible when $M=T^\vee Y$, with $Y$ any complex manifold. In this case $M$ with the standard holomorphic symplectic form will have a vertical as well as a horizontal polarization.
\end{rmk}
To sum up, the  space of fields is split as
\begin{equation}
\label{space_of_fields_gs}
     \Tilde{\mathcal{F}}\surgS=\Omega^{\bullet}(\Sigma_3)\otimes\Omega^{\bullet,\bullet}(M)\otimes V\oplus \Omega^{\bullet}(\Sigma_3)[2]\otimes\Omega^{\bullet,\bullet}(M)\otimes W.
\end{equation}
%  As a consequence, the space of fields splits as
% \begin{equation}
%      \Tilde{\mathcal{F}}\surg=\Omega^{\bullet}(\Sigma_3)\otimes \Omega^{\bullet}(M)\otimes T^{\vee1,0}M=\Omega^{\bullet}(\Sigma_3)\otimes\Omega^{\bullet}(M)\otimes V\bigoplus \Omega^{\bullet}(\Sigma_3)\otimes\Omega^{\bullet}(M)\otimes W.
% \end{equation}
    % \begin{equation}
    %     \Tilde{\mathcal{F}}\surg=\Omega^{\bullet}(\Sigma_3)\otimes V\bigoplus \Omega^{\bullet}(\Sigma_3)\otimes W.
    % \end{equation}
 \begin{defn}[Globalized split RW action]
 The \textit{globalized split RW action} is defined as
 \begin{equation}
\label{split_global_action}
    \begin{split}
        \Tilde{\Sc}\surgS&\coloneqq%\varint_{T[1]\Sigma_3}\mu_{\Sigma_3}
    \Big\langle\hat{\mathbf{B}}, D\hat{\mathbf{A}}\Big\rangle+\Big\langle\Big(\hat{R}\sur\Big)_j(x; \hat{\mathbf{A}}+\hat{\mathbf{B}})dx^j,\hat{\mathbf{A}}+\hat{\mathbf{B}}\Big\rangle+\Big\langle\Big(\hat{R}\sur\Big)_{\bar{j}}(x; \hat{\mathbf{A}}+\hat{\mathbf{B}})dx^{\bar{j}},\hat{\mathbf{A}}+\hat{\mathbf{B}}\Big\rangle\\
    &=\hat{\mathcal{S}}\surgS+\mathcal{S}\surgSR+\mathcal{S}\surgSRbar.
    \end{split}
\end{equation} 
%We refer to \cite{Sac21} for an explicit expression of the second and third terms in the action. 
with
\begin{equation}
\label{class:R_1}
    \begin{split}
    \bigg\langle\Big(\hat{R}\sur\Big)_j(x,\hat{\mathbf{A}}+\hat{\mathbf{B}})dx^{j},\hat{\mathbf{A}}+\hat{\mathbf{B}}\bigg\rangle&=\varint_{T[1]\Sigma_3}\mu_{\Sigma_3}\bigg(\sum^{\infty}_{k=0}\frac{1}{(k+1)!}\Big(\hat{R}_k\Big)^i_j((\Omega_{il}\hat{\mathbf{A}}^l+\hat{\mathbf{B}}_i)^{\otimes k})(\hat{\mathbf{A}}+\hat{\mathbf{B}})dx^{j}\bigg)\\
        &=\varint_{T[1]\Sigma_3}\mu_{\Sigma_3}\Bigg\{\bigg(\Big(\hat{R}_0\Big)^i_j\Omega_{il}\hat{\mathbf{A}}^l+\Big(\hat{R}_0\Big)^i_j\hat{\mathbf{B}}_i+\frac{1}{2}\Big(\hat{R}_1\Big)^i_j(\xi_s)\Omega_{il}\hat{\mathbf{A}}^s\hat{\mathbf{A}}^l\\
        &\quad+\Big(\hat{R}_1\Big)^i_j(\xi^s)\Omega_{il}\hat{\mathbf{A}}^l\hat{\mathbf{B}}_s
        +\frac{1}{2}\Big(\hat{R}_1\Big)^i_j(\xi^s)\hat{\mathbf{B}}_i\hat{\mathbf{B}}_s\\
        &\quad+\frac{1}{6}\Big(\hat{R}_2\Big)^i_j(\xi_s\xi_m)\Omega_{il}\hat{\mathbf{A}}^l\hat{\mathbf{A}}^s\hat{\mathbf{A}}^m+\frac{1}{2}\Big(\hat{R}_2\Big)^i_j(\xi_s\xi^m)\Omega_{il}\hat{\mathbf{A}}^l\hat{\mathbf{A}}^s\hat{\mathbf{B}}_m\\
        &\quad+\frac{1}{2}\Big(\hat{R}_2\Big)^i_j(\xi^s\xi^m)\Omega_{il}\hat{\mathbf{A}}^l\hat{\mathbf{B}}_s\hat{\mathbf{B}}_m+\frac{1}{6}\Big(\hat{R}_2\Big)^i_j(\xi^s\xi^m)\hat{\mathbf{B}}_i\hat{\mathbf{B}}_s\hat{\mathbf{B}}_m+\dots\bigg)dx^{j}\Bigg\}\\
        &=\varint_{T[1]\Sigma_3}\mu_{\Sigma_3}\Bigg\{\bigg(\Big(\hat{R}_0\Big)^i_{j}\Omega_{il}\hat{\mathbf{A}}^i+\Big(\hat{R}_0\Big)^i_j\hat{\mathbf{B}}_i+\frac{1}{2}\Big(\hat{R}_1\Big)^i_{j;s}\Omega_{il}\hat{\mathbf{A}}^l\hat{\mathbf{A}}^s\\
        &\quad+\Big(\hat{R}_1\Big)^{is}_{j}\Omega_{il}\hat{\mathbf{A}}^l\hat{\mathbf{B}}_s+\frac{1}{2}\Big(\hat{R}_1\Big)^{is}_j\hat{\mathbf{B}}_i\hat{\mathbf{B}}_s\\
        &\quad+\frac{1}{6}\Big(\hat{R}_2\Big)^i_{j;sm}\Omega_{il}\hat{\mathbf{A}}^l\hat{\mathbf{A}}^s\hat{\mathbf{A}}^m+\frac{1}{2}\Big(\hat{R}_2\Big)^{im}_{j;s}\Omega_{il}\hat{\mathbf{A}}^l\hat{\mathbf{A}}^s\hat{\mathbf{B}}_m\\
        &\quad+\frac{1}{2}\Big(\hat{R}_2\Big)^{ism}_{j}\Omega_{il}\hat{\mathbf{A}}^l\hat{\mathbf{B}}_s\hat{\mathbf{B}}_m+\frac{1}{6}\Big(\hat{R}_2\Big)^{ism}_j\hat{\mathbf{B}}_i\hat{\mathbf{B}}_s\hat{\mathbf{B}}_m+\dots\bigg)dx^{j}\Bigg\}
    \end{split}
\end{equation}
and
\begin{equation}
\label{class:R_2}
    \begin{split}
    \Big\langle\Big(\hat{R}\sur\Big)_{\bar{j}}(x,\hat{\mathbf{A}}+\hat{\mathbf{B}})dx^{\bar{j}},\hat{\mathbf{A}}+\hat{\mathbf{B}}\Big\rangle&=\varint_{T[1]\Sigma_3}\mu_{\Sigma_3}\bigg(\sum^{\infty}_{k=2}\frac{1}{(k+1)!}\Big(\hat{R}_k\Big)^i_{\bar{j}}((\Omega_{il}\hat{\mathbf{A}}^l+\hat{\mathbf{B}}_i)^{\otimes k})(\hat{\mathbf{A}}+\hat{\mathbf{B}})dx^{\bar{j}}\bigg)\\
        &=\varint_{T[1]\Sigma_3}\mu_{\Sigma_3}\Bigg\{\bigg(\frac{1}{6}\Big(\hat{R}_2\Big)^i_{\bar{j}}\Omega_{il}(\xi_s\xi_m)\hat{\mathbf{A}}^l\hat{\mathbf{A}}^s\hat{\mathbf{A}}^m+\frac{1}{2}\Big(\hat{R}_2\Big)^i_{\bar{j}}\Omega_{il}(\xi_s\xi^m)\hat{\mathbf{A}}^l\hat{\mathbf{A}}^s\hat{\mathbf{B}}_m\\
        &\quad+\frac{1}{2}\Big(\hat{R}_2\Big)^i_{\bar{j}}\Omega_{il}(\xi^s\xi^m)\hat{\mathbf{A}}^l\hat{\mathbf{B}}_s\hat{\mathbf{B}}_m+\frac{1}{6}\Big(\hat{R}_2\Big)^i_{\bar{j}}(\xi^s\xi^m)\hat{\mathbf{B}}_i\hat{\mathbf{B}}_s\hat{\mathbf{B}}_m+\dots\bigg)dx^{\bar{j}}\Bigg\}\\
        &=\varint_{T[1]\Sigma_3}\mu_{\Sigma_3}\Bigg\{\bigg(\frac{1}{6}\Big(\hat{R}_2\Big)^i_{\bar{j};sm}\Omega_{il}\hat{\mathbf{A}}^l\hat{\mathbf{A}}^s\hat{\mathbf{A}}^m+\frac{1}{2}\Big(\hat{R}_2\Big)^{im}_{\bar{j};s}\Omega_{il}\hat{\mathbf{A}}^l\hat{\mathbf{A}}^s\hat{\mathbf{B}}_m\\
        &\quad+\frac{1}{2}\Big(\hat{R}_2\Big)^{ism}_{\bar{j}}\Omega_{il}\hat{\mathbf{A}}^l\hat{\mathbf{B}}_s\hat{\mathbf{B}}_m+\frac{1}{6}\Big(\hat{R}_2\Big)^{ism}_{\bar{j}}\hat{\mathbf{B}}_i\hat{\mathbf{B}}_s\hat{\mathbf{B}}_m+\dots\bigg)dx^{\bar{j}}\Bigg\}.
    \end{split}
\end{equation}
We call the model associated with the action \eqref{split_global_action}, \textit{globalized split RW model}.
  \end{defn}

We present in Table \ref{class:Tab_coeff_split} the explicit expression as well as total degree for the component of $\Big(\hat{R}_k\Big)_j$ and $\Big(\hat{R}_k\Big)_{\bar{j}}$ in (\ref{class:R_1}) and (\ref{class:R_2}), respectively.

\begin{table}[h!]
    \centering
    \begin{tabular}{|lcc|}
    \toprule
    Operator     & Explicit expression &  Total degree \\
         \midrule
    $(\hat{R}_0\big)^i_{j}\Omega_{il}$ & -$\delta^{i}_j\Omega_{il}$     &  2 \\[2mm]
    $(\hat{R}_0\big)^i_{j}$ & $-\delta^i_j$     &  0 \\[2mm]
    $\Big(\hat{R}_1\Big)^i_{j;s}\Omega_{il}$ & $-\Chr{i}{s}{j}\Omega_{il}$     &  2\\[2mm]
    $\Big(\hat{R}_1\Big)^{is}_{j}\Omega_{il}$ & $-\Chr{i}{q}{j}\Omega_{il}\Big(\Omega^{-1}\Big)^{qs}$     &  0\\[2mm]
    $\Big(\hat{R}_1\Big)^{is}_{j}$ & $-\Chr{i}{j}{q}\Big(\Omega^{-1}\Big)^{qs}$      &  $-2$ \\[2mm]
    $\Big(\hat{R}_2\Big)^i_{j;sm}\Omega_{il}$ & $\bigg(\frac18R^{\hspace{1.5mm} i}_{j\; ms}+\frac14R^{\hspace{4mm} i}_{jm\; s}\bigg)\Omega_{il}$   &  2 \\[2mm]
    $\Big(\hat{R}_2\Big)^{im}_{j;s}\Omega_{il}$ & $\bigg(\frac18R^{\hspace{1.5mm} i}_{j\; ps}+\frac14R^{\hspace{3mm} i}_{jp\; s}\bigg)\Omega_{il}\Big(\Omega^{-1}\Big)^{pm}$    &  0\\[2mm]
    $\Big(\hat{R}_2\Big)^{ism}_{j}\Omega_{il}$ & $\bigg(\frac18R^{\hspace{1.5mm} i}_{j\; pn}+\frac14R^{\hspace{3mm} i}_{jp\; n}\bigg)\Omega_{il}\Big(\Omega^{-1}\Big)^{ns}\Big(\Omega^{-1}\Big)^{pm}$   &  $-2$\\[2mm]
    $\Big(\hat{R}_2\Big)^{ism}_{j}$ & $\bigg(\frac18R^{\hspace{1.5mm} m}_{j\hspace{2.5mm} kl}+\frac14R^{\hspace{2.5mm} m}_{jk\hspace{2.5mm} l}\bigg)\Big(\Omega^{-1}  \Big)^{ls}\Big(\Omega^{-1}\Big)^{ki}$ & $-4$ \\
    %%%%%%
    $\Big(\hat{R}_2\Big)^i_{\bar{j};sm}\Omega_{il}$ & $\bigg(\frac12\Riem{i}{\bar{j}}{m}{s}\bigg)\Omega_{il}$   &  2 \\[2mm]
    $\Big(\hat{R}_2\Big)^{im}_{\bar{j};s}\Omega_{il}$ & $\bigg(\frac12\Riem{i}{\bar{j}}{n}{s}\bigg)\Omega_{il}\Big(\Omega^{-1}\Big)^{nm}$    &  0\\[2mm]
    $\Big(\hat{R}_2\Big)^{ism}_{\bar{j}}\Omega_{il}$ & $\bigg(\frac12\Riem{i}{\bar{j}}{n}{p}\bigg)\Omega_{il}\Big(\Omega^{-1}\Big)^{ns}\Big(\Omega^{-1}\Big)^{pm}$   &  $-2$\\[2mm]
    $\Big(\hat{R}_2\Big)^{ism}_{\bar{j}}$ & $\bigg(\frac12\Riem{m}{\bar{j}}{k}{l}\bigg)\Big(\Omega^{-1}  \Big)^{ls}\Big(\Omega^{-1}\Big)^{ki}$ & $ -4$ \\
    \bottomrule
    \end{tabular}
    \caption{Explicit expression and total degree of the coefficients in \eqref{class:R_1} and \eqref{class:R_2}.}
    \label{class:Tab_coeff_split}
\end{table}
If $\Sigma_3$ is a closed manifold, the globalized split RW action satisfies the dCME:
\[
d_M\tilde{\Sc}\surgS+\frac12(\tilde{\Sc}\surgS,\tilde{\Sc}\surgS)=0,
\]
with $d_M=d_x+d_{\bar{x}}$ the sum of the holomorphic and antiholomorphic Dolbeault differentials on the target manifold $M$.
In the presence of boundary, the globalized split action satisfies the mdCME:
\begin{equation}
\label{iotasplit}
    \iota_{\Tilde{Q}\surgS}\omega\surgS=\delta \tilde{\Sc}\surgS+\pi^*\alpha\surgSB,
\end{equation}
with 
\begin{align}
       \begin{split}
        \Tilde{Q}\surgS{}&=\varint_{T[1]\Sigma_3}\mu_{\Sigma_3}\bigg( -D\hat{\mathbf{A}}^i\frac{\delta}{\delta \hat{\mathbf{A}}^i}-D\hat{\mathbf{B}}_i\frac{\delta}{\delta \hat{\mathbf{B}}_i}+\sum^{\infty}_{k=0}\frac{1}{k!}\Big(\hat{R}_k\Big)^i_j((\hat{\mathbf{A}}+\hat{\mathbf{B}})^{\otimes k})dx^{j}\frac{\delta}{\delta \hat{\mathbf{A}}^i}\\
        &\quad-\sum^{\infty}_{k=0}\frac{1}{k!}\Big(\hat{R}_{k}\Big)^l_j((\hat{\mathbf{A}}+\hat{\mathbf{B}})^{\otimes k})dx^{j}\Omega_{li}\frac{\delta}{\delta \hat{\mathbf{B}}_i}+\sum^{\infty}_{k=0}\frac{1}{k!}\Big(\hat{R}_k\Big)^i_{\bar{j}}((\hat{\mathbf{A}}+\hat{\mathbf{B}})^{\otimes k})dx^{{\bar{j}}}\frac{\delta}{\delta \hat{\mathbf{A}}^i},\\
        &\quad-\sum^{\infty}_{k=0}\frac{1}{k!}\Big(\hat{R}_{k}\Big)^l _{\bar{j}}((\hat{\mathbf{A}}+\hat{\mathbf{B}})^{\otimes k})dx^{{\bar{j}}}\Omega_{li}\frac{\delta}{\delta  \hat{\mathbf{B}}_i}\bigg),
        \end{split}
\end{align}
\begin{align}
    \omega\surgS{}&=\varint_{T[1]\Sigma_3}\mu_{\Sigma_3}\bigg( \delta\hat{\mathbf{B}}_i\delta \hat{\mathbf{A}}^i\bigg),\\
    \alpha\surgSB{}&=\varint_{T[1]\partial \Sigma_3}\mu_{\partial \Sigma_3}\bigg(\hat{\mathbf{B}}_i\delta \hat{\mathbf{A}}^i\bigg).
\end{align}

\section{Perturbative quantization of the globalized split Rozansky--Witten model}
\label{sec:pert_quant_RW}

In the last section, we have formulated our globalized RW model as a $BF$-like theory. This allows us to quantize perturbatively the newly constructed globalized split RW model according to the Quantum BV-BFV framework \cite{CMR17} (see Section \ref{sec_qbvbfv} for an introduction). The quantization of the kinetic part of the action is analogous to the example of section 3 in \cite{CMR17}, since the theory reduces to the abelian $BF$ theory.. Hence we will be rather quick in the exposition referring to \cite{CMR17} for further details. We will focus our attention to the interacting part of the action (in our case this is actually just the globalization term), which has a rich, as well as complicated, structure. In particular, we will draw some comparison with the PSM, which has been considered in \cite{CMoW19}. 

% We start by considering the kinetic part of the globalized split action \eqref{split_global_action}. In this case, the theory reduces to the abelian $BF$ theory. This theory has been explicitly studied in the quantum BV-BFV formalism and we refer to \cite{CMR17} for a detailed construction. Here, we will sum up the main parts of the procedure. 

\subsection{Polarization}
The recipe to perturbatively quantize a $BF$-like theory according to the quantum BV-BFV formalism starts by requiring the data of a polarization. 

Following the result of Section \ref{sec:BF-like_formulation}, in the globalized split RW theory, the space of boundary fields splits as
\begin{equation}
\label{space_boundary_fields_sg}
    \Tilde{\mathcal{F}}\surgSB=\Omega^\bullet(\partial\Sigma_3)\otimes\Omega^{\bullet,\bullet}(M)\otimes V\oplus\Omega^\bullet(\partial\Sigma_3)[2]\otimes\Omega^{\bullet,\bullet}(M)\otimes W.
\end{equation}
Since we split $T^{1,0}M$ into isotropic subspaces, by the isotropy condition the subspaces are, in particular, Lagrangian. Therefore, either of them can be used as a base or fiber of the polarization. 

\begin{notat} %\textcolor{red}{maybe create a notatio environment}
From now on we will drop the hat from the notation of the ``globalized" superfields (e.g. $\hat{\mathbf{A}}^i$). Moreover, we will denote the coordinates on the base of the polarization by $\mathbb{A}^i$ or $\mathbb{B}^i$ and refer to this choice as $\mathbb{A}$- or $\mathbb{B}$-representation.
\end{notat}

Let us choose a decomposition of the boundary $\partial \Sigma_3=\partial_1\Sigma_3 \sqcup\partial_2\Sigma_3$, where $\partial_1\Sigma_3$ and $\partial_2\Sigma_3$ are two compact manifolds. Here, we can define a polarization $\mathcal{P}$ by choosing the $\mathbb{A}$-representation on $\partial_1\Sigma_3$ and the $\mathbb{B}$-representation on $\partial_2\Sigma_3$. %Hence $\mathcal{F}\surgSB=\mathcal{F}\surgSBO\times \mathcal{F}\surgSBT$, whwere we consider polarizations $\mathcal{P}$. In particular, we choose $\frac{\delta}{\delta B}$-polarization on $\partial_1\Sigma_3$ and $\frac{\delta}{\delta A}$-polarization on $\partial_2\Sigma_3$. 
The space of leaves of the associated foliations are  $\mathcal{B}_1:=\Omega^{\bullet}(\partial_1\Sigma_3)$ and $\mathcal{B}_2:=\Omega^{\bullet}(\partial_2\Sigma_3)[2]$, respectively. The space of boundary fields is $\mathcal{B}\bonP=\mathcal{B}_1\times \mathcal{B}_2\ni (\mathbb{A}^i, \mathbb{B}_i)$. 

The BFV 1-form
% we subtract the differential of
% \begin{equation}
%     f\bonP=-\varint_{\partial_2\Sigma_3}\mathbf{B}_i\mathbf{A}^i
% \end{equation}
% from $\alpha\surgSB$. The result 
is 
\begin{equation}
     \alpha\surgSBP=\varint_{\partial_1 \Sigma_3} \mathbf{B}_i\delta \mathbf{A}^i-\varint_{\partial_2 \Sigma_3} \delta \mathbf{B}_i \mathbf{A}^i
\end{equation}
and the quadratic part of the action \eqref{split_global_action} is
\begin{equation}
    \hat{\Sc}\surgSP=\varint_{\Sigma_3}\mathbf{B}_id\mathbf{A}^i-\varint_{\partial_2\Sigma_3}\mathbf{B}_i\mathbf{A}^i.
\end{equation}
%\textcolor{red}{CHANGE QUADRATIC NOTATION TO HAT}

\subsection{Extraction of boundary fields}
We split the space of fields as 
\begin{align}
\begin{split}
\tilde{\mathcal{F}}\surgS &\rightarrow \tilde{\mathcal{B}}\bonP\oplus \mathcal{Y}\\ \label{quantiz:split_fields_1}
(\mathbf{A}^i, \mathbf{B}_i)&\mapsto (\Tilde{\mathbb{A}}^i,\tilde{\mathbb{B}}_i)\oplus (\underline{\mathbf{A}}^i,\underline{\mathbf{B}}_i),
\end{split}
\end{align}
where $\tilde{\mathcal{B}}\bonP$ denotes the bulk extension of $\mathcal{B}\bonP$ to $\tilde{\mathcal{F}}\surgS$ with $\Tilde{{\mathbb{A}}}^i$ and $\Tilde{{\mathbb{B}}}_i$ the extensions of the boundary fields $\mathbb{A}^i$ and $\mathbb{B}$ to the bulk space of fields $\Tilde{\mathcal{F}}\surgS$; $\underline{\mathbf{A}}^i$ and $\underline{\mathbf{B}}_i$ are the bulk fields, which are required to restrict to zero on $\partial_1\Sigma_3$ and $\partial_2\Sigma_3$, respectively. Here, the extensions are chosen to be singular: $\Tilde{\mathbb{A}}^i$ and $\tilde{\mathbb{B}}_i$ are required to restrict to zero outside the boundary (a choice pointed out first in \cite{CMR17}). The action reduces to
\begin{equation}
    \hat{\Sc}\surgSP=\varint\sur \underline{\mathbf{B}}_id\underline{\mathbf{A}}^i-\bigg(\varint_{\partial_2\Sigma_3} \mathbb{B}_i\underline{\mathbf{A}}^i-\varint_{\partial_1\Sigma_3}\underline{\mathbf{B}}_i\mathbb{A}^i\bigg).
\end{equation}%\textcolor{red}{QUADRATIC NOTATION}
% In the following we will decompose the fields in $\Tilde{\mathcal{F}}\surgS$ as
% \begin{equation}
%     \begin{split}
%         &\mathbf{A}^i=\Tilde{{\mathbb{A}}}^i+\underline{\mathbf{A}}^i\\
%         &\mathbf{B}_i=\Tilde{{\mathbb{B}}}_i+\underline{\mathbf{B}}_i,
%     \end{split}
% \end{equation}
% where $\Tilde{{\mathbb{A}}}^i$ and $\Tilde{{\mathbb{B}}}$ are extensions of the boundary fields $\mathbb{A}^i$ and $\mathbb{B}$ to the bulk space of fields $\Tilde{\mathcal{F}}\surgS$. Moreover note that, $\underline{\mathbf{A}}$ and $\underline{\mathbf{B}}$ are required to restrict to zero on $\partial_1\Sigma_3$ and $\partial_2\Sigma_3$, respectively. %Here we are simplifying the notation by calling $\underline{\mathbf{A}}$ and $\underline{\mathbf{B}}$ just as $A$ and $B$.
% The extensions $\Tilde{{\mathbb{A}}}$ and $\Tilde{{\mathbb{B}}}$ are required to be discontinuous extensions by 0 outside the boundaries as pointed put in \cite{CMR17}. 

\subsection{Construction of \texorpdfstring{$\Omega_0$}{O}}
At this point, we can construct the coboundary operator $\Omega_0$ by canonical quantization: we consider the boundary action and we replace any $\underline{\mathbf{B}}_i$ by $(-i\hbar \frac{\delta}{\delta \mathbb{A}^i})$ on $\partial_1\Sigma_3$, any $\underline{\mathbf{A}}^i$ by $(-i\hbar \frac{\delta}{\delta \mathbb{B}}_i)$ on $\partial_2\Sigma_3$. We obtain
\begin{equation}
    \Omega_0=-i\hbar \bigg(\varint_{\partial_2\Sigma_3}d\mathbb{B}_i\frac{\delta}{\delta \mathbb{B}_i}+\varint_{\partial_1\Sigma_3}d\mathbb{A}^i\frac{\delta}{\delta \mathbb{A}^i}
    \bigg).
\end{equation}

\subsection{Choice of residual fields}
The bulk contribution in the space of fields $\mathcal{Y}$ is further split into the space of residual fields $\mathcal{V}_{\Sigma_3}$ and a complement, the space of fluctuations fields $\mathcal{Y}'$, namely
\begin{alignat}{2}
&\mathcal{Y} &&\rightarrow \mathcal{V}_{\Sigma_3}\oplus \mathcal{Y}'\\ \label{quantiz:split_fields_1}
(\underline{\mathbf{A}}^i, &\underline{\mathbf{B}}_i)&&\mapsto (\mathsf{a}^i,\mathsf{b}_i)\oplus (\alpha^i,\beta_i),
\end{alignat}
where $\mathsf{a}^i$ and $\mathsf{b}_i$ are the residual fields, whereas $\alpha^i$ and $\beta_i$ are the fluctuations. Note that the fluctuation $\alpha^i$ is required to restrict to zero on $\partial_1\Sigma_3$ while  $\beta_i$ is required to restrict to zero on $\partial_2\Sigma_3$.
In our case, the minimal space of residual fields is 
\begin{equation}
    \mathcal{V}_{\Sigma_3}=H^{\bullet}(\Sigma_3, \partial_1\Sigma_3)[0]\oplus H^{\bullet}(\Sigma_3, \partial_2\Sigma_3)[2]\ni (\mathsf{a}^i,\mathsf{b}_i).
\end{equation}

Here we can also define the BV Laplacian. To do it, pick a basis $\{[\chi_i]\}$ of $H^{\bullet}(\Sigma_3, \partial_1\Sigma_3)$ and its dual basis $\{[\chi^i]\}$ of $H^{\bullet}(\Sigma_3, \partial_2\Sigma_3)$ with representatives $\chi_i$ in $\Omega^{\bullet}(\Sigma_3, \partial_1\Sigma_3)$ and $\chi^i$ in $\Omega^{\bullet}(\Sigma_3, \partial_2\Sigma_3)$, with $\varint_{\Sigma_3}\chi_i\chi^j=\delta^j_i$. We can write the residual fields in a basis as
\begin{equation}
    \begin{split}
        \mathsf{a}^i&=\sum_k(z^{k}\chi_k)^i,\\
        \mathsf{b}_i&=\sum_k (z^+_k\chi^{k})_i,
    \end{split}
\end{equation}
where $\{z^k,z^+_k\}$ are canonical coordinates on $\mathcal{V}_{\Sigma_3}$ with BV symplectic form
\begin{equation}
    \omega_{\mathcal{V}_{\Sigma_3}}=\sum_i(-1)^{\deg z^k}\delta z^+_k\delta z^k.
\end{equation}
Finally, the BV Laplacian on $\mathcal{V}_{\Sigma_3}$ is
\begin{equation}
    \Delta_{\mathcal{V}_{\Sigma_3}}=\sum_i(-1)^{\deg z^k}\frac{\partial}{\partial z^k}\frac{\partial}{\partial z^+_k}.
\end{equation}
%We assume we can decompose the fields $\underline{\mathbf{A}}=\mathsf{a}+\alpha$, $\underline{\mathbf{B}}=\mathsf{b}+\beta$ into residual fields and fluctuations. 

\subsection{Gauge-fixing and propagator}
We now have to fix a Lagrangian subspace $\mathcal{L}$ of $\mathcal{Y}'$. In the case of abelian BF theory, in \cite{CMR17}, the authors proved that such Lagrangian can be obtained from a \textit{contracting triple} $(\iota,p,K)$ %\textcolor{red}{WHAT IS IT} 
for the complex $\Omega^\bullet_{\underline{D}}(\Sigma_3)$. 

% \begin{defn}[Contracting triple]
% Let $V=(V^\bullet, d)$ be a cochain complex with cohomology $H=H\bullet(V)$. Then a \textit{contracting triple} for $V$ is a triple of linear maps $(\iota,p,K)$, where $\iota: H\hookrightarrow V$, $p:V\to H$ are of degree 0 and $K:V\rightarrow V$ is of degree $-1$ satisfying
% \begin{enumerate}
%     \item For all closed elements $\alpha \in V$, there exists some $\beta\in V$ such that
%     \begin{align}
%         \iota([\alpha]&)=\alpha+d\beta\\
%         p(\alpha)&=[\alpha];
%     \end{align}
%     \item $(\iota,p,K)$ satisfies the following relations
%         \begin{align}
%             d\circ K+K\circ d&=\Id_V-\iota\circ p,\\
%             K\circ\iota&=p\circ K=K\circ K=0.
%         \end{align}
% \end{enumerate}
% \end{defn}

% The idea here is to think of $K$ as kind of inverse of $d$ after having made a choice of embedding of the cohomology and projection to the cohomology.

% Th complex $\Omega^\bullet_{\underline{D}}(\Sigma_3)$ was referred in \cite{CMR17} as the space of ``ultra-Dirichlet" forms since it satisfies a stronger set of boundary condition. Nevertheless, it is quasi-isomorphic to $\Omega^\bullet(\Sigma_3,\partial \Sigma_3)$. The Lagrangian is then given by $\mathcal{L}=\im K'\oplus \im K[2]$. 

In particular, the integral kernel of $K$ is the propagator, which we call $\eta$.
% Furthermore, the chain contraction $K$ has another usage: its integral kernel is the propagator, which we call $\eta$. 
Since $K$ is actually the inverse of an elliptic operator (as shown in \cite{CMR17}), the propagator is singular on the diagonal of $\Sigma_3\times \Sigma_3$. Hence, we will define it as follows. Let 
\begin{equation}
    \mathrm{Conf}_2(\Sigma_3)=\{(x_1,x_2)\in \Sigma_3\mid x_1\neq x_2\},
\end{equation}
and let $\iota_{\mathfrak{D}}$ be the inclusion of 
\begin{equation}
    \mathfrak{D}\coloneqq\{x_1\times x_2\in (\partial_1\Sigma_3\times \Sigma_3)\cup(\Sigma_3\times \partial_2\Sigma_3)\mid x_1\neq x_2\}
\end{equation}
into $\mathrm{Conf}_2(\Sigma_3)$. Then the propagator is the 2-form $\eta\in \Omega^2(\mathrm{Conf}_2(\Sigma_3),\mathfrak{D})$, where 
\begin{equation}
    \Omega^\bullet(\mathrm{Conf}_2(\Sigma_3),\mathfrak{D})=\{\gamma\in \Omega^\bullet(\mathrm{Conf}_2(\Sigma_3))\mid \iota^{*}_{\mathfrak{D}}\gamma=0\}.
\end{equation}
Explicitly,
\begin{equation}
\label{quantiz:prop}
    \eta(x_1,x_2)=\frac{1}{T_{\Sigma_{3}}}\frac{1}{i\hbar}\varint_{\mathcal{L}}e^{\frac{i}{\hbar}\hat{\Sc}\surgSP}\pi^*_1\alpha^i(x_1)\pi^*_2\beta_i(x_2),
\end{equation}
with $\pi_1,\pi_2$ the projections from $M\times M$ to its first and second factor.
The coefficient $T_{\Sigma_3}$ is related to the Reidemeister torsion on $\Sigma_3$ as shown in \cite{CMR17}. However, its precise nature is irrelevant for the purposes of the present paper.%$T_{\Sigma_3}\coloneqq\varint_{\mathcal{L}}\exp^{\frac{i}{\hbar}\underline{S}}$. 

% \begin{rmk}
% The propagator \eqref{quantiz:prop} satisfies a number of properties. In particular, we mention here one we will need for the proof of the modified differential Quantum Master Equation. For computational reasons, define
% \begin{equation}
%     \underline{\eta}\coloneqq\frac{1}{T_{\Sigma_3}}\frac{1}{i\hbar}\varint_{\mathcal{L}}e^{\frac{i}{\hbar}\hat{\Sc}\surgSP}\pi^{*}_1\underline{\mathbf{A}}^i\pi^{*}_2\underline{\mathbf{B}}_i=\eta+\frac{1}{i\hbar}\sum_{ij}\pi^*_{1}(z^k\chi_k)^i(\pi^{*}_2z^{+}_j\chi^j)_i=\eta+\frac{1}{i\hbar}\sum_{ij}z^{ik}\pi^*_{1}\chi^i_kz^{+}_{ij}\pi^{*}_2\chi^{ij}.
% \end{equation}
% Then, we can compute $\varint_{\mathcal{L}}\Delta\bigg(e^{\frac{i}{\hbar}\underline{S}\surgSPQ}\pi^{*}_1\underline{\mathbf{A}}^i\pi^{*}_2\underline{\mathbf{B}}_i\bigg)$, we get the relation $-d\underline{\eta}=\frac{\hbar}{i}\Delta_{\mathcal{V}_{\Sigma_3,x}}\underline{\eta}$, which implies
% \begin{equation}
%     d\eta=\sum_i\pi^{*}_1\chi^i_k\pi^{*}_2\chi^k_i.
% \end{equation}
% \end{rmk}

\subsection{The quantum state}
We can sum up the splittings we have made so far as
\begin{align}
\begin{split}
\tilde{\mathcal{F}}\surgS &\rightarrow \mathcal{B}\bonP\times \mathcal{V}^{\mathcal{P}}_{\Sigma_3}\times \mathcal{Y}'\\
(\mathbf{A}^i, \mathbf{B}_i)&\mapsto (\mathbb{A}^i,\mathbb{B}_i)+ (\mathsf{a}^i,\mathsf{b}_i)+ (\alpha^i,\beta_i).
\end{split}
\end{align}

\begin{rmk}
% As a result of the procedure detailed in \cite{CMR17}, this is a good splitting according to Definition \ref{bv-bfv_good_split}.
As a result of the procedure detailed in \cite{CMR17}, this is referred to as \textit{good splitting}.
\end{rmk}

According to the splitting of the space of fields, the action decomposes as 
\begin{equation}
    \Sc\surgSP=\hat{\Sc}\surgSP+\hat{\Sc}^{\, \text{pert}}+\Sc^{\, \text{res}}+\Sc^{\, \text{source}},
\end{equation}
where
\begin{align}
    \hat{\Sc}\surgSP&=\varint_{\Sigma_3}\beta_id\alpha^i,\\
    \hat{\Sc}^{\, \text{pert}}&=\varint\sur \mathcal{V}(\underline{\mathbf{A}}, \underline{\mathbf{B}}),\\
    \Sc^{\, \text{res}}&= - \bigg(\varint_{\partial_2\Sigma_3}\mathbb{B}_i\mathsf{a}^i-\varint_{\partial_1\Sigma_3}\mathsf{b}_i\mathbb{A}^i\bigg),\\
    \Sc^{\, \text{source}}&= - \bigg(\varint_{\partial_2\Sigma_3}\mathbb{B}_i\alpha^i-\varint_{\partial_1\Sigma_3}\beta_i\mathbb{A}^i\bigg),
\end{align}
where $\hat{\Sc}^{\, \text{pert}}$ is an interacting term made up by a density-valued function $\mathcal{V}$ which depends on the fields but not on their derivatives (by assumption). 

The state is given by:
\begin{equation}
\begin{split}
        \hat{\psi}_{\Sigma_3}(\mathbb{A}, \mathbb{B}, \mathsf{a}, \mathsf{b})&=\varint_{(\alpha, \beta)\in \mathcal{L}}e^{\frac{i}{\hbar}\Sc\surgSP (\mathbb{A}+\mathsf{a}+\alpha,\mathbb{B}+\mathsf{b}+\beta)}\mathscr{D}[\alpha]\mathscr{D}[\beta]\\
        &=e^{\frac{i}{\hbar}\Sc^{\, \text{res}}}\varint_{\mathcal{L}}e^{\frac{i}{\hbar}\hat{\Sc}\surgSP}e^{\frac{i}{\hbar}\hat{\Sc}^{\, \text{pert}}}e^{\frac{i}{\hbar}\Sc^{\text{source}}},
\end{split}
\end{equation}
where we denote by $\mathscr{D}$ a formal measure on $\mathcal{L}$.
The idea here is to compute the integral through a perturbative expansion, hence let us expand the exponentials as
\begin{equation}
    \begin{split}
        \hat{\psi}_{\Sigma_3}(\mathbb{A}, \mathbb{B}, \mathsf{a}, \mathsf{b})&=\sum_{k,l,m}\frac{1}{k!l!m!}(-1)^{k+m}\bigg(\varint_{\partial_2\Sigma_3}\mathbb{B}_i\mathsf{a}^i-\varint_{\partial_1\Sigma_3}\mathsf{b}_i\mathbb{A}^i\bigg)^k\times\\
        &\quad\times\varint_{\mathcal{L}}e^{\frac{i}{\hbar}\hat{\Sc}\surgSP}\bigg(\varint\sur \mathcal{V}(\underline{\mathbf{A}}, \underline{\mathbf{B}})\bigg)^l
        \bigg(\varint_{\partial_2\Sigma_3}\mathbb{B}_i\alpha^i-\varint_{\partial_1\Sigma_3}\beta_i\mathbb{A}^i\bigg)^m.
    \end{split}
\end{equation}
In the globalized split RW model, the interaction term is actually given by the globalization terms (the second and third terms in the action \ref{split_global_action}).
%eqref{class:R_1} and \eqref{class:R_2} with explicit coefficients in Table \ref{class:Tab_coeff_split}. 
After having expanded the globalization terms in in residual fields and in fluctuations, the integration over $\mathcal{L}$ can be solved by using the \textit{Wick theorem}.

%  The full quantum state is
% \begin{equation}
%     \hat{\psi}_{\Sigma_3}=T_{\Sigma_3}\exp\bigg(\frac{i}{\hbar}\sum_{\Gamma}\frac{(-i\hbar)^{\text{loops}(\Gamma)}}{|\text{Aut}(\Gamma)|}\varint_{\text{C}_\Gamma(\Sigma_3)}\omega_{\Gamma}(\mathbb{A}, \mathbb{B}; \mathsf{a}, \mathsf{b})
%     \bigg)
% \end{equation}
% with $\mathrm{C}_\Gamma(\Sigma_3)$ compactified configuration space (Fulton–MacPherson/Axelrod–Singer
% compactification of configuration spaces on manifolds with boundary and corners).
%  The propagator is the "Hodge propagator", basically Axelrod-Singer propagator generalized to mnfd with boundary through Hodge theory (CMR).

% This integral can be computing by exploiting the theory of Gaussian integrals developed in Quantum Field Theory. By using the decomposition of the action, then we can express the integral by using the propagator

\subsection{Feynman rules}
In this section, we are going to introduce the Feynman rules needed to define precisely the quantum state of our theory. 

%Before going into details, we take a short detour and we introduce the \textit{composite fields}. 

Since our aim is to prove the mdQME for the globalized split RW model, we will need to take care of the quantum Grothendieck BFV operator. This is a coboundary operator in which higher functional derivatives may appear (and as we will see they will indeed be present). As explained in \cite{CMR17}, higher functional derivatives requires a sort of ``regularization". This is provided by the composite fields, which we denote by square brackets $[\enspace]$ (e.g. for the boundary field $\mathbb{B}$, we will write $[\mathbb{B}_{i_1}\dots \mathbb{B}_{i_k}]$)\footnote{See Section \label{qs_bflike} for a short introduction.}. 

%The regularization works as follow: a higher functional $\frac{\delta^k}{\delta \mathbb{B}_{i_1}\dots \delta\mathbb{B}_{i_k}}$ is replaced by a first order functional derivative $\frac{\delta^k}{[\delta \mathbb{B}_{i_1}\dots \delta\mathbb{B}_{i_k}]}$. For further details see \cite{CMR17}.

\begin{defn}[Globalized split RW Feynman graph]
A \textit{globalized split RW Feynman graph} is an oriented graph with three types of vertices $V(\Gamma)=V_{\text{bulk}}(\Gamma)\sqcup V_{\partial_1}\sqcup V_{\partial_2}$, called bulk vertices and type 1 and 2 boundary vertices, such that
\begin{itemize}
    \item bulk vertices can have any valence,
    \item type 1 boundary vertices carry any number of incoming half-edges (and no outgoing half-edges),
    \item type 2 boundary vertices carry any number of outgoing half-edges (and no incoming half-edges),
    \item multiple edges and loose half-edges (leaves) are allowed.
\end{itemize}
\end{defn}
A labeling of a Feynman graph is a function from the set of half-edges to $\{1,\dots,\dim V\}$.

In our case our source manifold $\Sigma_3$ has boundary $\partial \Sigma_3=\partial_1 \Sigma_3 \sqcup \partial_2 \Sigma_3$, let $\Gamma$ be a Feynman graph and define
\begin{equation}
    \mathrm{Conf}_\Gamma(\Sigma_3)\coloneqq \mathrm{Conf}_{V_{\text{bulk}}}(\Sigma_3) \times \mathrm{Conf}_{V_{\partial_1}}(\partial_1 \Sigma_3) \times \mathrm{Conf}_{V_{\partial_2}}(\partial_2 \Sigma_3).
\end{equation}
The Feynman rules are given by a map associating to a Feynman graph $\Gamma$ a differential form $\omega_\Gamma \in \Omega^\bullet(\mathrm{Conf}_\Gamma(\Sigma_3))$.

% By considering the expansion of the exponential of the previous section, we integrate over $\mathcal{L}$ and we obtain terms in the pertubative expansion which we label as graphs $\Gamma$. These have three types of vertices:
% \begin{itemize}
% \item Boundary background vertices: there are $k$ vertices on $\partial\Sigma_3$
% \item Boundary source vertices: there are $m$ vertices on $\partial\Sigma_3$
% \item Internal interaction vertices: there are $l$ internal vertices
% \end{itemize}

% We call a graph \textit{admissible} when it is possible to connect every arrow tail $\alpha$ to an arrow head $\beta$. We associate to these graphs a functional on the space of boundary fields.

% \begin{defn}[(BF) Feynman graph]
% A \textit{(BF) Feynman graph} is an oriented graph with three types of vertices $V(\Gamma)=V_{\text{bulk}}(\Gamma)V_{\partial_1}V_{\partial_2}$, called bulk vertices and type 1 and 2 boundary vertices, such that
% \begin{itemize}
%     \item bulk vertices can have any valence,
%     \item type 1 boundary vertices carry any number of incoming half-edges (and no outgoing half-edges),
%     \item type 2 boundary vertices carry any number of outgoing half-edges (and no incoming half-edges),
%     \item multiple edges and loose half-edges (leaves) are allowed.
% \end{itemize}
% \end{defn}

\begin{defn}[Globalized split RW Feynman rules] Let $\Gamma$ be a labeled Feynman graph. We choose a configuration $\iota:V(\Gamma) \rightarrow \mathrm{Conf}(\Gamma)$, such that decompositions are respected. Then, we \textit{decorate} the graph according to the following rules, namely, the \textit{Feynman rules}:
\begin{itemize}
    \item Bulk vertices in $\Sigma_3$ decorated by ``globalized vertex tensors"
    \begin{equation}
        \begin{split}
             \Big(\hat{R}_k\Big)^{i_1\dots i_s}_{j;j_1\dots j_t}dx^j &\coloneqq \frac{\partial^{s+t}}{\partial \underline{\mathbf{A}}^{i_1}\dots\partial \underline{\mathbf{A}}^{i_s} \partial \underline{\mathbf{B}}_{j_1}\dots \partial \underline{\mathbf{B}}_{j_t}} \bigg|_{\underline{\mathbf{A}}=\underline{\mathbf{B}}=0} \Big(\hat{R}_k\Big)^i_j((\underline{\mathbf{A}}+\underline{\mathbf{B}})^{\otimes k})(\Omega_{il}\underline{\mathbf{A}}^l+\underline{\mathbf{B}}_i)dx^{j}\\
            \Big(\hat{R}_k\Big)^{i_1\dots i_s}_{{\Bar{j}};j_1\dots j_t} dx^{\Bar{j}}&\coloneqq \frac{\partial^{s+t}}{\partial \underline{\mathbf{A}}^{i_1}\dots\partial \underline{\mathbf{A}}^{i_s} \partial \underline{\mathbf{B}}_{j_1}\dots \partial \underline{\mathbf{B}}_{j_t}} \bigg|_{\underline{\mathbf{A}}=\underline{\mathbf{B}}=0} \Big(\hat{R}_k\Big)^i_{\Bar{j}}((\underline{\mathbf{A}}+\underline{\mathbf{B}})^{\otimes k})(\Omega_{il}\underline{\mathbf{A}}^l+\underline{\mathbf{B}}_i)dx^{\Bar{j}}
        \end{split}
    \end{equation}
    where $s, t$ are the out- and in-valencies of the vertex and $i_1, \dots, i_s$ and $j_1, \dots, j_t$ are the
labels of the out (respectively in-)oriented half-edges.

\item Boundary vertices $v \in V_{\partial_1}(\Gamma)$ with incoming half-edges labeled $i_1, \dots, i_k$ and no out-going  half-edges are decorated by a composite field $[\mathbb{A}^{i_1} \dots \mathbb{A}^{i_k}]$ evaluated at the point (vertex location) $\iota(v)$ on $\partial_1 \Sigma_3$.
\item Boundary vertices $v \in V_{\partial_2}$ on $\partial_2 M$ with outgoing half-edges labeled $j_1, \dots, j_l$ are decorated by $[\mathbb{B}_{j_1} \dots \mathbb{B}_{j_l}]$ evaluated at the point on $\partial_2 \Sigma_3$. 
\item Edges between vertices $v_1, v_2$ are decorated with the propagator $\eta (\iota(v_1),\iota(v_2))\cdot \delta^i_j$, with $\eta$ the propagator induced by $\mathcal{L} \subset \mathcal{Y}'$, the gauge-fixing Lagrangian.
\item Loose half-edges (leaves) attached to a vertex $v$ and labeled $i$ are decorated with the residual
fields $\mathsf{a}^i$ (for out-orientation), $\mathsf{b}_i$
(for in-orientation) evaluated at the point $\iota(v)$.
\end{itemize}
\end{defn}

The Feynman Rules are represented in Figs. \ref{fig:fr_glob_1}, \ref{fig:fr_glob_2} and \ref{fig:comp_fields_2}.

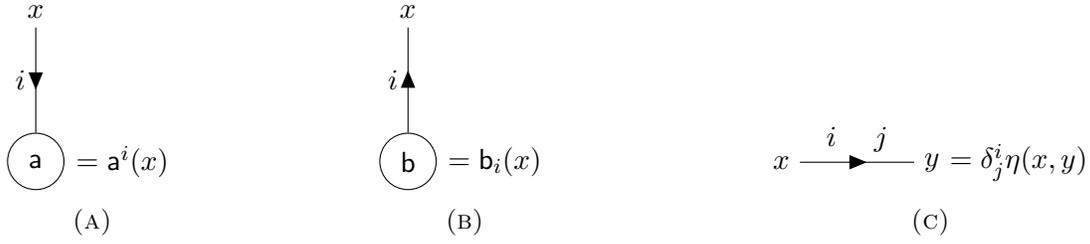
\begin{figure}[h]
    \centering
    \begin{subfigure}[b]{0.25\textwidth}
    \centering
\begin{tikzpicture}
  \begin{feynman}[every blob={/tikz/fill=white!30,/tikz/inner sep=1pt}]
  \vertex[blob] (m) at (0,-2) {$\mathsf{a}$};
  \vertex (a) at (0,0) {$x$} ;
 \diagram*{
      (a) -- [fermion, edge label' = $i$] (m)
    };
  \vertex [right=3em of m] {\(=\mathsf{a}^i(x)\)};
  \end{feynman}
\end{tikzpicture}
\caption{}
        \label{}
    \end{subfigure}%
    \hfill
\begin{subfigure}[b]{0.25\textwidth}
\centering
    \begin{tikzpicture}
  \begin{feynman}[every blob={/tikz/fill=white!30,/tikz/inner sep=0.5pt}]
  \vertex[blob] (m) at (0,-2) {$\mathsf{b}$};
  \vertex (a) at (0,0) {$x$} ;
 \diagram*{
      (m) -- [fermion, edge label = $i$] (a)
    };
     \vertex [right=3em of m] {\(=\mathsf{b}_i(x)\)};
  \end{feynman}
\end{tikzpicture}
  \caption{}
    \label{}
\end{subfigure}%
\hfill
\begin{subfigure}[b]{0.4\textwidth}
\centering
    \begin{tikzpicture}
  \begin{feynman}[every blob={/tikz/fill=white!30,/tikz/inner sep=0.5pt}]
  \vertex (a) at (0,0) {$x$};
  \vertex (b) at (2,0) {$y$} ;
 \diagram*{
      (a) -- [fermion, edge label = $i\hspace{5mm} j$] (b)
    };
     \vertex [right=3em of b] {\(=\delta^i_j\eta(x,y)\)};
  \end{feynman}
\end{tikzpicture}
  \caption{}
    \label{}
\end{subfigure}
    \caption{Feynman rules for residual fields and propagator}
    \label{fig:fr_glob_1}
\end{figure}

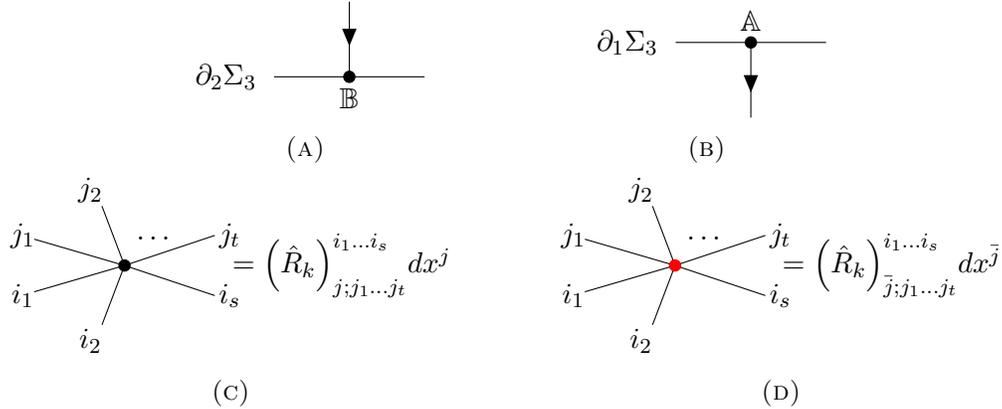
\begin{figure}[h]
    \centering
    \begin{subfigure}[b]{0.3\textwidth}
    \centering
       \begin{tikzpicture}
  \begin{feynman}[every blob={/tikz/fill=white!30,/tikz/inner sep=0.5pt}]
  %\vertex (m) at (0,0) {dot};
  \vertex (a) at (-1,0);% {$\partial_2 \Sigma_3$};
  \vertex (b) at (1,0);
  \vertex (m1) at (0, 1);
 
 \diagram*{
      (a) -- m [dot] -- (b),
      (m1) -- [fermion] m
      %(m1) -- [fermion] (m)
    };
  \vertex [below=0.75em of m] {\(\mathbb{B}\)};
  \vertex [left=0.25em of a] {\(\partial_2 \Sigma_3\)};
  \end{feynman}
 % \draw (m) circle (0.5);
\end{tikzpicture}
\caption{}
        \label{}
    \end{subfigure}%
    \quad
\begin{subfigure}[b]{0.3\textwidth}
\centering
    \begin{tikzpicture}
  \begin{feynman}[every blob={/tikz/fill=white!30,/tikz/inner sep=0.5pt}]
  %\vertex (m) at (0,0) {dot};
  \vertex (a) at (-1,0);% {$\partial_2 \Sigma_3$};
  \vertex (b) at (1,0);
  \vertex (m1) at (0, -1);
 
 \diagram*{
      (a) -- m [dot] -- (b),
      m -- [fermion] (m1)
      %(m1) -- [fermion] (m)
    };
  \vertex [above=0.75em of m] {\(\mathbb{A}\)};
  \vertex [left=0.25em of a] {\(\partial_1 \Sigma_3\)};
  \end{feynman}
 % \draw (m) circle (0.5);
\end{tikzpicture}
  \caption{}
    \label{}
\end{subfigure}%
\hfill
\begin{subfigure}[b]{0.42\textwidth}
\centering
        \begin{tikzpicture}
  \begin{feynman}[every blob={/tikz/fill=white!30,/tikz/inner sep=0.5pt}]
 \vertex (a) at (-1.2, 0.35);
  \vertex (b) at (-0.3, 0.79);
  \vertex (c) at (+1.2, 0.4);
  \vertex (e) at (-1.2, -0.35);
  \vertex (f) at (-0.3, -0.79);
  \vertex (g) at (+1.2, -0.4);
%   \vertex (e) at (-0.8, -0.71);
%   \vertex (f) at (+0.5, -0.71);
  \diagram*{
  (a) -- [] m [dot],
  (b) -- [] m [dot],
  (c) -- [] m [dot],
  (e) -- [] m [dot],
  (f) -- [] m [dot],
  (g) -- [] m [dot],
%   (e) -- m [dot],
%   (f) -- m [dot]
  }; 
 \vertex [right=7em of m] {\(\quad=\Big(\hat{R}_k\Big)^{i_1\dots i_s}_{j;j_1\dots j_t}dx^j\)};
 
 \vertex [above=0.2em of m, label=80:\(\dots\)] {}; 
 %\vertex [above=0.2em of m, label=140:\(j_2\)] {}; 
 \vertex [] at (-1.35,0.4) {\(j_1\)};
  \vertex [] at (-0.45,1) {\(j_2\)};
 \vertex [] at (1.4, 0.4) {\(j_t\)};
  \vertex [] at (-1.35,-0.4) {\(i_1\)};
  \vertex [] at (-0.45,-1) {\(i_2\)};
 \vertex [] at (1.4, -0.4) {\(i_s\)};
  \end{feynman}
  \end{tikzpicture}
  \caption{}
  \label{}
    \end{subfigure}%
\quad
\begin{subfigure}[b]{0.42\textwidth}
\centering
        \begin{tikzpicture}
  \begin{feynman}[every blob={/tikz/fill=white!30,/tikz/inner sep=0.5pt}]
  \vertex (a) at (-1.2, 0.35);
  \vertex (b) at (-0.3, 0.79);
  \vertex (c) at (+1.2, 0.4);
  \vertex (e) at (-1.2, -0.35);
  \vertex (f) at (-0.3, -0.79);
  \vertex (g) at (+1.2, -0.4);
%   \vertex (e) at (-0.8, -0.71);
%   \vertex (f) at (+0.5, -0.71);
  \diagram*{
  (a) -- [] m [dot,red],
  (b) -- [] m [dot,red],
  (c) -- [] m [dot,red],
  (e) -- [] m [dot,red],
  (f) -- [] m [dot,red],
  (g) -- [] m [dot,red],
%   (e) -- m [dot],
%   (f) -- m [dot]
  }; 
 \vertex [right=7em of m] {\(\quad=\Big(\hat{R}_k\Big)^{i_1\dots i_s}_{{\Bar{j}};j_1\dots j_t}dx^{\Bar{j}}\)};
 
 \vertex [above=0.2em of m, label=80:\(\dots\)] {}; 
 %\vertex [above=0.2em of m, label=140:\(j_2\)] {}; 
 \vertex [] at (-1.35,0.4) {\(j_1\)};
  \vertex [] at (-0.45,1) {\(j_2\)};
 \vertex [] at (1.4, 0.4) {\(j_t\)};
  \vertex [] at (-1.35,-0.4) {\(i_1\)};
  \vertex [] at (-0.45,-1) {\(i_2\)};
 \vertex [] at (1.4, -0.4) {\(i_s\)};
  \end{feynman}
  \end{tikzpicture}
  \caption{}
  \label{}
    \end{subfigure}
    \caption{Feynman rules for boundary fields and interaction vertices: we denote with a black dot the vertices arising from the $(2,0)$ part of the curvature (i.e. the terms corresponding to the term $\Sc_R$ in the action) and with a red dot the ones coming from the $(1,1)$ part (i.e. the terms corresponding to the term $\Sc_{\Bar{R}}$ in the action). Informally, we will call the first type of vertices ``black" vertex and the second one ``red" vertex.}
    \label{fig:fr_glob_2}
\end{figure}

\begin{figure}[hbt!]
    \centering
    \begin{subfigure}[b]{0.3\textwidth}
    \centering
       \begin{tikzpicture}
  \begin{feynman}[every blob={/tikz/fill=white!30,/tikz/inner sep=0.5pt}]
  %\vertex (m) at (0,0) {dot};
  \vertex (a) at (-2,0);% {$\partial_2 \Sigma_3$};
  \vertex (b) at (2,0);
  \vertex (d) at (0.4, 0.75);
  \vertex (e) at (1, 0.75);
  \vertex (f) at (-1, 0.75);
 \diagram*{
      (a) -- m [dot] -- (b),
      (d) -- [anti fermion] m,
      (e) -- [anti fermion] m,
      (f) -- [anti fermion] m
      %(m1) -- [fermion] (m)
    };
  \vertex [below=0.75em of m] {\([\mathbb{A}^{i_1}\dots \mathbb{A}^{i_k}]\)};
  \vertex [left=0.25em of a] {\(\partial_1 \Sigma_3\)};
  \node at (-0.2, 0.5) {$\dots$};
  \node at (-1.1, 0.6) {$i_k$};
  \node at (0.5, 1) {$i_2$};
  \node at (1.2, 0.9) {$i_1$};
  \end{feynman}%
 % \draw (m) circle (0.5);
\end{tikzpicture}
\caption{}
        \label{fig:comp_fields}
    \end{subfigure}%
    \qquad \begin{subfigure}[b]{0.3\textwidth}
    \centering
       \begin{tikzpicture}
  \begin{feynman}[every blob={/tikz/fill=white!30,/tikz/inner sep=0.5pt}]
  %\vertex (m) at (0,0) {dot};
  \vertex (a) at (-2,0);% {$\partial_2 \Sigma_3$};
  \vertex (b) at (2,0);
  \vertex (d) at (0.4, 0.75);
  \vertex (e) at (1, 0.75);
  \vertex (f) at (-1, 0.75);
 \diagram*{
      (a) -- m [dot] -- (b),
      (d) -- [fermion] m,
      (e) -- [fermion] m,
      (f) -- [fermion] m
      %(m1) -- [fermion] (m)
    };
  \vertex [below=0.75em of m] {\([\mathbb{B}_{j_1}\dots \mathbb{B}_{j_l}]\)};
  \vertex [left=0.25em of a] {\(\partial_2 \Sigma_3\)};
  \node at (-0.2, 0.5) {$\dots$};
  \node at (-1.1, 0.6) {$j_l$};
  \node at (0.5, 1) {$j_2$};
  \node at (1.2, 0.9) {$j_1$};
  \end{feynman}
\end{tikzpicture}
\caption{}
        \label{}
    \end{subfigure}%
    \caption{Feynman rules for the composite fields.}
    \label{fig:comp_fields_2}
\end{figure}

%\fbox{\begsubfigure{..,}}
% \begin{defn}[Principal covariant quantum state for globalized split RW theory]
% \textcolor{red}{COPY DEF REA WROTE AND ADAPT}
% \begin{equation}
%     \hat{\psi}_{\Sigma_3}=T_{\Sigma_3}\exp\bigg(\frac{i}{\hbar}\sum_{\Gamma}\frac{(-i\hbar)^{\text{loops}(\Gamma)}}{|\text{Aut}(\Gamma)|}\varint_{\text{C}_\Gamma(\Sigma_3)}\omega_{\Gamma}(\mathbb{A}, \mathbb{B}; \mathsf{a}, \mathsf{b})
%     \bigg)
% \end{equation}
% \end{defn}

% \begin{rmk}
% We can rewrite 

% \begin{equation}
%     \hat{\psi}\sur=T_{\Sigma_3}\langle e^{\frac{i}{\hbar}(S^{\text{res}})+S^{\text{source}}}\rangle=T_{\Sigma_3}e^{\frac{i}{\hbar}S^{\text{eff}}},
% \end{equation}
% \end{rmk}
% where $\langle\ \rangle$ denotes the expectation value with respect to the bulk theory $(\underline{S}+\underline{S}^{\text{pert}})$ and
% \begin{equation}
%     S^{\, \text{eff}}=- \bigg[\varint_{\partial_2\Sigma_3}\mathbb{B}_i\mathsf{a}^i-\varint_{\partial_1\Sigma_3}\mathsf{b}_i\mathbb{A}^i\bigg]+\varint_{\partial_2\Sigma_3\times\partial_1\Sigma_3}\pi^*_1\mathbb{B}_i\eta^i_{\; j} \pi^*_2\mathbb{A}^j.
% \end{equation}

The full covariant quantum state for globalized split RW theory is defined analogously as in \cite{CMR17}.

\begin{defn}[Full quantum state for the globalized split RW theory]
\label{full_quantum_state_RW}
Let $\Sigma_3$ be a 3-dimensional manifold with boundary. Consider the data of a globalized split RW theory which consists of the globalized split space of fields $\tilde{\F}\surgS$ as in \eqref{space_of_fields_gs}, the globalized split space of boundary fields $\tilde{\F}\surgSB$ as in \eqref{space_boundary_fields_sg}, a polarization $\mathcal{P}$ on $\tilde{\F}\surgSB$  a good splitting $\tilde{\F}\surgS=\mathcal{B}_{\partial \Sigma_3}^{\mathcal{P}} \times \mathcal{V}_{\Sigma_3}^{\mathcal{P}} \times \mathcal{Y}'$ and $\mathcal{L} \subset \mathcal{Y}'$, the gauge-fixing Lagrangian. We can define the \textit{full quantum state for the globalized split RW theory} by the formal power series
\begin{equation}
    \boldsymbol{\hat{\psi}}\surgR(\mathbb{A},\mathbb{B};\mathsf{a},\mathsf{b})=T_{\Sigma_3}\exp\bigg(\frac{i}{\hbar}\sum_{\Gamma}\frac{(-i\hbar)^{\text{loops}(\Gamma)}}{|\text{Aut}(\Gamma)|}\varint_{\text{C}_\Gamma(\Sigma_3)}\omega_{\Gamma}(\mathbb{A}, \mathbb{B}; \mathsf{a}, \mathsf{b})
    \bigg).
\end{equation}
\end{defn}

\section{Proof of the modified differential Quantum Master Equation}
\label{sec:mdQME}
In the BV-BFV formalism on manifolds with boundary we expect the mQME to hold. This is a condition which requires the quantum state to be closed under a certain coboundary operator (see \cite{CMR17}). However, in the context of a globalized AKSZ theory, this condition becomes more complicated. The new condition is called \textit{modified differential Quantum Master Equation} (mdQME). We refer to \cite{BCM12,CMR14} for a discussion of the classical and quantum aspects of this condition. An extension for this discussion for manifolds with boundary was provided in\cite{CMoW17}. Finally, in \cite{CMoW19} the mdQME for anomaly-free, unimodular split AKSZ theories was proven, and later on in \cite{CMoW20} for the globalized PSM.

Our aim in this section is to prove the mdQME for the globalized split RW model, namely  
\begin{equation}
    \nabla_{\text{G}}\boldsymbol{\hat{\psi}}\surgR=0,
\end{equation}
where $\nabla_{\text{G}}$ is the \textit{quantum Grothendieck BFV (qGBFV) operator}  and $\boldsymbol{\hat{\psi}}\surgR$ is the full covariant quantum state for the globalized split RW theory.
As we will see, the proof follows almost verbatim from the proof of the mdQME in \cite{CMoW19}. Before addressing the proof, we focus on the qGBFV operator and we discuss the construction of the full BFV boundary operator.

\subsection{The quantum Grothendieck BFV operator}
\begin{defn}[qGBFV operator for the globalized split RW model]
Inspired by \cite{CMoW19}, we define the \textit{qGBFV operator for the globalized split RW model} as
\begin{equation}
    \nabla_\mathrm{G}\coloneqq\bigg(d_x+d_{\Bar{x}}-i\hbar \Delta_{\mathcal{V}_{\Sigma_3, x}}+\frac{i}{\hbar}\boldsymbol{\Omega}_{\partial \Sigma_3}\bigg),
\end{equation}
with $\boldsymbol{\Omega}_{\partial \Sigma_3}$ the full BFV boundary operator 
      \begin{equation}
        \boldsymbol{\Omega}_{\partial \Sigma_3}=\boldsymbol{\Omega}^{\mathbb{A}}_{\partial\Sigma_3}+\boldsymbol{\Omega}^{\mathbb{B}}_{\partial\Sigma_3}=\Omega^{\mathbb{A}}_{0}+\boldsymbol{\Omega}^{\mathbb{A}}_{\text{pert}}+\Omega^{\mathbb{B}}_0+\boldsymbol{\Omega}^{\mathbb{B}}_{\text{pert}},%+\underbrace{\Omega^{\mathbb{
        %B}}+\Omega^{\mathbb{
        %A}}}_{\Omega_{\text{pert}}},
    \end{equation}
    where
\begin{equation}
    \begin{split}
        \Omega^{\mathbb{A}}_0&=-i\hbar \varint_{\partial_1\Sigma_3}d\mathbb{A}^i\frac{\delta}{\delta \mathbb{A}^i},\\
        \Omega^{\mathbb{B}}_0&=-i\hbar \varint_{\partial_2\Sigma_3}d\mathbb{B}_i\frac{\delta}{\delta \mathbb{B}_i}.
    \end{split}
\end{equation}
and $\boldsymbol{\Omega}^{\mathbb{A}}_{\text{pert}}$ and $\boldsymbol{\Omega}^{\mathbb{B}}_{\text{pert}}$ are given by Feynman diagrams collapsing to the boundary in the $\mathbb{A}$-representation and $\mathbb{B}$-representation, respectively.
\end{defn}
\begin{rmk}
Note that $\nabla_{\text{G}}$ and $\Omega_{\partial \Sigma_3}$ are inhomogeneous forms on the holomorphic symplectic manifold $M$ since the globalized term in the action is a 1-form on $M$. Explicitly, for example in the $\mathbb{B}$-representation, we can decompose the $\boldsymbol{\Omega}^{\mathbb{B}}_{\mathrm{pert}}$ as
\begin{equation}
   \boldsymbol{\Omega}^{\mathbb{B}}_{\mathrm{pert}}=\underbrace{\boldsymbol{\Omega}^{\mathbb{B}}_{1,0}+\boldsymbol{\Omega}^{\mathbb{B}}_{0,1}}_{\coloneqq  \boldsymbol{\Omega}^{\mathbb{B}}_{(1)}}+\underbrace{\boldsymbol{\Omega}^{\mathbb{B}}_{2,0}+\boldsymbol{\Omega}^{\mathbb{B}}_{1,1}+\boldsymbol{\Omega}^{\mathbb{B}}_{0,2}}_{\coloneqq\boldsymbol{\Omega}^{\mathbb{B}}_{(2)}}+\dots.
\end{equation}
and similarly in the $\mathbb{A}$-representation.
\end{rmk}
% \begin{equation}
%     \Omega_{\text{pert}}=\Omega_{1,0}+\Omega_{0,1}+\Omega_{2,1}+\Omega_{1,2}+\dots 
% \end{equation}
% We computed them by degree counting arguments about the dimension of the compactified configuration spaces in the spirit of CMW.
In the next section, we proceed to give an explicit expression for the BFV boundary operator in the $\mathbb{B}$ and $\mathbb{A}$ representation. We start with the former. 

\subsection{BFV boundary operator in the $\mathbb{B}$-representation}
Let us remind the reader about the general form of the BFV boundary operator in the $\mathbb{B}$-representation for a split AKSZ theory \cite{CMoW19}:
\begin{equation}
    \boldsymbol{\Omega}_{\text{pert}}^{\mathbb{B}}\coloneqq \sum_{n,k \geq 0}\sum_{\Gamma}\frac{(i \hbar)^{\text{loops($\Gamma)$}}}{\mid \Aut(\Gamma)\mid} \varint_{\partial_2M}\bigg(\sigma_{\Gamma}\bigg)^{I_1\dots I_n}_{J_1\dots J_k}\wedge \mathbb{B}_{I_1}\wedge \dots \wedge \mathbb{B}_{I_n}\bigg((-1)^{kd} (i\hbar)^{k}\frac{\delta^{\mid J_1\mid + \dots +\mid J_k\mid}}{\delta [\mathbb{B}_{J_1} \dots \mathbb{B}_{J_k}]}\bigg)
\end{equation}
In order to find an explicit expression for the BFV boundary operator, we adopt the strategy in \cite{CMoW20} to find the BFV boundary operator in the $\mathbb{E}$-representation for the PSM. Their idea was to use the \textit{degree counting}. Indeed, in general, the form $\sigma_{\Gamma}$ is obtained as the integral over the compactification $\tilde{\mathrm{C}}_{\Gamma}(\mathbb{H}^d)$ of the  open configuration space modulo scaling and translation, with $\mathbb{H}^d$ the $d$-dimensional upper half-space:
\begin{equation}
\label{mdQME:integral}
    \sigma_{\Gamma}=\varint_{\tilde{\mathrm{C}}_{\Gamma(\mathbb{H}^d)}} \omega_{\Gamma},
\end{equation}
where $\omega_{\Gamma}$ is the product of limiting propagators at the point $p$ of collapse and vertex tensors. Note that in order for the integral \eqref{mdQME:integral} not to vanish the form degree of $\omega_\Gamma$ has to be the same as the dimension of $\tilde{\mathrm{C}}_{\Gamma}(\mathbb{H}^d)$. This gives constraints to the number of points in the bulk as well as points in the boundary admitted. We will apply this degree counting to our case, where, since we have $d=3$, the dimension of the compactified configuration space $\tilde{\mathrm{C}}_{\Gamma}(\mathbb{H}^3)$ is $\dim\tilde{\mathrm{C}}_{\Gamma}(\mathbb{H}^3)=3n+2m-3$, with $n$ the number of bulk vertices and $m$ the number of boundary vertices in $\Gamma$.

By using this procedure, in \cite{CMoW20} it was possible to find an explicit expression for the BFV boundary operator in the $\mathbb{E}$-representation for the PSM. As we will see, for us this is not possible. One could say that the cause is the nature of the RW model reflected in a dramatic increment in the number of Feynman rules as we go on in the $k$-index for the globalized terms in the action (see Eq. \eqref{class:coeff}). 
To see this in practice, let us show explicitly the Feynman rules for the terms in \eqref{class:R_1} and \eqref{class:R_2}, which we sum up in Table \ref{class:Tab_coeff_split_fr}. 

\begin{table}[!htb]
    \begin{minipage}{.5\linewidth}
      \centering
         \begin{tabular}[t]{|lccc|}
    \toprule
    Vertex     & Feynman rule &  Total degree & Name \\
         \midrule
    $(\hat{R}_0\big)^i_{j}$ &  \raisebox{\dimexpr15 pt-\totalheight\relax}{ \begin{tikzpicture}
  \begin{feynman}[every blob={/tikz/fill=white!30,/tikz/inner sep=0.5pt}]
  \vertex (d) at (0, -0.75);
%   \vertex (e) at (-0.8, -0.71);
%   \vertex (f) at (+0.5, -0.71);
  \diagram*{
  (d) -- [fermion] m [dot]
%   (e) -- m [dot],
%   (f) -- m [dot]
  }; 
  \end{feynman}
  \end{tikzpicture} }    &  0 & \textrm{I} \\[5mm]
    $(\hat{R}_0\big)^i_{j}$ &  \raisebox{\dimexpr15 pt-\totalheight\relax}{ \begin{tikzpicture}
  \begin{feynman}[every blob={/tikz/fill=white!30,/tikz/inner sep=0.5pt}]
  \vertex (d) at (0, -0.75);
%   \vertex (e) at (-0.8, -0.71);
%   \vertex (f) at (+0.5, -0.71);
  \diagram*{
  (d) -- [anti fermion] m [dot]
%   (e) -- m [dot],
%   (f) -- m [dot]
  }; 
  \end{feynman}
  \end{tikzpicture} }      &  2 & \textrm{II} \\[5mm]
    $\Big(\hat{R}_1\Big)^i_{j;s}$ &  \raisebox{\dimexpr20 pt-\totalheight\relax}{ \begin{tikzpicture}
  \begin{feynman}[every blob={/tikz/fill=white!30,/tikz/inner sep=0.5pt}]
  \vertex (d) at (0, -0.75);
  \vertex (c) at (0, 0.75);
%   \vertex (f) at (+0.5, -0.71);
  \diagram*{
  (d) -- [fermion] m [dot] -- [anti fermion] (c)
%   (e) -- m [dot],
%   (f) -- m [dot]
  }; 
  \end{feynman}
  \end{tikzpicture} }        &  0 & \textrm{III}\\[10mm]
    $\Big(\hat{R}_1\Big)^{is}_{j}$ & \raisebox{\dimexpr20 pt-\totalheight\relax}{ \begin{tikzpicture}
  \begin{feynman}[every blob={/tikz/fill=white!30,/tikz/inner sep=0.5pt}]
  \vertex (d) at (0, -0.75);
  \vertex (c) at (0, 0.75);
%   \vertex (f) at (+0.5, -0.71);
  \diagram*{
  (d) -- [anti fermion] m [dot] -- [anti fermion] (c)
%   (e) -- m [dot],
%   (f) -- m [dot]
  }; 
  \end{feynman}
  \end{tikzpicture} }     &  2 & \textrm{IV}\\[10mm]
    $\Big(\hat{R}_1\Big)^{is}_{j}$ & \raisebox{\dimexpr20 pt-\totalheight\relax}{ \begin{tikzpicture}
  \begin{feynman}[every blob={/tikz/fill=white!30,/tikz/inner sep=0.5pt}]
  \vertex (d) at (0, -0.75);
  \vertex (c) at (0, 0.75);
%   \vertex (f) at (+0.5, -0.71);
  \diagram*{
  (d) -- [anti fermion] m [dot] -- [fermion] (c)
%   (e) -- m [dot],
%   (f) -- m [dot]
  }; 
  \end{feynman}
  \end{tikzpicture} }     &  4 & \textrm{V}\\[10mm]
    $\Big(\hat{R}_2\Big)^i_{j;sm}$ &  \raisebox{\dimexpr20 pt-\totalheight\relax}{ \begin{tikzpicture}
  \begin{feynman}[every blob={/tikz/fill=white!30,/tikz/inner sep=0.5pt}]
  \vertex (d) at (0.5,  0.5);
  \vertex (c) at (-0.5, 0.5);
  \vertex (a) at (0, -0.7);
%   \vertex (f) at (+0.5, -0.71);
  \diagram*{
  (d) -- [fermion] m [dot] -- [anti fermion] (c);
  m -- [anti fermion] (a)
%   (e) -- m [dot],
%   (f) -- m [dot]
  }; 
  \end{feynman}
  \end{tikzpicture} }
     &  0 & \textrm{VI} \\[10mm]
    $\Big(\hat{R}_2\Big)^{im}_{j;s}$ &
    \raisebox{\dimexpr20 pt-\totalheight\relax}{ \begin{tikzpicture}
  \begin{feynman}[every blob={/tikz/fill=white!30,/tikz/inner sep=0.5pt}]
  \vertex (d) at (0.5,  0.5);
  \vertex (c) at (-0.5, 0.5);
  \vertex (a) at (0, -0.7);
%   \vertex (f) at (+0.5, -0.71);
  \diagram*{
  (d) -- [anti fermion] m [dot] -- [anti fermion] (c);
  m -- [anti fermion] (a)
%   (e) -- m [dot],
%   (f) -- m [dot]
  }; 
  \end{feynman}
  \end{tikzpicture} }&  2 & \textrm{VII}\\
  \bottomrule
    \end{tabular}
%    \caption{}
    \end{minipage}%
    \begin{minipage}{.5\linewidth}
      \centering
      \begin{tabular}[t]{|lccc|}
    \toprule
    Vertex     & Feynman rule &  Total degree & Name \\
         \midrule
    $\Big(\hat{R}_2\Big)^{ism}_{j}$ & \raisebox{\dimexpr20 pt-\totalheight\relax}{ \begin{tikzpicture}
  \begin{feynman}[every blob={/tikz/fill=white!30,/tikz/inner sep=0.5pt}]
  \vertex (d) at (0.5,  0.5);
  \vertex (c) at (-0.5, 0.5);
  \vertex (a) at (0, -0.7);
%   \vertex (f) at (+0.5, -0.71);
  \diagram*{
  (d) -- [anti fermion] m [dot] -- [fermion] (c);
  m -- [anti fermion] (a)
%   (e) -- m [dot],
%   (f) -- m [dot]
  }; 
  \end{feynman}
  \end{tikzpicture} }   &  4 & \textrm{VIII}\\[10mm]
    $\Big(\hat{R}_2\Big)^{ism}_{j}$ &
    \raisebox{\dimexpr20 pt-\totalheight\relax}{ \begin{tikzpicture}
  \begin{feynman}[every blob={/tikz/fill=white!30,/tikz/inner sep=0.5pt}]
  \vertex (d) at (0.5,  0.5);
  \vertex (c) at (-0.5, 0.5);
  \vertex (a) at (0, -0.7);
%   \vertex (f) at (+0.5, -0.71);
  \diagram*{
  (d) -- [anti fermion] m [dot] -- [fermion] (c);
  m -- [fermion] (a)
%   (e) -- m [dot],
%   (f) -- m [dot]
  }; 
  \end{feynman}
  \end{tikzpicture} }& 6 &\textrm{IX}\\[10mm]
    $\Big(\hat{R}_2\Big)^i_{\bar{j};sm}$ & \raisebox{\dimexpr20 pt-\totalheight\relax}{ \begin{tikzpicture}
  \begin{feynman}[every blob={/tikz/fill=white!30,/tikz/inner sep=0.5pt}]
  \vertex (d) at (0.5,  0.5);
  \vertex (c) at (-0.5, 0.5);
  \vertex (a) at (0, -0.7);
%   \vertex (f) at (+0.5, -0.71);
  \diagram*{
  (d) -- [fermion] m [dot, red] -- [anti fermion] (c);
  m -- [anti fermion] (a)
%   (e) -- m [dot],
%   (f) -- m [dot]
  }; 
  \end{feynman}
  \end{tikzpicture} }  &  0 &\textrm{X} \\[10mm]
    $\Big(\hat{R}_2\Big)^{im}_{\bar{j};s}$ & \raisebox{\dimexpr20 pt-\totalheight\relax}{ \begin{tikzpicture}
  \begin{feynman}[every blob={/tikz/fill=white!30,/tikz/inner sep=0.5pt}]
  \vertex (d) at (0.5,  0.5);
  \vertex (c) at (-0.5, 0.5);
  \vertex (a) at (0, -0.7);
%   \vertex (f) at (+0.5, -0.71);
  \diagram*{
  (d) -- [anti fermion] m [dot, red] -- [anti fermion] (c);
  m -- [anti fermion] (a)
%   (e) -- m [dot],
%   (f) -- m [dot]
  }; 
  \end{feynman}
  \end{tikzpicture} }   &  2 &\textrm{XI}\\[10mm]
    $\Big(\hat{R}_2\Big)^{ism}_{\bar{j}}$ &\raisebox{\dimexpr20 pt-\totalheight\relax}{ \begin{tikzpicture}
  \begin{feynman}[every blob={/tikz/fill=white!30,/tikz/inner sep=0.5pt}]
  \vertex (d) at (0.5,  0.5);
  \vertex (c) at (-0.5, 0.5);
  \vertex (a) at (0, -0.7);
%   \vertex (f) at (+0.5, -0.71);
  \diagram*{
  (d) -- [anti fermion] m [dot, red] -- [fermion] (c);
  m -- [anti fermion] (a)
%   (e) -- m [dot],
%   (f) -- m [dot]
  }; 
  \end{feynman}
  \end{tikzpicture} }   &  4 &\textrm{XII}\\[10mm]
    $\Big(\hat{R}_2\Big)^{ism}_{\bar{j}}$ &
    \raisebox{\dimexpr20 pt-\totalheight\relax}{ \begin{tikzpicture}
  \begin{feynman}[every blob={/tikz/fill=white!30,/tikz/inner sep=0.5pt}]
  \vertex (d) at (0.5,  0.5);
  \vertex (c) at (-0.5, 0.5);
  \vertex (a) at (0, -0.7);
%   \vertex (f) at (+0.5, -0.71);
  \diagram*{
  (d) -- [anti fermion] m [dot, red] -- [fermion] (c);
  m -- [fermion] (a)
%   (e) -- m [dot],
%   (f) -- m [dot]
  }; 
  \end{feynman}
  \end{tikzpicture} }& 6 &\textrm{XIII}\\[8.7mm]
    \bottomrule
    \end{tabular}
%    \caption{}
    \end{minipage} 
     \caption{Feynman rules for the globalization terms in the action \eqref{split_global_action}.}
    \label{class:Tab_coeff_split_fr}
\end{table}
Notice how the structure of the Feynman rules repeats similarly at each order (e.g. for $R_0$ we have 2 Feynman rules with degrees 0 and 2, respectively, while for $R_1$ we have 3 graphs with degrees 0, 2, 4). Hence, it is easy to understand how this works for higher order terms. 
From there, one can notice that we have two types of vertices: 
\begin{itemize}
    \item vertices which are 1-forms in $dx^{i}$: we will denote them by a black dot ($\bullet$) and refer to them as black vertices;
    \item vertices which are 1-forms in $dx^{{\bar{i}}}$: we will denote them by a red dot (\textcolor{red}{$\bullet$}) and refer to them as red vertices.
\end{itemize}

In our computations, we will limit ourselves to the Feynman rules in Table \ref{class:Tab_coeff_split_fr}, these are already enough to get a feeling about what is going on and even understand the behaviour of higher order terms, when possible. By using the names in the Table, since $n=\textrm{I}+\textrm{II}+\textrm{III}+\textrm{IV}+\textrm{V}+\textrm{VI}+\textrm{VII}+\textrm{VIII}+\textrm{IX}+\textrm{X}+\textrm{XI}+\textrm{XII}+\textrm{XIII}$ is the sum of the vertices, the degree counting produces the following equation
\begin{equation}
   \begin{split}
        \textrm{I}+\textrm{II}+\textrm{III}+\textrm{IV}+\textrm{V}+\textrm{VI}+\textrm{VII}+\textrm{VIII}&+\textrm{IX}+\textrm{X}+\textrm{XI}+\textrm{XII}+\textrm{XIII}+2m-3=\\
        &2\textrm{II}+2\textrm{IV}+4\textrm{V}+2\textrm{VII}+4\textrm{VIII}+6\textrm{IX}+2\textrm{XI}+4\textrm{XII}+6\textrm{XIII},
   \end{split}
\end{equation}
where on the right hand side we are taking into account that in the $\mathbb{B}$-representation, the arrows leaving the globalization vertex have to stay inside the collapsing subgraph. If this is not the case, by the boundary conditions on the propagator \cite{CMR17}, the result would be zero. 

First, let us focus on the black vertices (i.e. vertices \textrm{I}--\textrm{IX}). The equation reduces to
\begin{equation}
\label{degree counting eq}
    3\textrm{I}+\textrm{II}+3\textrm{III}+\textrm{IV}-\textrm{V}+3\textrm{VI}+\textrm{VII}-\textrm{VIII}-3\textrm{IX}+2m-3=0.
\end{equation}

The Feynman diagrams contributing to the BFV boundary operator are those whose vertices solve the equation \eqref{degree counting eq}. Hence, let us solve the equation case-by-case. 
Up to one bulk vertex, with the Feynman rules \textrm{I}--\textrm{IX} we have one diagram (see Fig. \ref{fig:first_diagram}).

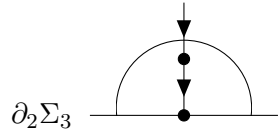
\begin{figure}[H]
    \centering
\begin{tikzpicture}
  \begin{feynman}[every blob={/tikz/fill=white!30,/tikz/inner sep=0.5pt}]
  \vertex (d) at (0, 0.75);
  \vertex (a) at (-1.25, -0.75);
  \vertex (b) at (1.25, -0.75);
  \node[dot] (x) at (0, -0.75);
  \diagram*{
  (d) -- [fermion] m [dot],
  (a) -- (b),
  m --[fermion] (x)
  }; 
  \vertex [left=0.25em of a] {\(\partial_2 \Sigma_3\)};
  \end{feynman}
  \clip (-0.9,-0.75) rectangle (0.9,0.9);
  \draw (0,-0.65) circle(0.9);
  \end{tikzpicture}
  \caption{First graph with a single black vertex contributing to the BFV boundary operator.}
  \label{fig:first_diagram}
  \end{figure}
  
From Fig. \ref{fig:first_diagram}, we notice that in order to have a degree 1 operator which satisfies the degree counting for higher order terms we need vertices with an even number of heads and tails. We show the first higher order contributions in Fig. \ref{fig:higher_diag}, while a general diagram contributing to the BFV operator
is exhibited in Fig. \ref{fig:gen_diagram}.
% \begin{itemize}
%     \item For $m=0$, we can have
%     \begin{equation}
%     \begin{cases}
% \textrm{I}=1 &\ \text{rest is zero}\\
% \textrm{III}=1 &\ \text{rest is zero}\\
% \textrm{VI}=1 &\ \text{rest is zero}
% \end{cases}
% \end{equation}
% which corresponds to the following diagrams

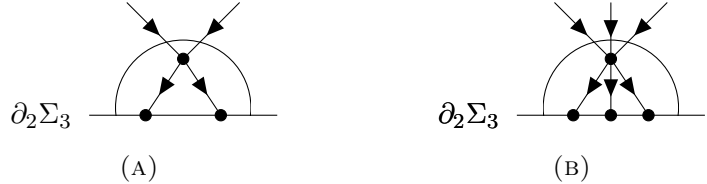
\begin{figure}[H]
\centering
  \begin{subfigure}[b]{0.4\textwidth}
\centering
\begin{tikzpicture}
  \begin{feynman}[every blob={/tikz/fill=white!30,/tikz/inner sep=0.5pt}]
  \vertex (d) at (-0.75, 0.75);
  \vertex (c) at (0.75, 0.75);
  \vertex (a) at (-1.25, -0.75);
  \vertex (b) at (1.25, -0.75);
  \node[dot] (x) at (-0.5, -0.75);
  \node[dot] (y) at (0.5, -0.75);
  \diagram*{
  (d) -- [fermion] m [dot],
  (c) -- [fermion] m,
  (a) -- (b),
  m --[fermion] (x),
  m --[fermion] (y)
  }; 
  \vertex [left=0.25em of a] {\(\partial_2 \Sigma_3\)};
  \end{feynman}
  \clip (-0.9,-0.75) rectangle (0.9,0.9);
  \draw (0,-0.65) circle(0.9);
  \end{tikzpicture}  
  \caption{}
    \label{}
\end{subfigure}\quad
\begin{subfigure}[b]{0.25\textwidth}
\centering
\begin{tikzpicture}
  \begin{feynman}[every blob={/tikz/fill=white!30,/tikz/inner sep=0.5pt}]
  \vertex (d) at (-0.75, 0.75);
  \vertex (c) at (0.75, 0.75);
  \vertex (e) at (0, 0.75);
  \vertex (a) at (-1.25, -0.75);
  \vertex (b) at (1.25, -0.75);
  \node[dot] (x) at (-0.5, -0.75);
  \node[dot] (y) at (+0.5, -0.75);
  \node[dot] (z) at (0, -0.75);
  \diagram*{
  (d) -- [fermion] m [dot],
  (c) -- [fermion] m,
  (e) -- [fermion] m,
  m  -- [fermion] (x),  
  m  -- [fermion] (y),
  m  -- [fermion] (z),
  (a) -- (b)
  }; 
  \vertex [left=0.25em of a] {\(\partial_2 \Sigma_3\)};
  \vertex [left=0.25em of a] {\(\partial_2 \Sigma_3\)};
  \end{feynman}
  \clip (-0.9,-0.75) rectangle (0.9,0.9);
  \draw (0,-0.65) circle(0.9);
  \end{tikzpicture} 
  \label{}
  \caption{}
\end{subfigure}
    \caption{Second and third graph with a single black vertex contributing to the BFV boundary operator.}
    \label{fig:higher_diag}
\end{figure}

% \begin{figure}
% \centering
% \begin{tikzpicture}
%   \begin{feynman}[every blob={/tikz/fill=white!30,/tikz/inner sep=0.5pt}]
%   \vertex (d) at (-0.75, 0.75);
%   \vertex (c) at (0.75, 0.75);
%   \vertex (e) at (0, 0.75);
%   \vertex (a) at (-1.25, -0.75);
%   \vertex (b) at (1.25, -0.75);
%   \node[dot] (x) at (-0.5, -0.75);
%   \node[dot] (y) at (+0.5, -0.75);
%   \node[dot] (z) at (0, -0.75);
%   \diagram*{
%   (d) -- [fermion] m [dot],
%   (c) -- [fermion] m,
%   (e) -- [fermion] m,
%   m  -- [fermion] (x),  
%   m  -- [fermion] (y),
%   m  -- [fermion] (z),
%   (a) -- (b)
%   }; 
%   \vertex [left=0.25em of a] {\(\partial_2 \Sigma_3\)};
%   \vertex [left=0.25em of a] {\(\partial_2 \Sigma_3\)};
%   \end{feynman}
%   \clip (-0.9,-0.75) rectangle (0.9,0.9);
%   \draw (0,-0.65) circle(0.9);
%   \end{tikzpicture} 
%   \label{}
%   \caption{}
% \end{figure}

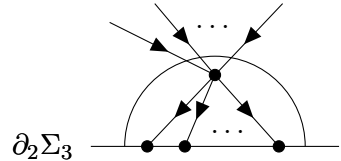
\begin{figure}[H]
\centering
\begin{tikzpicture}
  \begin{feynman}[every blob={/tikz/fill=white!30,/tikz/inner sep=0.5pt}]
  \vertex (d) at (-1.4, 0.7);
  \vertex (c) at (0.9, 0.95);
  \vertex (e) at (-0.8, 0.95);
  \vertex (a) at (-1.65, -0.95);
  \vertex (b) at (1.65, -0.95);
  \node[dot] (x) at (-0.9, -0.95);
  \node[dot] (y) at (+0.85, -0.95);
  \node[dot] (z) at (-0.4, -0.95);
  \diagram*{
  (d) -- [fermion] m [dot],
  (c) -- [fermion] m,
  (e) -- [fermion] m,
  m  -- [fermion] (x),
  m  -- [fermion] (y),
  m  -- [fermion] (z),
  (a) -- (b)
%   (e) -- m [dot],
%   (f) -- m [dot]
  }; 
  \vertex [left=0.25em of a] {\(\partial_2 \Sigma_3\)};
  \vertex [left=0.25em of a] {\(\partial_2 \Sigma_3\)};
  \end{feynman}
%   \clip (-0.9,-0.75) rectangle (0.9,0.9);
%   \draw (0,-0.65) circle(0.9);
  \draw (-1.2,-0.95) arc (180:0:1.2);
  \node at (0,0.65) {\dots};
  \node at (0.2,-0.75) {\dots};
  \end{tikzpicture}  
    \caption{A general Feynman diagram contributing to the BFV operator in the $\mathbb{B}$-representation up to one black bulk vertex.}
    \label{fig:gen_diagram}
\end{figure}

Concerning the red vertices, the graphs contributing to the BFV operator up to one bulk vertex will start to appear from the vertices associated to the term $(R_3)_{\bar{j}}dx^{\bar{j}}$ (coming from the thid term in the action \ref{split_global_action}). Taking this into account, the general term of the diagrams with a red vertex is shown in Fig. \ref{fig:gen_diagram_red}.

\begin{figure}[H]
\centering
\begin{tikzpicture}
  \begin{feynman}[every blob={/tikz/fill=white!30,/tikz/inner sep=0.5pt}]
  \vertex (d) at (-1.4, 0.7);
  \vertex (c) at (0.9, 0.95);
  \vertex (e) at (-0.8, 0.95);
  \vertex (a) at (-1.65, -0.95);
  \vertex (b) at (1.65, -0.95);
  \node[dot] (x) at (-0.9, -0.95);
  \node[dot] (y) at (+0.85, -0.95);
  \node[dot] (z) at (-0.4, -0.95);
  \diagram*{
  (d) -- [fermion] m [red, dot],
  (c) -- [fermion] m,
  (e) -- [fermion] m,
  m  -- [fermion] (x),
  m  -- [fermion] (y),
  m  -- [fermion] (z),
  (a) -- (b)
%   (e) -- m [dot],
%   (f) -- m [dot]
  }; 
  \vertex [left=0.25em of a] {\(\partial_2 \Sigma_3\)};
  \vertex [left=0.25em of a] {\(\partial_2 \Sigma_3\)};
  \end{feynman}
%   \clip (-0.9,-0.75) rectangle (0.9,0.9);
%   \draw (0,-0.65) circle(0.9);
  \draw (-1.2,-0.95) arc (180:0:1.2);
  \node at (0,0.65) {\dots};
  \node at (0.2,-0.75) {\dots};
  \end{tikzpicture}  
    \caption{A general Feynman diagram contributing to the BFV operator in the $\mathbb{B}$-representation up to one red bulk vertex. In particular, the graph with a total number of 4 arrows (2 entering and 2 leaving the red vertex) is the first non-zero contribution.}
    \label{fig:gen_diagram_red}
\end{figure}
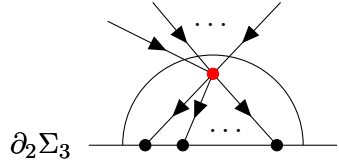

These considerations prove the following proposition.
\begin{prp}
\label{prp_op}
Consider the globalized split RW model, in the $\mathbb{B}$-representation, the first contribution to $\boldsymbol{\Omega}^{\mathbb{B}}_{\mathrm{pert}}$ is given by $\boldsymbol{\Omega}^{\mathbb{B}}_{(1)}= \boldsymbol{\Omega}^{\mathbb{B}}_{1,0}+ \boldsymbol{\Omega}^{\mathbb{B}}_{0,1}$ with
\begin{equation}
    \begin{split}
       \boldsymbol{\Omega}^{\mathbb{B}}_{1,0}&=\sum_{\substack{k\geq1,S_1, \dots S_k\\ i_1,\dots, i_k, j_1, \dots, j_k}}\frac{(-i\hbar)^k}{(k+S_1+\dots+S_k)!}\varint_{\partial_2\Sigma_3}\Big(\hat{R}_{2k-1}\Big)^{i_1\dots i_k}_{j;j_1\dots j_k} dx^j[\mathbb{B}_{i_1}\mathbb{B}_{S_1}]\dots \\
       &\hspace{8cm}\times[\mathbb{B}_{i_k}\mathbb{B}_{S_k}] \frac{\delta^{|j_1+\dots+j_k|+|S_1|+\dots+|S_k|}}{\delta [\mathbb{B}_{j_1}\dots \mathbb{B}_{j_k}][\delta \mathbb{B}_{S_1}]\dots [\delta \mathbb{B}_{S_k}]}\\
        \boldsymbol{\Omega}^{\mathbb{B}}_{0,1}&=\sum_{\substack{k\geq 2,S_1, \dots S_k\\ i_1,\dots, i_k, j_1, \dots, j_k}}\frac{(-i\hbar)^k}{(k+S_1+\dots+S_k)!}\varint_{\partial_2\Sigma_3}\Big(\hat{R}_{2k-1}\Big)^{i_1\dots i_k}_{\bar{j};j_1\dots j_k} dx^{\bar{j}}[\mathbb{B}_{i_1}\mathbb{B}_{S_1}]\dots\\
         &\hspace{8cm}\times[\mathbb{B}_{i_k}\mathbb{B}_{S_k}] \frac{\delta^{|j_1+\dots+j_k|+|S_1|+\dots+|S_k|}}{\delta [\mathbb{B}_{j_1}\dots \mathbb{B}_{j_k}][\delta \mathbb{B}_{S_1}]\dots [\delta \mathbb{B}_{S_k}]}.
    \end{split}
\end{equation}
\end{prp}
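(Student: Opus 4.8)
The plan is to specialise the general collapse formula for the BFV boundary operator of a split AKSZ theory from \cite{CMoW19} to the globalized split RW model, and to isolate, by the degree-counting argument already outlined above, the contribution of Feynman graphs whose collapsing subgraph contains at most one bulk vertex. \textbf{Step 1 (the degree equation).} First I would recall that $\boldsymbol{\Omega}^{\mathbb{B}}_{\mathrm{pert}}$ is a sum over collapsing graphs $\Gamma$ decorated with the Feynman rules of Table \ref{class:Tab_coeff_split_fr}, each weighted by $\sigma_\Gamma=\int_{\tilde{\mathrm{C}}_\Gamma(\mathbb{H}^3)}\omega_\Gamma$, where $\omega_\Gamma$ is the product of limiting propagators and vertex tensors. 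Since $\sigma_\Gamma$ vanishes unless $\deg\omega_\Gamma=\dim\tilde{\mathrm{C}}_\Gamma(\mathbb{H}^3)=3n+2m-3$ (with $n$ bulk and $m$ boundary vertices), I would compute $\deg\omega_\Gamma$ as the sum of the propagator contributions and the vertex degrees listed in Table \ref{class:Tab_coeff_split_fr}, eliminate the edge count using the handshake identity, and arrive at the displayed degree equation $3\mathrm{I}+\mathrm{II}+3\mathrm{III}+\mathrm{IV}-\mathrm{V}+3\mathrm{VI}+\mathrm{VII}-\mathrm{VIII}-3\mathrm{IX}+2m-3=0$ for the black vertices, together with the analogous equation for the red ones.

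\textbf{Step 2 (case analysis in $n$).} For $n=0$ the only solution is the purely quadratic term, giving $\Omega_0^{\mathbb{A}}+\Omega_0^{\mathbb{B}}$, which is treated separately and is not part of $\boldsymbol{\Omega}_{\mathrm{pert}}$. For $n=1$ the equation forces the single globalization vertex to carry an equal number of incoming and outgoing half-edges: this is where the propagator boundary condition of \cite{CMR17} enters, since in the $\mathbb{B}$-representation every outgoing half-edge of the globalization vertex must remain inside the collapsing subgraph (otherwise $\sigma_\Gamma=0$), and each such half-edge is matched either by an internal propagator or by a boundary $\mathbb{B}$-vertex. In terms of the labels this means the vertex is $(\hat{R}_{2k-1})^{i_1\dots i_k}_{j;j_1\dots j_k}$ in the holomorphic (black) case with $k\geq 1$, and $(\hat{R}_{2k-1})^{i_1\dots i_k}_{\bar{j};j_1\dots j_k}$ in the antiholomorphic (red) case; since the curvature supplies $(\hat{R}_k)_{\bar{j}}$ only for $k\geq 2$ and the smallest degree-balanced red configuration already requires four half-edges on the vertex, the red sum runs over $k\geq 2$. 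This accounts for the index ranges and matches Figures \ref{fig:gen_diagram} and \ref{fig:gen_diagram_red}.

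\textbf{Step 3 (assembly).} Finally I would reconstruct the operator: the $k$ incoming half-edges of the globalization vertex become the variational derivatives $\delta^{|j_1|+\dots+|j_k|}/\delta[\mathbb{B}_{j_1}\dots\mathbb{B}_{j_k}]$, each contributing a factor $-i\hbar$; the $k$ outgoing half-edges, together with any remaining internal propagators, attach to boundary vertices on $\partial_2\Sigma_3$, and passing to composite fields via the bullet-product regularisation of Section \ref{bv-bfv_sub_qs_bf_like} produces the composite boundary fields $[\mathbb{B}_{i_1}\mathbb{B}_{S_1}]\cdots[\mathbb{B}_{i_k}\mathbb{B}_{S_k}]$ with the associated derivatives $\delta/[\delta\mathbb{B}_{S_1}]\cdots$ and the combinatorial prefactor $1/(k+|S_1|+\dots+|S_k|)!$. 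All these graphs are trees, so $\mathrm{loops}(\Gamma)=0$ and $|\mathrm{Aut}(\Gamma)|$ is absorbed into the symmetrisation of indices; collecting the $\hbar$-powers and signs then yields exactly $\boldsymbol{\Omega}^{\mathbb{B}}_{1,0}$ and $\boldsymbol{\Omega}^{\mathbb{B}}_{0,1}$, and the $\mathbb{A}$-representation is entirely parallel.

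\textbf{Main obstacle.} The hard part will be Step 3: checking that the limiting configuration-space integrals $\sigma_\Gamma$ over $\tilde{\mathrm{C}}_\Gamma(\mathbb{H}^3)$ really reproduce the coefficient tensors $(\hat{R}_{2k-1})$ with the claimed normalisation --- i.e. that these integrals are non-zero and evaluate to the stated combinatorial prefactor --- together with the consistent bookkeeping of orientations, signs and automorphism factors, which in the RW case is heavier than for the PSM because of the proliferation of Feynman rules in Table \ref{class:Tab_coeff_split_fr}. A secondary point, needed for completeness, is to verify that the case analysis is exhaustive: that no mixed black--red single-bulk-vertex configuration, and no configuration with leaves of the ``wrong'' orientation, can satisfy the degree equation; this follows from the sign pattern $+3,+1,\dots,-3$ in that equation but should be spelled out explicitly.
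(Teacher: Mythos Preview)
Your proposal is correct and follows essentially the same approach as the paper: the proof in the paper is precisely the degree-counting argument you describe in Steps~1--2, culminating in the identification of the general one-bulk-vertex collapse graphs of Figures~\ref{fig:gen_diagram} and~\ref{fig:gen_diagram_red}, after which the proposition is stated with the words ``These considerations prove the following proposition.''

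A few remarks on the comparison. Your Step~3 and your ``main obstacle'' are actually more ambitious than what the paper carries out: the paper does not evaluate the limiting integrals $\sigma_\Gamma$ over $\tilde{\mathrm{C}}_\Gamma(\mathbb{H}^3)$ or track the automorphism and orientation bookkeeping explicitly; rather, once the degree equation has singled out the graphs with a single bulk vertex carrying an equal number of incoming and outgoing half-edges, the formula in the proposition is read off directly from the Feynman rules and the general shape of $\boldsymbol{\Omega}^{\mathbb{B}}_{\mathrm{pert}}$ in Definition~\ref{full_bfv_def}. So the verification you flag as the hard part is left implicit in the paper as well. Your aside that the $n=0$ case yields $\Omega_0$ is slightly off: for $n=0$ the degree equation $2m-3=0$ has no integer solution, so there is simply no $n=0$ contribution to $\boldsymbol{\Omega}^{\mathbb{B}}_{\mathrm{pert}}$; the operator $\Omega_0$ arises separately from the standard-ordering quantization of the free boundary action, not from a collapse graph.
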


For $n>1$, the situation gets more complicated. We solve equation \eqref{degree counting eq} numerically. Empirically, for an even number of bulk vertices, we witness the absence of solutions. This implies immediately $\boldsymbol{\Omega}^{\mathbb{B}}_{(2)}=0$.

In the case $n=3$, the number of Feynman diagrams for the vertices \textrm{I}--\textrm{IX} increases dramatically with respect to the $n=1$ case. This increment is tamed since the necessity of having a degree 1 operator will decrease their number. However, we are not able to provide an explicit as well as general form for the BFV operator along the same lines as in Proposition \ref{prp_op}. We rely on examples which we show in Appendix \ref{app:feyn}. 

\begin{rmk}
Here we are assuming that the dimension of our target manifold $M$ is at least 4, if this would not be the case, then we would not have the 3 bulk vertices contribution to the BFV boundary operator. Hence, the number of bulk vertices allowed is bounded by the dimension of $M$. This was already noticed in \cite{CMoW20}. The difference here is that this reflects the ``odd Grassmanian nature" of the RW model with respect to CS theory (see Remark \ref{rmk_parameters}).
\end{rmk}

\subsection{BFV boundary operator in the $\mathbb{A}$-representation}
In the $\mathbb{A}$-representation, the arrows coming from the globalized vertices are allowed to leave the collapsing subgraph. Therefore, our arguments about the degree counting are not valid here. However since the coboundary operator has a total degree 1, while $\mathbb{A}^i$ has total degree 0, we can have at most 1 bulk vertex, i.e. $\boldsymbol{\Omega}^{\mathbb{A}}_{\mathrm{pert}}=\boldsymbol{\Omega}^{\mathbb{A}}_{1,0}+\boldsymbol{\Omega}^{\mathbb{A}}_{0,1}$ with
\begin{equation}
    \begin{split}
        \boldsymbol{\Omega}^{\mathbb{A}}_{1,0}&= \sum_{k\geq0}\,\,\varint_{\partial_1\Sigma_3}\sum_{J_1,\dots, J_r, I_1,\dots, I_s} \frac{(-i\hbar)^{|I_1|+\dots +|I_s|}}{(|I_1|+\dots +|I_s|)!}\Big(\hat{R}_k\Big)^{I_1\dots I_s}_{j;J_1\dots J_r}dx^j\prod^{r+s=k+1}_{r=1, s=1}[\mathbb{A}^{J_r}]\frac{\delta^{|I_s|}}{\delta [\mathbb{A}^{I_s}]},\\
        \boldsymbol{\Omega}^{\mathbb{A}}_{0,1}&= \sum_{k\geq3}\,\,\varint_{\partial_1\Sigma_3}\sum_{J_1,\dots, J_r, I_1,\dots, I_s} \frac{(-i\hbar)^{|I_1|+\dots +|I_s|}}{(|I_1|+\dots +|I_s|)!}\Big(\hat{R}_k\Big)^{I_1\dots I_s}_{\bar{j};J_1\dots J_r}dx^{\bar{j}}\prod^{r+s=k+1}_{r=1, s=1}[\mathbb{A}^{J_r}]\frac{\delta^{|I_s|}}{\delta [\mathbb{A}^{I_s}]},
    \end{split}
\end{equation}
where we label by the multiindex $J_r$ the arrows emanating from a boundary vertex towards the globalized vertex, by the multiindex $I_s$ the leaves emanating from the bulk vertex. The sum of $r$ and $s$ has to be $k+1$ since these are the total number of arrows leaving and arriving at a globalized vertex $(R_k)_jdx^j$ (or $(R_k)_{\bar{j}}dx^{\bar{j}}$).

\subsection{Flatness of the qGBFV operator for the globalized split RW model}
In this section, we prove that the qGBFV operator for the globalized split RW model squares to zero. The proof follows along the same lines as in \cite{CMoW19}, we will remark where there are differences and refer to their work when the procedure is identical.  Before entering into the details of the proof, we should mention that their proof (and the proof of the mdQME) depends on two assumptions: \textit{unimodularity} and \textit{absence of hidden faces} (\textit{anomaly-free} condition). The first means that tadpoles are not allowed. In the case of the globalized split RW model, we notice that this assumption is not needed since tadpoles vanish \cite{RW96}.

%We have already seen this while discussing about the AS relation for the RW model in Section \ref{sec:pert_exp}, the argument used there to prove this vanishing carries over to the globalized split RW model.
 
 \begin{assump}
 \label{ass_hidden_faces}
 We assume that the globalized split RW model is \emph{anomaly-free}, i.e. for every graph $\Gamma$, we have that
 \begin{equation}
     \varint_{F_{\geq 3}}\omega_\Gamma=0,
 \end{equation}
 where by $F_{\geq 3}$, we denote the union of the faces where at least three bulk vertices collapse in the bulk (also called \emph{hidden faces} \cite{BC}).
 \end{assump}
 
\begin{rmk}
It is well known that Chern--Simons theory is \emph{not} an anomaly-free theory \cite{AS91,AS94}. The construction of the quantum theory there depends on the choice of gauge-fixing. The appearance of anomalies can be resolve by choosing a framing and framing-dependent counter terms for the gauge-fixing. A famous example of an anomaly-free theory is given by the Poisson sigma model \cite{CF00} since by the result of Kontsevich \cite{Ko03} any 2-dimensional theory is actually anomaly-free. 
A general method for dealing with theories that do have anomalies is to add counter terms to the action. If the differential form $\omega_\Gamma$, which is integrated over the hidden faces, is exact, one can use the primitive form to cancel the anomalies by the additional vertices that appear. 
\end{rmk} 
 
Since the integrals we will consider are fiber integrals, we will apply of Stokes' theorem for integration along a compact fiber with corners, i.e.
\begin{equation}
    d\pi_*=\pi_*d-\pi^{\partial}_*,
\end{equation}
where $\pi_*$ denotes the fiber integration.
In particular, the application of Stokes' theorem to a fiber integral yiels
\begin{equation}
\label{stokes}
    (d_x+d_{\bar{x}})\varint_{\text{C}_{\Gamma}}\omega_\Gamma=\varint_{\text{C}_{\Gamma}}(d+d_{\bar{x}})\omega_\Gamma-\varint_{\partial \text{C}_{\Gamma}}\omega_\Gamma,
\end{equation}
where $d$ is the differential on $M\times C_\Gamma$.

\begin{thm}[Flatness of the qGBFV operator]\label{thm:flatness}
The qGBFV operator $\nabla_{\textup{G}}$ for the anomaly-free globalized split RW model squares to zero, i.e. 
\begin{equation}
\label{flatness_GBFV}
    (\nabla_{\textup{G}})^2\equiv0,
\end{equation}
where 
\begin{equation}
    \nabla_{\textup{G}}=d_{M}-i\hbar \Delta_{\mathcal{V}_{\Sigma_3, x}}+\frac{i}{\hbar}\boldsymbol{\Omega}_{\partial \Sigma_3}=d_x+d_{\Bar{x}}-i\hbar \Delta_{\mathcal{V}_{\Sigma_3, x}}+\frac{i}{\hbar}\boldsymbol{\Omega}_{\partial \Sigma_3}.
\end{equation}
\end{thm}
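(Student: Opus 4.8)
## Proof proposal

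The plan is to expand $(\nabla_{\textup{G}})^2$ into its constituent pieces, grouped by the way they act on the bicomplex $\Dens^{\frac12}(\mathcal{V}_{\Sigma_3,x})\otimes\mathcal{H}^{\mathcal{P}}_{\partial\Sigma_3}$, and to show each group vanishes separately. Writing $\nabla_{\textup{G}}=d_M-i\hbar\Delta_{\mathcal{V}_{\Sigma_3,x}}+\tfrac{i}{\hbar}\boldsymbol{\Omega}_{\partial\Sigma_3}$, the square splits as
\begin{align*}
(\nabla_{\textup{G}})^2=\;& d_M^2-i\hbar\,[d_M,\Delta_{\mathcal{V}_{\Sigma_3,x}}]+\tfrac{i}{\hbar}[d_M,\boldsymbol{\Omega}_{\partial\Sigma_3}]\\
&-\hbar^2\Delta_{\mathcal{V}_{\Sigma_3,x}}^2-i[\Delta_{\mathcal{V}_{\Sigma_3,x}},\boldsymbol{\Omega}_{\partial\Sigma_3}]-\tfrac{1}{\hbar^2}(\boldsymbol{\Omega}_{\partial\Sigma_3})^2.
\end{align*}
Now $d_M^2=0$ because $d_M=d_x+d_{\bar x}$ is the sum of the two Dolbeault differentials and $d_x^2=d_{\bar x}^2=d_xd_{\bar x}+d_{\bar x}d_x=0$; $\Delta_{\mathcal{V}_{\Sigma_3,x}}^2=0$ is the standard property of the BV Laplacian on residual fields; $[d_M,\Delta_{\mathcal{V}_{\Sigma_3,x}}]=0$ since $d_M$ acts only on the target (differentiating the $x$-dependence of the $\sigma_\Gamma$'s) while $\Delta_{\mathcal{V}_{\Sigma_3,x}}$ differentiates the residual-field coordinates $z^k,z^+_k$, which are independent; and $[\Delta_{\mathcal{V}_{\Sigma_3,x}},\boldsymbol{\Omega}_{\partial\Sigma_3}]=0$ because $\boldsymbol{\Omega}_{\partial\Sigma_3}$ acts only on boundary fields $\mathbb{A},\mathbb{B}$, disjoint from residual fields. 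So the genuinely nontrivial content is the vanishing of
\[
\tfrac{i}{\hbar}[d_M,\boldsymbol{\Omega}_{\partial\Sigma_3}]-\tfrac{1}{\hbar^2}(\boldsymbol{\Omega}_{\partial\Sigma_3})^2=0,
\]
i.e.\ $(\boldsymbol{\Omega}_{\partial\Sigma_3})^2=i\hbar\,[d_M,\boldsymbol{\Omega}_{\partial\Sigma_3}]$, which is the ``modified flatness'' one expects for a globalized BFV operator.

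The strategy for this identity is the same as in \cite{CMoW19}: interpret both sides as sums over Feynman graphs and read them off from the boundary strata of configuration spaces. One writes $\boldsymbol{\Omega}_{\partial\Sigma_3}=\Omega_0+\boldsymbol{\Omega}_{\textup{pert}}$ where $\Omega_0=\Omega_0^{\mathbb{A}}+\Omega_0^{\mathbb{B}}$ and the perturbative part is the sum over graphs $\Gamma'$ collapsing at a boundary point, with weights $\sigma_{\Gamma'}=\int_{\tilde{\mathrm{C}}_{\Gamma'}(\mathbb{H}^3)}\omega_{\Gamma'}$. Applying Stokes' theorem \eqref{stokes} to $\sigma_{\Gamma'}$ on the compactified configuration space $\tilde{\mathrm{C}}_{\Gamma'}(\mathbb{H}^3)$ gives
\[
(d_x+d_{\bar x})\sigma_{\Gamma'}=\int_{\tilde{\mathrm{C}}_{\Gamma'}}(d+d_{\bar x})\omega_{\Gamma'}-\int_{\partial\tilde{\mathrm{C}}_{\Gamma'}}\omega_{\Gamma'}.
\]
The first term on the right contributes precisely the vertex-tensor derivative terms that, after resummation and using the classical flatness of the Grothendieck connection (Eqs.\ \eqref{dcme_1}--\eqref{dcme_4}, i.e.\ $d_MR+\tfrac12[R,R]=0$), reassemble into $d_M\boldsymbol{\Omega}_{\partial\Sigma_3}$ up to the boundary-face contributions; the boundary-face term decomposes, by the standard stratification, into: (i) two subgraphs collapsing at two distinct boundary points — this produces $(\boldsymbol{\Omega}_{\textup{pert}})^2$ and the cross terms $[\Omega_0,\boldsymbol{\Omega}_{\textup{pert}}]$; (ii) a subgraph of bulk vertices collapsing to one bulk point — this is killed by Assumption \ref{ass_hidden_faces} (anomaly-freeness) when $\geq 3$ bulk vertices collapse, and by the vanishing of tadpoles \cite{RW96} (so unimodularity is automatic here) when fewer do; (iii) the ``$\Omega_0$ vs.\ $\Omega_0$'' faces, which give $\Omega_0^2=0$ directly from $d^2=0$ on the boundary. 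Collecting the signs, the terms of type (i) reproduce $(\boldsymbol{\Omega}_{\partial\Sigma_3})^2$ and the reassembled $d$-terms reproduce $i\hbar[d_M,\boldsymbol{\Omega}_{\partial\Sigma_3}]$, yielding the claimed identity.

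Concretely I would proceed in this order. First, reduce $(\nabla_{\textup{G}})^2=0$ to the single identity $(\boldsymbol{\Omega}_{\partial\Sigma_3})^2=i\hbar[d_M,\boldsymbol{\Omega}_{\partial\Sigma_3}]$ by disposing of all the commuting/nilpotent pieces as above. Second, verify $\Omega_0^2=0$ and $[d_M,\Omega_0]=0$ directly (these involve only $d$ on $\partial\Sigma_3$ and $d_M$ on $M$, which commute and each square to zero). Third, set up the Stokes argument for the mixed terms $\Omega_0\boldsymbol{\Omega}_{\textup{pert}}+\boldsymbol{\Omega}_{\textup{pert}}\Omega_0$ and $(\boldsymbol{\Omega}_{\textup{pert}})^2$, matching them against the boundary strata of $\tilde{\mathrm{C}}_{\Gamma'}(\mathbb{H}^3)$, and against $i\hbar[d_M,\boldsymbol{\Omega}_{\textup{pert}}]$, exactly mirroring the computation in \cite{CMoW19}, with the one substitution that the role played there by the de Rham differential on the body of the target is here played by $d_M=d_x+d_{\bar x}$; since the classical Grothendieck connection is flat with respect to this very $d_M$ (Remark \ref{class:RMK_Grothendieck_prop}), the combinatorial identities go through unchanged. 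The main obstacle — and the only place real care is needed — is bookkeeping the boundary faces with bulk collapses: one must check that every hidden face with $\geq 3$ bulk vertices is removed by Assumption \ref{ass_hidden_faces}, that faces with exactly one or two bulk vertices collapsing contribute either nothing (tadpole vanishing from \cite{RW96}) or precisely the $d_M$-terms, and — specific to RW — that the ``red'' versus ``black'' vertex dichotomy (the $(2,0)$- versus $(1,1)$-curvature vertices) does not spoil the matching, which it does not because both vertex types are governed by the single flatness equation for $R=R_j\,dx^j+R_{\bar j}\,dx^{\bar j}$. Given those checks, the signs and symmetry factors assemble exactly as in \cite{CMoW19}, proving \eqref{flatness_GBFV}.
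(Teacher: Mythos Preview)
Your proposal is correct and follows essentially the same approach as the paper: reduce $(\nabla_{\textup{G}})^2=0$ to the identity $i\hbar\, d_M\boldsymbol{\Omega}_{\partial\Sigma_3}=\tfrac12[\boldsymbol{\Omega}_{\partial\Sigma_3},\boldsymbol{\Omega}_{\partial\Sigma_3}]$, then prove the latter via Stokes' theorem on $\tilde{\mathrm{C}}_{\Gamma}(\mathbb{H}^3)$, using anomaly-freeness for hidden faces, boundary-collapse faces to produce $\tfrac12[\boldsymbol{\Omega}_{\textup{pert}},\boldsymbol{\Omega}_{\textup{pert}}]$, and the dCME \eqref{dcme_1}--\eqref{dcme_4} to cancel the two-bulk-vertex faces against the $(d_x+d_{\bar x})\omega_\Gamma$ terms. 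The paper's write-up is organized slightly differently---it treats $\boldsymbol{\Omega}^{\mathbb{B}}$ explicitly and spells out the four red/black collapse subcases matched to the four equations \eqref{dcme_1}--\eqref{dcme_4}---but the argument is the same.
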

\begin{proof}
According to \cite{CMoW19}, the flatness of $\nabla_{\mathrm{G}}$ is equivalent to the equation 
% \begin{equation}
% \label{flatness_GBFV_2}
%     i\hbar (d_M\Omega_{\partial \Sigma_3}+\Omega_{\partial \Sigma_3}d_M)=\Omega^2_{\partial \Sigma_3}
% \end{equation} 
% since $\Delta_{\mathcal{V}_{\Sigma_3,x}}\Omega_{\partial \Sigma_3}+\Omega_{\partial \Sigma_3}\Delta_{\mathcal{V}_{\Sigma_3,x}}=d_M\Delta_{\mathcal{V}_{\Sigma_3,x}}+\Delta_{\mathcal{V}_{\Sigma_3,x}}d_M=0$. Equivalently, we can interpret $\Omega_{\partial \Sigma_3}$ as an element of the differential graded Lie algebra of section $\bigwedge^\bullet T^*M\otimes \End(\mathcal{H}_{\text{tot}})$ and rewrite equation \eqref{flatness_GBFV_2} as
\begin{equation}
\label{flatness_GBFV_3}
    i\hbar d_M\boldsymbol{\Omega}_{\partial \Sigma_3}-\frac12\bigg[\boldsymbol{\Omega}_{\partial \Sigma_3},\boldsymbol{\Omega}_{\partial \Sigma_3}\bigg]=0.
\end{equation}

This equation was proven for a globalized split AKSZ theory in \cite{CMoW19}, in which the $d_M$ is just the de Rham differential on the body of the target manifold. However, in our case, $d_M$ is the sum of the holomorphic and antiholomorphic Dolbeault differentials on $M$. %In any case, the proof of Eq. \eqref{flatness_GBFV_3} is similar to the one in \cite{CMoW19}.

We prove Eq. \eqref{flatness_GBFV_3} for $\boldsymbol{\Omega}^{\mathbb{B}}$. For $\boldsymbol{\Omega}^{\mathbb{A}}$, the proof is analogous as discussed in \cite{CMoW19}. Suppose we apply $d_M$ to a term of the form
% \begin{equation}
%     \Omega^{\mathbb{B}}_\Gamma=\underbrace{\varint_{\partial_2\Sigma_3}\Big[\Big(R_k\Big)_jdx^j\Big]^I_{J_1\dots J_s}[\mathbb{B}^{J_1}]\dots [\mathbb{B}^{J_s}]\frac{\delta}{\delta [\mathbb{B}^{I}]}}_{\coloneqq \Omega^{\mathbb{B}}_{\Gamma_1}}+\underbrace{\varint_{\partial_2\Sigma_3}\Big[\Big(R_k\Big)_{\bar{j}}dx^{\bar{j}}\Big]^I_{J_1\dots J_s}[\mathbb{B}^{J_1}]\dots [\mathbb{B}^{J_s}]\frac{\delta}{\delta [\mathbb{B}^{I}]}}_{\coloneqq \Omega^{\mathbb{B}}_{\Gamma_2}},
% \end{equation}
\begin{equation}
    \boldsymbol{\Omega}^{\mathbb{B}}_\Gamma=\varint_{\partial_2\Sigma_3} \sigma_\Gamma \bigg(\Big(\hat{R}_k\Big)_jdx^j; \Big(\hat{R}_k\Big)_{\bar{j}}dx^{\bar{j}}\bigg)^I_{J_1\dots J_s}[\mathbb{B}^{J_1}]\dots [\mathbb{B}^{J_s}]\frac{\delta}{\delta [\mathbb{B}^{I}]},
\end{equation}
where $k$ could be any number greater than 0. Here, we chose the easiest term to express with more clarity what is going on. As in \cite{CMoW19}, we apply Stokes' theorem. However, this is different to the corresponding situation in \cite{CMoW19} since in our theory we have also red vertices\footnote{For the sake of clarity, we stress again that in \cite{CMoW19}, the vertices are only ``black" since $d_M$ is the de Rham differential on the body of the target manifold.}, which is portrayed by the fact that $\sigma_\Gamma$ depends also on $\Big(\hat{R}_k\Big)_{\bar{j}}dx^{\bar{j}}$. We obtain
 \begin{equation}
     \begin{split}
         (d_x+d_{\bar{x}})\boldsymbol{\Omega}^{\mathbb{B}}_\Gamma=\varint_{\partial_2\Sigma_3}\bigg\{(d_x+d_{\bar{x}})\sigma_{\Gamma}\bigg(\Big(\hat{R}_k\Big)_jdx^j; \Big(\hat{R}_k\Big)_{\bar{j}}dx^{\bar{j}}\bigg)\bigg\}^I_{J_1\dots J_s}[\mathbb{B}^{J_1}]\dots [\mathbb{B}^{J_s}]\frac{\delta}{\delta [\mathbb{B}^{I}]}  +[\Omega^\mathbb{B}_0,\boldsymbol{\Omega}^{\mathbb{B}}_\Gamma],
     \end{split}
 \end{equation}
 where the second term is produced when $d_x$ acts on the $\mathbb{B}$ fields (we do not have a corresponding term for $d_{\bar{x}}$ since we do not have fields $\mathbb{B}^{\bar{i}}$ terms to act on). By applying again Stokes' theorem, we have:
 \begin{equation}
     \begin{split}
         (d_x+d_{\bar{x}})\sigma_{\Gamma}\bigg(\Big(\hat{R}_k\Big)_jdx^j; \Big(\hat{R}_k\Big)_{\bar{j}}dx^{\bar{j}}\bigg)&=(d_x+d_{\bar{x}})\varint_{\tilde{\text{C}}_{\Gamma}}\omega_{\Gamma}\bigg(\Big(\hat{R}_k\Big)_jdx^j; \Big(\hat{R}_k\Big)_{\bar{j}}dx^{\bar{j}}\bigg)\\
     &=\varint_{\tilde{\text{C}}_{\Gamma}}(d+d_{\bar{x}})\omega_{\Gamma}\bigg(\Big(\hat{R}_k\Big)_jdx^j; \Big(\hat{R}_k\Big)_{\bar{j}}dx^{\bar{j}}\bigg)\\
     &\quad\pm\varint_{\partial\tilde{\text{C}}_{\Gamma}}\omega_{\Gamma}\bigg(\Big(\hat{R}_k\Big)_jdx^j; \Big(\hat{R}_k\Big)_{\bar{j}}dx^{\bar{j}}\bigg).
     \end{split}
 \end{equation}
\begin{rmk}
\label{rmk7.4.3}
  In principle, $d$ is the differential on $M\times \mathrm{C}_\Gamma$, hence it can be decomposed as $d=d_x+d_1+d_2$, where $d_1$ denotes the part of the differential acting on the propagator and $d_2$ the part acting on $\mathbb{B}$ fields (and, more generally, on $\mathbb{A}$ fields). We do not have a corresponding antiholomorphic differential on $M\times \mathrm{C}_\Gamma$ since the propagators and the fields are all holomorphic. This is different with respect to the case considered in \cite{CMoW19}.
\end{rmk}
 
 As in \cite{CMoW19}, we have $d\omega_{\Gamma} =d_x\omega_{\Gamma}$ and in the boundary integral we have three classes of faces. The first two types of faces, where more than two bulk points collapse and where a subgraph $\Gamma$ collapse at the boundary, can be proved as in \cite{CMoW19}. In particular, the former vanishes by our assumptions that the theory is anomaly-free (see Assumption \ref{ass_hidden_faces}), while the second produces exactly the term $\frac12\bigg[\boldsymbol{\Omega}^{\mathbb{B}}_{\mathrm{pert}},\boldsymbol{\Omega}^{\mathbb{B}}_{\mathrm{pert}}\bigg]$ by \cite[Lemma 4.9]{CMoW19}. On the other hand, the third case, when two bulk vertices collapse, has some differences with respect to the analogous situation in \cite{CMoW19} due to the already mentioned further presence of red vertices. Here we distinguish four cases:
 \begin{itemize}
         \item when a red vertex collapses with a black vertex, then these faces cancel out with\\ $d_x\omega_{\Gamma}\bigg(\Big(\hat{R}_k\Big)_{\bar{j}}dx^{\bar{j}}\bigg)$ by the dCME \eqref{dcme_2};
         \item when a black vertex collapses with a red vertex, then these faces cancel out with\\ $d_{\bar{x}}\omega_{\Gamma}\bigg(\Big(\hat{R}_k\Big)_jdx^j\bigg)$ by the dCME \eqref{dcme_4};
         \item when two black vertices collapse, then these faces cancel out with $d_x\omega_{\Gamma}\bigg(\Big(\hat{R}_k\Big)_{j}dx^{j}\bigg)$ by the dCME \eqref{dcme_1};
         \item when two red vertices collapse, then these faces cancel out with $d_{\bar{x}}\omega_{\Gamma}\bigg(\Big(\hat{R}_k\Big)_{\bar{j}}dx^{\bar{j}}\bigg)$ by the dCME \eqref{dcme_3}.
 \end{itemize}
By $\omega_{\Gamma}\bigg(\Big(\hat{R}_k\Big)_{\bar{j}}dx^{\bar{j}}\bigg)$ or $\omega_{\Gamma}\bigg(\Big(\hat{R}_k\Big)_{j}dx^{j}\bigg)$, we mean the part of the subgraph $\Gamma'$, which contains a red or black vertex. 

This proves (\ref{flatness_GBFV_3}), thus $(\nabla_{\textup{G}})^2\equiv 0$.
\end{proof}

% We are also going to check the flatness of the GBFV operator explicitly. As we have seen, this is equivalent to prove the relation \eqref{flatness_GBFV_3}, which is the same as
%     \begin{equation}
%         \begin{split}
%             [\Omega\B_0,\Omega\B_0]&=0,\\
%             d_x\Omega\B_0+[\Omega\B_0, \Omega\B_{1,0}]&=0\\
%             d_{\Bar{x}}\Omega\B_0+[\Omega\B_0, \Omega\B_{0,1}]&=0\\
%             d_x\Omega\B_{1,2}+[\Omega\B_{1,0}, \Omega\B_{1,2}]&=0\\
%             d_{\Bar{x}}\Omega\B_{2,1}+[\Omega\B_{0,1}, \Omega\B_{2,1}]&=0\\
%             d_{\Bar{x}}\Omega\B_{3,0}+[\Omega\B_{0,1}, \Omega\B_{3,0}]&=0\\
%             &\vdots
%         \end{split}
%     \end{equation}

 \subsection{Proof of the mdQME for the globalized split RW model}
 In this section, we are going to prove the mdQME for the globalized split RW model. The proof follows similarly as in \cite{CMoW19}. As before, we will refer to their work when the situation is identical and point out eventual differences. 
 
 \begin{thm}[mdQME for anomaly-free globalized split RW model]\label{thm:mdQME}
 Consider the full covariant perturbative state $\hat{\psi}_{\Sigma_3,x}$ as a quantization of the anomaly-free globalized split RW model. Then 
   \begin{equation}
   \label{mdqme_thm}
        \bigg(d_M-i\hbar \Delta_{\mathcal{V}_{\Sigma_3, x}}+\frac{i}{\hbar}\boldsymbol{\Omega}_{\partial \Sigma_3}\bigg)\boldsymbol{\hat{\psi}}\surgR=0.
    \end{equation}
%where we denote by $d_M$ the Dolbeault differential on $M$, $\Omega_{\partial\Sigma_3}$ is the BFV boundary operator we constructed in the previous sections.
 \end{thm}

\begin{proof}
Let $\mathcal{G}$ denote the set of Feynman graphs of the theory. Then, we can write the full covariant quantum state for the globalized split RW model as 
\begin{equation}
\label{proof_state}
   \boldsymbol{\hat{\psi}}\surgR=T_{\Sigma_3}\sum_{\Gamma\in\mathcal{G}}\varint_{\text{C}_\Gamma}\omega_\Gamma \Big(\hat{R}_jdx^j; \hat{R}_{\bar{j}}dx^{\bar{j}}\Big),
\end{equation}
where the combinatorial prefactor $\frac{(-i\hbar)^{\loops(\Gamma)}}{\vert\Aut(\Gamma)\vert}$ is included in $\omega_\Gamma$ (by $\loops$ we denote the number of loops of a graph $\Gamma$) and we denote the configuration space $\text{C}_\Gamma(\Sigma_3)$ by $\text{C}_\Gamma$ for simplicity. We note that $\omega_\Gamma$ is a ($\mathcal{V}_{\Sigma_3,x}$-dependent) differential form on $\text{C}_\Gamma\times M$. Again, following \cite{CMoW19}, we can apply Stokes' theorem \eqref{stokes} and we get
\begin{equation}
    d_M\varint_{\text{C}_\Gamma}\omega_\Gamma\Big(\hat{R}_jdx^j; \hat{R}_{\bar{j}}dx^{\bar{j}}\Big)=\varint_{\text{C}_\Gamma}(d+d_{\bar{x}})\omega_\Gamma\Big(\hat{R}_jdx^j; \hat{R}_{\bar{j}}dx^{\bar{j}}\Big)-\varint_{\partial\text{C}_\Gamma}\omega_\Gamma\Big(\hat{R}_jdx^j; \hat{R}_{\bar{j}}dx^{\bar{j}}\Big).
\end{equation} 
%Since the integrals in \ref{proof_state} are fiber integrals,

As mentioned in Remark \ref{rmk7.4.3}, the $d$ inside the integral is the total differential on $\text{C}_\Gamma(\Sigma_3)\times M$, and thus we can split it as
\begin{equation}
    d=d_x+d_1+d_2,
\end{equation}
where $d_1$ denotes the part of the differential acting on the propagators in $\omega_\Gamma$ and $d_2$ is the part acting on $\mathbb{B}$ and $\mathbb{A}$ fields.

With this setup, which is basically analogous to the one in \cite{CMoW19}, except for the presence of the red vertices and $d_{\bar{x}}$ already extensively discussed, Eq. \eqref{mdqme_thm} is verified by proving three relations
\begin{itemize}
    \item a relation between the application of $d_1$ and of $\Delta_{\mathcal{V}_{\Sigma_3, x}}$ to the quantum state,
    \item a relation between the application of $d_2$ and of $\Omega_0$ to the quantum state,
    \item a relation between the application of $d_M$ and of the boundary contributions to the quantum state.
\end{itemize}
The proofs of these relations can be carried from \cite{CMoW19} over to the globalized split RW model without any problem. The only difference is when they prove that the contributions in $\partial \text{C}_\Gamma$ consisting of diagrams with two bulk vertices collapsing vanish (which is needed for the third relation). In our case one should consider again three contributions: when two bulk black vertices collapse, when two bulk red vertices collapse, when a red vertex and a black one collapse. The vanishing of these terms follows from Eqs. \eqref{dcme_1}, \eqref{dcme_2}, \eqref{dcme_3}, \eqref{dcme_4}. The rest of the procedure is identical to \cite{CMoW19}.
\end{proof}

\section{Outlook and future direction}
\label{sec:outlook}
Our globalization construction leads to an interesting extension of some aspects in the program presented in \cite{ChanLeungLi2020} for manifolds with boundary and cutting-gluing techniques. In particular, it would be of interest to understand some relations to the deformation quantization of K\"ahler manifolds in the guise of \cite{ReshetikhinTakhtajan1999}, especially using the constructions of \cite{CMoW20}, and Berezin--Toeplitz quantization as presented in \cite{Schlichenmaier2010} (possibly for the noncompact case). 
It also leads to a more general globalization construction of an algebraic index theory formulation by using the BV formalism together with Fedosov's globalization approach as presented in \cite{GLL17}. Moreover, it might also be related to a case of twisted topological field theories, known as Chern--Simons--Rozansky--Witten TFTs, constructed by Kapustin and Saulina in \cite{KapustinSaulina2009}. In particular, they use the BRST formalism to produce interesting observables as Wilson loops and thus one might be able to combine it with ideas of \cite{AlekseevBarmazMnev2013,Mo20}. Another direction would be the study of the RW invariants through our construction for hyperK\"ahler manifolds. We guess that this would require studying observables of RW theory in the BV-BFV formulation, but the globalization procedure should tell something about these 3-manifold invariants. We hope that this might also be compatible with some generalizations of RW invariants in the non-hyperK\"ahler case as discussed in \cite{RS02}.

\begin{appendix}

\section{Topological quantum field theories}
\label{app:TQFT}

This appendix gives a brief introduction to perturbative and functorial constructions of topological (quantum) field theories, especially we recall Atiyah's TQFT axioms. 

\subsection{Brief introduction to perturbative quantum field theory}
On a spacetime manifold $\Sigma$, consider a space of fields\footnote{The space of fields is usually given by sections of some vector bundle over $\Sigma$.} $F_{\Sigma}$ and an action functional $S_\Sigma$ which is required to be \textit{local}. This means that the action is the integral of a density-valued Lagrangian $\mathscr{L}$, called \textit{Lagrangian density}, depending on the fields and on a finite number of higher derivatives. In particular, $S_\Sigma: F_{\Sigma}\rightarrow \mathbb{C}$, with     
\begin{equation}
   S_{\Sigma}(\phi)=\varint_{\Sigma}\mathscr{L}(\phi, \partial\phi, \dots), 
\end{equation}
where $\phi\in F_{\Sigma}$ is a field. The set of data consisting of $(\Sigma, F_{\Sigma}, S_{\Sigma})$ defines a classical Lagrangian field theory.

During the years, physicists have developed several approaches to quantum field theory. Roughly, we can split them into perturbative and non-perturbative methods. Here, we focus on the former. Note that by perturbative, we mean semiclassical: in physics jargon perturbation theory is the idea of expanding through a formal power series around the coupling constant of the action. In the perturbative setting, the protagonist of the story is the \textit{partition function} $Z$, which encodes all the information about the quantum theory it portraits. In general, we can express it through a path integral as
\begin{equation}
\label{ov.tft:part_funct_1}
    Z_{\Sigma}=\varint_{F_{\Sigma}} e^{\frac{i}{\hbar}S_{\Sigma}(\phi)}\mathscr{D}[\phi],
\end{equation}
where $\hbar$ is the reduced Planck constant. 

\begin{rmk}
\label{tft:rmk_measure}
In (\ref{ov.tft:part_funct_1}), $\mathscr{D}$ denotes a formal measure on $F_{\Sigma}$. Depending on the space of fields $F_\Sigma$, this measure is often mathematically ill-defined. Nevertheless, one can define \eqref{ov.tft:part_funct_1} by considering the methods of \emph{perturbative expansion} around critical points of $S_\Sigma$ in a formal power series in $\hbar$ with coefficients given by \emph{Feynman graphs} (see e.g. \cite{FeynmanHibbs1965,P}). 
\end{rmk}

Let us make the above discussion more precise. Consider $\Sigma$ to be a manifold with boundary $\partial \Sigma$ and $B_{\partial \Sigma}$ to be the space of boundary values of the fields on $M$. Since the boundary manifold is the boundary of $\Sigma$, we have a restriction map $F_{\Sigma}\xrightarrow[]{\pi}B_{\partial \Sigma}$. The partition function is thus a complex-valued function on $B_{\partial \Sigma}$ which can be written as 
\begin{equation}
\label{ov.tft:part_funct_2}
    Z_{\Sigma}(\phi_{\partial\Sigma}; \hbar)=\varint_{\{ \phi\in F_{\Sigma}\mid\  \phi\vert_{\partial \Sigma}:=\phi_{\partial \Sigma}\}} e^{\frac{i}{\hbar}S_{\Sigma}(\phi)}\mathscr{D}[\phi],
\end{equation}
where $\phi_{\partial \Sigma}$ is a point in $B_{\partial \Sigma}$. 

The manifold $\Sigma$ may be complicated and, as a result, the computation of $Z_{\Sigma}$ can become difficult. Therefore, it would be desirable to cut $\Sigma$ into smaller and, hopefully, easier pieces, compute the partition function there and then glue them together to get the overall state. Suppose, $\Sigma$ is closed and cut it in two disjoint manifolds $\Sigma_1$ and $\Sigma_2$ along a common boundary $\Sigma$, i.e. $\Sigma= \Sigma_1\sqcup_{\partial \Sigma}\Sigma_2$. If we paste them together, we expect the following condition to hold
\begin{equation}
    Z_{\Sigma}=\varint_{\phi_{\partial\Sigma}\in B_{\partial \Sigma}}Z_{\Sigma_1}(\phi_{\partial \Sigma})Z_{\Sigma_2}(\phi_{\partial \Sigma})\mathscr{D}[\phi_{\partial \Sigma}].
\end{equation}

\subsection{Brief introduction to functorial quantum field theory}
\label{func_qft}
The functorial approach to QFT was developed by Segal in the context of conformal field theory \cite{Se88} and by Atiyah for TQFT \cite{At88}. However, this description is general and it allows us to describe any QFT. 

According to Atiyah's axioms, an $n$-dimensional topological field quantum field theory consists of the following set of data: 
\begin{enumerate}
    \item A Hilbert space $\mathcal{H}(\Sigma)$, called the \textit{space of states}, associated to a closed oriented\footnote{The orientation endows the manifolds with symbols $\{in, out\}$ which denote \textit{incoming} or \textit{outgoing} orientation.} $(n-1)$-manifold $\Sigma$,
    \item A linear map of vector spaces $Z_M: \mathcal{H}_\mathrm{in}\rightarrow \mathcal{H}_\mathrm{out}$, called \textit{partition function}, associated to an oriented $n$-cobordism\footnote{See Example \ref{exm:cob} for a definition.} $M$ from $\Sigma_\mathrm{in}$ to $\Sigma_\mathrm{out}$ (i.e. the boundary of $M$ is assumed to be given as $\partial M=\Sigma_\mathrm{in}\sqcup \Sigma_\mathrm{out}$).
    \item Orientation-preserving diffeomorphisms $\phi: \Sigma_1\rightarrow \Sigma_2$ which act on $\mathcal{H}_\Sigma$ through unitary maps $\rho(\phi): \mathcal{H}_{\Sigma_\mathrm{in}}\rightarrow \mathcal{H}_{\Sigma_\mathrm{out}}$, with $\rho$ a representation. 
    \item Orientation-reversing identity diffeomorphisms $s_\Sigma: \Sigma\rightarrow \Bar{\Sigma}$, where we denote by $\Bar{\Sigma}$, the manifold with opposite orientation. These diffeomorphisms act by $\mathbb{C}$-anti-linear maps $\sigma_{\Sigma}\coloneqq\rho(s_\Sigma): \mathcal{H}_{\Sigma}\rightarrow \mathcal{H}_{\Bar{\Sigma}}$. 
\end{enumerate}

This set of data is required to satisfy the following axioms:
\begin{enumerate}
    \item[(i)](Multiplicativity) For two closed oriented $(n-1)$-manifolds $\Sigma$ and $\Sigma'$, the space of states is multiplicative, i.e. 
    \begin{equation}
        \mathcal{H}_{\Sigma \sqcup \Sigma'}=\mathcal{H}_\Sigma \otimes \mathcal{H}_{\Sigma'}.
    \end{equation}
    For two $n$-cobordisms $M: \Sigma_\mathrm{in}\rightarrow\Sigma_\mathrm{out}$ and $M': \Sigma'_\mathrm{in}\rightarrow\Sigma'_\mathrm{out}$, the partition function is multiplicative
    \begin{equation}
        Z_{M\sqcup M'}=Z_M\otimes Z_{M'}:\quad  \mathcal{\Sigma}_\mathrm{in}\otimes \mathcal{\Sigma'}_\mathrm{in}\rightarrow \mathcal{\Sigma}_\mathrm{out}\otimes \mathcal{\Sigma'}_\mathrm{out}
    \end{equation}
    \item[(ii)](Gluing) Let $M_1: \Sigma_1\rightarrow \Sigma_2$, $M_2: \Bar{\Sigma}_2\rightarrow \Sigma_3$ be two $n$-cobordisms, the glued cobordisms can be constructed by gluing along the common $\Sigma_2$-component as $M_1 \cup_{\Sigma_2}M_2: \Sigma_1 \rightarrow \Sigma_3$. The associated partition function is then obtained by composing the partition functions for $M_1$ and $M_2$ as linear maps:
    \begin{equation}
        Z_{M_1 \cup_{\Sigma_2}M_2}=Z_{M_2}\circ Z_{M_1}:\quad \mathcal{H}_{\Sigma_1}\rightarrow \mathcal{H}_{\Sigma_3}.
    \end{equation}
    \item[(iii)](Involutivity) $Z(\Bar{\Sigma})=Z(\Sigma)^\vee$, where $Z(\Sigma)^\vee$ is the dual vector space.
    \item[(iv)] $\mathcal{H}_{\emptyset}=\mathbb{C}$ and $Z_{\Sigma \times [0,1]}=\Id:\mathcal{\Sigma}\rightarrow\mathcal{\Sigma}$.
    \item[(v)] For $\phi: M\rightarrow M'$ a diffeomorphism, the following diagram commutes:
    \begin{center}
        \begin{tikzcd}[column sep=large, row sep=large]
    \mathcal{H}_{\Sigma_\mathrm{in}}\arrow[r, "Z_M"]\arrow[d, "\rho(\phi\mid_{\Sigma_\mathrm{in}})"']& \mathcal{H}_{\Sigma_\mathrm{out}}\arrow[d, "\rho(\phi\mid_{\Sigma_\mathrm{out}})"] \\
    \mathcal{H}_{\Sigma'_\mathrm{in}}\arrow[ r,"Z_{M'}"']&  \mathcal{H}_{\Sigma'_\mathrm{out}}
    \end{tikzcd}
    \end{center}
    
    It follows that $Z_M$ is invariant under diffeomorphisms of $M$ relative to its boundary components.
    \item[(vi)](Symmetry) The natural diffeomorphism $\Sigma \sqcup \Sigma'\rightarrow \Sigma' \sqcup \Sigma$ is sent by $\rho$ to the natural isomorphism $\mathcal{H}_{\Sigma}\otimes \mathcal{H}_{\Sigma'}\rightarrow\mathcal{H}_{\Sigma'}\otimes \mathcal{H}_{\Sigma}$.
    \item[(vii)] The partition function for the cylinder $\Sigma \times [0,1]$ viewed as a cobordism $\Sigma \times \Bar{\Sigma}\rightarrow\emptyset$ composed with the anti-linear map $(\sigma_\Sigma)^{-1}:\mathcal{H}_{\Sigma}\rightarrow \mathcal{H}_\Sigma$ yields the Hermitian inner product $\braket{-,-}: \mathcal{H}_\Sigma \times \mathcal{H}_\Sigma \rightarrow\mathbb{C}$. 
\end{enumerate}

%From the axioms, we can draw the following consequences.

Let $M$ be a closed $n$-manifold, which can be regarded as a cobordism $\emptyset \rightarrow \emptyset$. The associated partition function $Z_M\in \mathbb{C}$ is an invariant under orientation-preserving diffeomorphisms on $M$. 
In general, for a mapping torus
$\frac{\Sigma \times [0,1]}{\Sigma \times \{0\}\stackrel{\phi}{\sim} \Sigma \times \{1\}}$ with $\phi:\Sigma\rightarrow\Sigma$ a gluing diffeomorphism, the axioms above imply that $Z=\Tr_{\mathcal{H}_\Sigma}\rho(\phi)$. In particular, for the product manifold $\Sigma \times S^1$ formed by identifying the opposite ends of the cylinder:
\begin{equation}
\label{ov:tft.partfunc_withS1}
    Z_{\Sigma \times S^1}=\Tr_{\mathcal{H}_\Sigma}(\Id)=\dim \mathcal{H}_{\Sigma}\in \mathbb{Z}_{\geq 0}.
\end{equation}
This implies that the space of states is finite-dimensional.

\subsubsection{Atiyah's axioms for TQFTs}
Atiyah's axioms can be reformulated in the categorical language. We start with some prerequisites before arriving to the definition of a TQFT.

Let us consider a symmetric monoidal category $\mathbf{C}$: it is a category equipped with a bifunctor $\otimes: \mathbf{C}\times \mathbf{C}\rightarrow \mathbf{C}$, called \textit{monoidal product} which allows to, roughly speaking, ``multiply" objects. This product is well-defined because it is associative up to natural isomorphisms (in jargon it satisfies the \textit{pentagon equations} \cite{Ma71}). Moreover, a monoidal category is symmetric when for all the objects $A, B\in \mathbf{C}$ there are natural isomorphisms
\begin{equation}
    \beta_{A,B}: A\otimes B\rightarrow B\otimes A
\end{equation}
compatible with the associativity of the monoidal structure (they satisfy the \textit{hexagon equations} \cite{Ma71}).

\begin{exm}
\label{exm:cob}
We consider two examples, which we need for later:
\begin{enumerate}
    \item The category $\mathbf{Vect}_{\mathbb{K}}$ whose objects are $\mathbb{K}$-vector spaces for some field $\mathbb{K}$ and morphisms are $\mathbb{K}$-linear maps. It is monoidal with the usual tensor product as monoidal product ($\otimes :=\otimes_{\mathbb{K}}$) and with unit $\boldsymbol{1}:=\mathbb{K}$. Moreover, one can show that it is symmetric.
    \item The category of oriented \textit{cobordisms}$, \mathbf{Cob}^{\text{or}}_n$. The objects are oriented closed $(n-1)$- dimensional manifolds and morphisms are diffeomorphisms classes of bordisms. In a more down to Earth language, this means that the morphisms are given by the bulk of an oriented compact $n$-dimensional manifold with boundary, whose boundary components are the objects. We can compose a morphism with another morphism simply by gluing along the common boundaries. It has a monoidal structure where the monoidal product is given by the disjoint union and the unit object is the empty set $\emptyset$ viewed as an $(n-1)$-dimensional manifold. The objects are endowed with orientations labeled by symbols $\{\mathrm{in}, \mathrm{out}\}$.
\end{enumerate}
\end{exm}

Atiyah's axioms can be reformulated in a short way as:

\begin{defn}[Topological field theory]
Let $(\mathbf{C},\otimes)$ be a symmetric monoidal category. An $n$-dimensional oriented closed topological field theory (TFT) is a symmetric monoidal functor  
\begin{equation}
    Z: \mathbf{Cob}^{\text{or}}_n\rightarrow \mathbf{C}.
\end{equation}
\end{defn}

\begin{defn}[Topological quantum field theory]
\label{ov:tqft}
An $n$-dimensional oriented topological quantum field theory (TQFT) is a symmetric monoidal functor  
\begin{equation}
    Z: \mathbf{Cob}^{\text{or}}_n\rightarrow \mathbf{Vect}_{\mathbb{C}}.
\end{equation}
\end{defn}

\begin{rmk}
Note that the target category contains also infinite-dimensional vector spaces. However, an analogue of Eq. \eqref{ov:tft.partfunc_withS1}, implies that the state spaces are finite-dimensional.
\end{rmk}

\begin{rmk}
As seen in Definition \ref{ov:tqft}, the category of smooth oriented cobordisms is usually used to describe a TQFT. However, cobordisms may possess other geometric structures such as conformal structure, spin structure, framing, boundaries, etc. Consequently, the associated field theory will be conformal QFT, spin or framed TQFT, etc. For example, for Yang-Mills theories and sigma models, the source category is the category of smooth Riemannian manifolds with a collar at the boundary.
\end{rmk}

\begin{exm}
As first example, let us consider a cobordism represented by some pair of pants with genus 1 (see Fig. \ref{tqft:fig:pants}). The TQFT $F$ assigns to each boundary component a Hilbert space, i.e. $Z(\partial_k\Sigma)=\mathcal{H}_k$ for $k=1,2,3$. Since $Z$ is a symmetric monoidal functor, we have $Z(\partial_1\Sigma\sqcup \partial_2\Sigma\sqcup \partial_3\Sigma)=\mathcal{H}^{\vee}_1\otimes \mathcal{H}^{\vee}_2\otimes\mathcal{H}_3$. As said before, each cobordism comes with a certain orientation: $\partial_1\Sigma$ as well as $\partial_2\Sigma$ are incoming boundaries (which we denote in the figure by an incoming arrow), while $\partial_3\Sigma$ is an outgoing boundary (which we denote in the figure by an outgoing arrow). Associated to $\partial_1\Sigma$ and $\partial_2\Sigma$, we have an incoming Hilbert space $\mathcal{H}_{\text{in}}\coloneqq \mathcal{H}^{\vee}_1\otimes \mathcal{H}^{\vee}_2\cong \mathcal{H}_1\otimes \mathcal{H}_2$ and an outgoing Hilbert space $\mathcal{H}_{\text{out}}\coloneqq\mathcal{H}_3$ associated to $\partial_3\Sigma$. The state $\psi$ corresponding to this
cobordism and the given TQFT is given as the value of the morphism represented by the genus 1 pair of pants above (i.e. the bounding manifold $M$) under $F$.
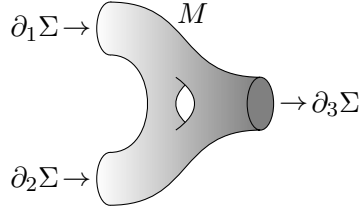
\begin{figure}[hbt!]
    \centering
    %\begin{scope}
    \begin{tikzpicture}[rotate=270,transform shape,tqft, view from=incoming,
   every incoming boundary component/.style={fill=gray} %cobordism/.style={fill=red}
    ]
        \pic[draw,
    tqft/pair of pants,
  every lower boundary component/.style={draw},
   every incoming lower boundary component/.style={solid},
   every outgoing lower boundary component/.style={dashed},
   genus=1, %hole 1/.style={rotate=-90,shift={(25pt,30pt)}, fill=none, shade=none},
   shade, shading angle=270,
  name=A
];
\node[label={below:$\uparrow$}] at (A-outgoing boundary 1) {};
\node[label={below:$\uparrow$}] at (A-outgoing boundary 2) {};
\node[label={above:$\uparrow$}] at (A-incoming boundary 1) {};
\node[rotate=90] at (A-outgoing boundary 1) {\hspace{-2cm $\partial_1\Sigma$}};
\node[rotate=90] at (A-outgoing boundary 2) {\hspace{-2cm $\partial_2\Sigma$}};
\node[rotate=90] at (A-incoming boundary 1) {\hspace{2cm $\partial_3\Sigma$}};
\node[rotate=90] at (-1.2,-0.9) {$M$};
\end{tikzpicture}
    %\end{scope}
    \caption{Cobordism $M$ represented by pair of pants of genus 1 with boundary components $\partial_1\Sigma$, $\partial_2\Sigma$, $\partial_3\Sigma$.}
    \label{tqft:fig:pants}
\end{figure}
\end{exm}

\begin{exm}
As already mentioned in Section \ref{func_qft}, a closed manifold $\Sigma$ can be seen as a cobordism $\emptyset \rightarrow \emptyset$. We can cut it in two disjoint manifolds $\Sigma_1$ and $\Sigma_2$ along a common boundary $\partial\Sigma$, i.e. $\Sigma= \Sigma_1\sqcup_{\partial \Sigma}\Sigma_2$. Then we can assign an opposite orientation to $\partial_1\Sigma_1$ with respect to the orientation of $\partial_1\Sigma_2$. The same can be done to  $\partial_2\Sigma_1$ with respect to the orientation of $\partial_2\Sigma_2$. The two manifolds with boundary $\Sigma_1$ and $\Sigma_2$ can be glued back together to recover the partition function of the closed manifold $\Sigma$, see Fig. \ref{tft:fig:gluing}.

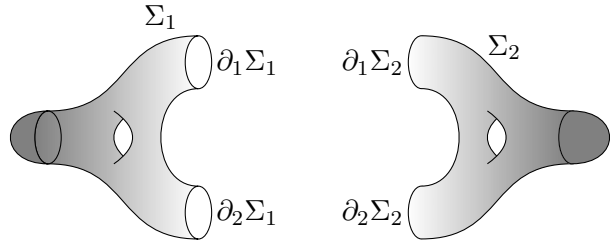
\begin{figure}[hbt!]
    \centering
  \begin{subfigure}[b]{0.3\textwidth}
    \centering
\begin{tikzpicture}[rotate=-90,transform shape,tqft, view from=incoming,
  % every incoming boundary component/.style={fill=white} %cobordism/.style={fill=red}
    ]
        \pic[draw,
   tqft/reverse pair of pants,
   every lower boundary component/.style={draw},
   every incoming boundary component/.style={solid, fill=white, draw=white},
   every outgoing boundary component/.style={solid, draw=white},
   every outgoing lower boundary component/.style={dashed},
   genus=1, %hole 1/.style={rotate=-90,shift={(25pt,30pt)}, fill=none, shade=none},
   shade, shading angle=90,
  name=A
];
\pic[draw,
    tqft/cup,
  every lower boundary component/.style={draw},
   every incoming lower boundary component/.style={solid},
   every outgoing lower boundary component/.style={dashed},
   %genus=1, %hole 1/.style={rotate=-90,shift={(25pt,30pt)}, fill=none, shade=none},
   fill=gray,
  anchor=incoming boundary 1,name=B, at=(A-outgoing boundary 1)
];
 \node[rotate=90] at (A-incoming boundary 1) {\hspace{1.3cm $\partial_1\Sigma_1$}};
 \node[rotate=90] at (A-incoming boundary 2) {\hspace{1.3cm $\partial_2\Sigma_1$}};
 \node[rotate=90] at (-0.6,-0.5) {$\Sigma_1$};
\end{tikzpicture}
\end{subfigure}%%%%%%%%%%%%%%%%%%%%%%%%%%%%%%%%%%%%%%%%%%%%%%%%%%%%%%%%%%%%%%%%%%%%%%
 \begin{subfigure}[b]{0.3\textwidth}
    \centering
    \begin{tikzpicture}[rotate=270,transform shape,tqft, view from=incoming,
   every incoming boundary component/.style={fill=gray},
   every lower boundary component/.style={draw=gray}%cobordism/.style={fill=red}
    ]
        \pic[draw,
    tqft/pair of pants,
  %every lower boundary component/.style={draw},
   every incoming lower boundary component/.style={solid,draw=gray},
   every outgoing lower boundary component/.style={dashed},
   genus=1, %hole 1/.style={rotate=-90,shift={(25pt,30pt)}, fill=none, shade=none},
   shade, shading angle=270,
  name=A
];
\pic[draw,
    tqft/cap,
  every lower boundary component/.style={draw},
  % every incoming lower boundary component/.style={solid},
  % every outgoing lower boundary component/.style={dashed},
   %genus=1, %hole 1/.style={rotate=-90,shift={(25pt,30pt)}, fill=none, shade=none},
   fill=gray, every outgoing boundary component/.style={draw=white},
  anchor=outgoing boundary 1,name=B, at=(A-incoming boundary 1)
];

% \node[label={below:$\uparrow$}] at (A-outgoing boundary 1) {};
% \node[label={below:$\uparrow$}] at (A-outgoing boundary 2) {};
% \node[label={above:$\uparrow$}] at (A-incoming boundary 1) {};
 \node[rotate=90] at (A-outgoing boundary 1) {\hspace{-1.3cm $\partial_1\Sigma_2$}};
 \node[rotate=90] at (A-outgoing boundary 2) {\hspace{-1.3cm $\partial_2\Sigma_2$}};
% \node[rotate=90] at (A-incoming boundary 1) {\hspace{2cm $\partial_3\Sigma$}};
 \node[rotate=90] at (-1.2,-0.9) {$\Sigma_2$};
\end{tikzpicture}
\end{subfigure}
    %\end{scope}
    \caption{Gluing of two manifolds $\Sigma_1$ and $\Sigma_2$ along a common boundary $\partial \Sigma$.}
    \label{tft:fig:gluing}
\end{figure}
\end{exm}

% Beyond 2-dimensions, the degree of complexity increases tremendously: for $n\geq 3$ an explicit classification is not known yet. 

\begin{rmk}
It is important to highlight that the functorial approach to TQFT is not based on any perturbative framework, therefore, its nature is intrinsically non perturbative.  
\end{rmk}

Successively, in \cite{BD95}, Baez and Dolan suggested enhancing Atiyah's notion of TQFT to a functor from the $(\infty, n)$-extension of the cobordism category. Their idea is to allow gluing as well as cutting with higher codimension data. Moreover, they conjectured these TQFTs to be completely classifiable:  this conjecture is known as the \textit{Cobordism hypothesis}. In \cite{Lu09}, Lurie provided a complete classification result for fully extended TQFTs formulated in the language of $(\infty,n)$-categories, a generalization of the notion of a category.

\section{Elements of formal geometry}
\label{app:formal_geometry}
In this appendix, we want to explain some of the important notions of formal geometry developed in \cite{GK71,Bo11} which are used for the globalization procedure. 
\subsection{Formal power series on vector spaces}
Let $V$ be a finite dimensional vector space. The polynomial algebra on $V$ is given by
\begin{equation}
\Sym^\bullet (V^\vee) = \bigoplus_{k=0}^\infty \Sym^k(V^\vee).
\end{equation}
If we choose $\{e_1, \dots, e_n\}$ to be a basis of $V$, with dual basis $\{y_1, \dots , y_n\}$, then elements $f \in \Sym^\bullet (V^\vee)$ are given by
\begin{equation}
    f(y)=\sum_{i_1,\dots,i_n=1}^\infty f_{i_1,\dots, i_n}y_1^{i_1}\dots y_n^{i_n}=\sum_I f_Iy^{I},
\end{equation}
with only finitely many non-vanishing $f_I$. We have denoted by $I$ a multi-index and $y^I=y^{i_1}\dots y^{i_n}$, $y^{\emptyset} \coloneqq 1$.

This algebra can be completed to the algebra of formal power series $\reallywidehat{\Sym}^\bullet(V^\vee)$, with infinitely many nonzero coefficients $f_I$. Note that both algebras $\Sym^\bullet(V^\vee)$ and $\reallywidehat{\Sym}^\bullet (V^\vee)$ are commutative, with the multiplication of formal power series or polynomials, respectively, generated by $V^\vee$. One can specify derivations of these algebras by their value on these generators, therefore, the map
\begin{alignat}{1}
\label{iso}
    V \otimes \Sym^\bullet (V^\vee) &\rightarrow \Der(\Sym^\bullet (V^\vee))\\
    v \otimes f &\mapsto \Big(V^\vee \ni \alpha \mapsto \alpha(v) \cdot f\Big)
\end{alignat}
is an isomorphism with inverse
\begin{alignat}{1}
  \Der(\Sym^\bullet (V^\vee)) & \rightarrow V \otimes \Sym^\bullet (V^\vee) \\
    D & \mapsto \sum_{i=1}^n e_i \otimes D(y^i).
\end{alignat}
In coordinates this corresponds to sending $e_i \mapsto \frac{\partial}{\partial y^i}$.

\subsection{Formal exponential maps}
\label{subsec:form_exp_map}
Let $M$ be a manifold and let  $\varphi: U \rightarrow M$, with $U \subset TM$ an open neighbourhood of the zero section. For $x \in M$, $y \in T_xM \cap U$, we can write $\varphi(x,y)=\varphi_x(y)$. 
\begin{defn}[Generalized exponential map]
We call $\varphi$ a \textit{generalized exponential map} if for all $x \in M$ we have that  
\begin{enumerate}
    \item $\varphi_x(0)=x$, 
    \item $d\varphi_x(0)=\Id_{T_xM}$.
\end{enumerate}
\end{defn}
In local coordinates
\begin{equation}
    \varphi_x^i(y)=x^i+y^i+\frac{1}{2}\varphi_{x,jk}^iy^jy^k+\frac{1}{3!} \varphi_{x,jkl}^iy^jy^ky^l+\dots,
\end{equation}
where the $x_i$ and $y_i$ are  respectively the base and the fiber coordinates. Two generalized exponential maps are identified if their corresponding jets agree at all orders. 

\begin{defn}[Formal exponential map]
A \textit{formal exponential map} is an equivalence class of generalized exponential maps. A formal exponential map is completely specified by the sequence of functions $\Big (\varphi_{x,i_1,\dots,i_k}^i \Big)_{k=0}^\infty$.
\end{defn}

From now on, we will abuse notation and we will denote equivalence classes and their representatives by $\varphi$. One can produce a section $\sigma \in \Gamma(\reallywidehat{\Sym}^\bullet(T^\vee M))$ from a formal exponential map $\varphi$ and a function $f \in \mathcal{C}^\infty(M)$,  via $\sigma_x=\mathrm{T}\varphi_x^*f$, with $\mathrm{T}$ the Taylor expansion in the fiber coordinates around $y=0$ and the pullback defined by any representative of $\varphi$. We will denote this section by $\mathrm{T}\varphi^*f$, and note that it is independent of the choice of the representative since it only depends on the jets.

\subsection{Grothendieck connection}
\label{grotta}
\begin{defn}[Grothendieck connection]
On $\reallywidehat{\Sym}^\bullet(T^\vee M)$ we can define a flat connection $\Gr$, satisfying  the property that
\begin{equation}
    \Gr \sigma=0 \iff \sigma=\mathrm{T}\varphi^*f,
\end{equation}
for some $f\in \mathcal{C}^\infty(M)$. Namely, $\Gr=d+R$ with $R\in \Gamma\Big(T^\vee M \otimes TM \otimes \reallywidehat{\Sym}^\bullet(T^\vee M)\Big)$, a 1-form with values in derivations\footnote{We can use the isomorphism \eqref{iso} to identify derivations of $\reallywidehat{\Sym}^\bullet(T^\vee M)$ with $\Gamma\Big(TM \otimes \reallywidehat{\Sym}^\bullet(T^\vee M)\Big)$.} of $\reallywidehat{\Sym}^\bullet(T^\vee M)$. In local coordinates, $R$ is defined as $R_idx^i$ and 
\begin{equation}
    R_i(x;y)\coloneqq-\bigg[\bigg(\frac{\partial \varphi}{\partial y}\bigg)^{-1}\bigg]^k_j\frac{\partial \varphi^j}{\partial x^i}\frac{\partial}{\partial y^k}= Y_i^k(x;y)\frac{\partial}{\partial y^k}.
\end{equation}
Hence, we have
\begin{equation}
    R(x;y)=R_i(x;y)dx^i=Y_i^k \frac{\partial}{\partial y^k}dx^i.
\end{equation}
The connection $\Gr$ is called the \textit{Grothendieck connection}\footnote{In the setting of field theory we have called this the \emph{classical} Grothendieck connection in order to distinguish it from its quantum counterpart.}.
\end{defn}

For $\sigma \in \Gamma(\reallywidehat{\Sym}^\bullet(T^\vee M))$, $R(\sigma)$ is expressed via the Taylor expansion (in the $y$ coordinates) of
\begin{equation}
    -d_y\sigma \circ (d_y\varphi)^{-1} \circ d_x\varphi: \Gamma(TM) \rightarrow \Gamma(\reallywidehat{\Sym}^\bullet(T^\vee M)),
\end{equation}
and therefore, $R$ does not depend on the coordinate choice. For a vector field $\xi=\xi^i \frac{\partial}{\partial x^i}$, we have
\begin{equation}
    \Gr^\xi=\xi + \hat{\xi},
\end{equation}
with 
\begin{equation}
    \hat{\xi}(x;y)=\iota_\xi R(x;y)=\xi^i(x)Y_i^k(x;y) \frac{\partial}{\partial y^k}.
\end{equation}

\begin{rmk}
 The classical Grothendieck connection is flat (i.e. $D^2_G=0$). Moreover, the flatness condition translates into 
\begin{equation}
\label{MC_R}
    d_xR+\frac{1}{2}[R,R]=0,
\end{equation}
which is a \textit{Maurer--Cartan} (MC) equation for $R$.
\end{rmk}

\begin{rmk}
It can be proved that its cohomology is concentrated in degree 0 and is given by 
\begin{equation}
    H^0_{\Gr}\Big(\Gamma(\reallywidehat{\Sym}^\bullet(T^\vee M))\Big)=\mathrm{T}\varphi^* \mathcal{C}^\infty(M) \cong \mathcal{C}^\infty(M).
\end{equation}
\end{rmk}

\subsection{Formal vertical tensor fields}
Let $E \rightarrow M$ be any \textit{tensorial bundle}\footnote{A tensorial bundle is any bundle which is a  tensor product or antisymmetric or symmetric product of the tangent or cotangent bundle, or a direct sum thereof.}, for example $E=\bigwedge^k TM$. Its sections are called \textit{tensor fields of type E}.

\begin{defn}[Formal vertical bundle]
The associated \textit{formal vertical bundle} to $E$ is then $\hat{E} \coloneqq E \otimes \reallywidehat{\Sym}^\bullet(T^\vee M)$. Its sections are called \textit{formal vertical tensors of type $E$}.
\end{defn}

\begin{rmk}
These bundles can be thought of as tensors of the same type on $TM$ where the dependence on fiber directions is formal.
\end{rmk}

The formal exponential map defines an injective map 
\begin{equation}
    \mathrm{T}\varphi^*:E \rightarrow \hat{E}
\end{equation}
via the Taylor expansion of a tensor field pulled back\footnote{Note that $\varphi$ is a local diffeomorphism and hence we can define the pullback of contravariant tensors as the pushforward of the inverse.} to $U$ by $\varphi$. 

Furthermore, we can let $R$ act by formal derivatives and therefore, we get a Grothendieck connection $\Gr=d+R$ on any formal vertical tensor bundle. Similarly, as before, we have:
\begin{itemize}
    \item $\Gr$ is flat;
    \item flat sections of $\Gr$ are precisely the ones in the image of $\mathrm{T}\varphi^*$;
    \item the cohomology of $\Gr$ is concentrated in degree 0 and given by the flat sections, i.e. $\hat{E}$-valued 0-forms.
\end{itemize}

\subsection{Changing the formal exponential map}
\label{app5}
We will denote by $\varphi$ be a family of formal exponential maps depending on a parameter $t$ belonging to an open interval $I$. One can then associate to this family a formal exponential map $\psi$ for the manifold $M \times I$ by
\begin{equation}
     \psi(x,t,y,\tau) \coloneqq (\varphi_x(y),t+\tau),
\end{equation}
with $\tau$ the tangent variable to $t$. The corresponding connection $\tilde{R}$ is defined as follows. Let $\tilde{\sigma}$ be a section of $\reallywidehat{\Sym}^\bullet\Big(T^\vee(M\times I)\Big)$, by definition we have:
\begin{equation}
    \tilde{R}(\tilde{\sigma})=-(d_y\tilde{\sigma},d_\tau \tilde{\sigma}) \circ \begin{pmatrix} 
    (d_y\varphi)^{-1} & 0 \\
    0 & 1\end{pmatrix} \circ \begin{pmatrix}
    d_x\varphi & \Dot{\varphi} \\
    0 & 1
    \end{pmatrix}.
\end{equation}
Hence, $\tilde{R}=R+Cdt+T$, with $R$ defined as in Section \ref{grotta}, but with a $t$-dependence now, and $T=-dt\frac{\partial}{\partial \tau}$. 

The MC equation \eqref{MC_R} can be reformulated for $\tilde{R}$ observing that :
\begin{itemize}
    \item $d_x T=d_tT=0$,
    \item $T$ commutes with $R$ and $C$.
\end{itemize}

The $(2,0)$-form component of the MC equation over $M \times I$ yields again the MC equation for $R$, while the $(1,1)$- component reads
\begin{equation}
    \Dot{R}=d_xC+[R,C].
\end{equation}

\begin{rmk}
Under a change of formal exponential map, $R$ changes by a gauge transformation having as generator the section $C$ of $\hat{\mathfrak{X}}(TM) \coloneqq TM \otimes \reallywidehat{\Sym}^\bullet(T^\vee M)$. Finally, if $\sigma$ is a section in the image of $\mathrm{T}\varphi^*$, a simple computation yields
\begin{equation}
    \Dot{\sigma}=-L_C\sigma.
\end{equation}
One can think of it as the associated gauge transformation for sections.
\end{rmk}

\subsection{Extension to graded manifolds}
The previous results can be generalized to the category of graded manifolds exploiting the algebraic reformulation of formal exponential maps developed in \cite{LS17}.

More concretely, given a formal exponential map $\varphi$ on a smooth manifold $M$, one can construct a map
\begin{equation}
    \pbw: \Gamma(\reallywidehat{\Sym}^\bullet(TM)) \rightarrow \mathcal{D}(M)
\end{equation}
from sections of the completed symmetric algebra of the tangent bundle to the algebra of differential operators $\mathcal{D}$ by defining
\begin{equation}
    \pbw\Big(X_1 \odot \dots \odot X_n\Big)(f)=\frac{d}{dt_1}\Bigg|_{t_1=0} \dots \frac{d}{dt_n}\Bigg|_{t_n=0} f\Big(\varphi(t_1X_1+\dots+t_nX_n)\Big),
\end{equation}
where we denote by $\odot$ the symmetric product.
One can also define this map in the category of graded manifolds by choosing a torsion-free connection $\nabla$ on the tangent bundle of a graded manifold $M$ with Christoffel symbols $\Gamma^k_{ij}$. In particular, there still exists an element
$R^{\nabla} \in \Omega^1\Big(M,TM \otimes \reallywidehat{\Sym}^\bullet(T^\vee M)\Big)$ with the property that $\Gr=d_M+R^{\nabla}$ is a flat connection on $\reallywidehat{\Sym}^\bullet(T^\vee M)$, i.e.
\begin{equation}
    R^{\nabla}=-\delta+\Gamma+A^{\nabla}.
\end{equation}
In local coordinates $\{x^i\}$ on $M$ and $\{y^i\}$ on $TM$, we have   
\begin{equation}
\begin{split}
    \delta &=dx^i\frac{\partial}{\partial y^i},\\
    \Gamma &=-dx^i\Gamma^k_{ij}(x)y^j\frac{\partial}{\partial y^k}, \\
    A^{\nabla}&=dx^i \sum_{|J| \geq 2}A^k_{i,J}(x)y^J \frac{\partial}{\partial y^k}.
\end{split}
\end{equation}

We define $R_i \in \Gamma(M, \reallywidehat{\Sym}^\bullet(T^\vee M) \otimes TM)$ and $Y_i^k \in \Gamma(M, \reallywidehat{\Sym}^\bullet(T^\vee M))$ via
\begin{equation}
    R^{\nabla}=R_i(x;y)dx^i=Y_i^k(x;y)dx^i\frac{\partial}{\partial y^k}.
\end{equation}
In particular note that $\Gr$ extends to a differential on $\Omega^{\bullet}(M,\reallywidehat{\Sym}^\bullet(T^\vee M))$. T

The Taylor expansion of a function $f \in \mathcal{C}^\infty(M)$ can be defined as \cite{LS17}
\begin{equation}
\label{1}
    \mathrm{T}\varphi^* f \coloneqq \sum_I \frac{1}{I!}y^I\pbw \Big(\underset{\leftarrow}{\partial_x^I}\Big)(f),
\end{equation}
where
\begin{equation}
    \Big(\underset{\leftarrow}{\partial_x^I}\Big)=\underbrace{\partial_{x_1} \odot \dots \odot \partial_{x_1}}_{i_1} \odot \dots \odot \underbrace{\partial_{x_n} \odot \dots \odot \partial_{x_n}}_{i_n}.
\end{equation}
One can prove that \eqref{1} has still the same properties, i.e. the image of $\mathrm{T}\varphi^*$ consists precisely of the $\Gr$-closed sections of $\reallywidehat{\Sym}^\bullet(T^\vee M)$. 

We can describe how the exponential map varies under the choice of a connection mimicking the construction for the smooth case described in Section \ref{app5}. More concretely, assume we have a smooth family $\nabla^t$ of connections on $TM$, then we can associate to that family a connection $\tilde{\nabla}$ on $M \times I$. The associated $R^{\tilde{\nabla}}$ can be split as in Section \ref{app5}
\begin{equation}
R^{\tilde{\nabla}}=R^{\nabla^t}+C^{\nabla^t}dt+T,
\end{equation}
 where $C \in \Gamma(M,\reallywidehat{\Sym}^\bullet(T^\vee M))$. As previously, $\Gr^2=0$ means
\begin{equation}
    \Dot{R}^{\nabla^t}=d_MR^{\nabla^t}+[C^{\nabla^t},R ^{\nabla^t}],
\end{equation}
and for any section $\sigma$ in the image of $\mathrm{T}\varphi^*$ we have
\begin{equation}
    \Dot{\sigma}=-L_{C^{\nabla^t}} \sigma.
\end{equation}

\section{Elements of derived geometry}
\label{app:derived_geometry}
In Section \ref{sec:BV-BFV}, we have introduced the BV formalism as a way to deal with non-isolated critical points for the action of a gauge theory. In other words, this means that the critical locus of the action functional (i.e. the set of points such that $\delta S=0$) is singular. The BV formalism instructs us to resolve the singularities homologically by taking the \textit{derived critical locus} of the action functional, which is a smooth object in the category of derived spaces: this is done by the Koszul resolution of the critical locus. More generally, this procedure can be understood globally in the setting of \textit{derived algebraic geometry} (DAG) \cite{To14,PTTV13}. However, for the present work, we do not require the whole DAG language. For us it is sufficient to work with a ``tamed" version of DAG, namely the framework developed by Costello in \cite{Co11a,Co11b} to deal with formal mapping stacks which capture the geometry of derived critical loci in nonlinear sigma models. 

\subsection{Category of derived manifolds}
Here, we want to define the category of derived manifolds. Let us start with the objects.

Denote by $\Omega^\bullet(M)$ the de Rham algebra of a manifold $M$, which, in other words, is a sheaf of commutative differentially graded algebras.

\begin{defn}[Derived manifold, \cite{Co11a}]
A derived manifold (over $\mathbb{R}$) is a pair $(M, \mathcal{A})$, where $M$ is a smooth manifold and $\mathcal{A}$ is a sheaf of unital differentially graded $\Omega^\bullet(M)$, satisfying the conditions
\begin{enumerate}
    \item As a sheaf of $\mathcal{C}^\infty(M)$-algebras, $\mathcal{A}$ is locally free and of finite rank.
    \item There is a morphism $\A\rightarrow\mathcal{C}^\infty(M)$ of sheaves of $\Omega^\bullet(M)$-algebras and the kernel of this map is a sheaf of nilpotent ideals.
    \item The topology of $M$ has a basis such that the cohomology $\A(U)$ is concentrated in nonpositive degrees for each basis set of $U$.
\end{enumerate}
\end{defn}
\begin{exm}
         Trivially, any manifold $M$ with $\A=\cinfty$ is a derived manifold.
\end{exm}
\begin{exm}
    Let $M$ be a manifold, take $\A=\derham$ equipped with a de Rham differential. The pair $(M,\derham)$ is derived manifold which we denote by $M_{\,\text{dR}}$.
\end{exm}
\begin{exm}
    Let $M$ be a complex manifold, take $\A=\Omega^{0,\bullet}(M)$ equipped with a Dolbeault differential $\Bar{\partial}$. The pair $(M,\Omega^{0,\bullet}(M))$ is complex derived manifold.
\end{exm}

\begin{defn}[Morphisms of derived manifolds]
A morphism of derived manifolds $(M,\A)\rightarrow(N,\mathcal{B})$ is a smooth map $f:M\rightarrow N$ together with a morphism $\phi:f^{-1}\mathcal{B}\rightarrow\A$ of $f^{-1}\Omega^\bullet(N)$-algebras such that the diagram
\begin{center}
        \begin{tikzcd}[column sep=large, row sep=large]
    f^{-1}\mathcal{B}\arrow[r, "\phi"]\arrow[d]& \A\arrow[d] \\
    f^{-1}\mathcal{C}^\infty(N)\arrow[ r]& \mathcal{C}^\infty(M)
    \end{tikzcd}
    \end{center}
commutes.
\end{defn}

\begin{notat}
We denote by $\mathbf{DMan}$ the category with objects given by derived manifolds and morphisms given by the ones we have just defined.
\end{notat}

The notion of morphisms between derived manifolds is further enriched by introducing \textit{weak equivalences} between derived manifolds. For this purpose, we will use the nilpotent differential graded (dg) ideal $I$ of $(M,\A)$  defined as the kernel of the map $\A\rightarrow\mathcal{C}^\infty(M)$. Here, we have a filtration by powers of the nilpotent ideal. Let $\Grad\A$ denote the associated graded algebra with degree $k$ part $\Grad^k\A\coloneqq F^k\A/F^{k+1}\A$ and the induced differential.

\begin{defn}[Weak equivalence]
A morphism $(f,\phi):(M,\A)\rightarrow(N,\mathcal{B})$ of derived manifolds is a \textit{weak equivalence} if $f$ is a differomoprhisms and the induced map
\begin{equation}
    \Grad\phi:f^{-1}\Grad\mathcal{B}\rightarrow\Grad\A
\end{equation}
is a quasi-isomorphism.
\end{defn}

Having a filtration has also another aim: it should mirror the role of the tower of quotients of a local Artinian algebra in formal deformation theory. In that context, in many situations, it is useful to proceed by \textit{Artinian induction}: let $(A, \mathfrak{m})$ be a local Artinian algebra over $\mathbb{R}$, there is a tower 
\begin{equation}
    A=A/\mathfrak{m}^{n+1}\rightarrow A/\mathfrak{m}^{n}\rightarrow \dots \rightarrow A/\mathfrak{m}\cong \mathbb{R}.
\end{equation}
This tower is then used to prove some properties of $A$. Following these ideas, in fact, derived manifolds can be used to study derived deformation theory as Artinian algebras are used to study formal deformation theory. Now, let us define Artinian dg algebras and make these ideas more precise. We will be concise so we refer to \cite{Co11a,CG16} for a more detailed exposition. 

\begin{defn}[Artinian dg algebra]
An \textit{Artinian dg algebra} $R$ over a field $\mathbb{K}$ is a finite dimensional dg algebra over $\mathbb{K}$, concentrated in non positive degrees, with a unique nilpotent dg ideal $\mathfrak{m}$ such that $R/\mathfrak{m}\cong\mathbb{K}$.
\end{defn}
The relation between Artinian dg algebra and derived manifolds is explained by the following Proposition
\begin{prp}[\cite{GG14}]
\label{prp:art_dman}
There is a fully faithful embedding
\begin{align}
\begin{split}
\Spec: \mathbf{dgArt}^{op}_{\mathbb{K}}&\rightarrow\mathbf{DMan},\\
R&\mapsto \Spec R\coloneqq(\pt, R).
\end{split}
\end{align}
\end{prp}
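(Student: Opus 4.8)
The statement to establish is Proposition~\ref{prp:art_dman}: the assignment $R\mapsto \Spec R=(\pt,R)$ defines a fully faithful embedding $\mathbf{dgArt}^{op}_{\mathbb{K}}\hookrightarrow\mathbf{DMan}$. The plan is to verify, in order, (a) that $(\pt,R)$ is a bona fide derived manifold for every Artinian dg algebra $R$, (b) that $\Spec$ is functorial (contravariantly in $R$), (c) that it is \emph{faithful}, and (d) that it is \emph{full}. Steps (a)--(c) are essentially unwinding the definitions; step (d) is where the real content lies.

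First I would check (a). Take $R\in\mathbf{dgArt}_{\mathbb{K}}$ with maximal nilpotent dg ideal $\mathfrak{m}$ and $R/\mathfrak{m}\cong\mathbb{K}$. With $M=\pt$ we have $\Omega^\bullet(\pt)=\mathbb{K}$ and $\mathcal{C}^\infty(\pt)=\mathbb{K}$, so $R$ is automatically a sheaf of unital dg $\Omega^\bullet(\pt)$-algebras. It is locally free and of finite rank over $\mathcal{C}^\infty(\pt)=\mathbb{K}$ because $R$ is finite-dimensional over $\mathbb{K}$. The quotient map $R\to R/\mathfrak{m}\cong\mathbb{K}=\mathcal{C}^\infty(\pt)$ is a morphism of $\Omega^\bullet(\pt)$-algebras whose kernel is $\mathfrak{m}$, which is nilpotent by hypothesis. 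Finally, the condition on non-positive cohomological concentration holds because $R$ itself is concentrated in non-positive degrees. So $(\pt,R)$ is a derived manifold. For (b), given a morphism $f\colon R\to R'$ in $\mathbf{dgArt}_{\mathbb{K}}$ (a unital dg algebra map, which necessarily sends $\mathfrak{m}_R$ into $\mathfrak{m}_{R'}$ since $R'/\mathfrak{m}_{R'}\cong\mathbb{K}$ forces $f(\mathfrak{m}_R)$ to be a nilpotent ideal hitting $0$ in the residue field), define $\Spec f=(\mathrm{id}_{\pt},f)$; the required square over $\mathbb{K}$ commutes trivially, and composition is respected. This gives the contravariant functor.

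For faithfulness (c): if $\Spec f=\Spec g$ as morphisms of derived manifolds $(\pt,R')\to(\pt,R)$, then by definition the algebra components agree, i.e. $f=g$ as maps $f^{-1}R\to R'$, which over the point is just $R\to R'$; so $f=g$. The main obstacle is fullness (d): I must show that every morphism of derived manifolds $(\phi_{\mathrm{top}},\phi)\colon(\pt,R')\to(\pt,R)$ comes from a unique dg algebra map $R\to R'$. The topological component is forced ($\pt\to\pt$), so a morphism is precisely a morphism $\phi\colon \Omega^\bullet(\pt)$-algebras $R=f^{-1}R\to R'$ making the square with the residue fields commute; unwinding, over a point $f^{-1}\Omega^\bullet(\pt)=\mathbb{K}$ and $f^{-1}\mathcal{C}^\infty(\pt)=\mathbb{K}$, so $\phi$ is exactly a unital morphism of dg $\mathbb{K}$-algebras $R\to R'$ compatible with the augmentations, which is the same datum as a morphism $R\to R'$ in $\mathbf{dgArt}_{\mathbb{K}}$. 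Hence $\mathrm{Hom}_{\mathbf{DMan}}((\pt,R'),(\pt,R))\cong\mathrm{Hom}_{\mathbf{dgArt}_{\mathbb{K}}}(R,R')=\mathrm{Hom}_{\mathbf{dgArt}^{op}_{\mathbb{K}}}(R',R)$, which is fullness together with the already-established faithfulness.

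The one point demanding care — and the step I expect to be the genuine obstacle — is the verification that the compatibility-with-residue-field condition in the definition of a morphism of derived manifolds is automatic (or not) for $M=\pt$, and that no additional data (such as a choice of splitting $R\cong\mathbb{K}\oplus\mathfrak{m}$) enters. Here one uses that any unital $\mathbb{K}$-algebra map $R\to R'$ between augmented Artinian dg algebras automatically intertwines the augmentations: $f(\mathfrak m_R)$ is a nilpotent ideal in $R'$, hence contained in $\mathfrak m_{R'}$ (the unique maximal such), so the induced map on $R/\mathfrak m_R\cong\mathbb K\to R'/\mathfrak m_{R'}\cong\mathbb K$ is the identity. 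This closes the gap and shows $\Spec$ is fully faithful. I would also remark that $\Spec$ is clearly injective on objects up to isomorphism (distinct $R$ give non-isomorphic $(\pt,R)$ since $R$ is recovered as the structure algebra), completing the claim that it is an embedding. I would cite \cite{Co11a,CG16} for the derived-geometric conventions and \cite{GG14} for the original statement.
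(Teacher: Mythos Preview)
The paper does not actually prove this proposition: it is stated with a citation to \cite{GG14} and no argument is given in the text. There is therefore no ``paper's own proof'' to compare against. Your proposal is a correct and complete unwinding of the definitions, and it is exactly the kind of verification one would expect for this statement; the only substantive point is the observation that a unital $\mathbb{K}$-algebra map between Artinian dg algebras automatically respects the augmentations (since the image of the maximal ideal is nilpotent, hence lands in the unique maximal ideal of the target), which you handle correctly.
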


The importance of Artinian dg algebras comes from being a sort of ``test object" in formal derived deformation theory.

\begin{defn}[Formal derived moduli problem, \cite{Lu11}]
\label{formal_mod_prob}
A formal derived moduli problem over $\mathbb{K}$ is a functor
\begin{equation}
    X:\mathbf{dgArt}_{\mathbb{K}}\rightarrow\mathbf{sSets},
\end{equation}
where $\mathbf{sSets}$ is the category of simplicial sets and $X$ is such that $X(\mathbb{K})$ is contractible and $X$ preserves certain homotopy limits.
\end{defn}

\begin{rmk}
Loosely speaking, Artinian dg algebras are points with nilpotent directions in derived manifolds. Hence, studying formal moduli problems corresponds to studying the formal neighbourhoods of such points.
\end{rmk}
\begin{rmk}
We can now combine Definition \ref{formal_mod_prob} and Definition \ref{prp:art_dman}. We generalize the formal derived moduli problems by extending the functor $X$ to a functor $\mathbf{DMan}^{op}\rightarrow \mathbf{sSets}$. In this way, we can study formal moduli problems parametrized by a smooth manifold $M$ (before they were parametrized by an Artinian algebra).
\end{rmk}

\subsection{Derived stacks}
In this section, we are going to introduce briefly the \textit{derived stacks}. These are the spaces studied in derived algebraic geometry. 

Recall the functor of points approach in algebraic geometry: a scheme can be defined as a functor from the category of commutative $\mathbb{K}$-algebras, i.e. $\mathbf{CAlg}_{\mathbb{K}}$, to the category of sets. Motivated by the study of moduli problems, since the focus was to classify objects with their isomorphisms, the target category was extended to the category of groupoids (a small category whose morphisms are invertible). These new functors were called stacks. A further generalization is called \textit{higher stacks}, where the interest is to classify objects up to a higher notion of equivalence rather than isomorphisms (e.g. quasi-isomorphisms). The target category in this case is extended to the category of simplicial sets. Finally derived stacks (or derived higher stacks) arrive when we enlarge the source category to $\mathbf{DCAlg}_{\mathbb{K}}$, i.e. the category of simplicial commutative $\mathbb{K}$-algebras. This category has a natural model category structure, which allows to do homotopy theory. Hence, derived stacks are defined as functors $\mathbf{DCAlg}_{\mathbb{K}}\rightarrow \mathbf{sSets}$ which send equivalences in the source category to weak homotopy equivalences on the target and satisfy a \textit{descent} condition \cite{To06} . 

The related definition in Costello's approach \cite{Co11a,Co11b} is similar, the only difference is for the source category which is the category of derived manifolds $\mathbf{DMan}^{op}$.

% Let us define it precisely, but first we need to make $\mathbf{DMan}$ into a \textit{site} \cite{GG14}.

% \begin{defn}[Covering]
% \label{covering}
% A covering of a derived manifold $(M,\A)$ is a collection $\{(U_i,\A_i)\}$ of derived manifolds such that $\{U_i\}$ is an open cover of $M$, and we have morphisms of derived manifolds $\{(f_i,\phi_i)\}$ where $f_i:U_i\hookrightarrow M$ is the inclusion and $\phi_i:f^{-1}\A\rightarrow\A_i$ is an isomorphism. 
% \end{defn}

% \begin{rmk}
% A category equipped with a covering as in Definition \ref{covering} is called a \textit{site}.
% \end{rmk}
%An equivalence $f:A\rightarrow B$ between simplicial commutative algebras is a map inducing isomorphisms $\pi_i(A;a)\simeq \pi_i(A;f(a))$ between homotopy groups, for any $i\geq 0$ and any base point $a\in A_0$, where $A_0$ is the 0th level of the simplicial set $A$.  

\begin{defn}[Derived stack]
A\textit{ derived stack} or \textit{derived space} is a functor:
\begin{equation}
    X:\mathbf{DMan}^{op}\rightarrow\mathbf{sSets}
\end{equation}
such that:
\begin{itemize}
    \item $X$ takes weak equivalences of derived manifolds to weak equivalences of simplicial sets.
    \item $X$ satisfies \v{C}ech descent.
\end{itemize}
\end{defn}

The notion of \v{C}ech descent is outside the scope of the present work, we refer to \cite{GG14,Ste17} for a definition. 

In the following, we will study a particular type of derived stack with a geometric interpretation, i.e. the derived stack represented by $\Linf$-spaces.

\subsection{\texorpdfstring{$L_{\infty}$}{L}-spaces}
The heart of the philosophy of deformation theory consists of the following statement: ``every formal derived moduli problem is represented by an $\Linf$-algebra". For the explicit statement see \cite{Lu10}. We will see how this works in our setting, but before, we need some definitions.

\begin{defn}[Curved $\Linf$-algebra over $A$]
Let $A$ be a commutative differential graded algebra (cdga) with a nilpotent dg ideal $I$ and $A^{\#}$ be the underlying graded algebra, with zero differential. A \textit{curved $\Linf$-algebra over \textup{$A$}} is a finitely generated projective $A^{\#}$-module $V$ together with a derivation of cohomological degree 1:
\begin{equation}
    d:\reallywidehat{\Sym}^\bullet(V[1]^\vee)\rightarrow\reallywidehat{\Sym}^\bullet(V[1]^\vee)
\end{equation}
such that:
\begin{itemize}
    \item $(\reallywidehat{\Sym}^\bullet(V[1]^\vee),d)$ is a cdga over $A$.
    \item $d$ preserves the ideal $\Sym^{>0}(V[1]^\vee)$ modulo the nilpotent ideal $I$.
\end{itemize}
\end{defn}

\begin{rmk}
If we take the Taylor components of $d$, we obtain the maps
\begin{equation}
    d_n: V^\vee\rightarrow\bigwedge\nolimits^n(V[n-2]^\vee),\quad n\geq 0
\end{equation}
which, upon dualization, become the $\Linf$-brackets
\begin{equation}
    \ell_n: \bigwedge\nolimits^n(V[n-2])\rightarrow V
\end{equation}
\end{rmk}

\begin{notat}
The completed symmetric algebra over $A^{\#}$ of the 1-shifted dual of $V$, i.e.  $\reallywidehat{\Sym}^\bullet(V[1]^\vee)$, is known as the \textit{Chevalley--Eilenberg complex} of $V$, and denoted in the following by $C^{\bullet}(V)$.
\end{notat}

\begin{defn}[Morphism of curved $\Linf$-algebras]
A \textit{morphism of curved $\Linf$-algebras} $\phi:V\rightarrow W$ over $A$ is a map $\pi^*:C^\bullet(W)\rightarrow C^\bullet(V)$ of cdga's over $A$ which respects the filtration by the ideal $I$.
\end{defn}

\begin{defn}[Maurer--Cartan element, Maurer--Cartan equation]
Let $\mathfrak{g}$ be the curved $\Linf$-algebra and $\alpha$ a degree 1 element of $\mathfrak{g}$. The \textit{Maurer--Cartan element} associated to $\alpha$ is
\begin{equation}
\label{MC}
    MC(\alpha)=\sum^{\infty}_{n=0}\frac{1}{n!}\ell_n(\alpha^{\otimes n}).
\end{equation}
The \textit{Maurer--Cartan equation} for $\alpha$ is $MC(\alpha)=0$.
\end{defn}

\begin{rmk}
To render \eqref{MC} well-defined, since it involves an infinite sum, we will only consider Maurer--Cartan elements in \textit{nilpotent} $\Linf$-algebras.
\end{rmk}

As we have preannounced before, formal moduli problems are described by $\Linf$-algebras. Explicitly, let $\mathfrak{g}$ be an $\Linf$-algebra and $R$ be an Artinian dg algebra over $\mathbb{R}$, with maximal ideal $\mathfrak{m}$. Then, we can associate a formal derived moduli problem to $\mathfrak{g}$ by sending $R$ to the simplicial set
\begin{equation}
    MC(\mathfrak{g}\otimes_{\mathbb{R}}\mathfrak{m})
\end{equation}
of solution of the Maurer--Cartan equation of the nilpotent $\Linf$-algebra $\mathfrak{g}\otimes_{\mathbb{R}}\mathfrak{m}$. More precisely, the Maurer--Cartan functor $MC_{\mathfrak{g}}$ is a formal derived moduli problem (see \ref{formal_mod_prob}). Moreover, if we send an $\Linf$-algebra to its associated Maurer--Cartan functor, we obtain an equivalence of categories \cite{Lu10}.

However, we are interested in $\Linf$-algebras parametrized by smooth manifolds and derived stacks, the global counterparts to the notion of formal moduli problems, which has a local nature.

\begin{defn}[curved $\Linf$-algebra over $\Omega^\bullet(X)$, $\Linf$-space]
Let X be a smooth manifold.
\begin{enumerate}
    \item A \textit{curved $\Linf$-algebra over $\Omega^\bullet(X)$} consists of a $\mathbb{Z}$-graded topological\footnote{By topological, we mean that the fibers are topological vector spaces and the transition maps are continuous.} vector bundle $\pi: V\rightarrow X$ and the structure of an $\Linf$-algebra on its sheaf of smooth sections, denoted by $\mathfrak{g}$, where the base algebra is over $\Omega^\bullet(X)$ with nilpotent ideal $I=\Omega^{\geq 1}(X)$.
    \item An $\Linf$-space is a pair $(X,\mathfrak{g})$, where $\mathfrak{g}$ is a curved $\Linf$-algebra over $\Omega^\bullet(X)$.
\end{enumerate}
\end{defn}

Now, we will explain how every $\Linf$-space defines a derived stack. It works in the same manner as $\Linf$-algebras determine formal moduli problems. Let $B\mathfrak{g}\coloneqq(X,\mathfrak{g})$ denote an $\Linf$-space. For a smooth map $f:Y\rightarrow X$, a curved $\Linf$-algebra is formed over $\Omega^\bullet(Y)$ as
\begin{equation}
    f^*\mathfrak{g}\coloneqq f^{-1}\mathfrak{g}\otimes_{f^{-1}\Omega^\bullet(X)}\Omega^\bullet(Y).
\end{equation}

\begin{defn}[$B\mathfrak{g}$ functor of points, \cite{Co11b,GG14}]
Let $\Delta^n$ be the standard $n$-simplex in $\mathbb{R}^n.$ For $B\mathfrak{g}$ an $\Linf$-space, its \textit{functor of points} is the functor
\begin{equation}
    MC_{B\mathfrak{g}}:\mathbf{DMan}^{op}\rightarrow\mathbf{sSets},
\end{equation}
which sends a derived manifold $\mathcal{M}=(M,\A)$ to the simplicial set $ MC_{B\mathfrak{g}}(\mathcal{M})$ whose $n$-simplices are given by pairs $(f,\alpha)$ where $f:M\rightarrow X$ is a smooth map and $\alpha$ a solution of the Maurer--Cartan equation in the nilpotent curved $\Linf$-algebra $f^*\mathfrak{g}\otimes_{\Omega^\bullet(M)}I_{\mathcal{M}}\otimes_{\mathbb{R}}\Omega^\bullet(\Delta^n)$, where $I_{\mathcal{M}}$ is the nilpotent ideal. In other words,
\begin{equation}
\label{mc_func_points}
    MC_{B\mathfrak{g}}(\mathcal{M})=\bigsqcup_{f\in \mathcal{C}^\infty(M,X)}MC(f^*\mathfrak{g}\otimes_{\Omega^\bullet(M)}I_{\mathcal{M}}).
\end{equation}
\end{defn}

\begin{thm}[\cite{GG14}]
For any $\Linf$-space, its functor of points is a derived stack.
\end{thm}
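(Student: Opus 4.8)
The plan is to verify directly the two defining conditions of a derived stack for the functor $MC_{B\mathfrak{g}}$: that it sends weak equivalences of derived manifolds to weak equivalences of simplicial sets, and that it satisfies \v{C}ech descent. The essential technical inputs are the homotopy invariance of the Maurer--Cartan simplicial set under filtered quasi-isomorphisms of nilpotent (complete, filtered) $\Linf$-algebras --- a classical fact in deformation theory (Goldman--Millson, Hinich) --- and the fact that the sheaf of smooth sections of a vector bundle, being a module over $\mathcal{C}^\infty$, is fine, so its \v{C}ech complex computes global sections without higher cohomology.

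\textbf{Weak equivalences.} Let $(f,\phi)\colon \mathcal{M}=(M,\A)\to\mathcal{N}=(N,\mathcal{B})$ be a weak equivalence, so $f$ is a diffeomorphism and $\Grad\phi$ is a quasi-isomorphism. Since $f$ is a diffeomorphism, precomposition with $f$ identifies $\mathcal{C}^\infty(N,X)\cong\mathcal{C}^\infty(M,X)$, so the disjoint unions in \eqref{mc_func_points} are indexed compatibly and it suffices to show that, for each smooth $g\colon N\to X$, the $\Omega^\bullet(N)$-algebra morphism $\phi$ induces a weak equivalence
\begin{equation}
    MC\big(g^*\mathfrak{g}\otimes_{\Omega^\bullet(N)}I_{\mathcal{N}}\big)\;\longrightarrow\;MC\big((g\circ f)^*\mathfrak{g}\otimes_{\Omega^\bullet(M)}I_{\mathcal{M}}\big).
\end{equation}
Both sides are Maurer--Cartan simplicial sets of nilpotent $\Linf$-algebras carrying the decreasing filtration by powers of the nilpotent ideals $I_{\mathcal{N}}$, $I_{\mathcal{M}}$, which is finite because these ideals are nilpotent. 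On associated graded the map is, after the identification by $f$, the tensor of $\Grad\phi$ with the finitely generated projective module $g^*\mathfrak{g}$, hence again a quasi-isomorphism; a finite induction along the filtration together with the homotopy invariance of $MC$ for filtered quasi-isomorphisms then yields that the displayed map is a weak equivalence.

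\textbf{\v{C}ech descent.} Fix a derived manifold $\mathcal{M}=(M,\A)$ and an open cover $\{U_i\}$ of $M$, and write $\mathcal{M}_I=(U_I,\A|_{U_I})$ for the induced derived manifolds over finite intersections $U_I$. One must show $MC_{B\mathfrak{g}}(\mathcal{M})$ is the homotopy limit of the cosimplicial diagram $n\mapsto \prod_{|I|=n+1}MC_{B\mathfrak{g}}(\mathcal{M}_I)$. First, $U\mapsto\mathcal{C}^\infty(U,X)$ is a sheaf of sets, so smooth maps $M\to X$ glue; this reduces the claim to fixing $f\colon M\to X$ and proving that for the sheaf of nilpotent $\Linf$-algebras $U\mapsto (f|_U)^*\mathfrak{g}\otimes_{\Omega^\bullet(U)}I_{\mathcal{M}|_U}$ the global Maurer--Cartan space is the homotopy limit of the $MC$-spaces over the \v{C}ech nerve. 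Since $f^*\mathfrak{g}\otimes I_{\mathcal{M}}$ is a module over the sheaf $\mathcal{C}^\infty_M$ and hence fine, its \v{C}ech complex with respect to $\{U_i\}$ resolves its global sections; combining this with a Fubini-type statement for the Maurer--Cartan functor applied to a Reedy-fibrant replacement of the cosimplicial diagram of simplicial sets obtained by applying $MC$ to the \v{C}ech nerve gives the required equivalence $MC_{B\mathfrak{g}}(\mathcal{M})\simeq\mathrm{holim}_{\Delta}\,\prod_{I}MC_{B\mathfrak{g}}(\mathcal{M}_I)$. This is precisely the argument of \cite{GG14}, built on \cite{Co11a,Co11b}, to which I would refer for the model-categorical details.

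\textbf{Main obstacle.} The first part is essentially formal once homotopy invariance of $MC$ is granted. The genuine work is the \v{C}ech descent: one must make the homotopy limit precise (Reedy-fibrant replacements of the cosimplicial diagram), check that tensoring with the simplicial forms $\Omega^\bullet(\Delta^\bullet)$ and forming $MC$ commute with these homotopy limits over the nerve, and use the softness of $\mathcal{C}^\infty_M$-modules in an essential way to annihilate higher \v{C}ech cohomology. One should also record that $MC_{B\mathfrak{g}}$ is a well-defined functor on $\mathbf{DMan}^{op}$ (independent of representatives of morphisms and compatible with composition), which is routine. Accordingly I would present the weak-equivalence part in full and then invoke \cite{GG14} for descent, flagging the softness input and the $\mathrm{holim}$-interchange as the load-bearing steps.
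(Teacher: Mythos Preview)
The paper does not prove this theorem; it merely states it as a cited result from \cite{GG14} in the appendix on derived geometry, with no argument given. So there is no ``paper's own proof'' to compare against --- your proposal goes well beyond what the paper does.

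That said, your sketch is a reasonable outline of the strategy in \cite{GG14}: homotopy invariance of the Maurer--Cartan simplicial set under filtered quasi-isomorphisms handles the first axiom, and softness of $\mathcal{C}^\infty$-modules together with a homotopy-limit argument handles \v{C}ech descent. You correctly identify the descent step as the load-bearing part and the place where one should defer to the reference. For the purposes of this paper, simply citing \cite{GG14} (as the authors do) is appropriate; if you insist on including a proof sketch, what you have written is adequate, though you should be explicit that the nilpotence of $I_{\mathcal{M}}$ is what makes the filtration finite and the Goldman--Millson/Hinich invariance theorem applicable.
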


\begin{rmk}
Since in \eqref{mc_func_points} we are tensoring with the nilpotent ideal $I_{\mathcal{M}}$ instead of the whole algebra, the $\Linf$-algebra is nilpotent. This reflects the idea that the deformation functor $ MC_{B\mathfrak{g}}$ should deform only the nilpotent directions of a derived manifold. 
\end{rmk}

When $X$ is a complex manifold, the following theorem provides us with all the needed properties for its $\Linf$-space. 

\begin{thm}[\cite{Co11b}]
Let $X$ be a complex manifold and let $\Omega^{\#}(X)$ be the de Rham complex endowed with the zero differential. There exists an $\Linf$-space $X_{\Bar{\partial}}=(X,\mathfrak{g}_X)$ with the following properties:
\begin{enumerate}
    \item $\mathfrak{g}_X=\Omega^{\#}(X)\otimes_{\mathcal{C}^{\infty}(X)}T^{1,0}X[-1]$ as a $\Omega^{\#}(X)$-module.
    \item $C^\bullet(\mathfrak{g}_X)\cong \Omega^\bullet(X)\otimes_{\mathcal{C}^{\infty}(X)}\Jet^\mathrm{hol}_X$ as a $\Omega^{\bullet}(X)$-algebra.
    \item The jet prolongation map
    \begin{equation}
        \mathcal{C}^{\infty}(X)\hookrightarrow\Omega^\bullet(X)\otimes_{\mathcal{C}^{\infty}(X)}\Jet^\mathrm{hol}_X\cong C^\bullet(\mathfrak{g}_X)
    \end{equation}
    is a quasi-isomorphism of complexes of sheaves.
\end{enumerate}
\end{thm}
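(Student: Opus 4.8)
The plan is to recover Costello's $\Linf$-space by a Fedosov-type construction, following the formal-geometric machinery of Appendix~\ref{app:formal_geometry} (see also \cite{LS17,CLL17}) and the holomorphic exponential map of \cite{KA99} already used in Section~\ref{sec:globalization}. First I would choose a torsion-free affine connection $\nabla$ on $TX$ preserving the complex structure; such a connection exists precisely because $J$ is integrable, and it descends to a connection on the holomorphic tangent bundle $T^{1,0}X$. As in Appendix~\ref{app:formal_geometry} (and \cite{LS17,KA99}), $\nabla$ determines a holomorphic formal exponential map $\varphi$ and a $\pbw$-type isomorphism of filtered sheaves of commutative algebras over $\mathcal{C}^{\infty}(X)$,
\[
\mathrm{T}\varphi^{*}\colon \Jet^{\mathrm{hol}}_{X}\xrightarrow{\ \sim\ }\reallywidehat{\Sym}^{\bullet}(T^{\vee 1,0}X).
\]
The sheaf $\Jet^{\mathrm{hol}}_{X}$ carries its canonical flat (crystalline) connection, and transporting it along $\mathrm{T}\varphi^{*}$ produces a flat connection $\Gr=d_{X}+R^{\nabla}$ on $\reallywidehat{\Sym}^{\bullet}(T^{\vee 1,0}X)$, with $d_{X}=\partial+\bar\partial$ and $R^{\nabla}=-\delta+\Gamma+A^{\nabla}\in\Omega^{1}\bigl(X,\Der(\reallywidehat{\Sym}^{\bullet}(T^{\vee 1,0}X))\bigr)$, $\delta=dx^{i}\partial_{y^{i}}$, exactly the data described in the appendix. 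Flatness $\Gr^{2}=0$ is the Maurer--Cartan equation $d_{X}R^{\nabla}+\tfrac12[R^{\nabla},R^{\nabla}]=0$.

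Next I would read off the curved $\Linf$-structure. Writing $R^{\nabla}=\sum_{k\geq 0}R^{\nabla}_{(k)}$ by polynomial degree $k$ in the fibre coordinate $y$, the symbol of $R^{\nabla}_{(k)}$ is a section of $\Sym^{k}(T^{\vee 1,0}X)\otimes T^{1,0}X\otimes T^{\vee}X$; dualising these symbols gives the Taylor components of a degree-one derivation $\dce=\ddR+(R^{\nabla})^{\vee}$ of the graded algebra $C^{\bullet}(\mathfrak{g}_{X}):=\reallywidehat{\Sym}^{\bullet}_{\Omega^{\#}(X)}(\mathfrak{g}_{X}[1]^{\vee})=\Omega^{\#}(X)\otimes_{\mathcal{C}^{\infty}(X)}\reallywidehat{\Sym}^{\bullet}(T^{\vee 1,0}X)$, i.e. a family of $\Omega^{\#}(X)$-multilinear brackets $\{\ell_{k}\}_{k\geq0}$ on $\mathfrak{g}_{X}=\Omega^{\#}(X)\otimes_{\mathcal{C}^{\infty}(X)}T^{1,0}X[-1]$ with curving $\ell_{0}=-\delta$. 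The step I expect to be the main obstacle is checking that, reorganised by $y$-degree and dualised, $\Gr^{2}=0$ is exactly the tower of curved $\Linf$ relations with the correct Koszul signs and degree shifts, that $\dce$ restricts to $\ddR$ on $\Omega^{\bullet}(X)$, and that it preserves $\Sym^{>0}$ modulo the nilpotent ideal $\Omega^{\geq1}(X)$ (which follows from $\ell_{0}=-\delta$ lowering $\Sym$-degree); granting this, $\mathfrak{g}_{X}$ is a curved $\Linf$-algebra over $\Omega^{\bullet}(X)$ in the sense of Appendix~\ref{app:derived_geometry}, so $X_{\bar\partial}:=(X,\mathfrak{g}_{X})$ is an $\Linf$-space. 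Property (1) then holds by construction; property (2) follows by transporting $\dce$ back through $(\mathrm{T}\varphi^{*})^{-1}$, under which it becomes the canonical (crystalline de Rham) differential on $\Omega^{\bullet}(X)\otimes_{\mathcal{C}^{\infty}(X)}\Jet^{\mathrm{hol}}_{X}$, an isomorphism of sheaves of dg-algebras over $\Omega^{\bullet}(X)$ that is visibly independent of $\nabla$ (two choices yield gauge-equivalent $\Linf$-structures via the change-of-exponential-map formula of Appendix~\ref{app:formal_geometry}).

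Finally, for property (3), I would identify the jet prolongation with $f\mapsto\mathrm{T}\varphi^{*}f$, which lands in the $\Gr$-closed sections in form-degree zero, and prove it is a quasi-isomorphism of complexes of sheaves stalkwise: working on a polydisc, filter $(C^{\bullet}(\mathfrak{g}_{X}),\Gr)$ so that the associated graded differential is $-\delta=-dx^{i}\partial_{y^{i}}$, invoke the formal Poincar\'e lemma for $\delta$ (homotopy $\delta^{-1}=y^{i}\iota_{\partial_{x^{i}}}$, $\delta\delta^{-1}+\delta^{-1}\delta=\mathrm{id}-\pi_{0}$ with $\pi_{0}$ the projection to fibre- and form-degree zero) to see the $E_{1}$-page is concentrated in bidegree $(0,0)$ and equal to $\mathcal{C}^{\infty}_{X}$, whence the spectral sequence degenerates and $H^{\bullet}(\Gr)=\mathcal{C}^{\infty}_{X}[0]$ with the isomorphism on $H^{0}$ realised by $\mathrm{T}\varphi^{*}$. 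This is exactly the cohomology computation recalled in Appendix~\ref{app:formal_geometry}, and the holomorphic-jet refinement (Taylor-expanding only in $z$, retaining $\bar z$-dependence as smooth coefficients) makes the three items the Dolbeault counterpart of Kapranov's description in \cite{KA99}; the result is the theorem of \cite{Co11b}.
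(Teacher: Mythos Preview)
The paper does not actually prove this theorem: it is stated in Appendix~\ref{app:derived_geometry} purely as a quoted result from \cite{Co11b}, with no proof or sketch given, as part of the background material on derived geometry. There is therefore no ``paper's own proof'' to compare your proposal against.

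That said, your proposal is a coherent and essentially correct reconstruction of the argument, and it is very much in the spirit of what the paper does elsewhere. You are using precisely the formal-geometric machinery of Appendix~\ref{app:formal_geometry} (formal exponential map, Grothendieck connection $\Gr=d_X+R^\nabla$, the homotopy $\delta^{-1}$ and the computation $H^\bullet_{\Gr}\cong\mathcal{C}^\infty(X)$) together with the holomorphic refinement of Section~\ref{sec:globalization}, and then dualising the Taylor components of $R^\nabla$ to read off the curved $\Linf$-brackets on $\mathfrak{g}_X$. This is exactly the mechanism the paper invokes in Section~\ref{comp:sec:summary} when it identifies the Fedosov connection of \cite{CLL17} with the classical Grothendieck connection and observes (Remark after Eq.~\eqref{class:Fed_conn_expl}) that the two yield the same $\Linf$-products. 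So while the paper treats Costello's theorem as a black box, your approach is the natural one suggested by the paper's own formalism, and it matches the arguments in \cite{KA99,CLL17,GG14} that the paper cites.
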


\subsection{Derived mapping spaces}
\label{der_map_spaces}
For an $\Linf$-space, we can think of its functor of points $MC_{B\mathfrak{g}}$ as the derived stack of maps into $B\mathfrak{g}$. With this idea in mind, in this section, we will see that if $(M,\A)$ is a derived manifold, a subset of the space of maps $(M,\A)\rightarrow(X,\mathfrak{g})$ is itself represented by an $\Linf$-space. 
Hence, let us define a new \textit{simplicial presheaf} (see \cite{GG14}) on the site of $\mathbf{DMan}$ given by
\begin{alignat}{1}
MC^{\mathcal{M}}_{B\mathfrak{g}}: \mathbf{DMan}^{op}&\rightarrow\mathbf{sSets},\\
\mathcal{N}&\mapsto MC_{B\mathfrak{g}}(\mathcal{M}\times \mathcal{N}).
\end{alignat}
This functor is again a derived stack since $MC_{B\mathfrak{g}}$ is a derived stack. In particular, this is the derived stack of maps from $\mathcal{M}$ to $B\mathfrak{g}$. In perturbation theory, when we have a space of fields given by the space of maps $\mathcal{M}\rightarrow B\mathfrak{g}$, we perturb around a subset of these maps, usually the constant maps. Here we do the same. We will consider the sub-simplicial presheaf
\begin{equation}
    \reallywidehat{MC}^{\mathcal{M}}_{B\mathfrak{g}}(\mathcal{N})\subset MC^{\mathcal{M}}_{B\mathfrak{g}}(\mathcal{N})
\end{equation}
of Maurer--Cartan solutions in which the underlying smooth map $M\rightarrow X$ is constant. More precisely, for an auxiliary derived manifold $(\mathcal{N}, \mathcal{B})$, consider
\begin{equation}
    \reallywidehat{MC}^{\mathcal{M}}_{B\mathfrak{g}}(\mathcal{N})\subset  MC(\mathcal{N}\times \mathcal{M})
\end{equation}
which consists of Maurer--Cartan elements $(f, \alpha)$ such that the underlying smooth map $f:N\times M\rightarrow X$ factors through the projection onto $M$. Costello showed that this space is itself an $\Linf$-space under certain conditions. This is useful for us since we would like $ \reallywidehat{MC}^{\mathcal{M}}_{B\mathfrak{g}}$ to represent the space of fields of a classical field theory.

\begin{prp}[\cite{Co11a,Co11b}]
Let $\mathcal{M}=(M,\A)$ be a derived manifold with nilpotent sheaf of ideals $I$ and with the property that, if $\A$ is filtered by the powers of the nilpotent ideal, then the
cohomology $\Gr^i\A$ f ( M) for $i\geq 1$ is concentrated in degrees $\geq 1$. \\
Let $(X, \mathfrak{g})$ be an $\Linf$-space such that the cohomology of the sheaf of $\Linf$-algebras $\mathfrak{g}_{\mathrm{red}}\coloneqq\mathfrak{g}/\Omega^{\geq 1}(X)$.\\
Then, the restricted Maurer--Cartan functor $\reallywidehat{\text{MC}}^{\mathcal{M}}_{B\mathfrak{g}}$ is weakly equivalent to the functor of points for the $\Linf$-space $(X, \mathfrak{g}\otimes \mathcal{A}(M))$.
\end{prp}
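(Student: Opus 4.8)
The plan is to reconstruct Costello's argument from \cite{Co11a,Co11b}. First I would unwind both functors of points on an arbitrary test derived manifold $\mathcal{N}=(N,\mathcal{B})$ with nilpotent ideal $I_{\mathcal{N}}$. By definition, an $n$-simplex of $\reallywidehat{\text{MC}}^{\mathcal{M}}_{B\mathfrak{g}}(\mathcal{N})$ is a pair $(f,\alpha)$, where $f\colon N\times M\to X$ is a smooth map factoring through the projection $N\times M\to M$ and $\alpha$ is a Maurer--Cartan element of the nilpotent curved $\Linf$-algebra $f^*\mathfrak{g}\otimes_{\Omega^\bullet(N\times M)}\big(I_{\mathcal{N}}\,\hat\otimes\,\mathcal{A}(M)\big)\otimes\Omega^\bullet(\Delta^n)$. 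The factorization hypothesis is the key simplification: since $f$ is constant along $N$, its restriction is determined by a map $N\to X$, and $f^*\mathfrak{g}$ becomes (the pullback to $N$ of) $\mathfrak{g}$ with coefficients enlarged by $\mathcal{A}(M)$. Carrying this through identifies, on the nose, the $n$-simplices of $\reallywidehat{\text{MC}}^{\mathcal{M}}_{B\mathfrak{g}}(\mathcal{N})$ with those of the functor of points of the $\Linf$-space $(X,\mathfrak{g}\otimes\mathcal{A}(M))$ evaluated at $\mathcal{N}$.

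Second, I would check that $\mathfrak{g}\otimes\mathcal{A}(M)$ is indeed a curved $\Linf$-algebra over $\Omega^\bullet(X)$ with nilpotent ideal $\Omega^{\geq 1}(X)$. The brackets $\ell_k$ of $\mathfrak{g}$ extend $\mathcal{A}(M)$-multilinearly, with Koszul signs governed by the cohomological grading on $\mathcal{A}(M)$; the local-freeness and finite-rank hypothesis on $\mathcal{A}$ as a $\mathcal{C}^\infty(M)$-module ensures that the relevant module stays finitely generated projective, so the Chevalley--Eilenberg construction goes through and the generalized Jacobi identities are inherited. This is the point at which the cohomological hypotheses on $\Grad^i\mathcal{A}$ (for $i\geq 1$) and on $\mathfrak{g}_{\mathrm{red}}=\mathfrak{g}/\Omega^{\geq 1}(X)$ must be used: concentration of these cohomologies in the correct range is exactly what forces the spectral sequence of the nilpotent filtration to behave, so that the Maurer--Cartan functor deforms only nilpotent directions and the tautological comparison map is a \emph{weak} equivalence of simplicial presheaves rather than merely a levelwise map. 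Here I would invoke Costello's homotopy-invariance results for the Maurer--Cartan functor directly rather than reprove them.

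Finally, to promote this to an equivalence of derived stacks I would restrict attention to Artinian dg algebras, equivalently to $\Spec R$ with $R\in\mathbf{dgArt}$ via Proposition \ref{prp:art_dman}, and use that $\Linf$-algebras are equivalent to formal moduli problems \cite{Lu10}: it then suffices to see that the natural transformation induces a quasi-isomorphism of tangent complexes at each $x\in X$, and both tangent complexes are visibly $\mathfrak{g}_x\otimes\mathcal{A}(M)$ with the same (transferred) differential. A standard Artinian induction, filtering $R$ by powers of its maximal ideal and comparing fibers of the associated Postnikov-type towers, upgrades this to a weak equivalence on all $R$, and hence (since both functors are already derived stacks and satisfy \v{C}ech descent) on all of $\mathbf{DMan}^{op}$. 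The main obstacle I anticipate is the second step: one must be careful that the naive tensoring $\mathfrak{g}\otimes\mathcal{A}(M)$ is compatible with the homotopy-transferred (Fedosov-type) $\Linf$-structure actually used to present $\mathfrak{g}$, and that the degree hypotheses are precisely what is needed to control the higher homotopy groups of the Maurer--Cartan simplicial sets --- everything else is bookkeeping.
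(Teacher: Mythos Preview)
The paper does not prove this proposition at all: it is stated in Appendix~C as a background result and attributed to Costello via the citation \cite{Co11a,Co11b}, with no argument given. There is therefore no ``paper's own proof'' to compare your attempt against; whatever you write would be supplying something the paper deliberately omits.

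That said, two remarks on your sketch. First, there is a direction slip (possibly inherited from an apparent typo in the paper itself): the restriction $\reallywidehat{MC}^{\mathcal{M}}_{B\mathfrak{g}}$ is to maps $f\colon N\times M\to X$ that are constant along $M$, i.e.\ factor through the projection $N\times M\to N$, so that $f$ is determined by a smooth map $N\to X$; this is what makes the comparison with the functor of points of $(X,\mathfrak{g}\otimes\mathcal{A}(M))$ (whose base is $X$, probed by $N$) sensible. Your sentence ``since $f$ is constant along $N$, its restriction is determined by a map $N\to X$'' is self-contradictory as written. Second, the heart of Costello's argument is an obstruction-theoretic induction up the nilpotent filtration, and the degree hypotheses on $\Grad^i\mathcal{A}$ and on $\mathfrak{g}_{\mathrm{red}}$ are used precisely to guarantee that the relevant obstruction groups vanish at each stage; you gesture at this with ``spectral sequence of the nilpotent filtration'' and ``Artinian induction'', which is the right shape, but a complete proof would need to make that step explicit rather than defer to unspecified ``homotopy-invariance results''.
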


\begin{notat}
From now on, we will write $\reallywidehat{\Maps}(\mathcal{M}, B\mathfrak{g})$ for the $\Linf$-space $(X,\mathcal{A}(M)\otimes \mathfrak{g})$.
\end{notat}

\subsection{Shifted symplectic structures}
In \cite{Sch93}, Schwarz gave a definition of shifted symplectic structure on a dg manifold. Since in dg manifolds, all spaces of tensors are cochain complexes, the space of $i$-forms $\Omega^i(M)$ on a dg manifold is a cochain complex with a differential called the \textit{internal} differential. Moreover, we have also the de Rham differential $d_\mathrm{dR}:\Omega^{i}(M)\rightarrow\Omega^{i+1}(M)$. According to Schwarz a symplectic form is an element of the complex of 2-forms that is both, closed with respect to the de Rham differential and the internal differential. This element is also required to be non-degenerate. If the internal degree\footnote{The internal degree is also called ``ghost" degree.} of the form is $k$, it is said to be $k$-shifted. The physical relevant degrees are $-1$ for the space of bulk fields (BV formalism) and 0 for the boundary phase space (BFV formalism). 

In \cite{PTTV13}, Pantev et al. gave a definition of shifted symplectic structure in the context of derived geometry using the language of derived Artin stacks. Here, a closed $2$-form is a cocycle in the truncated de Rham complex
\begin{equation}
    \Omega^2_{cl}(X)\coloneqq \bigg(\bigoplus_{k\geq 2} \Omega^k(X)[-k+2], d_\mathrm{dR} \bigg)
\end{equation}
shifted in such a way that $\Omega^2(X)$ is in degree zero. 
\begin{rmk}
Note that a closed $2$-form is given by a sequence of forms $(\omega_2,\dots, \omega_k,\dots)$, with $\omega_k$ a form of degree $k$ and finitely many nonzero forms, such that $d_\mathrm{dR}\omega_k=\pm d_{\mathrm{int}}\omega_{k+1}$ for $k\geq 2$, where $d_\mathrm{int}$ is the internal differential. Therefore, to say that a 2-form is closed we need to specify more data than just a 2-form. Hence, being closed is a datum, it is not anymore a property like in the smooth case. 
\end{rmk}
In particular, a 2-form is symplectic when it is non-degenerate in a suitable sense (see \cite{PTTV13}). 

In \cite{CG16}, it is shown that a symplectic form of degree $k$ in the sense of Schwarz is the same as a degree $k-2$ non-degenerate invariant symmetric pairing on $\mathfrak{g}$. Moreover, we have the following lemma that closes the circles between all these apparently different notions of a symplectic form.
\begin{lmm}[\cite{CG16}]
Let $\mathfrak{g}$ be a finite dimensional $\Linf$-algebra. A $k$-shifted symplectic structure in the sense of \cite{PTTV13} on $B\mathfrak{g}$ is the same as a degree $k-2$ non-degenerate invariant symmetric pairing on $\mathfrak{g}$. 
\end{lmm}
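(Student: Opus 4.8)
The statement to be proved is the Lemma of \cite{CG16}: for a finite-dimensional $\Linf$-algebra $\mathfrak{g}$, a $k$-shifted symplectic structure on $B\mathfrak{g}$ in the sense of \cite{PTTV13} is the same datum as a degree $k-2$ non-degenerate invariant symmetric pairing on $\mathfrak{g}$. The plan is to unwind both sides of the claimed equivalence in terms of the Chevalley--Eilenberg complex $C^\bullet(\mathfrak{g}) = \reallywidehat{\Sym}^\bullet(\mathfrak{g}[1]^\vee)$ and then match them term by term. First I would recall that, since $\mathfrak{g}$ is finite-dimensional, $B\mathfrak{g}$ is a formal graded manifold whose algebra of functions is exactly $C^\bullet(\mathfrak{g})$, with the Chevalley--Eilenberg differential $\dce$ encoding the $\Linf$-brackets $\ell_n$. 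The de Rham complex of $B\mathfrak{g}$ is then $\Omega^\bullet(B\mathfrak{g}) = C^\bullet(\mathfrak{g}) \otimes \reallywidehat{\Sym}^\bullet(\mathfrak{g}[1]^\vee[-1])$, equipped with two anticommuting differentials: the internal one $d_{\mathrm{int}}$ induced by $\dce$, and the de Rham differential $\ddR$.

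The key steps, in order, are as follows. (i) Translate the notion of a closed $2$-form of degree $k$ on $B\mathfrak{g}$: by \cite{PTTV13} this is a sequence $(\omega_2, \omega_3, \dots)$ with $\omega_p \in \Omega^p(B\mathfrak{g})$ of total degree $k$, finitely many nonzero, satisfying $\ddR \omega_p = \pm d_{\mathrm{int}} \omega_{p+1}$. (ii) Use the formality/linearity of $B\mathfrak{g}$ (it is a formal neighborhood of a point, so it carries an Euler vector field and the de Rham complex is formal) to argue that every such closed form is equivalent, up to the relevant homotopy, to one concentrated in $\omega_2$ with \emph{constant} coefficients, i.e.\ an element of $\Sym^2(\mathfrak{g}[1]^\vee)$ placed in the appropriate internal degree; this is the standard Poincaré-lemma-type reduction for linear formal dg manifolds. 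A constant $2$-form of degree $k$ is precisely a graded-symmetric bilinear form on $\mathfrak{g}[1]$ of degree $k$, which after the degree shift is a graded-symmetric pairing on $\mathfrak{g}$ of degree $k-2$. (iii) Impose $d_{\mathrm{int}}\omega_2 = 0$: spelling out the action of $\dce$ on $\Sym^2(\mathfrak{g}[1]^\vee)$ shows this is equivalent to invariance of the pairing under all the brackets $\ell_n$ (the $\ell_1$-part gives compatibility with the differential, the $\ell_2$-part gives the usual ad-invariance, and the higher $\ell_n$ give the higher cyclicity conditions). (iv) Finally, non-degeneracy of the shifted symplectic form in the sense of \cite{PTTV13} — that the induced map $T_{B\mathfrak{g}} \to \mathbb{L}_{B\mathfrak{g}}[k]$ is a quasi-isomorphism — unwinds, for a constant $2$-form on the linear object $B\mathfrak{g}$, to the statement that the pairing $\mathfrak{g} \otimes \mathfrak{g} \to \mathbb{R}[k-2]$ is non-degenerate as a pairing of graded vector spaces.

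For the converse direction one runs the same dictionary backwards: given a non-degenerate invariant symmetric pairing of degree $k-2$ on $\mathfrak{g}$, one produces the constant $2$-form $\omega_2$, checks $d_{\mathrm{int}}\omega_2 = 0$ from invariance, sets $\omega_{\geq 3} = 0$ (which trivially satisfies the closedness cocycle conditions), and checks non-degeneracy. One should also verify that the two assignments are mutually inverse up to the equivalence on closed forms, which amounts to the uniqueness half of the Poincaré-lemma reduction in step (ii). I would also invoke the result quoted just before the lemma — that for graded (dg) manifolds a degree-$k$ symplectic form in Schwarz's sense is the same as a degree $k-2$ non-degenerate invariant symmetric pairing on $\mathfrak{g}$ — so that the real content is reconciling the \cite{PTTV13} notion of closedness-as-data with Schwarz's notion, which is exactly the contractibility argument in step (ii).

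\textbf{Main obstacle.} The routine parts are steps (i), (iii) and the converse; the genuine technical heart is step (ii): showing that on the linear formal dg manifold $B\mathfrak{g}$ every closed $2$-form of a fixed degree is, up to the appropriate equivalence of closed forms, represented by a single constant $2$-form. This is where the finite-dimensionality of $\mathfrak{g}$ and the formality of the de Rham complex of a linear formal (dg) manifold are used — essentially a graded, $\dce$-equivariant Poincaré lemma together with a spectral-sequence or Euler-vector-field contraction argument to kill the higher components $\omega_{\geq 3}$ and the non-constant part of $\omega_2$. I would structure this as a short lemma of its own, citing the analogous statement in \cite{PTTV13} or \cite{CG16} for the de Rham complex of a formal derived stack, and then feeding its output into the term-by-term matching above.
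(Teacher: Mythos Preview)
The paper does not give its own proof of this lemma: it is stated in the appendix on derived geometry as a result quoted from \cite{CG16}, with no argument supplied beyond the surrounding remark that ``in \cite{CG16}, it is shown that a symplectic form of degree $k$ in the sense of Schwarz is the same as a degree $k-2$ non-degenerate invariant symmetric pairing on $\mathfrak{g}$.'' There is therefore nothing in the paper to compare your proposal against.

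That said, your outline is a reasonable reconstruction of the standard argument and is consistent with how the paper uses the lemma (namely, as a black box to \emph{define} a $k$-shifted symplectic structure on an $\Linf$-space as such a pairing). Your identification of step (ii) --- the Poincar\'e-lemma/Euler-vector-field contraction showing that a PTVV-closed $2$-form on the linear formal object $B\mathfrak{g}$ reduces to a constant $2$-form --- as the technical heart is correct, and citing \cite{PTTV13} or \cite{CG16} for that reduction is exactly what one does in practice. If you want to include a proof rather than a citation, the sketch you give is the right shape; just be careful in step (ii) that the equivalence you invoke is the correct one on the \emph{space} of closed $p$-forms (i.e.\ in the totalization $\Omega^{\geq 2}$ with both differentials), not merely cohomological equivalence of the underlying $2$-forms.
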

Hence, a $k$-shifted symplectic structure on an $\Linf$-space $(X,\mathfrak{g})$ can be defined to be such a pairing on $\mathfrak{g}$. 

\begin{exm}
    Consider a complex manifold $X$ of dimension $2n$, by endowing $X$ with a holomorphic symplectic form (a non-degenerate 2-form on $T^{1,0}X$ which is closed under $d_X=\partial+\Bar{\partial}$, with $\partial$ the holomorphic differential and $\Bar{\partial}$ the antiholomorphic differential), the $\Linf$-space $X_{\Bar{\partial}}$ associated to $X$ becomes 0-shifted symplectic.
\end{exm}

\section{Examples of Feynman graphs for the BFV boundary operator in the $\mathbb{B}$-representation}
\label{app:feyn}
Here, we present the graphs appearing in the BFV boundary operator in the $\mathbb{B}$-representation up to three bulk vertices (black or red) and up to the Feynman rules in Table \ref{class:Tab_coeff_split_fr}. Let us consider $\boldsymbol{\Omega}^{\mathbb{B}}_3=\boldsymbol{\Omega}^{\mathbb{B}}_{3,0}+\boldsymbol{\Omega}^{\mathbb{B}}_{2,1}+\boldsymbol{\Omega}^{\mathbb{B}}_{1,2}+\boldsymbol{\Omega}^{\mathbb{B}}_{0,3}$. We present
\begin{itemize}
    \item the graphs appearing in $\boldsymbol{\Omega}^{\mathbb{B}}_{3,0}$ in Figs. \ref{fig:omega30_1}, \ref{fig:omega30_2} and \ref{fig:omega30_3};
   \item the graphs appearing in $\boldsymbol{\Omega}^{\mathbb{B}}_{2,1}$ in Fig. \ref{fig:omega21};
      \item the graphs appearing in $\boldsymbol{\Omega}^{\mathbb{B}}_{1,2}$ in Fig. \ref{fig:omega12};
         \item the graphs appearing in $\boldsymbol{\Omega}^{\mathbb{B}}_{0,3}$ in Fig. \ref{fig:omega03}.
\end{itemize}
We note that all the boundaries in the figures are assumed to be $\partial_2\Sigma_3$.

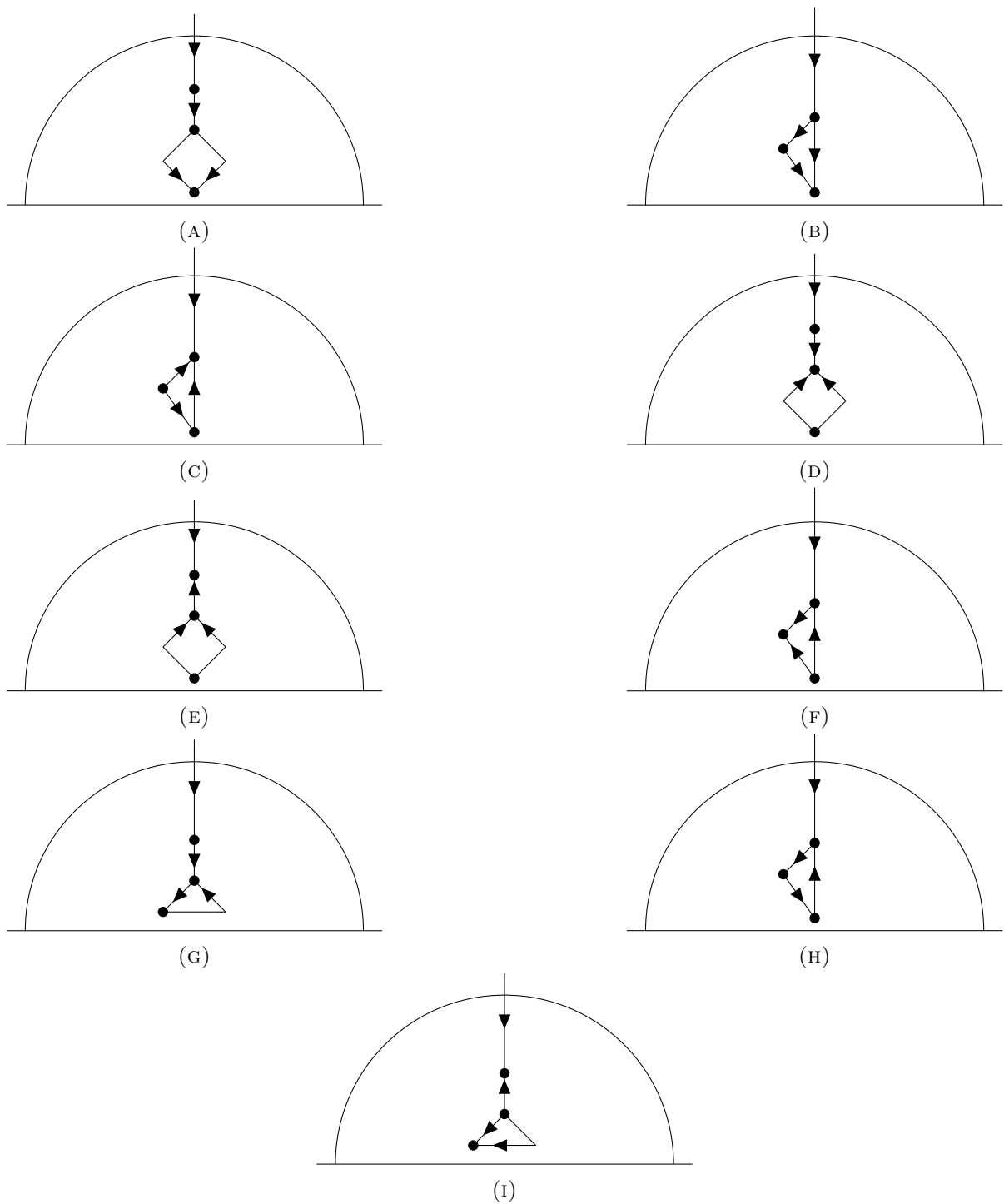
\begin{figure}[h!]
    \centering
     \begin{subfigure}[b]{0.4\textwidth}
     \centering
    \begin{tikzpicture}
  \begin{feynman}[every blob={/tikz/fill=white!30,/tikz/inner sep=0.5pt}]
  \node[dot] (d) at (0, 0.65);
  \vertex (f) at (0, 1.85);
  \vertex (a) at (-3, -1.2);
  \vertex (b) at (3, -1.2);
  \node[dot] (z) at (0, -1);
  \vertex (x) at (-0.5, -0.5);
  \vertex (y) at (0.5, -0.5);
  \diagram*{
  (f) -- [fermion] (d),
  (d) -- [fermion] m [dot],
  (a) -- (b),
   m -- (x),
   m -- (y),
   (x) -- [fermion] (z),
   (y) -- [fermion] (z),
  }; 
  %\vertex [left=0.25em of a] {\(\partial_2 \Sigma_3\)};
  \end{feynman}
  \draw (-2.7,-1.2) arc (180:0:2.7);
  \end{tikzpicture}
   \caption{}
  \label{}
    \end{subfigure}%
    \hfill
    \begin{subfigure}[b]{0.4\textwidth}
     \centering
      \begin{tikzpicture}
  \begin{feynman}[every blob={/tikz/fill=white!30,/tikz/inner sep=0.5pt}]
  \vertex (d) at (0, 1.75);
  \vertex (a) at (-3, -1.4);
  \vertex (b) at (3, -1.4);
  \node[dot] (z) at (0, -1.2);
  \node[dot] (x) at (-0.5, -0.5);
  \diagram*{
  (d) -- [fermion] m [dot],
  (a) -- (b),
   m -- [fermion] (x),
   (x) -- [fermion] (z),
   m -- [fermion] (z)
  }; 
  %\vertex [left=0.25em of a] {\(\partial_2 \Sigma_3\)};
  \end{feynman}
  \draw (-2.7,-1.4) arc (180:0:2.7);
  \end{tikzpicture}
  \caption{}
  \label{}
     \end{subfigure}%
     \hfill
     \begin{subfigure}[b]{0.4\textwidth}
     \centering
       \begin{tikzpicture}
  \begin{feynman}[every blob={/tikz/fill=white!30,/tikz/inner sep=0.5pt}]
  \vertex (d) at (0, 1.75);
  \vertex (a) at (-3, -1.4);
  \vertex (b) at (3, -1.4);
  \node[dot] (z) at (0, -1.2);
  \node[dot] (x) at (-0.5, -0.5);
  \diagram*{
  (d) -- [fermion] m [dot],
  (a) -- (b),
   m -- [anti fermion] (x),
   (x) -- [ fermion] (z),
   m -- [anti fermion] (z),
  }; 
  %\vertex [left=0.25em of a] {\(\partial_2 \Sigma_3\)};
  \end{feynman}
  \draw (-2.7,-1.4) arc (180:0:2.7);
  \end{tikzpicture}
     \caption{}
  \label{}
     \end{subfigure}%
     \hfill
     \begin{subfigure}[b]{0.4\textwidth}
     \centering
   \begin{tikzpicture}
  \begin{feynman}[every blob={/tikz/fill=white!30,/tikz/inner sep=0.5pt}]
  \node[dot] (d) at (0, 0.65);
  \vertex (f) at (0, 1.85);
  \vertex (a) at (-3, -1.2);
  \vertex (b) at (3, -1.2);
  \node[dot] (z) at (0, -1);
  \vertex (x) at (-0.5, -0.5);
  \vertex (y) at (0.5, -0.5);
  \diagram*{
  (f) -- [fermion] (d),
  (d) -- [fermion] m [dot],
  (a) -- (b),
   m -- [anti fermion] (x),
   m -- [anti fermion] (y),
   (x) --  (z),
   (y) --  (z),
  }; 
  %\vertex [left=0.25em of a] {\(\partial_2 \Sigma_3\)};
  \end{feynman}
  \draw (-2.7,-1.2) arc (180:0:2.7);
  \end{tikzpicture}
       \caption{}
  \label{}
     \end{subfigure}%
     \hfill
     \begin{subfigure}[b]{0.4\textwidth}
     \centering
   \begin{tikzpicture}
  \begin{feynman}[every blob={/tikz/fill=white!30,/tikz/inner sep=0.5pt}]
  \node[dot] (d) at (0, 0.65);
  \vertex (f) at (0, 1.85);
  \vertex (a) at (-3, -1.2);
  \vertex (b) at (3, -1.2);
  \node[dot] (z) at (0, -1);
  \vertex (x) at (-0.5, -0.5);
  \vertex (y) at (0.5, -0.5);
  \diagram*{
  (f) -- [fermion] (d),
  (d) -- [anti fermion] m [dot],
  (a) -- (b),
   m -- [anti fermion] (x),
   m -- [anti fermion] (y),
   (x) -- (z),
   (y) -- (z),
  }; 
  %\vertex [left=0.25em of a] {\(\partial_2 \Sigma_3\)};
  \end{feynman}
  \draw (-2.7,-1.2) arc (180:0:2.7);
  \end{tikzpicture}
            \caption{}
  \label{}
     \end{subfigure}%
     \hfill
     \begin{subfigure}[b]{0.4\textwidth}
     \centering
     \begin{tikzpicture}
  \begin{feynman}[every blob={/tikz/fill=white!30,/tikz/inner sep=0.5pt}]
  \vertex (d) at (0, 1.85);
  \vertex (a) at (-3, -1.4);
  \vertex (b) at (3, -1.4);
  \node[dot] (z) at (0, -1.2);
  \node[dot] (x) at (-0.5, -0.5);
  \diagram*{
  (d) -- [fermion] m [dot],
  (a) -- (b),
   m -- [ fermion] (x),
   (x) -- [ anti fermion] (z),
   m -- [ anti fermion] (z),
  }; 
  %\vertex [left=0.25em of a] {\(\partial_2 \Sigma_3\)};
  \end{feynman}
  \draw (-2.7,-1.4) arc (180:0:2.7);
  \end{tikzpicture}
           \caption{}
  \label{}
     \end{subfigure}%
     \hfill
     \begin{subfigure}[b]{0.4\textwidth}
     \centering
         \begin{tikzpicture}
  \begin{feynman}[every blob={/tikz/fill=white!30,/tikz/inner sep=0.5pt}]
  \node[dot] (d) at (0, 0.65);
  \vertex (f) at (0, 2.25);
  \vertex (a) at (-3, -0.8);
  \vertex (b) at (3, -0.8);
  \node[dot] (x) at (-0.5, -0.5);
  \vertex (y) at (0.5, -0.5);
  \diagram*{
  (f) -- [fermion] (d),
  (d) -- [fermion] m [dot],
  (a) -- (b),
   m -- [ fermion] (x),
   m -- [anti fermion] (y),
   (x) -- (y),
  }; 
  %\vertex [left=0.25em of a] {\(\partial_2 \Sigma_3\)};
  \end{feynman}
  \draw (-2.7,-0.8) arc (180:0:2.7);
  \end{tikzpicture}
     \caption{}
  \label{}
     \end{subfigure}%
     \hfill
     \begin{subfigure}[b]{0.4\textwidth}
     \centering
       \begin{tikzpicture}
  \begin{feynman}[every blob={/tikz/fill=white!30,/tikz/inner sep=0.5pt}]
  \vertex (d) at (0, 1.75);
  \vertex (a) at (-3, -1.4);
  \vertex (b) at (3, -1.4);
  \node[dot] (z) at (0, -1.2);
  \node[dot] (x) at (-0.5, -0.5);
  \diagram*{
  (d) -- [fermion] m [dot],
  (a) -- (b),
   m -- [ fermion] (x),
   (x) -- [ fermion] (z),
   m -- [ anti fermion] (z),
  }; 
  %\vertex [left=0.25em of a] {\(\partial_2 \Sigma_3\)};
  \end{feynman}
  \draw (-2.7,-1.4) arc (180:0:2.7);
  \end{tikzpicture}
\caption{}
  \label{}
     \end{subfigure}%
     \hfill
     \begin{subfigure}[b]{0.4\textwidth}
     \centering
    \begin{tikzpicture}
  \begin{feynman}[every blob={/tikz/fill=white!30,/tikz/inner sep=0.5pt}]
  \node[dot] (d) at (0, 0.65);
  \vertex (f) at (0, 2.25);
  \vertex (a) at (-3, -0.8);
  \vertex (b) at (3, -0.8);
  \node[dot] (x) at (-0.5, -0.5);
  \vertex (y) at (0.5, -0.5);
  \diagram*{
  (f) -- [fermion] (d),
  (d) -- [anti fermion] m [dot],
  (a) -- (b),
   m -- [ fermion] (x),
   m --  (y),
   (x) -- [ anti fermion] (y),
  }; 
  %\vertex [left=0.25em of a] {\(\partial_2 \Sigma_3\)};
  \end{feynman}
  \draw (-2.7,-0.8) arc (180:0:2.7);
  \end{tikzpicture}
     \caption{}
  \label{}
     \end{subfigure}%
    \caption{First part of the graphs appearing in $\boldsymbol{\Omega}^{\mathbb{B}}_{3,0}$. However, most of them do not give any contribution since by Kontsevich's vanishing lemma \cite{Ko03} all the graphs with vertices where exactly one arrow ends and starts vanish and also the graphs with double edges, i.e. when two edges connecting the same two vertices, vanish. This can be seen by using Kontsevich's angle form propagator on $\mathbb{H}^3$.}
    \label{fig:omega30_1}
\end{figure}
  
  %%%%%%%%%%%%%%%%%%%%%%

   \begin{figure}[h]
      \centering
      \begin{subfigure}[b]{0.325\textwidth}
      \centering
 \begin{tikzpicture}
  \begin{feynman}[]
  \node[dot] (d) at (0, 0.5);
  \vertex (e) at (0, 1.2);
  \vertex (a) at (-2.75, -1.4);
  \vertex (b) at (2.75, -1.4);
  \node[dot] (z) at (0, -1.2);
  \vertex (x) at (-0.5,0);
  \vertex (y) at (0.5,0);
  \node[dot] (q) at (0, -0.5);
  \diagram*{
  (x) --  (d),
  (y) --  (d),
  (a) -- (b),
  (e) --[fermion] (d),
   (q) --[anti fermion] (x),
   (q) --[anti fermion] (y),
   (q) -- [anti fermion] (z),
  }; 
  %\vertex [left=0.25em of a] {\(\partial_2 \Sigma_3\)};
  \end{feynman}
  \draw (-2.15,-1.4) arc (180:0:2.15);
  \end{tikzpicture}
       \caption{}
  \label{}
     \end{subfigure}%
     \hfill
     \begin{subfigure}[b]{0.325\textwidth}
     \centering
     \begin{tikzpicture}
  \begin{feynman}[]
  \node[dot] (d) at (0, 0.5);
  \vertex (e) at (0, 1.2);
  \vertex (a) at (-2.75, -1.4);
  \vertex (b) at (2.75, -1.4);
  \node[dot] (z) at (0, -1.2);
  \vertex (x) at (-0.5,0);
  \vertex (y) at (0.5,0);
  \node[dot] (q) at (0, -0.5);
  \diagram*{
  (x) --  (d),
  (y) --  (d),
  (a) -- (b),
  (e) --[fermion] (d),
   (q) --[anti fermion] (x),
   (q) --[anti fermion] (y),
   (q) -- [ fermion] (z),
  }; 
  %\vertex [left=0.25em of a] {\(\partial_2 \Sigma_3\)};
  \end{feynman}
  \draw (-2.15,-1.4) arc (180:0:2.15);
  \end{tikzpicture}
     \caption{}
  \label{}
     \end{subfigure}%
     \hfill
     \begin{subfigure}[b]{0.325\textwidth}
     \centering
     \begin{tikzpicture}
  \begin{feynman}[]
  \node[dot] (d) at (0, 0.5);
  \vertex (e) at (0, 1.2);
  \vertex (a) at (-2.75, -1.4);
  \vertex (b) at (2.75, -1.4);
  \node[dot] (z) at (0, -1.2);
  \vertex (x) at (-0.5,0);
  \vertex (y) at (0.5,0);
  \node[dot] (q) at (0, -0.5);
  \diagram*{
  (x) --  (d),
  (y) --  (d),
  (a) -- (b),
  (e) --[fermion] (d),
   (q) --[anti fermion] (x),
   (q) --[ fermion] (y),
   (q) -- [ fermion] (z),
  }; 
  %\vertex [left=0.25em of a] {\(\partial_2 \Sigma_3\)};
  \end{feynman}
  \draw (-2.15,-1.4) arc (180:0:2.15);
  \end{tikzpicture}
      \caption{}
  \label{}
     \end{subfigure}%
     \hfill
     \begin{subfigure}[b]{0.325\textwidth}
     \centering
         \begin{tikzpicture}
  \begin{feynman}[]
  \node[dot] (d) at (0, 0.5);
  \vertex (e) at (0, 1.2);
  \vertex (a) at (-2.75, -1.4);
  \vertex (b) at (2.75, -1.4);
  \node[dot] (z) at (0, -1.2);
  \vertex (x) at (-0.5,0);
  \vertex (y) at (0.5,0);
  \node[dot] (q) at (0, -0.5);
  \diagram*{
  (x) --  (d),
  (y) --  (d),
  (a) -- (b),
  (e) --[fermion] (d),
   (q) --[fermion] (x),
   (q) --[anti fermion] (y),
   (q) -- [anti fermion] (z),
  }; 
  %\vertex [left=0.25em of a] {\(\partial_2 \Sigma_3\)};
  \end{feynman}
  \draw (-2.15,-1.4) arc (180:0:2.15);
  \end{tikzpicture}
       \caption{}
  \label{}
     \end{subfigure}%
     \hspace{1cm}
     \begin{subfigure}[b]{0.325\textwidth}
     \centering
        \begin{tikzpicture}
  \begin{feynman}[]
  \node[dot] (d) at (0, 0.5);
  \vertex (e) at (0, 1.2);
  \vertex (a) at (-2.75, -1.4);
  \vertex (b) at (2.75, -1.4);
  \node[dot] (z) at (0, -1.2);
  \vertex (x) at (-0.5,0);
  \vertex (y) at (0.5,0);
  \node[dot] (q) at (0, -0.5);
  \diagram*{
  (x) --  (d),
  (y) --  (d),
  (a) -- (b),
  (e) --[fermion] (d),
   (q) --[fermion] (x),
   (q) --[fermion] (y),
   (q) -- [fermion] (z),
  }; 
  %\vertex [left=0.25em of a] {\(\partial_2 \Sigma_3\)};
  \end{feynman}
  \draw (-2.15,-1.4) arc (180:0:2.15);
  \end{tikzpicture}
     \caption{}
  \label{}
     \end{subfigure}%
     \caption{Second part of the graphs appearing in $\boldsymbol{\Omega}^{\mathbb{B}}_{3,0}$. All of them do not give any contribution since again by Kontsevich's vanishing lemma \cite{Ko03} all the graphs with double edges vanish.}
      \label{fig:omega30_2}
  \end{figure}
 \vfill
  %%%%%
     \begin{figure}[h]
      \centering
      \begin{subfigure}[b]{0.325\textwidth}
      \centering
 \begin{tikzpicture}
  \begin{feynman}[]
  \node[dot,red] (d) at (0, 0.5);
  \vertex (e) at (0, 1.2);
  \vertex (a) at (-2.75, -1.4);
  \vertex (b) at (2.75, -1.4);
  \node[dot] (z) at (0, -1.2);
  \vertex (x) at (-0.5,0);
  \vertex (y) at (0.5,0);
  \node[dot,red] (q) at (0, -0.5);
  \diagram*{
  (x) --  (d),
  (y) --  (d),
  (a) -- (b),
  (e) --[fermion] (d),
   (q) --[anti fermion] (x),
   (q) --[anti fermion] (y),
   (q) -- [anti fermion] (z),
  }; 
  %\vertex [left=0.25em of a] {\(\partial_2 \Sigma_3\)};
  \end{feynman}
  \draw (-2.15,-1.4) arc (180:0:2.15);
  \end{tikzpicture}
       \caption{}
  \label{}
     \end{subfigure}%
     \hfill
     \begin{subfigure}[b]{0.325\textwidth}
     \centering
     \begin{tikzpicture}
  \begin{feynman}[]
  \node[dot,red] (d) at (0, 0.5);
  \vertex (e) at (0, 1.2);
  \vertex (a) at (-2.75, -1.4);
  \vertex (b) at (2.75, -1.4);
  \node[dot] (z) at (0, -1.2);
  \vertex (x) at (-0.5,0);
  \vertex (y) at (0.5,0);
  \node[dot,red] (q) at (0, -0.5);
  \diagram*{
  (x) --  (d),
  (y) --  (d),
  (a) -- (b),
  (e) --[fermion] (d),
   (q) --[anti fermion] (x),
   (q) --[anti fermion] (y),
   (q) -- [ fermion] (z),
  }; 
  %\vertex [left=0.25em of a] {\(\partial_2 \Sigma_3\)};
  \end{feynman}
  \draw (-2.15,-1.4) arc (180:0:2.15);
  \end{tikzpicture}
     \caption{}
  \label{}
     \end{subfigure}%
     \hfill
     \begin{subfigure}[b]{0.325\textwidth}
     \centering
     \begin{tikzpicture}
  \begin{feynman}[]
  \node[dot,red] (d) at (0, 0.5);
  \vertex (e) at (0, 1.2);
  \vertex (a) at (-2.75, -1.4);
  \vertex (b) at (2.75, -1.4);
  \node[dot] (z) at (0, -1.2);
  \vertex (x) at (-0.5,0);
  \vertex (y) at (0.5,0);
  \node[dot,red] (q) at (0, -0.5);
  \diagram*{
  (x) --  (d),
  (y) --  (d),
  (a) -- (b),
  (e) --[fermion] (d),
   (q) --[anti fermion] (x),
   (q) --[ fermion] (y),
   (q) -- [ fermion] (z),
  }; 
  %\vertex [left=0.25em of a] {\(\partial_2 \Sigma_3\)};
  \end{feynman}
  \draw (-2.15,-1.4) arc (180:0:2.15);
  \end{tikzpicture}
      \caption{}
  \label{}
     \end{subfigure}%
     \hfill
     \begin{subfigure}[b]{0.325\textwidth}
     \centering
         \begin{tikzpicture}
  \begin{feynman}[]
  \node[dot,red] (d) at (0, 0.5);
  \vertex (e) at (0, 1.2);
  \vertex (a) at (-2.75, -1.4);
  \vertex (b) at (2.75, -1.4);
  \node[dot] (z) at (0, -1.2);
  \vertex (x) at (-0.5,0);
  \vertex (y) at (0.5,0);
  \node[dot,red] (q) at (0, -0.5);
  \diagram*{
  (x) --  (d),
  (y) --  (d),
  (a) -- (b),
  (e) --[fermion] (d),
   (q) --[fermion] (x),
   (q) --[anti fermion] (y),
   (q) -- [anti fermion] (z),
  }; 
  %\vertex [left=0.25em of a] {\(\partial_2 \Sigma_3\)};
  \end{feynman}
  \draw (-2.15,-1.4) arc (180:0:2.15);
  \end{tikzpicture}
       \caption{}
  \label{}
     \end{subfigure}%
     \hspace{1cm}
     \begin{subfigure}[b]{0.325\textwidth}
     \centering
        \begin{tikzpicture}
  \begin{feynman}[]
  \node[dot,red] (d) at (0, 0.5);
  \vertex (e) at (0, 1.2);
  \vertex (a) at (-2.75, -1.4);
  \vertex (b) at (2.75, -1.4);
  \node[dot] (z) at (0, -1.2);
  \vertex (x) at (-0.5,0);
  \vertex (y) at (0.5,0);
  \node[dot,red] (q) at (0, -0.5);
  \diagram*{
  (x) --  (d),
  (y) --  (d),
  (a) -- (b),
  (e) --[fermion] (d),
   (q) --[fermion] (x),
   (q) --[fermion] (y),
   (q) -- [fermion] (z),
  }; 
  %\vertex [left=0.25em of a] {\(\partial_2 \Sigma_3\)};
  \end{feynman}
  \draw (-2.15,-1.4) arc (180:0:2.15);
  \end{tikzpicture}
     \caption{}
  \label{}
     \end{subfigure}%
     \caption{Graphs appearing in $\boldsymbol{\Omega}^{\mathbb{B}}_{1,2}$. All of them do not give any contribution since again by Kontsevich's vanishing lemma \cite{Ko03} all the graphs with double edges vanish.}
      \label{fig:omega12}
  \end{figure}

 %%%%%%%%%%%%%%%%%%%%%%%%%%%%%%%%%%%%%%%%%%%%%%

 \begin{figure}[h]
      \centering
      \begin{subfigure}[b]{0.4\textwidth}
  \begin{tikzpicture}
  \begin{feynman}[]
  \node[dot] (d) at (0, 1);
  \vertex (e) at (-0.5, 2);
  \vertex (f) at (0.5, 2);
  \vertex (a) at (-3, -1.4);
  \vertex (b) at (3, -1.4);
  \node[dot] (z) at (0, -0.7);
  \node[dot] (s) at (0, -1.4);
  \vertex (x) at (-0.5,-0.25);
  \vertex (q) at (0.5,-0.25);
  \node[dot] (y) at (0,0.25);
  \diagram*{
  (a) -- (b),
  (e) --[fermion] (d),
  (f) --[fermion] (d),
   (d) -- [anti fermion] (y),
   (y) --[anti fermion] (x),
   (y) --[anti fermion] (q),
   (z) -- (x),
   (z) -- (q),
   %(y) -- [anti fermion] (z),
   (z) --[fermion] (s)
  }; 
  %%\vertex [left=0.25em of a] {\(\partial_2 \Sigma_3\)};
  \end{feynman}
  \draw (-2.7,-1.4) arc (180:0:2.7);
  \end{tikzpicture}
   \caption{}
  \label{}
     \end{subfigure}%
     \hfill
    \begin{subfigure}[b]{0.4\textwidth}
      \begin{tikzpicture}
  \begin{feynman}[]
  \node[dot] (d) at (0, 1);
  \vertex (e) at (-0.5, 2);
  \vertex (f) at (0.5, 2);
  \vertex (a) at (-3, -1.4);
  \vertex (b) at (3, -1.4);
  \node[dot] (z) at (0, -0.7);
  \node[dot] (s) at (0, -1.4);
  \vertex (x) at (-0.5,-0.25);
  \vertex (q) at (0.5,-0.25);
  \node[dot] (y) at (0,0.25);
  \diagram*{
  (a) -- (b),
  (e) --[fermion] (d),
  (f) --[fermion] (d),
   (d) -- [anti fermion] (y),
   (y) --[ fermion] (x),
   (y) --[ fermion] (q),
   (z) -- (x),
   (z) -- (q),
   %(y) -- [ fermion] (z),
   (z) --[fermion] (s)
  }; 
  %%\vertex [left=0.25em of a] {\(\partial_2 \Sigma_3\)};
  \end{feynman}
  \draw (-2.7,-1.4) arc (180:0:2.7);
  \end{tikzpicture}
        \caption{}
  \label{}
     \end{subfigure}%
     \hfill
   \begin{subfigure}[b]{0.4\textwidth}
      \begin{tikzpicture}
  \begin{feynman}[]
  \node[dot] (d) at (0, 1);
  \vertex (e) at (0, 2);
  \vertex (a) at (-3, -1.4);
  \vertex (b) at (3, -1.4);
  \node[dot] (z) at (0, -0.7);
  \node[dot] (s) at (0, -1.4);
  \vertex (x) at (0.5,0.5);
  \vertex (p) at (-0.5,0.5);
  \vertex (y) at (-2,1.75); 
  \diagram*{
  (a) -- (b),
  (e) --[fermion] (d),
  (p) -- (d),
  (x) -- (d),
  (p) --[fermion] m[dot],
  (x) --[fermion] m,
   m -- [anti fermion] (z),
   (y) -- [fermion] (z),
   (z) --[fermion] (s)
  }; 
  %%\vertex [left=0.25em of a] {\(\partial_2 \Sigma_3\)};
  \end{feynman}
  \draw (-2.7,-1.4) arc (180:0:2.7);
  \end{tikzpicture}
       \caption{}
  \label{}
     \end{subfigure}%
      \hfill
   \begin{subfigure}[b]{0.4\textwidth}
   \begin{tikzpicture}
  \begin{feynman}[]
  \node[dot] (d) at (0, 1);
  \vertex (e) at (0, 2);
  \vertex (a) at (-3, -1.4);
  \vertex (b) at (3, -1.4);
  \node[dot] (z) at (0, -0.7);
  \node[dot] (s) at (0, -1.4);
  \vertex (x) at (0.5,0.5);
  \vertex (p) at (-0.5,0.5);
  \vertex (y) at (-2,1.75); 
  \diagram*{
  (a) -- (b),
  (e) --[fermion] (d),
  (p) -- (d),
  (x) -- (d),
  (p) --[anti fermion] m[dot],
  (x) --[anti fermion] m,
   m -- [anti fermion] (z),
   (y) -- [fermion] (z),
   (z) --[fermion] (s)
  }; 
  %\vertex [left=0.25em of a] {\(\partial_2 \Sigma_3\)};
  \end{feynman}
  \draw (-2.7,-1.4) arc (180:0:2.7);
  \end{tikzpicture}
        \caption{}
  \label{}
     \end{subfigure}%
      \hfill
     \begin{subfigure}[b]{0.4\textwidth}
       \begin{tikzpicture}
  \begin{feynman}[]
  \node[dot] (d) at (0, 0.25);
  \vertex (e) at (0, 1.75);
  \vertex (a) at (-3, -1.4);
  \vertex (b) at (3, -1.4);
  \node[dot] (z) at (0, -0.7);
  \node[dot] (s) at (0, -1.4);
  \node[dot] (x) at (-0.5,-0.25);
  \vertex (y) at (-1,1.75); 
  \diagram*{
  (a) -- (b),
  (e) --[fermion] (d),
  (x) -- (d),
  (x) --[anti fermion] (d),
  (x) --[anti fermion] (z),
   (d) -- [ fermion] (z),
   (y) -- [fermion] (x),
   (z) --[fermion] (s)
  }; 
  %\vertex [left=0.25em of a] {\(\partial_2 \Sigma_3\)};
  \end{feynman}
  \draw (-2.7,-1.4) arc (180:0:2.7);
  \end{tikzpicture}
          \caption{}
  \label{}
     \end{subfigure}%
     \hfill
       \begin{subfigure}[b]{0.4\textwidth}
        \begin{tikzpicture}
  \begin{feynman}[]
  \node[dot] (d) at (0, 0.25);
  \vertex (e) at (0, 1.75);
  \vertex (a) at (-3, -1.4);
  \vertex (b) at (3, -1.4);
  \node[dot] (z) at (0, -0.5);
  \node[dot] (s) at (0, -1.4);
  \node[dot] (x) at (-0.5,-0.25);
  \vertex (p) at (0.5,-0.9);
  \vertex (q) at (-0.5,-0.9);
  \vertex (y) at (-0.5,1.75);
  \vertex (n) at (-1,1.75);
  \diagram*{
  (a) -- (b),
  (e) --[fermion] (d),
  (x) -- (d),
  (x) --[anti fermion] (d),
   (d) -- [ fermion] (z),
   (y) -- [fermion] (x),
   (n) -- [fermion] (x),
   %(z) --[fermion] (s),
   (p) --[fermion] (s),
   (p) -- (z),
   (q) --[fermion] (s),
   (q) -- (z)
  }; 
  %\vertex [left=0.25em of a] {\(\partial_2 \Sigma_3\)};
  \end{feynman}
  \draw (-2.7,-1.4) arc (180:0:2.7);
  \end{tikzpicture}
        \caption{}
  \label{}
     \end{subfigure}%
     \hfill
       \begin{subfigure}[b]{0.4\textwidth}
       \begin{tikzpicture}
  \begin{feynman}[]
  \node[dot] (d) at (0, 0.25);
  \vertex (e) at (0, 1.75);
  \vertex (a) at (-3, -1.4);
  \vertex (b) at (3, -1.4);
  \node[dot] (z) at (0, -0.5);
  \node[dot] (s) at (0, -1.4);
  \node[dot] (x) at (-0.5,-0.25);
  \vertex (p) at (0.75,-0.5);
  \vertex (y) at (-0.5,1.75);
  \vertex (n) at (-1,1.75);
  \diagram*{
  (a) -- (b),
  (e) --[fermion] (d),
  (x) --[anti fermion] (z),
   (d) -- [ fermion] (z),
   (y) -- [fermion] (x),
   (n) -- [fermion] (x),
   (z) --[fermion] (s),
   (d) -- (p),
   (p) --[fermion] (s)
  }; 
  %\vertex [left=0.25em of a] {\(\partial_2 \Sigma_3\)};
  \end{feynman}
  \draw (-2.7,-1.4) arc (180:0:2.7);
  \end{tikzpicture}
       \caption{}
  \label{}
     \end{subfigure}%
     \hfill
     \begin{subfigure}[b]{0.4\textwidth}
       \begin{tikzpicture}
  \begin{feynman}[]
  \node[dot] (d) at (0, 0.25);
  \vertex (e) at (-1, 1.75);
  \vertex (a) at (-3, -1.4);
  \vertex (b) at (3, -1.4);
  \node[dot] (z) at (0, -0.5);
  \node[dot] (s) at (0, -1.4);
  \node[dot] (x) at (0.75,-0.1);
  \vertex (p) at (0.75,-0.5);
  \vertex (y) at (+0.75,1.75);
  \vertex (n) at (+1.25,1.75);
  \diagram*{
  (a) -- (b),
  (e) --[fermion] (z),
  (x) -- (d),
  (x) --[anti fermion] (d),
   (d) -- [ anti fermion] (z),
   (y) -- [fermion] (x),
   (n) -- [fermion] (x),
   (z) --[fermion] (s),
   (p) --[fermion] (s),
   (p) -- (d)
  }; 
  %\vertex [left=0.25em of a] {\(\partial_2 \Sigma_3\)};
  \end{feynman}
  \draw (-2.7,-1.4) arc (180:0:2.7);
  \end{tikzpicture}
        \caption{}
  \label{}
     \end{subfigure}%
     \hfill
     \begin{subfigure}[b]{0.4\textwidth}
     \begin{tikzpicture}
  \begin{feynman}[]
  \node[dot] (d) at (0, 0.25);
  \vertex (e) at (-1, 1.75);
  \vertex (a) at (-3, -1.4);
  \vertex (b) at (3, -1.4);
  \node[dot] (z) at (0, -0.5);
  \node[dot] (s) at (0, -1.4);
  \node[dot] (x) at (0.75,-0.1);
  \vertex (y) at (0,1.75);
  \vertex (n) at (+1.25,1.75);
  \diagram*{
  (a) -- (b),
  (e) --[fermion] (z),
  (x) -- (d),
  (x) --[ fermion] (d),
   (d) -- [ anti fermion] (z),
   (y) -- [fermion] (d),
   (n) -- [fermion] (x),
   (z) --[fermion] (s),
   (x) --[fermion] (s)
  }; 
  %\vertex [left=0.25em of a] {\(\partial_2 \Sigma_3\)};
  \end{feynman}
  \draw (-2.7,-1.4) arc (180:0:2.7);
  \end{tikzpicture}
     \caption{}
  \label{}
     \end{subfigure}%
       \hfill
     \begin{subfigure}[b]{0.4\textwidth}
   \begin{tikzpicture}
  \begin{feynman}[]
  \node[dot] (d) at (0, 0.25);
  \vertex (e) at (0, 1.75);
  \vertex (a) at (-3, -1.4);
  \vertex (b) at (3, -1.4);
  \node[dot] (z) at (-0.5, -0.5);
  \node[dot] (s) at (0, -1.4);
  \node[dot] (x) at (0.75,-0.1);
  \vertex (y) at (+0.75,1.75);
  \vertex (n) at (+1.25,1.75);
  \vertex (q) at (-0.5,1.75);
  \vertex (r) at (-0.75, -0.9);
  \diagram*{
  (a) -- (b),
  (e) --[fermion] (d),
  (x) -- (d),
  (x) --[anti fermion] (d),
   (s) -- [anti fermion] (z),
   (y) -- [fermion] (x),
   (n) -- [fermion] (x),
   (r) --[fermion] (s),
   (z) -- (r),
   (d) --[fermion] (s),
   (q) --[fermion] (z)
  }; 
  %\vertex [left=0.25em of a] {\(\partial_2 \Sigma_3\)};
  \end{feynman}
  \draw (-2.7,-1.4) arc (180:0:2.7);
  \end{tikzpicture}
\caption{}
  \label{}
     \end{subfigure}%
     \caption{Third part of the graphs appearing in $\boldsymbol{\Omega}^{\mathbb{B}}_{3,0}$. Most of them do not give any contribution since again by Kontsevich's vanishing lemma \cite{Ko03} all the graphs with double edges vanish.}
      \label{fig:omega30_3}
  \end{figure}

\begin{figure}[h]
    \centering
     \begin{subfigure}[b]{0.4\textwidth}
     \centering
    \begin{tikzpicture}
  \begin{feynman}[every blob={/tikz/fill=white!30,/tikz/inner sep=0.5pt}]
  \node[dot] (d) at (0, 0.65);
  \vertex (f) at (0, 1.85);
  \vertex (a) at (-3, -1.2);
  \vertex (b) at (3, -1.2);
  \node[dot] (z) at (0, -1);
  \vertex (x) at (-0.5, -0.5);
  \vertex (y) at (0.5, -0.5);
  \diagram*{
  (f) -- [fermion] (d),
  (d) -- [fermion] m [dot, red],
  (a) -- (b),
   m -- (x),
   m -- (y),
   (x) -- [fermion] (z),
   (y) -- [fermion] (z),
  }; 
  %\vertex [left=0.25em of a] {\(\partial_2 \Sigma_3\)};
  \end{feynman}
  \draw (-2.7,-1.2) arc (180:0:2.7);
  \end{tikzpicture}
   \caption{}
  \label{}
    \end{subfigure}%
    \hfill
    \begin{subfigure}[b]{0.4\textwidth}
     \centering
      \begin{tikzpicture}
  \begin{feynman}[every blob={/tikz/fill=white!30,/tikz/inner sep=0.5pt}]
  \vertex (d) at (0, 1.75);
  \vertex (a) at (-3, -1.4);
  \vertex (b) at (3, -1.4);
  \node[dot] (z) at (0, -1.2);
  \node[dot] (x) at (-0.5, -0.5);
  \diagram*{
  (d) -- [fermion] m [dot, red],
  (a) -- (b),
   m -- [fermion] (x),
   (x) -- [fermion] (z),
   m -- [fermion] (z)
  }; 
  %\vertex [left=0.25em of a] {\(\partial_2 \Sigma_3\)};
  \end{feynman}
  \draw (-2.7,-1.4) arc (180:0:2.7);
  \end{tikzpicture}
  \caption{}
  \label{}
     \end{subfigure}%
     \hfill
     \begin{subfigure}[b]{0.4\textwidth}
     \centering
       \begin{tikzpicture}
  \begin{feynman}[every blob={/tikz/fill=white!30,/tikz/inner sep=0.5pt}]
  \vertex (d) at (0, 1.75);
  \vertex (a) at (-3, -1.4);
  \vertex (b) at (3, -1.4);
  \node[dot] (z) at (0, -1.2);
  \node[dot] (x) at (-0.5, -0.5);
  \diagram*{
  (d) -- [fermion] m [dot, red],
  (a) -- (b),
   m -- [anti fermion] (x),
   (x) -- [ fermion] (z),
   m -- [anti fermion] (z),
  }; 
  %\vertex [left=0.25em of a] {\(\partial_2 \Sigma_3\)};
  \end{feynman}
  \draw (-2.7,-1.4) arc (180:0:2.7);
  \end{tikzpicture}
     \caption{}
  \label{}
     \end{subfigure}%
     \hfill
     \begin{subfigure}[b]{0.4\textwidth}
     \centering
   \begin{tikzpicture}
  \begin{feynman}[every blob={/tikz/fill=white!30,/tikz/inner sep=0.5pt}]
  \node[dot] (d) at (0, 0.65);
  \vertex (f) at (0, 1.85);
  \vertex (a) at (-3, -1.2);
  \vertex (b) at (3, -1.2);
  \node[dot] (z) at (0, -1);
  \vertex (x) at (-0.5, -0.5);
  \vertex (y) at (0.5, -0.5);
  \diagram*{
  (f) -- [fermion] (d),
  (d) -- [fermion] m [dot, red],
  (a) -- (b),
   m -- [anti fermion] (x),
   m -- [anti fermion] (y),
   (x) --  (z),
   (y) --  (z),
  }; 
  %\vertex [left=0.25em of a] {\(\partial_2 \Sigma_3\)};
  \end{feynman}
  \draw (-2.7,-1.2) arc (180:0:2.7);
  \end{tikzpicture}
       \caption{}
  \label{}
     \end{subfigure}%
     \hfill
     \begin{subfigure}[b]{0.4\textwidth}
     \centering
   \begin{tikzpicture}
  \begin{feynman}[every blob={/tikz/fill=white!30,/tikz/inner sep=0.5pt}]
  \node[dot] (d) at (0, 0.65);
  \vertex (f) at (0, 1.85);
  \vertex (a) at (-3, -1.2);
  \vertex (b) at (3, -1.2);
  \node[dot] (z) at (0, -1);
  \vertex (x) at (-0.5, -0.5);
  \vertex (y) at (0.5, -0.5);
  \diagram*{
  (f) -- [fermion] (d),
  (d) -- [anti fermion] m [dot, red],
  (a) -- (b),
   m -- [anti fermion] (x),
   m -- [anti fermion] (y),
   (x) -- (z),
   (y) -- (z),
  }; 
  %\vertex [left=0.25em of a] {\(\partial_2 \Sigma_3\)};
  \end{feynman}
  \draw (-2.7,-1.2) arc (180:0:2.7);
  \end{tikzpicture}
            \caption{}
  \label{}
     \end{subfigure}%
     \hfill
     \begin{subfigure}[b]{0.4\textwidth}
     \centering
     \begin{tikzpicture}
  \begin{feynman}[every blob={/tikz/fill=white!30,/tikz/inner sep=0.5pt}]
  \vertex (d) at (0, 1.75);
  \vertex (a) at (-3, -1.4);
  \vertex (b) at (3, -1.4);
  \node[dot] (z) at (0, -1.2);
  \node[dot] (x) at (-0.5, -0.5);
  \diagram*{
  (d) -- [fermion] m [dot, red],
  (a) -- (b),
   m -- [ fermion] (x),
   (x) -- [ anti fermion] (z),
   m -- [ anti fermion] (z),
  }; 
  %\vertex [left=0.25em of a] {\(\partial_2 \Sigma_3\)};
  \end{feynman}
  \draw (-2.7,-1.4) arc (180:0:2.7);
  \end{tikzpicture}
           \caption{}
  \label{}
     \end{subfigure}%
     \hfill
     \begin{subfigure}[b]{0.4\textwidth}
     \centering
         \begin{tikzpicture}
  \begin{feynman}[every blob={/tikz/fill=white!30,/tikz/inner sep=0.5pt}]
  \node[dot] (d) at (0, 0.65);
  \vertex (f) at (0, 2.25);
  \vertex (a) at (-3, -0.8);
  \vertex (b) at (3, -0.8);
  \node[dot] (x) at (-0.5, -0.5);
  \vertex (y) at (0.5, -0.5);
  \diagram*{
  (f) -- [fermion] (d),
  (d) -- [fermion] m [dot, red],
  (a) -- (b),
   m -- [ fermion] (x),
   m -- [anti fermion] (y),
   (x) -- (y),
  }; 
  %\vertex [left=0.25em of a] {\(\partial_2 \Sigma_3\)};
  \end{feynman}
  \draw (-2.7,-0.8) arc (180:0:2.7);
  \end{tikzpicture}
     \caption{}
  \label{}
     \end{subfigure}%
     \hfill
     \begin{subfigure}[b]{0.4\textwidth}
     \centering
       \begin{tikzpicture}
  \begin{feynman}[every blob={/tikz/fill=white!30,/tikz/inner sep=0.5pt}]
  \vertex (d) at (0, 1.75);
  \vertex (a) at (-3, -1.4);
  \vertex (b) at (3, -1.4);
  \node[dot] (z) at (0, -1.2);
  \node[dot] (x) at (-0.5, -0.5);
  \diagram*{
  (d) -- [fermion] m [dot, red],
  (a) -- (b),
   m -- [ fermion] (x),
   (x) -- [ fermion] (z),
   m -- [ anti fermion] (z),
  }; 
  %\vertex [left=0.25em of a] {\(\partial_2 \Sigma_3\)};
  \end{feynman}
  \draw (-2.7,-1.4) arc (180:0:2.7);
  \end{tikzpicture}
\caption{}
  \label{}
     \end{subfigure}%
     \hfill
     \begin{subfigure}[b]{0.4\textwidth}
     \centering
    \begin{tikzpicture}
  \begin{feynman}[every blob={/tikz/fill=white!30,/tikz/inner sep=0.5pt}]
  \node[dot] (d) at (0, 0.65);
  \vertex (f) at (0, 2.25);
  \vertex (a) at (-3, -0.8);
  \vertex (b) at (3, -0.8);
  \node[dot] (x) at (-0.5, -0.5);
  \vertex (y) at (0.5, -0.5);
  \diagram*{
  (f) -- [fermion] (d),
  (d) -- [anti fermion] m [dot, red],
  (a) -- (b),
   m -- [ fermion] (x),
   m --  (y),
   (x) -- [ anti fermion] (y),
  }; 
  %\vertex [left=0.25em of a] {\(\partial_2 \Sigma_3\)};
  \end{feynman}
  \draw (-2.7,-0.8) arc (180:0:2.7);
  \end{tikzpicture}
     \caption{}
  \label{}
     \end{subfigure}%
    \caption{Graphs appearing in $\boldsymbol{\Omega}^{\mathbb{B}}_{2,1}$. Most of them do not give any contribution since again by Kontsevich's vanishing lemma \cite{Ko03} all the graphs with vertices where exactly one arrow ends and starts and graphs with double edges vanish.}
    \label{fig:omega21}
\end{figure}
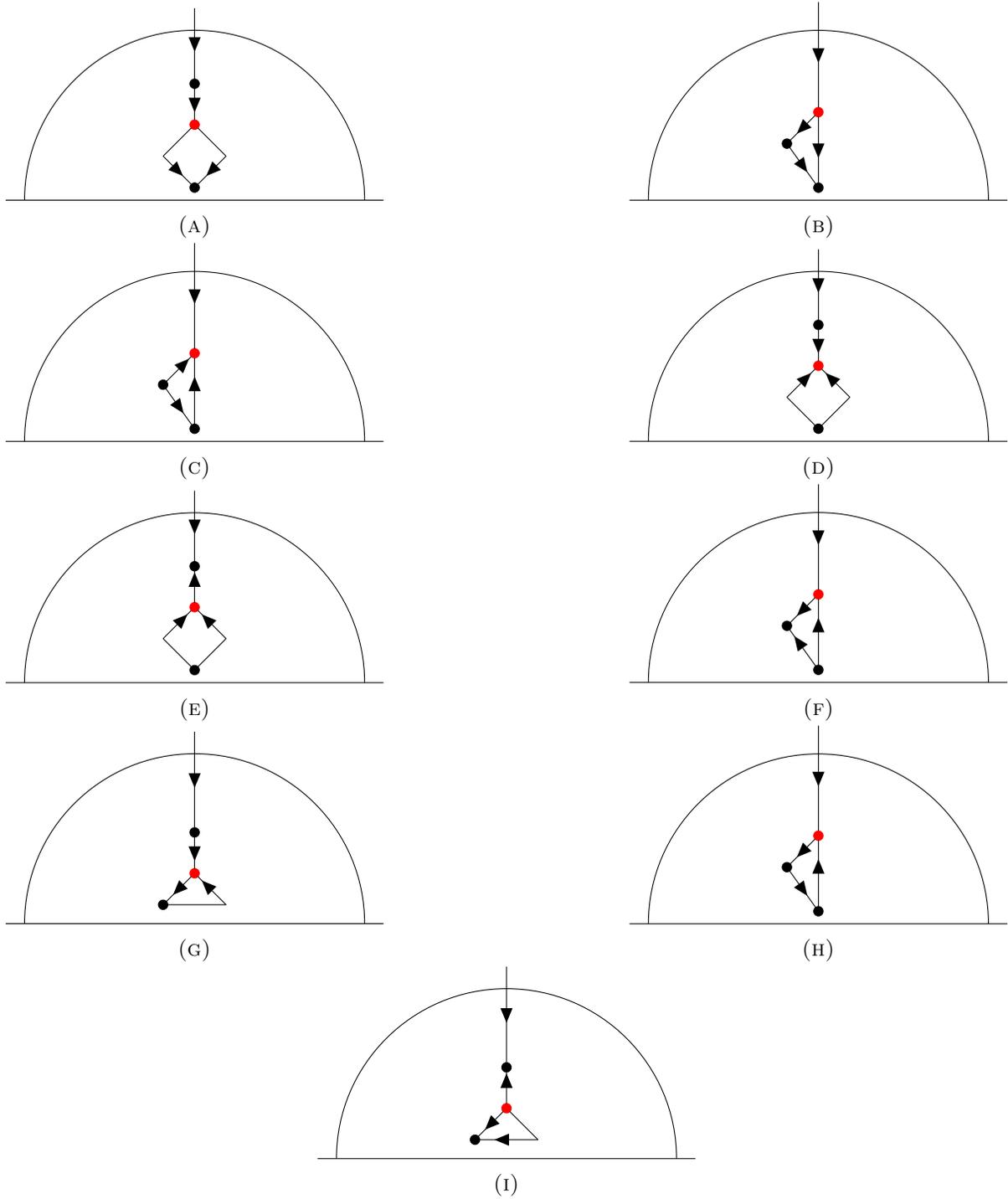

   \begin{figure}[h]
      \centering
      \begin{subfigure}[b]{0.4\textwidth}
  \begin{tikzpicture}
  \begin{feynman}[]
  \node[dot,red] (d) at (0, 1);
  \vertex (e) at (-0.5, 2);
  \vertex (f) at (0.5, 2);
  \vertex (a) at (-3, -1.4);
  \vertex (b) at (3, -1.4);
  \node[dot, red] (z) at (0, -0.7);
  \node[dot] (s) at (0, -1.4);
  \vertex (x) at (-0.5,-0.25);
   \vertex (q) at (0.5,-0.25);
  \node[dot,red] (y) at (0,0.25);
  \diagram*{
  (a) -- (b),
  (e) --[fermion] (d),
  (f) --[fermion] (d),
   (d) -- [anti fermion] (y),
   (y) --[anti fermion] (x),
   (z) -- (x),
   (y) --[anti fermion] (q),
   (z) -- (q),
   %(y) -- [anti fermion] (z),
   (z) --[fermion] (s)
  }; 
  %\vertex [left=0.25em of a] {\(\partial_2 \Sigma_3\)};
  \end{feynman}
  \draw (-2.7,-1.4) arc (180:0:2.7);
  \end{tikzpicture}
   \caption{}
  \label{}
     \end{subfigure}%
     \hfill
    \begin{subfigure}[b]{0.4\textwidth}
      \begin{tikzpicture}
  \begin{feynman}[]
  \node[dot,red] (d) at (0, 1);
  \vertex (e) at (-0.5, 2);
  \vertex (f) at (0.5, 2);
  \vertex (a) at (-3, -1.4);
  \vertex (b) at (3, -1.4);
  \node[dot, red] (z) at (0, -0.7);
  \node[dot] (s) at (0, -1.4);
  \vertex (x) at (-0.5,-0.25);
  \vertex (q) at (0.5,-0.25);
  \node[dot,red] (y) at (0,0.25);
  \diagram*{
  (a) -- (b),
  (e) --[fermion] (d),
  (f) --[fermion] (d),
   (d) -- [anti fermion] (y),
   (y) --[ fermion] (x),
   (z) -- (x),
   (y) --[ fermion] (q),
   (z) -- (q),
   %(y) -- [ fermion] (z),
   (z) --[fermion] (s)
  }; 
  %\vertex [left=0.25em of a] {\(\partial_2 \Sigma_3\)};
  \end{feynman}
  \draw (-2.7,-1.4) arc (180:0:2.7);
  \end{tikzpicture}
        \caption{}
  \label{}
     \end{subfigure}%
     \hfill
   \begin{subfigure}[b]{0.4\textwidth}
      \begin{tikzpicture}
  \begin{feynman}[]
  \node[dot,red] (d) at (0, 1);
  \vertex (e) at (0, 2);
  \vertex (a) at (-3, -1.4);
  \vertex (b) at (3, -1.4);
  \node[dot, red] (z) at (0, -0.7);
  \node[dot] (s) at (0, -1.4);
  \vertex (x) at (0.5,0.5);
  \vertex (p) at (-0.5,0.5);
  \vertex (y) at (-2,1.75); 
  \diagram*{
  (a) -- (b),
  (e) --[fermion] (d),
  (p) -- (d),
  (x) -- (d),
  (p) --[fermion] m[dot,red],
  (x) --[fermion] m,
   m -- [anti fermion] (z),
   (y) -- [fermion] (z),
   (z) --[fermion] (s)
  }; 
  %\vertex [left=0.25em of a] {\(\partial_2 \Sigma_3\)};
  \end{feynman}
  \draw (-2.7,-1.4) arc (180:0:2.7);
  \end{tikzpicture}
       \caption{}
  \label{}
     \end{subfigure}%
      \hfill
   \begin{subfigure}[b]{0.4\textwidth}
   \begin{tikzpicture}
  \begin{feynman}[]
  \node[dot,red] (d) at (0, 1);
  \vertex (e) at (0, 2);
  \vertex (a) at (-3, -1.4);
  \vertex (b) at (3, -1.4);
  \node[dot, red] (z) at (0, -0.7);
  \node[dot] (s) at (0, -1.4);
  \vertex (x) at (0.5,0.5);
  \vertex (p) at (-0.5,0.5);
  \vertex (y) at (-2,1.75); 
  \diagram*{
  (a) -- (b),
  (e) --[fermion] (d),
  (p) -- (d),
  (x) -- (d),
  (p) --[anti fermion] m[dot,red],
  (x) --[anti fermion] m,
   m -- [anti fermion] (z),
   (y) -- [fermion] (z),
   (z) --[fermion] (s)
  }; 
  %\vertex [left=0.25em of a] {\(\partial_2 \Sigma_3\)};
  \end{feynman}
  \draw (-2.7,-1.4) arc (180:0:2.7);
  \end{tikzpicture}
        \caption{}
  \label{}
     \end{subfigure}%
      \hfill
     \begin{subfigure}[b]{0.4\textwidth}
       \begin{tikzpicture}
  \begin{feynman}[]
  \node[dot,red] (d) at (0, 0.25);
  \vertex (e) at (0, 1.75);
  \vertex (a) at (-3, -1.4);
  \vertex (b) at (3, -1.4);
  \node[dot, red] (z) at (0, -0.7);
  \node[dot] (s) at (0, -1.4);
  \node[dot,red] (x) at (-0.5,-0.25);
  \vertex (y) at (-1,1.75); 
  \diagram*{
  (a) -- (b),
  (e) --[fermion] (d),
  (x) -- (d),
  (x) --[anti fermion] (d),
  (x) --[anti fermion] (z),
   (d) -- [ fermion] (z),
   (y) -- [fermion] (x),
   (z) --[fermion] (s)
  }; 
  %\vertex [left=0.25em of a] {\(\partial_2 \Sigma_3\)};
  \end{feynman}
  \draw (-2.7,-1.4) arc (180:0:2.7);
  \end{tikzpicture}
          \caption{}
  \label{}
     \end{subfigure}%
     \hfill
       \begin{subfigure}[b]{0.4\textwidth}
        \begin{tikzpicture}
  \begin{feynman}[]
  \node[dot,red] (d) at (0, 0.25);
  \vertex (e) at (0, 1.75);
  \vertex (a) at (-3, -1.4);
  \vertex (b) at (3, -1.4);
  \node[dot, red] (z) at (0, -0.5);
  \node[dot] (s) at (0, -1.4);
  \node[dot,red] (x) at (-0.5,-0.25);
  \vertex (p) at (0.5,-0.9);
  \vertex (q) at (-0.5,-0.9);
  \vertex (y) at (-0.5,1.75);
  \vertex (n) at (-1,1.75);
  \diagram*{
  (a) -- (b),
  (e) --[fermion] (d),
  (x) -- (d),
  (x) --[anti fermion] (d),
   (d) -- [ fermion] (z),
   (y) -- [fermion] (x),
   (n) -- [fermion] (x),
   %(z) --[fermion] (s),
   (p) --[fermion] (s),
   (p) -- (z),
   (q) --[fermion] (s),
   (q) -- (z)
  }; 
  %\vertex [left=0.25em of a] {\(\partial_2 \Sigma_3\)};
  \end{feynman}
  \draw (-2.7,-1.4) arc (180:0:2.7);
  \end{tikzpicture}
        \caption{}
  \label{}
     \end{subfigure}%
     \hfill
       \begin{subfigure}[b]{0.4\textwidth}
       \begin{tikzpicture}
  \begin{feynman}[]
  \node[dot,red] (d) at (0, 0.25);
  \vertex (e) at (0, 1.75);
  \vertex (a) at (-3, -1.4);
  \vertex (b) at (3, -1.4);
  \node[dot, red] (z) at (0, -0.5);
  \node[dot] (s) at (0, -1.4);
  \node[dot,red] (x) at (-0.5,-0.25);
  \vertex (p) at (0.75,-0.5);
  \vertex (y) at (-0.5,1.75);
  \vertex (n) at (-1,1.75);
  \diagram*{
  (a) -- (b),
  (e) --[fermion] (d),
  (x) --[anti fermion] (z),
   (d) -- [ fermion] (z),
   (y) -- [fermion] (x),
   (n) -- [fermion] (x),
   (z) --[fermion] (s),
   (d) -- (p),
   (p) --[fermion] (s)
  }; 
  %\vertex [left=0.25em of a] {\(\partial_2 \Sigma_3\)};
  \end{feynman}
  \draw (-2.7,-1.4) arc (180:0:2.7);
  \end{tikzpicture}
       \caption{}
  \label{}
     \end{subfigure}%
     \hfill
     \begin{subfigure}[b]{0.4\textwidth}
       \begin{tikzpicture}
  \begin{feynman}[]
  \node[dot,red] (d) at (0, 0.25);
  \vertex (e) at (-1, 1.75);
  \vertex (a) at (-3, -1.4);
  \vertex (b) at (3, -1.4);
  \node[dot, red] (z) at (0, -0.5);
  \node[dot] (s) at (0, -1.4);
  \node[dot,red] (x) at (0.75,-0.1);
  \vertex (p) at (0.75,-0.5);
  \vertex (y) at (+0.75,1.75);
  \vertex (n) at (+1.25,1.75);
  \diagram*{
  (a) -- (b),
  (e) --[fermion] (z),
  (x) -- (d),
  (x) --[anti fermion] (d),
   (d) -- [ anti fermion] (z),
   (y) -- [fermion] (x),
   (n) -- [fermion] (x),
   (z) --[fermion] (s),
   (p) --[fermion] (s),
   (p) -- (d)
  }; 
  %\vertex [left=0.25em of a] {\(\partial_2 \Sigma_3\)};
  \end{feynman}
  \draw (-2.7,-1.4) arc (180:0:2.7);
  \end{tikzpicture}
        \caption{}
  \label{}
     \end{subfigure}%
     \hfill
     \begin{subfigure}[b]{0.4\textwidth}
     \begin{tikzpicture}
  \begin{feynman}[]
  \node[dot,red] (d) at (0, 0.25);
  \vertex (e) at (-1, 1.75);
  \vertex (a) at (-3, -1.4);
  \vertex (b) at (3, -1.4);
  \node[dot, red] (z) at (0, -0.5);
  \node[dot] (s) at (0, -1.4);
  \node[dot,red] (x) at (0.75,-0.1);
  \vertex (y) at (0,1.75);
  \vertex (n) at (+1.25,1.75);
  \diagram*{
  (a) -- (b),
  (e) --[fermion] (z),
  (x) -- (d),
  (x) --[ fermion] (d),
   (d) -- [ anti fermion] (z),
   (y) -- [fermion] (d),
   (n) -- [fermion] (x),
   (z) --[fermion] (s),
   (x) --[fermion] (s)
  }; 
  %\vertex [left=0.25em of a] {\(\partial_2 \Sigma_3\)};
  \end{feynman}
  \draw (-2.7,-1.4) arc (180:0:2.7);
  \end{tikzpicture}
     \caption{}
  \label{}
     \end{subfigure}%
       \hfill
     \begin{subfigure}[b]{0.4\textwidth}
   \begin{tikzpicture}
  \begin{feynman}[]
  \node[dot,red] (d) at (0, 0.25);
  \vertex (e) at (0, 1.75);
  \vertex (a) at (-3, -1.4);
  \vertex (b) at (3, -1.4);
  \node[dot, red] (z) at (-0.5, -0.5);
  \node[dot] (s) at (0, -1.4);
  \node[dot,red] (x) at (0.75,-0.1);
  \vertex (y) at (+0.75,1.75);
  \vertex (n) at (+1.25,1.75);
  \vertex (q) at (-0.5,1.75);
  \vertex (r) at (-0.75, -0.9);
  \diagram*{
  (a) -- (b),
  (e) --[fermion] (d),
  (x) -- (d),
  (x) --[anti fermion] (d),
   (s) -- [anti fermion] (z),
   (y) -- [fermion] (x),
   (n) -- [fermion] (x),
   (r) --[fermion] (s),
   (z) -- (r),
   (d) --[fermion] (s),
   (q) --[fermion] (z)
  }; 
  %\vertex [left=0.25em of a] {\(\partial_2 \Sigma_3\)};
  \end{feynman}
  \draw (-2.7,-1.4) arc (180:0:2.7);
  \end{tikzpicture}
\caption{}
  \label{}
     \end{subfigure}%
     \caption{Graphs appearing in $\boldsymbol{\Omega}^{\mathbb{B}}_{0,3}$. Most of them do not give any contribution since again by Kontsevich's vanishing lemma \cite{Ko03} all the graphs with double edges vanish.}
      \label{fig:omega03}
  \end{figure}

\end{appendix}
\clearpage

\printbibliography
\end{document}